\newcommand{\C}{{\mathbb C}}
\newcommand{\Z}{{\mathbb Z}}
\newcommand{\R}{{\mathbb R}}
\newcommand{\GG}{{\mathcal G}}
\newcommand{\HH}{{\mathcal H}}
\newcommand{\mM}{{\mathsf M}}
\newcommand{\CC}{{\mathcal{ C}}}
\newcommand{\MM}{\underline{M}}  
\newcommand{\Homf}{\mathrm{Hom}}
\newcommand{\Hom}[1][M]{\mathrm{Homeo}_{#1}}
\newcommand{\Homempty}{\mathrm{Homeo}_M(\emptyset,\emptyset)}
\newcommand{\Homn}{\mathrm{Homeo}_M(N,N')}
\newcommand{\sh}[1]{\mathfrak{#1}}
\newcommand{\shmor}[4]{\mathfrak{#1}
\colon #3 \stackrel{\lcurvearrowright}{} #4}
\newcommand{\shmot}[4]{#1
\colon #3 \stackrel{\lcurvearrowright}{} #4}
\newcommand{\mcg}[1][M]{\mathrm{MCG}_{#1}}
\newcommand{\mcgfix}[2]{\mathrm{MCG}_{#1}^{#2}}
\newcommand{\mc}[1]{\mathfrak{#1}}
\newcommand{\Motion}{\mathrm{Mot}}
\newcommand{\Mot}[1][M]{\Motion_{#1}}
\newcommand{\Q} {\mathbb{Q}}
\newcommand{\N} {\mathbb{N}}
\newcommand{\Topo}{{\mathbf{Top}}}
\newcommand{\TOPO}{{\mathbf{TOP}}}
\newcommand{\Vect}{{\mathbf{Vect}}}
\newtheorem{theorem}{Theorem}[section]
\newtheorem{proposition}[theorem]{Proposition}
\newtheorem{lemma}[theorem]{Lemma}
\newtheorem{corollary}[theorem]{Corollary}
\newtheorem{conjecture}[theorem]{Conjecture}
\newcommand{\prop}[1]{\begin{proposition} #1 \end{proposition}}
\newcommand{\lemm}[1]{\begin{lemma} #1 \end{lemma}}
\theoremstyle{definition}
\newtheorem{defin}[theorem]{Definition}
\newtheorem{example}[theorem]{Example}
\newcommand{\defn}[1]{\begin{defin} #1 \end{defin}}
\newcommand{\exa}[1]{\begin{example} #1 \end{example}}
\newtheorem{remark}[theorem]{Remark}
\newcommand{\rem}[1]{\begin{remark} #1 \end{remark}}
\newtheorem{para}[theorem]{}
\newcommand{\ali}[1]{\begin{align*} #1 \end{align*}}
\newcommand{\beq}{\begin{equation}}
\newcommand{\eq}{\end{equation}} 
\newcommand{\Power}{{\mathcal P}}
\newcommand{\stationary}[1][N]{$#1$-stationary}
\newcommand{\tauco}[2]{\tau^{co}_{#1#2}} 
\newcommand{\II}{{\mathbb I}} 
\newcommand{\Id}{{\mathrm{Id}}}  
\newcommand{\id}{{\mathrm{id}}}  
\newcommand{\Path}{{\mathfrak P}}
\newcommand{\too}{\rcurvearrowright}    
\newcommand{\Mtc}{\mathbf{Mt}_M}
\newcommand{\Mtcmag}{\mathrm{Mt}_M}
\newcommand{\Mtcmagmov}{\mathrm{Mt}_M^{mov}}
\newcommand{\Mtcmaghom}{\mathrm{Mt}_M^{hom}}
\newcommand{\Mtmagdot}{\Mtcmag^{\dot{\phantom{x}}}}
\newcommand{\Mtmagstar}{\Mtcmag^{*}}
\newcommand{\Motc}[3]{\mathrm{Mt}_{#1}({#2},{#3})}
\newcommand{\MotAc}[3]{\mathrm{Mt}^A_{#1}({#2},{#3})}
\newcommand{\mot}[4]{#1
\colon #3\too #4}
 \newcommand{\simp}{\stackrel{p}{\sim}}   
\newcommand{\classp}[1]{[#1]_{\!\mbox{\tiny p}}}
\newcommand{\rel}[4]{(#1^{#3}_{#2}\!#4)} 
\newcommand{\simrp}{\stackrel{\scriptscriptstyle{rp}}{\sim}}  
\newcommand{\classrp}[1]{[#1]_{\!\mbox{\tiny rp}}} 
\newcommand{\simi}{\stackrel{i}{\sim}}  
\newcommand{\simiA}{\stackrel{i_A}{\sim}}   
\newcommand{\classi}[1]{[#1]_{\!\mbox{\tiny i}}}
\newcommand{\classiA}[1]{
[#1]_
{\!\mbox{\tiny{i}}_{\tiny{A}}
}
}
\newcommand{\simm}{\stackrel{m}{\sim}}   
\newcommand{\classm}[1]{[#1]_{\!\mbox{\tiny m}}}
\newcommand{\simfk}{\stackrel{fk}{\sim}} 
\newcommand{\classfk}[1]{[#1]_{\!\mbox{\tiny fk}}}
\newcommand{\simww}{\stackrel{w}{\sim}}   
\newcommand{\classww}[1]{[#1]_{\!\mbox{\tiny w}}}
\newcommand{\premo}[1]{\mathrm{Flow}_{#1}}
\newcommand{\premot}{flow}
\newcommand{\Premot}{Flow}
\newcommand{\premots}{\premo{M}}
\newcommand{\premotsmov}[1][M]{\premo{#1}^{mov}}
\newcommand{\premotshom}{\premo{M}^{hom}}
\newcommand{\axiomM}{manifold}
\newcommand{\setstat}[1][M]{\overline{\mathrm{SetStat}}_{#1}}
\newcommand{\Set}{{\mathsf{Set}}} 
\newcommand{\coball}[4]{B_{#1#2}(#3,#4)}
\newcommand{\F}{\mathsf{F}} 
\newcommand{\FA}{\mathsf{F}} 
\newcommand{\acts}{\triangleright}
\newcommand{\agrpd}[3]{#3/\!/_#2 \, #1}
\newcommand{\W}{\mathbf{W}}
\newcommand{\intt}[1]{\mathrm{int}(#1)}
\newcommand{\fakemot}[3]{#1\colon #2 \not \too #3}
\newcommand{\FMotmag}[1]{\mathrm{FMt}_{#1}}
\newcommand{\FMot}[1]{\mathrm{FMot}_{#1}}
\newcommand{\Braid}{{\mathbf{B}}}
\newcommand{\fakemotion}{fake motion} 
\newcommand{\dahm}[3]{\Topo_{\Braid{},#2}[#1,#3]}
\newcommand{\dahmA}[3]{\Topo_{\Braid{},#2}^A[#1,#3]}
\newcommand{\dahmfix}[4]{\Topo_{\Braid{},#2}^{#4}[#1,#3]}
\newcommand{\T}{\mathbf{T}}
\newcommand{\braid}{\mathbf{B}_n}
\newcommand{\Artinb}{Artin braid}
\newcommand{\ppm}[1]{#1} 
\newglossaryentry{powerset}{
	name = $\Power M$,
	description = {The power set of a set $M$}}
\newglossaryentry{top}
{
	name = $\Topo$, 
	description={The category of topological spaces and continuous maps}
}
\newglossaryentry{set}
{
	name = $\Set$, 
	description={The category of sets and functions between sets}
}
\newglossaryentry{p equivalence}
{
	name = $\simp$, 
	description={Indicates paths related by homotopy relative to end-points, see Definition \ref{de:pe}}
}
\newglossaryentry{p class}
{
	name = $\classp{\gamma}$, 
	description={Equivalence class of a paths up to path-equivalence, see Definition \ref{de:pe}}
}
\newglossaryentry{co top}{
	name = $\tauco{X}{Y}$, 
	description={The compact-open topology on the set $\Topo(X,Y)$, see Definition~\ref{de:compact-open}}
}
\newglossaryentry{bijections in Top}{
	name = {$\Topo^h(M,M)$}, 
	description={The submonoid of $\Topo(M,M)$ containing homeomorphisms}
}
\newglossaryentry{space of bijections in Top}{
	name = {$\TOPO^h(M,M)$}, 
	description={The set $\Topo^h(M,M)$ equipped with subspace topology from $\tauco{M}{M}$}
}
\newglossaryentry{Hom}{
	name={$\Hom^A$},
	description= {Groupoid with objects $\Power M$ and morphisms homoeomorphisms,
		 see Def.~\ref{Def:homeoMA}}}
\newglossaryentry{self-homeomorphisms}{
	name = {$\shmor{f}{A}{N}{N'}$}, 
	description={Notation for morphisms in $\Hom^A$}
}
\newglossaryentry{Premots}
{
	name = $\premots$, 
	description={Set of all \premot{}s in $M$, $f \in   \Topo^{}(\II,\Topo^h(M,M))$ with  $f_0 = \id_M$, see Def.~\ref{de:premots}}
}
\newglossaryentry{Identitypremot}
{
	name = $\Id_M$, 
	description={\Premot{} in $M$ which is the path $f_t=\id_M$ for all $t$}
}
\newglossaryentry{motion}{
	name = {$\mot{f}{}{N}{N'}$}, 
	description={A motion from $N$ to $N'$ in the specified manifold, see Definition~\ref{de:mot1}}
}
\newglossaryentry{worldline}{
	name={$\W(\mot{f}{}{N}{N'})$},
	description ={Worldline of a motion $\mot{f}{}{N}{N'}$, see Definition~\ref{def:worldline}}
}
\newglossaryentry{set of motions N to N'}{
	name={$\Motc{M}{N}{N'}$},
	description={The set of all motions from $N$ to $N'$ in $M$}}
\newglossaryentry{set of motions}{
	name={$\Mtc$},
	description={The set of all motions in $M$}}
\newglossaryentry{m equivalence}
{
	name = $\simm$, 
	description={Indicates motions related by motion-equivalence, see Proposition~\ref{pr:me1}}
}
\newglossaryentry{m class}
{
	name = $\classm{\mot{f}{}{N}{N'}}$, 
	description={Class of a motion $\mot{f}{}{N}{N'}$ up to motion-equivalence, see Proposition~\ref{pr:me1}}
}
\newglossaryentry{motion gpd}
{
	name = $\Mot^A$, 
	description={Motion groupoid of a manifold $M$, fixing $A\subset M$, see Theorems \ref{th:mg} and \ref{th:mg2A}}
}
\newglossaryentry{rp equivalence}
{
	name = $\simrp$, 
	description={Indicates motions related by relative path-equivalence, see Definition \ref{de:rp equiv}}
}
\newglossaryentry{rp class}
{
	name = $\classrp{\mot{f}{}{N}{N'}}$, 
	description={Class of a motion $\mot{f}{}{N}{N'}$ up to relative path-equivalence, see Lemma \ref{le:rpe}}
}
\newglossaryentry{wequivalence}
{
	name = $\simww$, 
	description={Equivalence relation on motions in terms of worldlines, see Definition~\ref{de:wequivalence} }
}
\newglossaryentry{classwequivalence}
{
	name = $\classww{\mot{f}{}{N}{N'}}$, 
	description={Class of motions related by $\simww$, see Definition~\ref{de:wequivalence}}
}
\newglossaryentry{fakemotions}
{
	name = $\dahm{N}{M}{N'}$, 
	description={Set of fake motions from $N$ to $N'$ in a manifold $M$, see Definition~\ref{de:fakemotion}}
}
\newglossaryentry{strongisotopy}
{
	name = $\simfk$, 
	description={Indicates fake motions related by strong isotopy, see Definition~\ref{de:strongiso}}
}
\newglossaryentry{clstrongisotopy1}
{
	name = $\classfk{f\colon N\protect \not \too N'}$,
	description={Class of \fakemotion{}s related by strong isotopy, see Lemma~\ref{lem:Fake-braids}}
}
\newglossaryentry{fakemotiongroupoid}{
	name = $\FMot{M}$,
	description = {Groupoid of strong isotopy classes of fake motions, see Lemma~\ref{lem:Fake-braids}}
}
\newglossaryentry{i equivalence}
{
	name = $\simi$, 
	description={Indicates self-homeomorphisms related by isotopy, see Definition \ref{de:isotopy}}
}
\newglossaryentry{i class}
{
	name = $\classi{\shmor{f}{}{N}{N'}}$, 
	description={Class of self-homeomorphisms related by isotopy, see Lemma \ref{le:isotopy}}
}
\newglossaryentry{mcg}
{
	name = $\mcg^A$, 
	description={Mapping class groupoid of a manifold $M$, fixing $A\subset M$,
		 see Thms.~\ref{th:mcg}, \ref{th:mcga}}
}
\newglossaryentry{interval}
{
	name = $\II$, 
	description={The space $[0,1]\subset \R$ with the subset topology}
}
\newglossaryentry{disk}
{
	name = $D^n$, 
	description={The $n$-disk $\{x\in \R^n\vert \,|x|\leq 1 \}\subset \R^n$ with the subset topology}
}
\newglossaryentry{circle}
{
	name = $S^n$, 
	description={The circle $\{x\in \R^{n+1}\vert \,|x|= 1 \}\subset \R^{n+1}$ with the subset topology}
}
\title{Motion groupoids and mapping class groupoids}
\author{{Fiona Torzewska}\footnote{f.m.torzewska@leeds.ac.uk}, 
{Jo\~ao Faria Martins\footnote{j.fariamartins@leeds.ac.uk} },
{Paul Purdon Martin}\footnote{{p.p.martin@leeds.ac.uk}}\\
{\small School of Mathematics, University of Leeds, UK}
}
\date{\today}
\begin{document}

\maketitle

\begin{abstract}
 
Here $\MM$ denotes a pair $(M,A)$ of a manifold and a subset (e.g. $A=\partial M$ or $A=\emptyset$).
We construct for each 
$\MM$ its 
{\it motion groupoid}  $\Mot[\MM]$, 
whose object set
is the power set $ \Power M$ of $M$,
and whose morphisms are certain equivalence classes of continuous flows
of the `ambient space' $M$,
that fix $A$, acting on $\Power M$.
These groupoids generalise
the classical definition of a motion group associated to a manifold $M$ and a 
{submanifold} $N$, which can be recovered by considering the automorphisms 
in $\Mot[\MM]$
of $N\in \Power M$. 

We also construct the
{\it mapping class groupoid} $\mcg[\MM]$ associated to a pair $\MM$ with the same object class, whose morphisms are now equivalence classes of homeomorphisms of $M$, that fix $A$. 
We 
recover the classical definition of the mapping class group of a pair by taking automorphisms at the appropriate object.

For each pair $\MM$ we explicitly construct a
functor
$\F\colon \Mot[\MM] \to \mcg[\MM]$, which is the identity on objects,
and prove that this is full and faithful, and hence an isomorphism, if 
$\pi_0$ and $\pi_1$ of the appropriate space of self-homeomorphisms of $M$ are trivial.
In particular, we have an isomorphism in
the physically important case $\MM=([0,1]^n, \partial [0,1]^n)$, 
for any $n\in \N$.

We show that the congruence relation used in the construction $\Mot[\MM]$ can be formulated entirely in terms of a level preserving isotopy relation on the trajectories of objects under flows --
worldlines (e.g. monotonic `tangles').

We examine several explicit examples of $\Mot[\MM]$ and $\mcg[\MM]$ 
demonstrating the
utility of the constructions. 
\end{abstract}

\medskip 

\noindent \textbf{Keywords:} motion groups, mapping class groups, braid groups, loop braid groups, loop particles, topological aspects of particle motion.

\medskip

\noindent \textbf{Acknowledgements:} {JFM and PM were partially funded by the Leverhulme trust {research project} grant ``RPG-2018-029: 
Emergent Physics From Lattice Models of Higher Gauge Theory''.
FT was funded by a University of Leeds PhD Scholarship and is now funded by EPSRC.
FT thanks Carol Whitton; and PM thanks Paula Martin  for useful conversations. 
{We all thank Celeste Damiani and Martin Palmer-Anghel for comments,  
Arnaud Mortier for useful email discussions, 
and Hadeel Albeladi, Basmah Alsubhi 
and Manar Qadi
for useful discussions.}

An initial version of this material was presented by FT in a series of LMS funded lectures available \href{https://media.ed.ac.uk/playlist/dedicated/51612401/1_dfxocmez/1_bm66fhc8}{\underline{here}},
FT also thanks Simona Paoli for the invitation to give these lectures. 
}

\newpage 
\tableofcontents

\newpage
\section{Introduction}\label{sec:Intro}

The paper is about constructing algebraic structures that capture
(in a broad sense) topological aspects of particle motion.
Although we eventually construct groupoids, these are neither a convenient starting point, nor close in any absolute sense to the
underlying physics. 
(The convenience of groupoids as an endpoint lies in the relatively
well-developed state of their own representation theory; and the fact that eventually one must give a scheme for predicting results of
physical measurement -- i.e. pass to linear operators and their 
spectra, and so, in the algebraic language, pass to 
$\Vect{}$, where the images of our constructions are indeed categories.)
More natural are structures that, like categories, have partial compositions, but where (for quite different reasons) 
both the unit and the associativity conditions are relaxed.
Here our starting points are `engines' that take a manifold $M$ as input --- the choice of ambient space;
and produce, 
as output
for each $M$, 
a {\it `magmoid'} (a triple of objects, morphisms and a partial composition with no conditions).
In each case, the object class
{is}
$\Set(M,\{0,1\}) \;$ 
 (as a point of reference, note that 
if object $f\in\Set(M,\{0,1\} )$ has finite support then 
it is a `collection of point particles', and 
endomorphisms are called braids).
{ 
The main remaining step is to construct 
congruences on these magmoids that wash out irrelevant detail
from particle motion,
while retaining useful features... 
and yielding quotient groupoids.}

\medskip
    
To facilitate 
our main
Introduction, we begin 
with a prelude
summarizing some
key constructions
of  
engines and congruences.
We 
postpone proofs and 
explanations,
but we are explicit. This will allow us to be technically specific
even
in our Introduction. 

\medskip

\begin{figure}
\newcommand{\bit}{2.95} 
\hspace{-.1in} 
    \includegraphics[width=\bit cm]{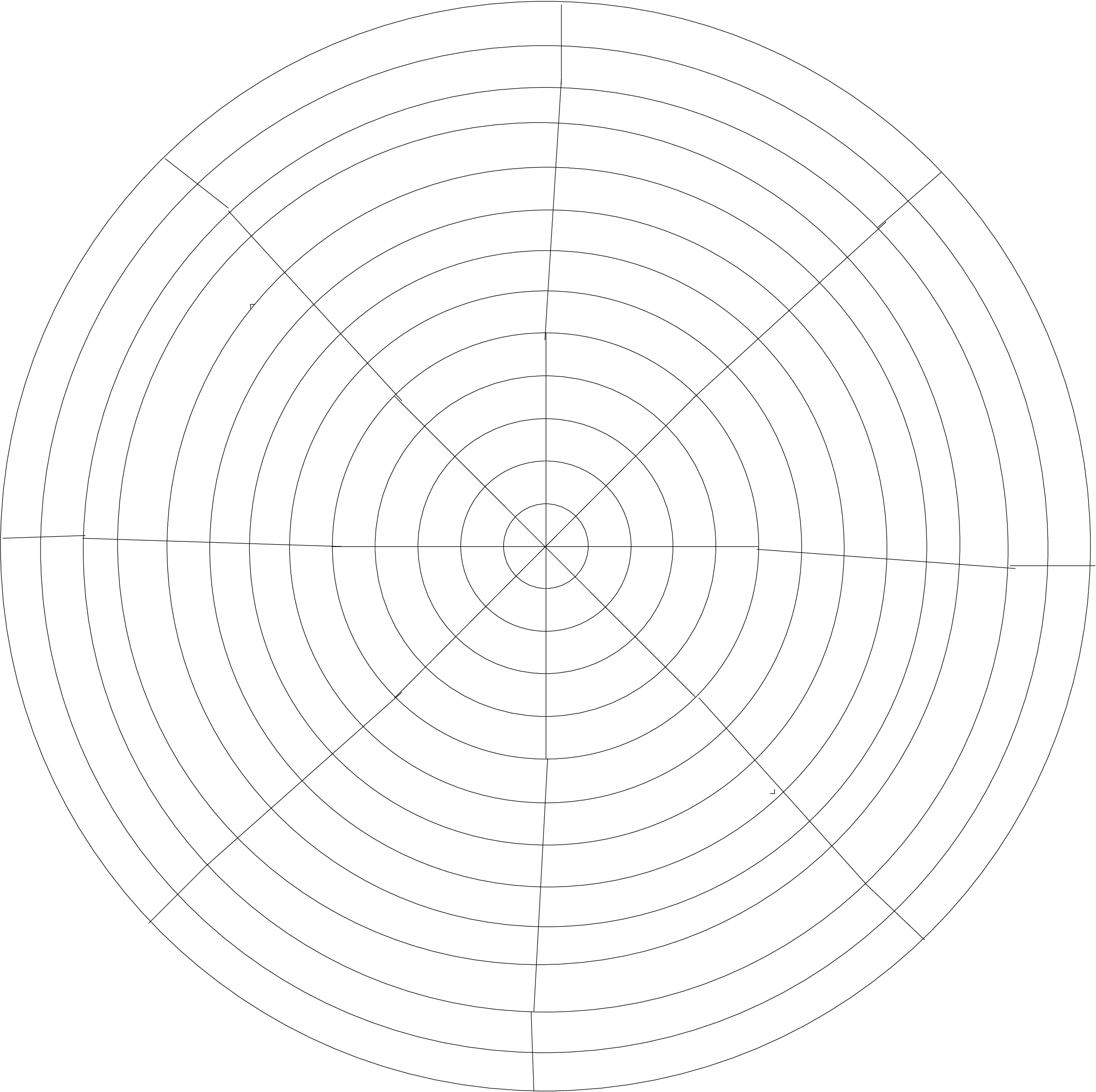}
    \includegraphics[width=\bit cm]{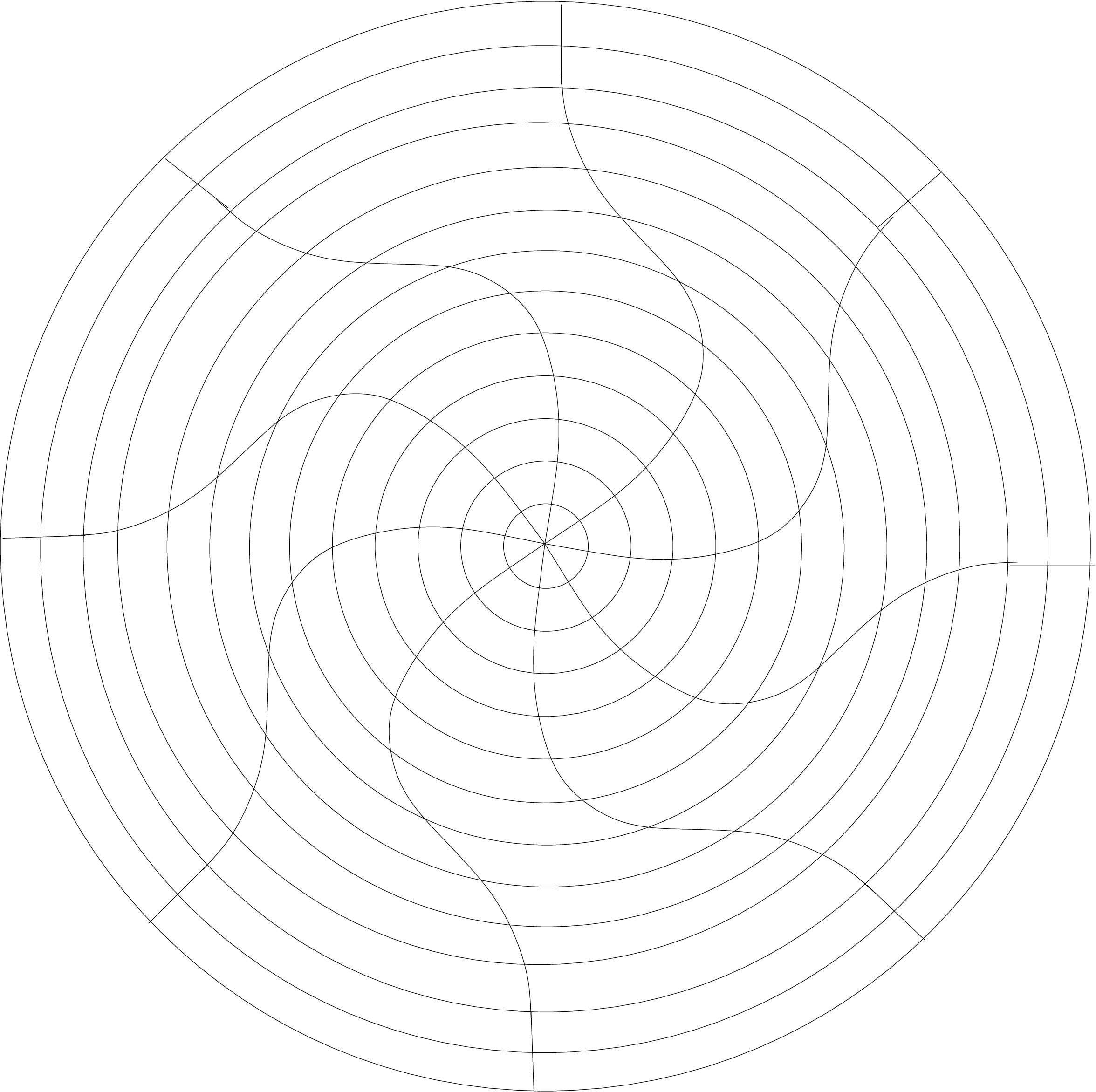}
    \includegraphics[width=\bit cm]{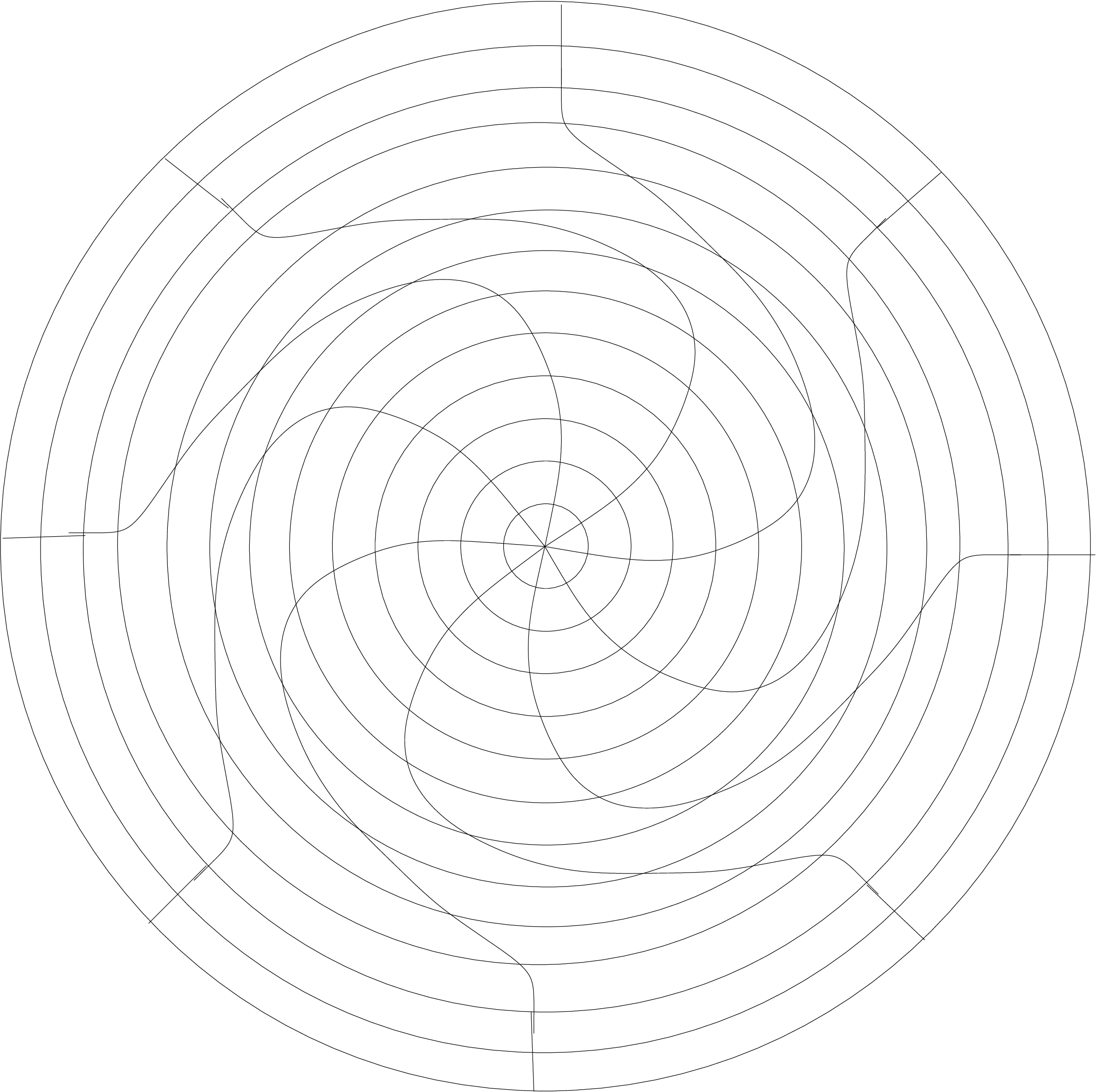}
    \includegraphics[width=\bit cm]{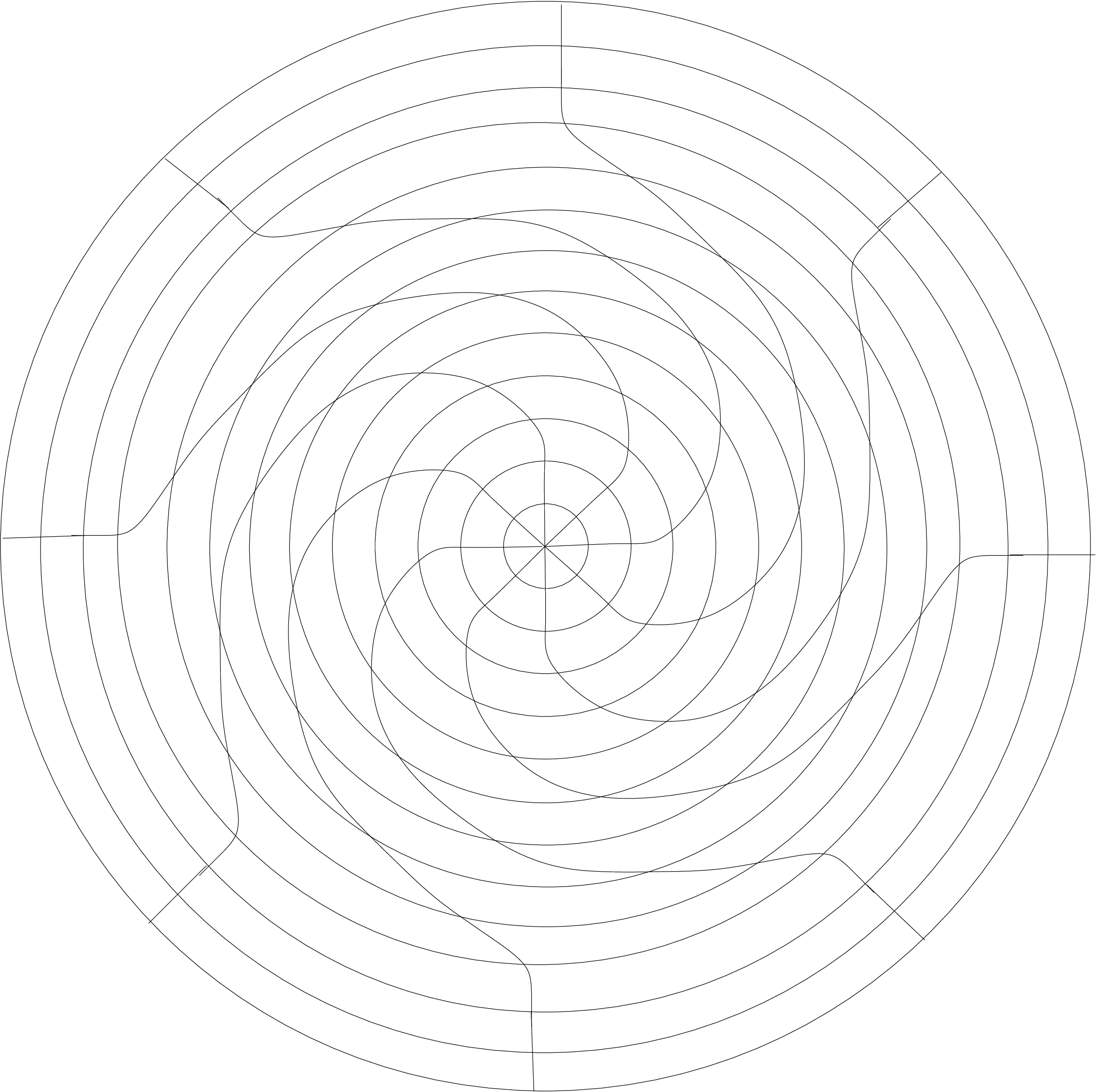}
    \includegraphics[width=\bit cm]{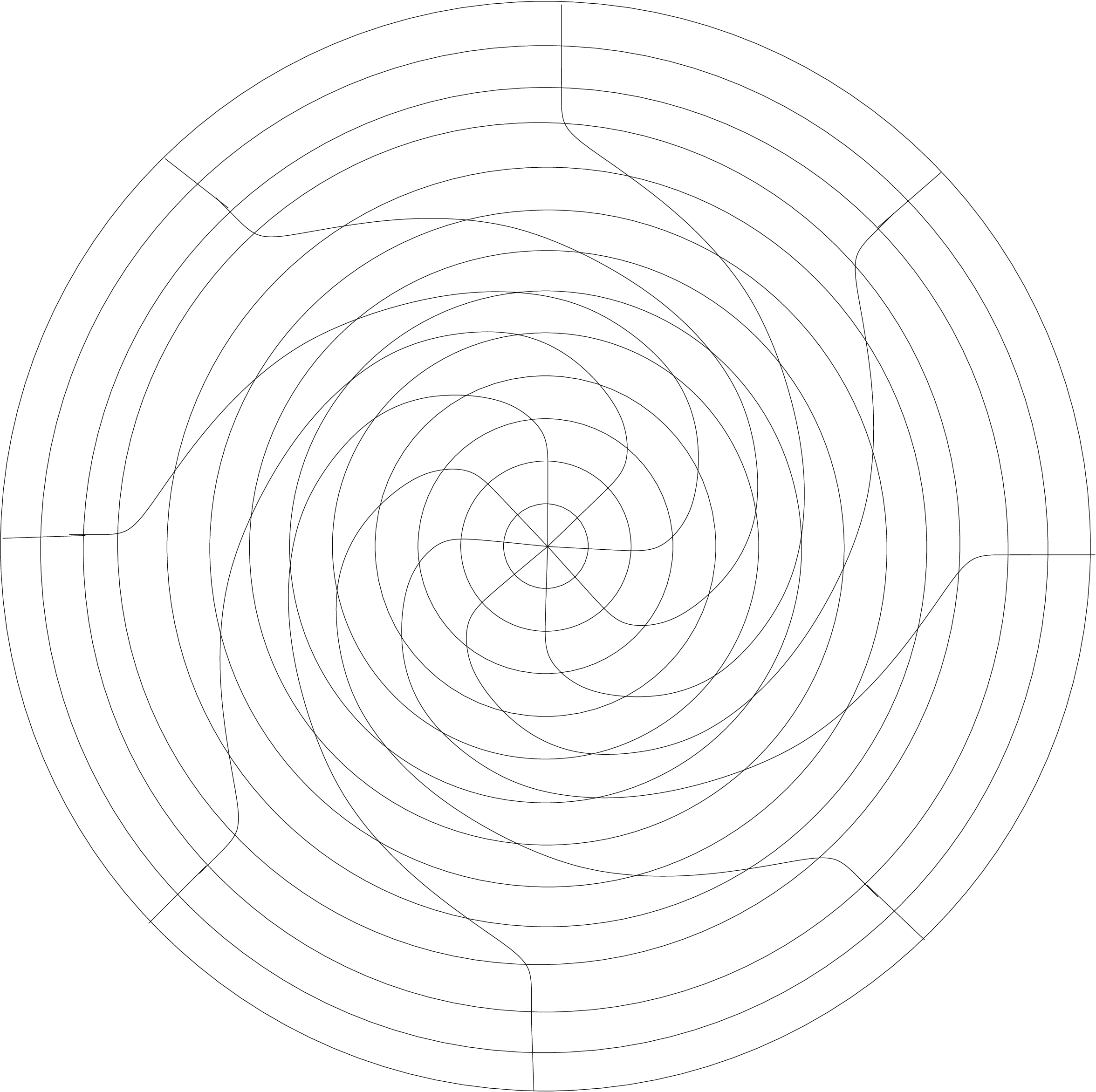}
    \caption{
    Schematics representing a sequence of self-homeomorphisms of
    manifold $D^2$.
}
    \label{fig:my_label01}
\end{figure}

\newcommand{\phomotopy}[3]{\Topo(\II^2, #1)\!(\! #2 (\!\!\! ) #3 ) }
\newcommand{\phisotopy}[3]{\Topo(#1\times\II, #1)\!(\! #2 (\!\!\! ) #3 ) }
\newcommand{\phisotop}[5]{\Topo(#1\times\II, #1)\!(\! {\tiny #2}^{#4}_{#3}\!\!  {\tiny #5} ) }
\newcommand{\pathphisotop}[5]{\Topo(\II, \TOPO^h(#1,#1))\!(\! {\tiny #2}^{#4}_{#3}\!\!  {\tiny #5} ) }

\newcommand{\simw}{\stackrel{\omega}{\sim}} 

\newcommand{\mug}{\boldsymbol{\mu}}  
\newcommand{\alphaa}{\alpha}         

Let $\Topo$ denote the category of topological spaces and continuous maps, and $\Topo^h$ the subcategory of homeomorphisms. 
Then let  $\TOPO^h(X,Y)$ denote the space with underlying set $\Topo^h(X,Y)$ and the compact open topology.
For example, Figure~\ref{fig:my_label01} indicates a sequence of self-homeomorphisms of the disk such that successive terms are 
close in $\TOPO^h(D^2,D^2)$.
Let $\II =[0,1]\subset\R$ with the usual topology.
Fixing, for now, a 
manifold-subset pair $\MM=(M,\emptyset)$,
so writing simply
$M$ for $\MM$,

$$
\premots \; = \; \{ f \in   \Topo^{}(\II,\TOPO^h(M,M)) \; | \; f_0 = \id_M   \} 
$$
-- a set of gradual deformations of $M$
over some standard unit of time. 
Thus
Figure~\ref{fig:my_label01} indicates sequential points on a path in $\premo{D^2}$.
$\;$ 
For	$(f,g)\in \premots\times \premots$, 
note that $g*f$ given by
	\begin{align*}
		(g*f)_t = \begin{cases}
			f_{2t} & 0\leq t\leq 1/2, \\
			g_{2(t-1/2)}\circ f_1 & 1/2\leq t \leq 1.
		\end{cases}
	\end{align*}
is in $\premots$; 
as is  $\bar{f}$  given by 
$ 
		\bar{f}_t 
		\;=\;  f_{(1-t)}\circ f_1^{-1} .
$ 
\hspace{.1cm}
Thus $(\premots,*)$ is a magma.

A magma action of a magma $P$ on a set $S$ 
is a map $\alphaa: P\times S \rightarrow S$ with $q(ps)=(qp)s$ (denoting the composition in $P$ of a pair $(p,q)$ as $qp$, and the image $\alpha(p,s)$ as $ps$).
Given 
such
a magma action $\alpha$, 
the 
{\em action magmoid} 
$\mug_\alphaa$ 
is a triple consisting of objects $S$, morphisms which are triples 
$(p,s,p s) \in P \times S \times S$, 
and a partial composition  
$((p,s,ps),(q,ps,qps))\mapsto(qp,s,qps)$.

\newcommand{\actss}{\gtrdot} 
\newcommand{\actsss}{\blacktriangleright} 
\newcommand{\Homs}{Hom_{\Set}}
\newcommand{\SSS}{\Sigma}  

Fix a manifold $M$. Let $\SSS$ be a space.
A $\SSS$-field configuration is a function in $\Topo(M,\SSS)$. 
There is a magma action 
$\actsss$ of  $(\premots,*)$ on $M$ 
given by 
$\actsss(f, m) = f_1(m)$; 
and 
	on 
$\Topo(M,\SSS)$ 
given 
	by 
	$\chi \mapsto \chi\circ f_1^{-1}$. 
	In particular 
	on the power set
	$\Power M \cong 
	\Topo(M,\Z_2)$
	($\SSS = \Z_2$ with the indiscrete topology),
the action is
denoted
$\acts$, so
$\acts(f, N) = f_1(N)$. 
The 
action magmoid 
$\mug_{\acts}$ 
is denoted
    $\Mtcmag^* \; =\; ( \Power M , \Mtcmag^*(-,-)  , * )$.
{A morphism  $(f,N,N')$ here is a gradual deformation of  $M$ that
carries  
the initial object subset to the final object subset: $f_1(N)=N'$, hence called a {\em motion}.}
    The motion magmoid $\Mtcmag^*$ is large, but amenable to various natural quotients, 
    as we shall see.

Let $X$ be a space. 
Paths 
$\gamma,\gamma'\in \Topo(\II,X)$ 
are {\it path homotopic}
if $ \phomotopy{X}{\gamma}{\gamma'} \neq \emptyset$ where
$$
\phomotopy{X}{\gamma}{\gamma'} \; \;
= \; \{ H\in \Topo(\II^2,X) \;|\; H(-,0)=\gamma, \; H(-,1)=\gamma', \; H(0,-)=\gamma_0, \; H(1,-)=\gamma_1  
\}.
$$
Letting $X=\TOPO^h(M,M)$,
path homotopy gives a congruence on $(\premots,*)$
(with group quotient).
But alternatively with
 $f,g$ flows and $N,N' {\subset M}$ subsets, 
let 
\begin{multline*}
   \Topo(\II^2,\TOPO^h(M,M))\rel{f}{N}{N'}{g} 
 \;=\; 
 \{ H \in \Topo(\II^2,\TOPO^h(M,M)) \; | \; \forall t\colon H(t,0)=f_t,
 H(t,1) = g_t, \\ \forall s\colon H(0,s)=\id_M, H(1,s)(N)=N'=f_1(N) \}.
\end{multline*}

\begin{theorem} \label{th:main000}
	For $M$ a manifold, there is a congruence on 
$\Mtcmag^*$ given by  the relation $(f,N,N')\simrp (g,N,N')$ if
$\Topo(\II^2,\TOPO^h(M,M))\rel{f}{N}{N'}{g} \neq\emptyset$.
	 The quotient is a groupoid -- the {\em motion groupoid} 
	$$
	\Mot\;\; 
	=\;\;\; \Mtcmag^*/\simrp \;\;\;\; 
	=\;\; \;
	(\Power M,\; \Mtcmag^*(N,N')/\simrp,*,
	\classrp{\Id_M}, \; 
	\classrp{f} \mapsto \classrp{\bar{f}} )  .
	$$		
\end{theorem}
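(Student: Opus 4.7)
The plan is to prove the theorem in three stages: (i) $\simrp$ is an equivalence relation on the morphisms of $\Mtcmag^*$, (ii) it is compatible with the partial composition $*$, and (iii) the resulting quotient has associative composition, identities $\classrp{\Id_M}$ at every object $N\in\Power M$, and two-sided inverses given by $\classrp{f}\mapsto\classrp{\bar f}$. For (i), reflexivity, symmetry, and transitivity of $\simrp$ are witnessed respectively by the constant homotopy $H(t,s)=f_t$, the reversal $H'(t,s)=H(t,1-s)$, and the usual vertical concatenation of two witnesses in the $s$-variable, with continuity handled by the pasting lemma. The two ``extra'' conditions $H(0,s)=\id_M$ and $H(1,s)(N)=N'$ entering the definition of $\Topo(\II^2,\TOPO^h(M,M))\rel{f}{N}{N'}{g}$ are preserved automatically, since these operations act only in the $s$-variable.

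For (ii), suppose $H_1$ witnesses $f\simrp f'$ as motions from $N$ to $N'$ and $H_2$ witnesses $g\simrp g'$ as motions from $N'$ to $N''$. I would mimic the formula for $*$ in the $t$-variable by setting
\[
H(t,s) \;=\; \begin{cases} H_1(2t,s) & 0\le t\le 1/2, \\ H_2(2t-1,s)\circ H_1(1,s) & 1/2\le t\le 1. \end{cases}
\]
At $t=1/2$ both branches equal $H_1(1,s)$, since $H_2(0,s)=\id_M$, so the pasting lemma gives continuity. The values $H(t,0)=(g*f)_t$, $H(t,1)=(g'*f')_t$, and $H(0,s)=\id_M$ are immediate from the definitions, while
$H(1,s)(N)=H_2(1,s)\bigl(H_1(1,s)(N)\bigr)=H_2(1,s)(N')=N''$,
which is the remaining boundary condition.

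For (iii), associativity and the two identity laws $\Id_M*f\simrp f\simrp f*\Id_M$ reduce to the standard reparametrization-of-paths homotopies (as in the fundamental groupoid), now lifted to $\TOPO^h(M,M)$; since these reparametrizations act only in the $t$-variable they automatically preserve $H(0,s)=\id_M$ and the value of $H(1,s)$. For inverses, first note that $\bar f_1=f_1^{-1}$ so $\bar f\in \Motc{M}{N'}{N}$, and a direct calculation gives $(\bar f*f)_t=f_{2t}$ for $t\le 1/2$ and $f_{2-2t}$ for $t\ge 1/2$. The contracting homotopy
\[
H(t,s) \;=\; \begin{cases} f_{2t(1-s)} & 0\le t\le 1/2, \\ f_{(2-2t)(1-s)} & 1/2\le t\le 1, \end{cases}
\]
then witnesses $\bar f * f \simrp \Id_M$ as morphisms $N\to N$: the two branches agree at $t=1/2$, at $s=0$ it reproduces $\bar f*f$, at $s=1$ it is constantly $\id_M$, and $H(0,s)=H(1,s)=\id_M$ so in particular $H(1,s)(N)=N$. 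The mirror argument (using $f*\bar f$ and contracting from the right) handles the other cancellation. The main technical obstacle throughout is ensuring that each such map $\II^2\to\TOPO^h(M,M)$ is continuous for the compact-open topology; this ultimately reduces to continuity of composition and inversion of self-homeomorphisms of $M$, which is standard for manifolds and is already implicit in the assertions that $g*f$ and $\bar f$ lie in $\premots$.
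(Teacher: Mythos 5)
Your proof is correct, but it reaches the statement by a genuinely different route from the paper. You verify everything directly at the level of $\simrp$: equivalence relation (as in Lemma~\ref{le:rpe}), compatibility with $*$ via the concatenated homotopy $H_2(2t-1,s)\circ H_1(1,s)$ (which is exactly the homotopy $K$ the paper records only in a remark after Theorem~\ref{th:mg2}), and the groupoid axioms via endpoint-fixing reparametrisation homotopies plus the contraction $f_{2t(1-s)}/f_{(2-2t)(1-s)}$, which coincides with the paper's Equation~\eqref{eq:mot_refl}; your boundary checks ($H(0,s)=\id_M$, $H(1,s)(N)=N'$) and the appeal to continuity of composition in $\TOPO^h(M,M)$ (Theorem~\ref{le:top_group}) are all in order, and well-definedness of $\classrp{f}\mapsto\classrp{\bar{f}}$ follows from uniqueness of inverses once the category axioms hold. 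The paper instead proves the theorem in two stages: it first quotients $\Mtcmag^*$ by path homotopy $\simp$, using $g*f\simp g\cdot f$ (Lemma~\ref{le:pms_star_equiv_dot}) so that associativity and units come for free from the strictly associative pointwise composition (Lemma~\ref{le:grpd_motstar}), then quotients by the normal totally disconnected subgroupoid of classes containing set-stationary motions (Lemma~\ref{le:stat_mots}, Theorem~\ref{th:mg}), and only afterwards proves that the resulting motion equivalence $\simm$ coincides with $\simrp$ (Theorem~\ref{th:mg2}). Your direct argument is shorter and self-contained for the statement as given, essentially lifting the fundamental-groupoid proof to $\TOPO^h(M,M)$ with the extra endpoint condition carried along; what the paper's longer route buys is the stationary-motion description $\simm$ of the congruence and the auxiliary group $(\premots,\cdot)$, both of which are used repeatedly later (e.g.\ for Proposition~\ref{prop:dahm}, the worldline characterisations, and the comparison with Dahm's motion groups), and which your argument does not produce.
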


\newcommand{\gammaf}{\sh{f}}
\newcommand{\gammaff}{\sh{g}}

{ 

Now let $\Hom$ denote the action groupoid
of the 
group action of $\Topo^h(M,M)$ on $\Power M$ given by $\sh{f}\acts N = \sh{f}(N)$. 
Let $N,N' \in \Power M$ and $\gammaf,\gammaff \in \Hom(N,N')$. 
Then define 
$$
    \pathphisotop{M}{\gammaf}{N}{N'}{\gammaff}
= \{ H \in \Topo(\II, \TOPO^h(M, M) \; | \; 
H(0)=\gammaf, H(1)=\gammaff,
\forall t\colon H(t)(N)=N'
\}. 
$$
The relation
$\gammaf\simi\gammaff \in \Hom(N,N')$
if   
$ 
\pathphisotop{M}{\gammaf}{N}{N'}{\gammaff}
\neq \emptyset$ 
gives a congruence on $\Hom$. 
The
{\em mapping class groupoid} $\mcg$ is the quotient $\Hom/\simi$.
}
\hspace{.05in}
A
functor 
\[
\F\colon \Mot[M] \to \mcg[M] 
\]
is given by `forgetting' $\classrp{f}\mapsto \classi{f_1}$. 

Finally in this pre-Introduction, we introduce one more 
`engine'  creating a power-set magmoid from each manifold;
and,
for this construction,
one more type of congruence.
For $M$ a manifold and $N$ a subset let 
$\dahm{N}{M}{N'}$ 
be the subset of elements
$f \in \Topo(N\times\II,M)$ 
such that $ f(-,t)$ is an embedding for each $t$,
 $f(n,0)=n$ for all $n \in N$, and
$f(N,1)=N'$.  
\hspace{.05in} 
One indeed obtains 
another power-set magmoid for $M$ upon noting that
the formula
\begin{align*}\label{eq:introcomposition_fakemotion}
g\smallsquare f(n,t):= 
\begin{cases} f(n,2t), & 0\leq t \leq 1/2 \\
g(f(n,1),2t-1), &  1/2\leq t\leq 1\, 
\end{cases}
\end{align*}
gives a composition 
$\dahm{N}{M}{N'}\times\dahm{N'}{M}{N''}
\rightarrow \dahm{N}{M}{N''}$. 
\hspace{.1cm}
Define 
\[
\Topo( (N\times\II)\times\II, M)[N^{f}_{f'} N']
=
\left\{H\in \Topo ( (N\times\II)\times\II, M) 
\; \middle\vert \; \parbox{6.75cm}{$\forall t\colon H(n,t,0)=f(n,t), \;
H(n,t,1)=f'(n,t) \; ,  \\ \;
\text{for each } s\in \II, \; H(-,-,s) \in  \dahm{N}{M}{N'}$}
\right\}.
\]
Then   
$f \simfk f'$ 
if $\Topo ( (N\times\II)\times\II, M)[N^{f}_{f'} N']\neq \emptyset$. This defines a congruence,
thus
yielding the 
{\em \fakemotion\ groupoid}    $\; \FMot{M}$.
\hspace{.1in} 
A functor 
\[
\T:\Mot \rightarrow \FMot{M}\] 
is given by sending $\classm{(f,N,N')}$ to the class of the map 
$(n,t)\mapsto f_t(n)$, where $n\in N$.

{The engines $\Mot[-]$, $\mcg[-]$ and $\FMot{-}$
all specialise
when considering  with  
$M=\R^2$ and $N=N'$ finite,
to constructions 
isomorphic to the braid group.
However in general they differ 
from each other
in a number of beautiful,
intriguing and useful ways, as we shall see.}

\medskip

{\bf{Main Introduction. } } 
We will show that
embedding the various groups $\Mot(N,N)$ 
(motion groups, generalising \cite{dahm, goldsmith})
into the groupoid $\Mot$
yields several significant benefits.
By way of introduction, we now discuss
these (in no particular order).

\medskip

\begin{figure}
\begin{minipage}{.325\textwidth}
\newcommand{\bit}{3.8} 
    \centering
    \includegraphics[width=\bit cm]{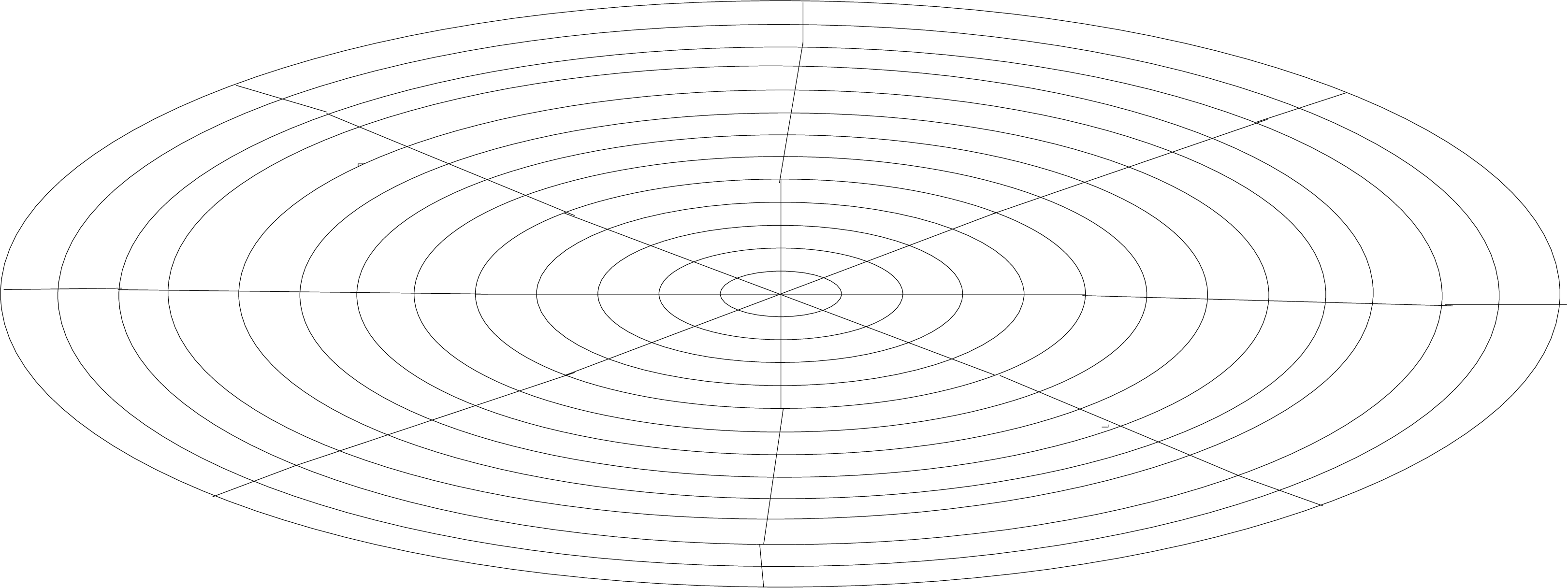}
    \\
      \includegraphics[width=\bit cm]{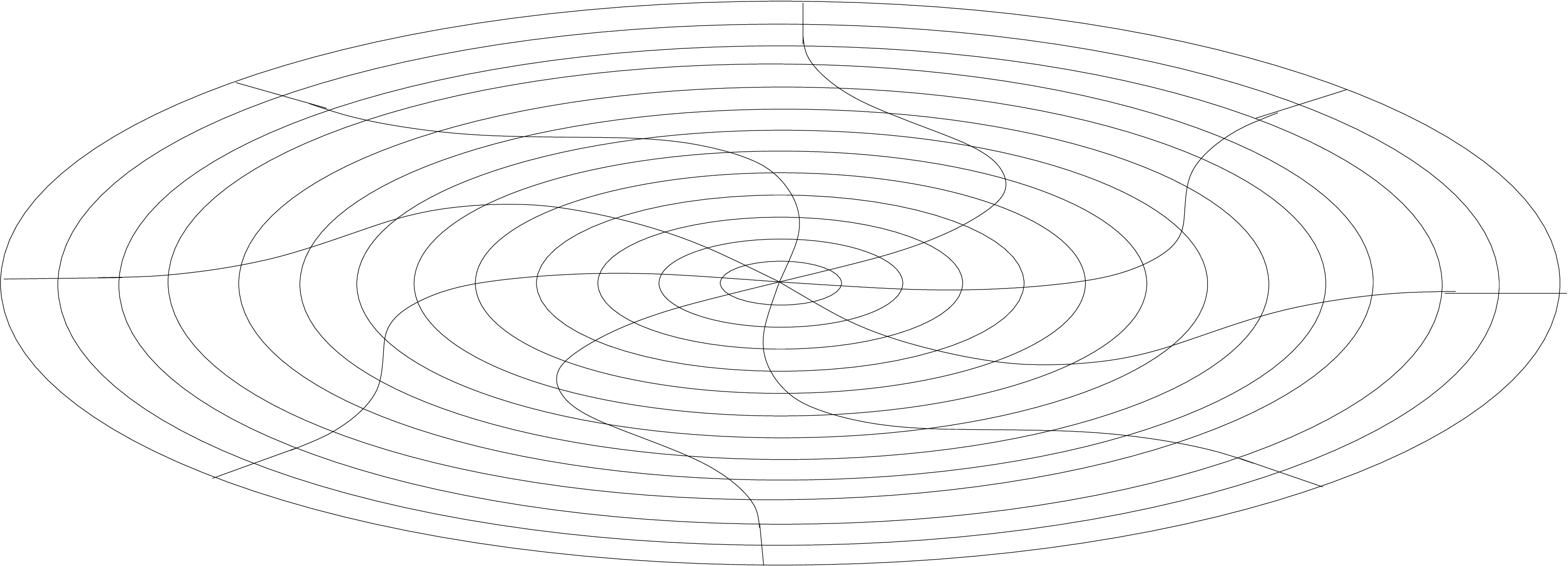}
      \\
      \includegraphics[width=\bit cm]{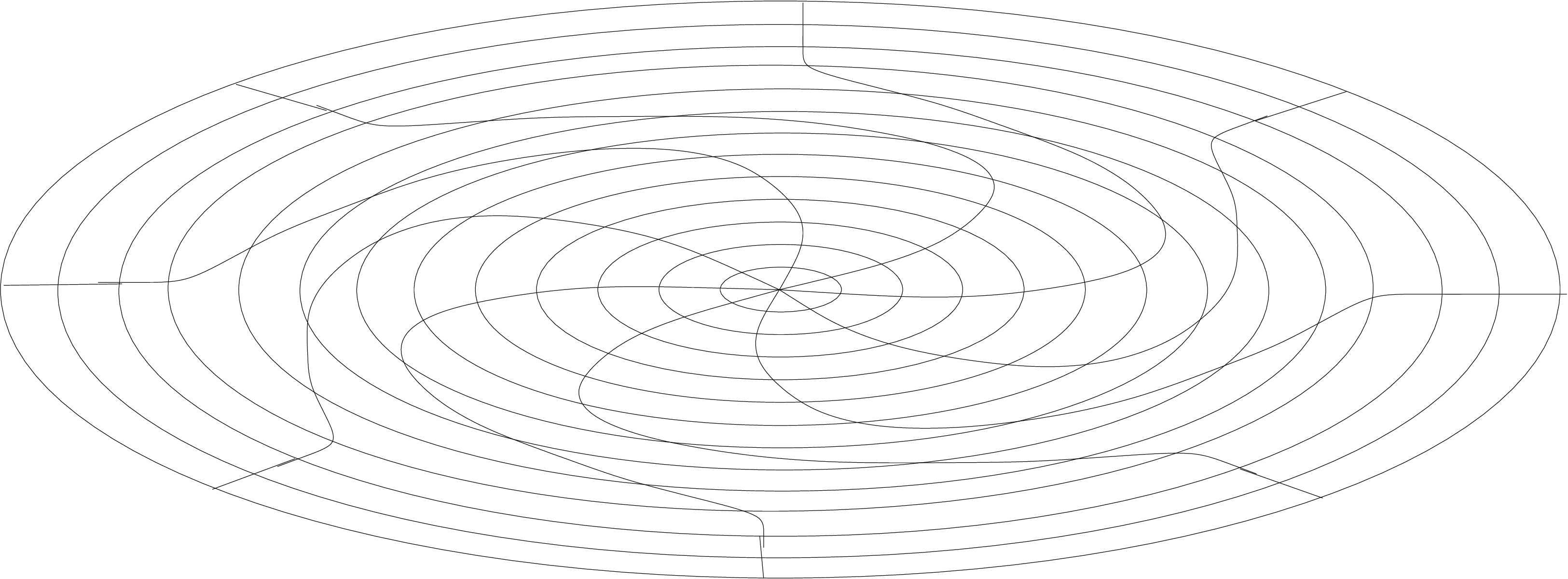}
      \\
      \includegraphics[width=\bit cm]{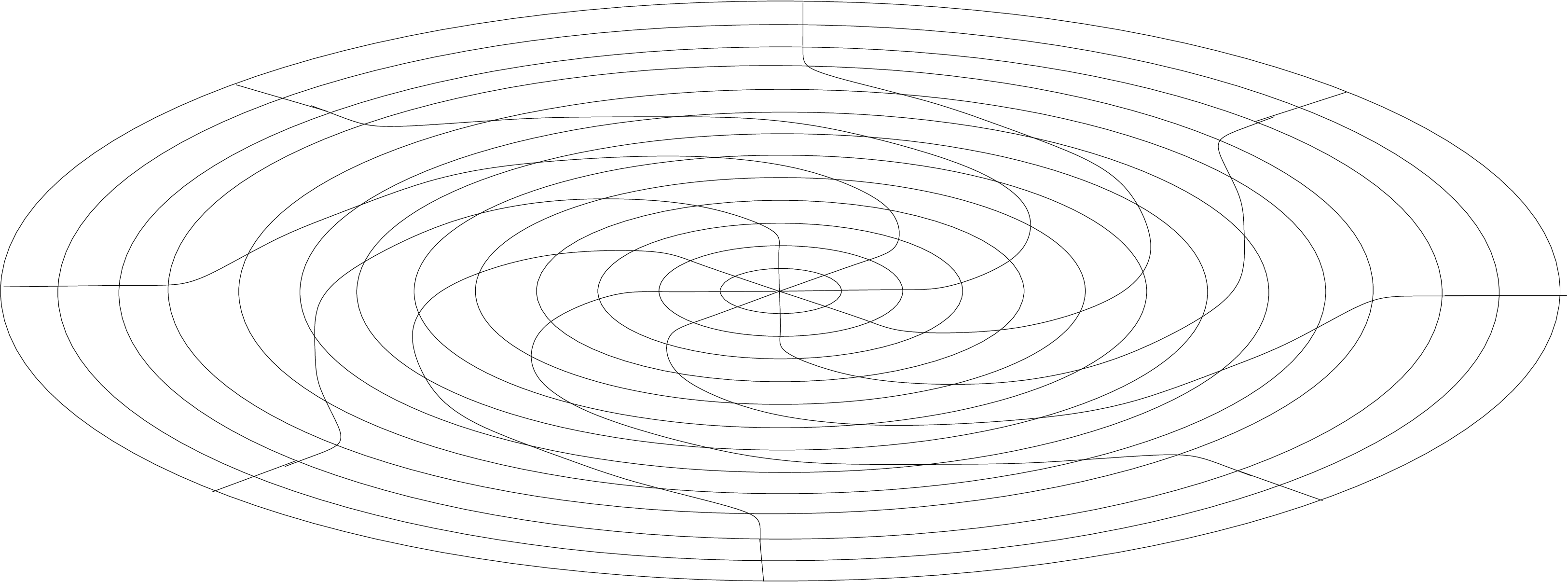}
      \\
      \includegraphics[width=\bit cm]{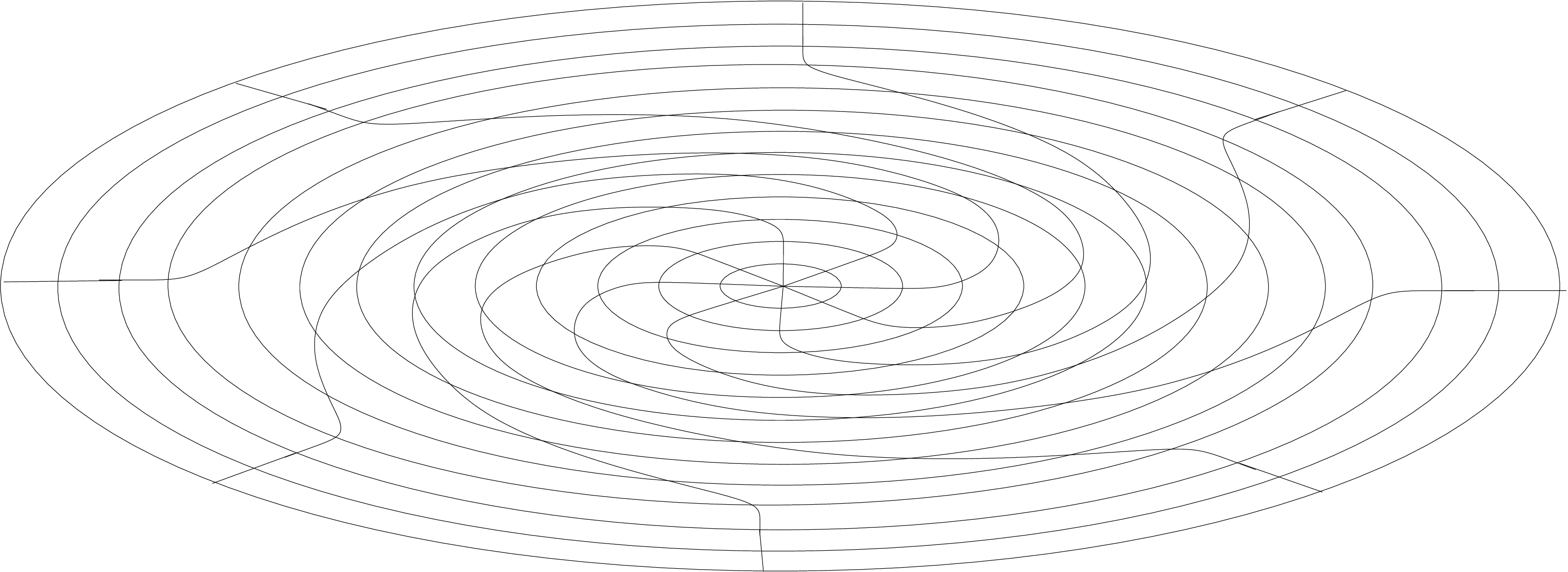}
      \\
      \includegraphics[width=\bit cm]{Figs/dehn0006xs.eps}
\end{minipage}
\begin{minipage}{.325\textwidth}
\newcommand{\bit}{4.065} 
  \centering
    \includegraphics[width=\bit cm]{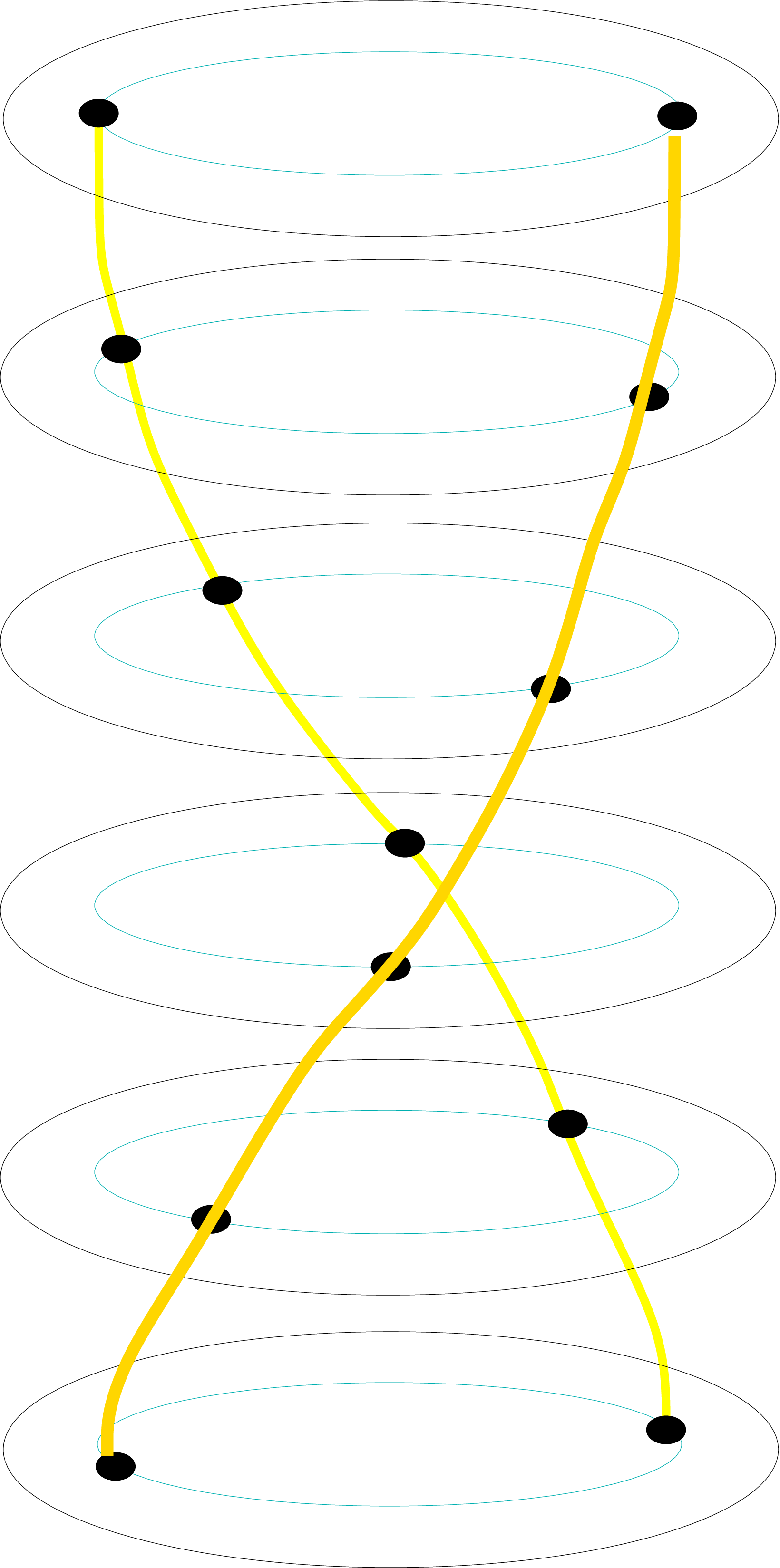}
\end{minipage}
\begin{minipage}{.325\textwidth}
\newcommand{\bit}{4.065} 
  \centering
    \includegraphics[width=\bit cm]{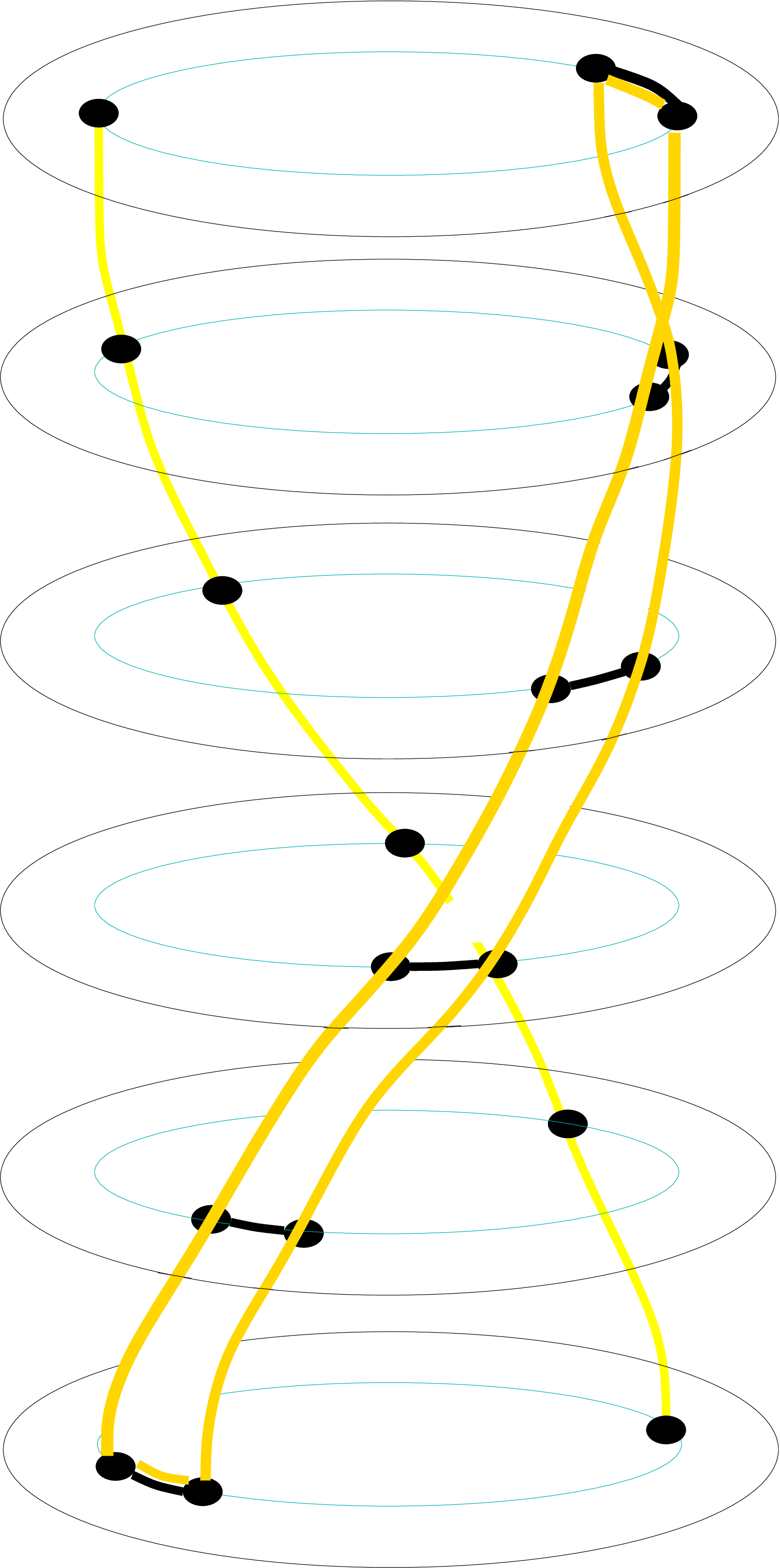}
\end{minipage}
    \caption{Left: A flow $f$ in $M=D^2\subset \R^2$ 
    (from Fig.\ref{fig:my_label01})
    as a level-preserving homeomorphism in $M\times \II$.
    Middle:
    Induced motion of 
    two 
    point particles as induced by the flow $f$, drawn as paths in $M\times\II$.
  Right: Induced motion of 
    a line and a
    point particle as induced by the flow $f$.
      }
    \label{fig:my_label002}
\end{figure}

Firstly, the groupoid formalism allows us to study the set of objects in a unified way.
The reader should think of finite sets passing to braids; compact manifolds passing to generalised loop-braids;
hybrids passing to some beautiful new 
algebraic structures; and non-compact manifolds and various other generalisations passing to structures depending heavily on details of the
embedding. 
This facility has many aspects -- 
leading to monoidal structures, and vast generalisations thereof, as well as to other higher categorical formalisms;
and, as we will consider here, different but often equivalent notions of congruence. 
Together these lead in turn to the question of non-isomorphims 
such as particle fusion,
and hence to generalised 
tangle categories (cf. \cite{BaezDolan, Picken}).
In this direction we go 
little 
further than framing the construction.
We do, however, make significant progress on the connection between {\it isomorphisms} in generalised tangle categories, and motion groupoids by 
realising motion equivalence in terms of a level preserving isotopy relation on `worldlines' of motions.

Another benefit lies in 
object-changing morphisms themselves.
Equipped with these, 
motions can be factored into compositions of simpler motions
in many more ways. 
This is 
advantageous, for example,  
for giving 
generating sets 
of morphisms in motion subgroupoids. 
It is an extension of something that can already be seen
in the classical setting of braid groups, in the sense that 
the Artin presentation of the braid group has 
fewer 
generators than the usual presentation of the pure braid group, 
where each individual point 
must
return to 
its starting position.

Moreover, 
depending on the object $N$, in the group setting 
it is typically true that there exists an open neighbourhood of the identity homeomorphism 
such that all motions realised by a flow staying in this neighbourhood are identified with the identity: for instance for finite $N$ all non-trivial flows connecting the same object are at positive ``distance" from the identity. 
Thus Lie-theory/Lie-algebra-like linearisation strategies are hard to {access}.
In contrast in the groupoid setting, even a flow very close to the identity
can change the object,
and then 
of course is not identified with the identity motion.
It follows that 
morphisms of the groupoid can be obtained as a composition of
arbitrarily “small” motions. A range of powerful techniques thus become 
accessible. 
This will be explored for example
when proving the relation between motion groupoids and 
Artin braid groups in \S\ref{sec:braids}.

\medskip

 We will 
 see here that motion groupoids 
are 
 useful for physical modelling.
 A bicycle inner-tube with a puncture is useless
--- except as an example of a topological space, but a real 
rubber tube is by no means a uniform continuum. It is useful to 
be able to model the {\it emergence} of a realistic puncture.
A toy version of this is to compare the punctured Hopf-link and
the punctured unlink as objects in $M=\R^3$.
They are connected in $\FMot{M}$ but not 
in $\Mot{}$
--- see Examples~\ref{ex:punctured_hopf} and \ref{ex:fakeflow_circletoint}.
(It is interesting to note that the question of connection is
 the {\it same} in the category as in the underlying magmoid.)

This leads us naturally to the question of the relationship between 
the different engines at fixed $M$.
Localised to the `braid' setting, 
motion groups and mapping classes have been related, at least implicitly, in the particular cases of the braid group and loop braid group \cite{BB,damiani,dahm}.
The functor $\F\colon \Mot \to \mcg$ generalises this.
It yields immediate concrete applications:
providing a route to using results about mapping class groups to inform about about motion (sub)groupoids.

The `braid group' has several  
realisations, each with different flavours 
-- see for example \cite{BB, rolfsen,Birman_book}.
Each of the classical topological realisations  
 consists of some `concrete' elements, together with a composition and an equivalence \cite{BB,birman,goldsmith, dahm}.
These constructions are not generally pairwise equivalent when considering the concrete elements. 
Bridges between the concrete elements in the different topological realisations 
can be illustrated using the left and middle images in Fig.\ref{fig:my_label002}.
The middle picture
may represent a motion, where the whole space is moving in the way prescribed by Figure~\ref{fig:my_label01} and the induced movement of the points is marked.
Further the picture may represent an element in $\Topo_{\Braid}(N\times \II,M)$, where $N$ is two points and the image at each $t\in\II$ drawn ascending up the page, this is Artin's formulation of braids. 
 Alternatively the picture may represent a monotonic embedding of two unit intervals in the cylinder, which, passing to the image, gives a concrete element of the tangle category (see e.g. \cite{Kassel}); 
 or a path in the configuration space of two points in the disk. Then a concrete element of the mapping class groupoid is the endpoint of the motion.
Notice here that motions contain the most information: 
maps into  
the other settings `forget' information.

Even from this simplest possible perspective, then, 
we see that we should begin by constructing the engine $\Mot[-]$,
since it is the one  
which keeps track of the most information. 
We then investigate in which cases it is possible to forget information whilst retaining the same algebraic structure.
We will see that in general the `forgetting' functors we construct are neither surjective, nor injective.

Generalisations of  
all
the aforementioned constructions of braids to unknotted, unlinked, loops in $3$ dimensions exist. 
There they lead again 
to isomorphic groups \cite{damiani,dahm,brendlehatcher}.

\medskip

Yet another motivation for the study of motion groupoids is that we expect rich representation theory, as has already been found to be the case with motion groups. 
Firstly, in certain cases motion groups have the representation theoretically useful property that they can be finitely presented: a finite presentation of Artin's braid group was given in \cite{artin}, and 
Dahm gave a classification of the group of motions of  a configuration of $n$ unknotted, unlinked circles in $\R^3$ in terms of an isomorphism to the automorphism group of the free group with $n$ generators \cite[Sec.III]{dahm}, 
which 
is 
known to have finite presentation \cite[p.131,Th.3.2]{magnus}.
A plethora of approaches to the study of the representation theory of the latter group are present in the literature, see
\cite{baez, kadar, qiu, BFM} for example.
A motion group of a collection of Hopf links and trivial links in $\R^3$ is suggested to have a finite presentation  in \cite{damianiseiichi}. Similarly a presentation of a motion group of a `necklace' in $\R^3$ is given in \cite{bellingeri}, and its representation theory studied in \cite{BKMR}.

\medskip

Having 
discussed the motivation for
our construction we now introduce the content of the paper.

\medskip

Firstly, for both the motion groupoid and the mapping class groupoid, we have a version which fixes a distinguished subset $A\subset M$ pointwise, where $A=\partial M$ for example. These are denoted $\Mot^A$ and $\mcg^A$.
There are a few reasons for this considering this level of generality. For one thing it leads to interesting algebra;
and for another it facilitates passage to various natural generalisations
that are monoidal categories and/or canopolis or cubical algebras
(whose main discussion we will postpone).
This facility will also be necessary to find isomorphisms from the motion groupoid to the mapping class groupoid in the case $M=D^n$.

\medskip

In
this work we will give several equivalent realisations of both motions and of the motion groupoid; each of the different realisations will have useful applications.
In \S\ref{sec:motions} we begin by defining motions and giving two different compositions of motions. 
Define a 
map
$\W:\premots\times\Power M \rightarrow \Power (M\times\II)$  
by 
$\W(f,N) =   \bigcup_{t \in [0,1]} f_t(N) \times \{t\}$. This is {what we call} the {\em worldline} of the corresponding motion.
The $*$ composition given above represents the physical picture; thinking of motions as a gradual deformation of space over time, the composition of two motions is the first followed by the other.
This is made precise by Lemma~\ref{lem:concatW}, which says
that the worldline of the $*$ composition of two motions can be written in terms of a composition of the worldlines.
The $*$ composition is necessary to construct the functor $
\T:\Mot \rightarrow \FMot{M}$
in \S\ref{sec:artinbraids}; as well as for functors into generalised tangle categories.
The second is a pointwise composition $(g\cdot f)_t=g_t\cdot f_t$, which is introduced as a computational convenience as it simplifies a number of proofs. The compositions lead to distinct magmoids, although in the motion groupoid both compositions descend to the same well defined composition on equivalence classes.

Still in \S\ref{sec:motions}, 
we give our first construction of $\Mot$ (Theorem~\ref{th:mg}); the morphisms are equivalence classes of motions with the relation that $f\simm g\in \Mtcmag^*(N,N')$ if $\bar{f}*g$ is path homotopic to  an $N$\textit{-stationary motion} -- these are the motions which leave $N$ fixed setwise for all $t\in \II$.
We prove that this leads to a groupoid by performing
a two stage quotient, first identifying path equivalent motions, and then identifying each class that contains a stationary motion with the corresponding identity.
We start with this particular equivalence relation as this is a direct generalisation of the relation used in Dahm's motion groups \cite{dahm, goldsmith}, the construction we set out to generalise. 
There are also practical advantages, certain results are more straightforward to see with this relation; for example, Proposition~\ref{prop:dahm} that motions with the same worldline are equivalent.

In \S~\ref{ss:examples}
we will present some key examples to demonstrate the richness of our construction. 
The unified treatment of objects discussed above leads to questions about skeletons {(equivalent categories without extraneous isomorphisms \cite[IV.2]{MacLane})}, and about objects whose automorphism groups are related by canonical isomorphisms outside of the motion groupoid.
Note that the existence of a homeomorphism between subspaces, or indeed a homeomorphism of the ambient space sending one subspace to the other is not enough to ensure that the underlying sets are connected by a morphism in the motion groupoid. 
In Section~\ref{sec:Iexamples} we show that in $\Mot[\II]$, the existence of morphisms in $\Mot[\II](N,N')$ depends only on the topology of $N$ and $N'$ if $N$ and $N'$ are compact subspaces in the interior of $\II$, but on details of the embeddings when they are non-compact manifolds. We also have Example~\ref{ex:R} which shows that even for $1$-dimensional manifolds, automorphism groups can be non-trivial.
In \S~\ref{sec:outerisos} we give examples relating automorphism groups $\Mot(N,N)$ and $\Mot(N',N')$ when $N$ and $N'$ are not connected in the motion groupoid. 

In \S\ref{sec:schematics}
we give two
alternative realisations of motions from $N$ to $N'$ in $M$,
as certain elements of $\Topo(M\times \II, M)$ and of $\Topo^h(M\times \II,M\times \II)$. The former connects to a direct generalisation of Artin braids \cite{artin25,artin}. We develop this further in \S\ref{sec:braids}.
The latter suggests the existence of a map into a generalised tangle category, where concrete elements are embeddings $N\times \II\to M\times \II$, with conditions.
These realisations also lead to some useful schematic representations of motions.

\medskip

In Section~\ref{sec:equivalence}, we discuss alternative ways to formulate the relation $\simm$. 
Firstly, in Theorem~\ref{th:mg2}, we prove that $\simm $ defines the same relation on motions
as the relation $\simrp $ 
given above. The relation  {$\simrp$ is} the same relation used to construct relative homotopy sets, thus facilitating the use of the long exact sequence of homotopy groups 
to
investigate properties of
the functor $\F\colon \Mot^A \to \mcg^A$ in \S\ref{sec:mg_to_mcg}.

As noted above, a flow $f$, together with $N\subset M$, naturally gives rise to a subset $\W(f,N)  \subset M\times \II$, the {worldline} of the corresponding motion. If $N$ is a finite subset in the interior of $D^2$, and $f$ is a flow of $D^2$, then $\W(f,N)$ is a tangle in $D^2 \times \II $, and one such example is as shown on the middle image in Fig.~\ref{fig:my_label002}. If $N\subset \mathrm{int}(D^2)$ is the union of a point and a line, and $f$ the flow of $D^2$ depicted in the left-hand-side of Fig.~\ref{fig:my_label002}, then $\W(f,N)$ would look like
the right-hand-side of Fig.~\ref{fig:my_label002}. For the case of the motion of an unlink in $D^3$, the worldline is a disjoint union of embedded tubes in $D^3\times \II$. It should be observed that motions and worldlines are 
very 
different:
the former are paths in the homeomorphism group of $M$, the latter are subsets of $M\times \II$.

 In Section~\S\ref{sec:laminated} we prove that it is, in fact, possible to understand motion equivalence in terms of 
 level preserving isotopies between worldlines (Theorem~\ref{th:me_bflpai}). 
  This is a significant step towards proving the well-definedness, and potentially injectivity, of functors from motion groupoids into generalised tangle categories \cite{BaezDolan}, and from our understanding this is the only such result in the literature. It also has applications to defining representations of motion groups, as explored in \cite{Torzewska}.

 Precisely, in Theorem~\ref{th:me_bflpai}, we prove that two motions 
 and 
 are motion equivalent if, and only if, their worldlines 
 and 
 are \textit{level preserving ambient isotopic} subsets of $M\times \II$, through an isotopy fixing $M\times \{0\}$ and $M\times \{1\}$.
 The proof relies on the above result that motions are equivalent if and only if they are relative path-equivalent, along with Proposition~\ref{prop:dahm} that motions with the same worldline are equivalent.
 
 Still in Section~\ref{sec:laminated},
we also 
define a further relation on 
$\premots \times \Power M $
(and hence on motions)
by $(f,N)\sim (f',N')$
if $\W(f,N)=\W(f',N')$ or if 
$f,f'$ are path homotopic.
Let $\simww$ be the equivalence closure of this relation. Using Theorem~~\ref{th:me_bflpai}, we prove that $\simw$ defines the same relation on motions as $\simm$
(this is Theorem~\ref{th:motequiv_worldlineequiv})

In Section~\ref{sec:braids} we construct, for a manifold $M$, the category $\FMot{M}$. The morphisms in $\FMot{M}$ are {\it `\fakemotion{}s'}. We use this nomenclature since these are maps $N\times \II\to M$ thus appear to specify a movement of $N$ in $M$, although there does not always exist a motion $M\times \II\to M$, which restricts to a given \fakemotion{}.
We prove that there is a functor $\T\colon \FMot{M}^A\to \Mot^A$ (Theorem~\ref{thm:MottoFmot}).
In Section~\ref{sec:artinbraids} we restrict to the case that $N$ is a finite set of points in the interior of a manifold. In this case we get that $\T$ restricts to an isomorphism on automorphism groups. This makes the connection with Artin's presentation of the braid group \cite{artin}, thus proving that there exists subgroupoids of $\Mot$ which have finite presentation.
More generally $\T$ is an isomorphism of groupoids in the case of finite points (Theorem~\ref{th:groupoid_artinbraid}).
For $K$ a finite subset in the interior of a disk, the results in this section also rigorously prove isomorphisms $\Mot[\R^2](K,K)\cong \Mot[D^2]^{\partial D^2}(K,K)\cong \Mot[D^2\setminus \partial D^2](K,K)$, explaining why these various settings are often used interchangeably in the literature.

\medskip 

In Theorems~\ref{th:mcg} and \ref{th:mcga} we have the mapping class groupoid and its $A$-fixing version. These generalise the mapping class group of a submanifold $N$ in a manifold $M$ \cite{birman} cf. for example \cite{farb,ivanov}.
For $C\subset D^3$ a subset consisting of unknotted, unlinked circles in the $3$-disk, then $\mcg[D^3]^{\partial D^3}(C,C)$ is the extended loop braid group as in \cite{damiani}.
We also have some examples demonstrating the utility of results on homeomorphism spaces available in the literature.

\medskip

In \S\ref{sec:mg_to_mcg} we relate the motion groupoid and mapping class groupoid formalisms, giving the functor $\F\colon \Mot^A\to \mcg^A$ in Theorem~\ref{le:functor_Mot_to_MCG} and proving in Theorem~\ref{th:mot_to_mcg} that this is an isomorphism if the space of $A$-fixing homeomorphisms of $M$ is path connected and has trivial fundamental group.
We also give examples to demonstrate the utility of the functor.
In \S\ref{sec:D^m}, we show that $\Mot[D^m]^{\partial D^m}\cong \mcg[D^m]^{\partial D^m}$ for all integer $m$. The case $m=2$ gives an isomorphism between two realisations of the braid group. We also show, in \S\ref{sec:D2_movebdy}, that if we remove the condition that $\partial D^m$ is fixed, we no longer have an isomorphism.
In \S\ref{ex:S1} and \S\ref{ex:2sphere} we consider the (non-isomorphism) functor $\F$ in the case $M=S^1$ and $M=S^2$.
In each of these examples we rely heavily on known results on mapping class groups and spaces of homeomorphisms \cite{hamstrom}.

\medskip

To complete the introduction, we make a couple of comments on our technical choices.

In the current paper we do all constructions in the topological category, following \cite{dahm,goldsmith}. 
We expect a construction in the smooth category to be possible, and this is certainly and interesting direction of future study.
Various smooth motion group constructions exist in the literature, see 
 \cite{baez,qiu,Wattenberg}.

For convenience we have formulated everything for ambient space $M$ a manifold.
 However, the constructions in this paper will also lead to a theory of motion groupoids, fake motion groupoids and mapping class groupoids  of spaces if the word ``manifold''   is substituted by ``locally compact, locally connected, Hausdorff topological space'' 
 e.g. finite graphs, or locally finite CW-complexes, which are Hausdorff, locally connected and locally compact \cite[Corollary 1.3.3, Proposition 1.5.10]{FP}.
 Note, however, that our results relating these groupoids in the manifold case may no longer hold in this generality.

\subsection{Paper Overview} \label{ss:overview}
In Section~\ref{sec:prelim}
we begin 
by 
introducing notation and recalling technology that will be useful in our construction.
In Section~\ref{ss:compact_open} we recall the compact-open topology on spaces of continuous maps. 
We then give conditions under which the usual product-hom adjunction in the category of sets lifts to an adjunction in the category of topological spaces.
In Section~\ref{sec:magmoids} we introduce magmoids and congruences on magmoids as a tool for constructing groupoids.
In Section~\ref{sec:path} we obtain the fundamental groupoid of a 
topological space $X$
(Proposition~\ref{pr:fundamentalgroupoid}) from the \textit{path magmoid}. 
In  Section~\ref{ss:selfhomeos} we
give the construction of a groupoid 
of self-homeomorphisms $\Hom$ corresponding to a manifold $M$, with object class the power set $\Power M$ (Definition~\ref{Def:homeoMA}).

In Section~\ref{sec:motions} the first main theorem is Theorem~\ref{th:mg}, 
the construction of the motion groupoid 
$\Mot$ 
of a  manifold $M$.
Picking a single set in $\Power(M)$ and looking at the group of automorphisms we get back the motion group.
We also have Theorem~\ref{le:motion_MxI} which says that motions $N$ to $N'$ in $M$ are in correspondence with a certain subset of $\TOPO^h(M\times \II, M\times \II)$.
In Section~\ref{ss:examples} we have 
examples.

In Section~\ref{sec:rel groupoid} we have  Theorem~\ref{th:mg2} which says that relative path-equivalence defines the same equivalence relation on motions as motion equivalence. We then have  Theorems~\ref{th:lpai_setwise},\ref{th:me_bflpai} and \ref{th:motequiv_worldlineequiv}, which each prove various equivalence relations on worldlines of motions define the same relation as motion equivalence.

In Section~\ref{sec:braids} we have
the construction of a groupoid of \fakemotion{}s (Lemma~\ref{lem:Fake-braids}), and the construction of the functor from $\T\colon \FMot{M}^A\to \Mot^A$ (Theorem~\ref{thm:MottoFmot}). There is also Theorem~
\ref{th:geometric_braids} which says that there is an isomorphism $\Mot^{\partial M}(K,K)\cong \FMot{M}^{\partial M}(K,K)$ where $K$ is a  finite set in the interior of the manifold, and Theorem~\ref{th:groupoid_artinbraid}  
which says this extends to a groupoid isomorphism for the subgroupoids of all such $K$.
Finally we have Theorem~\ref{th:geometricbraid_nofix}
which says that 
the same isomorphisms hold without fixing the boundary in  $\Mot$.

In Section~\ref{sec:mcg} we construct the mapping class groupoid of a manifold $M$ (Theorem~\ref{th:mcg}).  Theorem~\ref{th:mcga} is a subset-fixing version.

In Section~\ref{sec:mg_to_mcg} we construct the functor $\F\colon \Mot^A\to \mcg^A$ (Theorem~\ref{th:mot_to_mcg}).
In Theorem~\ref{th:mot_to_mcg} we
prove that $\F$ is an isomorphism if $\pi_0(\TOPO^h(M,M),\id_M)$ and $\pi_1(\TOPO^h(M,M),\id_M)$ are trivial.
In Section~\ref{ss:longexactseq} we give some examples demonstrating the utility of the functor.

\newpage

\newpage
\vspace{-1cm}
\glsaddall
\setlength{\glsdescwidth}{0.8\textwidth}
\printglossary[style=super]

\newpage

\section{Preliminaries} \label{sec:prelim}

In this section we recall concepts that we will need for our construction, and fix notation. First we have the compact-open topology on continuous maps, which leads to a product-hom adjunction in the category of topological spaces in Section~\ref{sec:tensor_hom}.
We then have magmoids, and groupoids defined in terms of magmoids in Section~\ref{sec:magmoids}.
Our first example here is the path magmoid of a topological space (Definition~\ref{de:pathspace}).
In Section~\ref{sec:congruence} we have congruences on magmoids, and on groupoids, which lead to quotient structures.
Finally we give a construction of the fundamental groupoid of a topological space; as a quotient of the path magmoid.

\subsection{The compact-open topology}\label{ss:compact_open}

Denote by $\Topo$ the category with topological spaces (spaces) as objects, continuous maps as morphisms and the usual composition.
Thus we denote the set of morphisms from a space $X$ to a space $Y$ by $\Topo(X,Y)$. 
We assume familiarity with $\Topo$ (see e.g. Chapter 1 of \cite{dieck}),
thus with product spaces. 
We denote the categorical product of spaces $X$ and $Y$ by $X\times Y$.
We similarly assume familiarity with $\Set$, the category of sets, functions and function composition.

\medskip

To formalise `flows' in a manifold, we will require a topology on the sets $\Topo(X,Y)$. 
For this we use the compact-open topology.
Here we give the definition and give some results demonstrating its intuitive naturality (Prop.\ref{pr:comsup1}).

\defn{
For $X$ a set, $\Power X$ denotes the power set.
}

\defn{ 
Given a set $X$, and a
subset $Y$ of $\Power X$ with {$\cup_{A \in Y} A=X$}, 
we write $\overline{Y}$ for the 
topology closure of $Y$. 
Hence the open sets in the topological space $(X,\overline{Y})$ are arbitrary unions of finite intersections of elements in $Y$.
We say that $Y$ is a {\em subbasis} of
$(X,\tau)$ if $\overline{Y} = \tau$.
(NB: in general, $\tau= \overline{Y}$ does not uniquely determine $Y$.)
}

\defn{ 
A {\em neighbourhood basis} of $(X,\tau)$ at $x\in X$ is a subset $B\subseteq \tau$, whose members are called basic neighbourhoods of $x$, such that every neighbourhood\footnote{ Our convention is that a neighbourhood of $x$ is a subset of $X$ containing an open set containing $x$.}  of $x$ contains an element of $B$.
}
\defn{ \label{de:compact-open}
Let $(X,\tau_X)$ and $(Y,\tau_Y)$ be
topological spaces, then the 
{\em compact-open} topology 
$\tauco{X}{Y}$
on $\Topo(X,Y)$ has subbasis 
\[
b_{XY}=\{\coball{X}{Y}{K}{U}\vert K\subseteq X \text{ is compact}, U\in\tau_Y \}
\]
where
		\[
	    \coball{X}{Y}{K}{U}=\left\{f\colon X \to Y \, \vert \, f(K)\subseteq U\right\}.
		\]
That is $\tauco{X}{Y} = \overline{b_{XY}}$.
\footnote{
There are two conventions for compact-open topology. The one written here (which is the classical one) and the one where we additionally impose that each $K$ in the $\coball{X}{Y}{K}{U}$ be Hausdorff. For example, \cite[Chapter 5]{May} takes the latter convention. This creates an a priori smaller set of open sets in the function space. However there will be no ambiguity issues in this paper as we will only be working with Hausdorff topological spaces.}
}

\prop{Let $Y$ be a space, and $X$ be the space with a single point. Then the $\tauco{X}{Y}$ is the same in the obvious sense as the topology on $Y$.}
\begin{proof}
Maps from $X$ to $Y$ can be labelled by their image in $Y$. 
The only compact set $K$ is the single point set $X$.
The set $\coball{X}{Y}{K}{U}$ is the set of maps labelled by the elements of $U$.
\end{proof}

\prop{\label{coprod} Let $Y$ be a space, and $X$ be the space consisting of $n$ points with discrete topology. Then $\tauco{X}{Y}$ is the same in the obvious sense as the topology on $Y^n=Y\times \ldots\times Y$.}
\begin{proof}
Maps from $X$ to $Y$ are in bijection with tuples $(y_1,\dots,y_n)\in Y^n$ where $y_i$ is the image of $x_i\in X$.
All subsets $K$ of $X$ are compact and we have, using this bijection, 
\[
\coball{X}{Y}{K}{U}\cong\{(y_1,\dots,y_n)\;\vert \; y_i\in U \text{ if } x_i\in K\}.
\]
Hence elements of the subbasis of $\tauco{X}{Y}$ are open sets in $Y^n$.

Basis elements in the topology on $Y^n$ are obtained from the compact open topology as follows.
Let $U^n$ be a basis open set in the topology on $Y^n$. Then $U^n$ is of the form $U_1\times\ldots \times U_n$, with $U_1,\dots, U_n$ open in $Y$.
Now
\[
\coball{X}{Y}{\{x_i\}}{U_i}= \{(y_1,\dots,y_n)\;\vert \; y_i\in U_i\},
\]
and $\cap_{i}\coball{X}{Y}{\{x_i\}}{U_i}=U^n$.
\end{proof}

In the case $Y$ is a metric space we have an interpretation of the compact-open topology in terms of a metric on $\Topo(X,Y)$. 
\prop{ \label{pr:comsup1}
{\rm (A.13 in \cite{hatcher}) }
Let $X$ be a compact space and $Y$ a metric space with metric $d$.
Then 
\\
(i) the function
\[ d'(f,g) \; := \; \sup_{x \in X} d(f(x),g(x))  \]
is a metric on $\Topo(X,Y)$; and 
\\
(ii)
the compact open topology on $\Topo(X,Y)$ is the same as the one defined by the metric
$d'$.
}
\begin{proof}
See A.13 in \cite{hatcher}.
\end{proof}

\subsection{The space \texorpdfstring{$\TOPO(X,Y)$}{} and the product-hom adjunction}\label{sec:tensor_hom}

In addition to its intuitive naturality,
the compact-open topology allows us to find a partial lift of the classical product-hom adjunction in $\Set$ to an adjunction in $\Topo$ (Theorem \ref{th:tensorhom}).
We will make extensive use of this adjunction throughout this paper.
First, in Theorem~\ref{le:motion_MxI} to prove that motions in $M$ 
to have an additional interpretation as homeomorphisms from $M\times \II$ to $M\times \II$.
Then later in Section~\ref{sec:laminated}, to understand motion equivalence in terms of a relation on worldlines, and again in Lemma~\ref{le:MCG_pi0}.

This adjunction holds subject to some conditions which are not too restrictive for us.
In particular, the compact-open topology allows us to define a right-adjoint to the functor $-\times Y\colon \Topo \to \Topo$ (see Lemma~\ref{le:product_functor}), when $Y$ is a locally compact Hausdorff space. (The case $Y=[0,1]$ was one of the examples given in the original reference on adjoint functors \cite[pp 294]{Kan}.)

\medskip

In this section we also give some results regarding the continuity of the composition in $\Topo$ with respect to the compact open topology.

\medskip 

We will use capital $\TOPO(X,Y)$ to indicate the morphism set 
$\Topo(X,Y)$ considered as a space with the compact-open topology, so
\[
\TOPO(X,Y)=(\Topo(X,Y),\tauco{X}{Y}).
\]

\lemm{\label{le:product_functor}
Fix a topological space $Y$. There exists a functor  
$-\times Y\colon\Topo \to \Topo$ constructed
as follows.
A space $X$ is sent to the product space $X\times Y$. 
A continuous map $f\colon X\to X'$ is sent to the map 
$f\times \id_Y \colon X\times Y\to X'\times Y$,
$(x,y)\mapsto (f(x),y)$. 
We will refer to this as the {\em product functor}.
}
\begin{proof}
We first show that, for a map $f\colon X\to X'$, $f\times \id_Y$ is a continuous map $X \times Y$ to $X'\times Y$.
Let $U'\times V$ be a basis open set in $X'\times Y$. Then the preimage under $f\times \id_Y$ is $f^{-1}(U')\times V$ which is open since $f$ is continuous. 
It is clear that the product functor preserves the identity and respects the composition.
\end{proof}

\lemm{\label{le:homfunctor}
Fix a topological space $Y$. 
There exists a functor ${\TOPO( Y,-)\colon}\Topo \to \Topo$ constructed as follows.
A space $Z$ is sent to the space $ \TOPO(Y,Z)$.
A continuous map $f\colon Z\to Z'$ is sent to
$f\circ -\colon \TOPO(Y,Z)\to \TOPO(Y,Z')$,
$g\mapsto f\circ g$. 
We will refer to this as the {\em hom functor}.
}

\begin{proof}
We first show that $f\circ -$ is a continuous map. 
Open sets in the subbasis of $\tauco{Y}{Z'}$ are of the form $\coball{Y}{Z'}{K}{U}$ for some $K\subseteq Y$ a compact set and $U\subseteq Z'$ an open set.
The set $f^{-1}(U)$ is open in $Z$ since $f$ is a continuous map.
Hence $\coball{Y}{Z}{K}{f^{-1}(U)}$ is an open set in $\tauco{Y}{Z}$.
We show that the inverse image of $\coball{Y}{Z'}{K}{U}$ under  $f\circ -$ is precisely $\coball{Y}{Z}{K}{f^{-1}(U)}$.
For any $g\in \coball{Y}{Z}{K}{f^{-1}(U)}$ we have $f\circ g\in \coball{Y}{Z'}{K}{U}$.
Conversely suppose $h\in \coball{Y}{Z'}{K}{U}$ can be written in the form $f\circ g'$ for some $g'\in \TOPO(Y,Z)$, then $g'\in \coball{Y}{Z}{K}{f^{-1}(U)}$. 

It is straightforward to see that $\TOPO(Y,-)$ preserves the identity and respects composition.
\end{proof}

For a category $\CC$, we will use $\Homf\colon \CC^{op}\times \CC\to \Set$ to denote the usual bifunctor, see \cite[pgs.~34,38]{MacLane}. (Note that by fixing first argument and topologising the image of objects, we get back the hom functor of Lemma~\ref{le:homfunctor}.)

\medskip

The following Lemma gives conditions under which the usual hom-tensor correspondence from $\Set$
is well-defined in $\Topo$.

\lemm{\label{th:tensorhom}
Let $Y$ be a locally compact Hausdorff topological space.
The product functor 
$-\times Y $
is left adjoint to the hom functor $\TOPO(Y,-)$.
In particular,
for objects $X,Y,Z\in\Topo$,
 this gives a set map
\ali{
\Phi\colon\Topo(X, \TOPO(Y,Z)) &\to \Topo(X\times Y, Z) \\
f &\mapsto ((x,y)\mapsto f(x)(y))
}
that is a bijection{, natural in the variables $X$ and $Z$}.
\footnote{There is in fact an adjustment of the compact open topology which, with an adjustment to the product, gives an adjunction without the need to restrict $Y$. See \cite[Sec.5.9]{brownt+g} for more information. 
}
}

\begin{proof}
    	That we have a bijection of sets is proved in Proposition A.14 of \cite{hatcher}. 
	It remains to prove that this bijection is natural.
	Suppose we have continuous maps $\alpha\colon X'\to X$ and 
	$\beta\colon Z\to Z'$, then we must show we have a commuting diagram of the form
	\[
	\begin{tikzcd}
		\Topo(X,\TOPO(Y,Z)) \ar[r,"{\Phi}"]\ar[d,"{\Homf(\alpha,-\circ \beta)}"']& \Topo(X\times Y,Z) \ar[d,"{\Homf(\alpha\times \id_Y,\beta)}"]\\
		\Topo(X',\TOPO(Y,Z'))\ar[r,"{\Phi}"'] & \Topo(X'\times Y,Z').
	\end{tikzcd}
	\]
	Looking first at the left hand vertical arrow, a map $f\colon X\to \TOPO(Y,Z)$ is sent to the map $X'\to \TOPO(Y,Z')$, $x'\mapsto \beta \circ f(\alpha(x'))$,
	and then to $(x',y)\mapsto (\beta \circ f(\alpha(x')))(y)$ in $\Topo(X'\times Y,Z')$.
	Going first along the top, a map $f$ is sent to the map $ X\times Y \to Z$, $(x,y)\mapsto f(x)(y)$ and then to the map $X'\times Y\to Z'$ defined by $(x',y) \mapsto(\beta \circ f(\alpha(x')))(y)$. 
\end{proof}

For any space $X$ then $\Topo(X,X)$ is a monoid,
with identity the identity map. 
The subset of maps which are set bijections is a submonoid.
Let $\Topo^h$ be the subcategory of $\Topo$
with the same objects as $\Topo$ and morphisms which are homeomorphisms.
(Note that the indicated subset is in fact closed under composition.)
Then $\Topo^h(X,X)$ is the group of homeomorphisms $\sh{f}\colon X \to X$.
Denote by $\TOPO^h(X,X)$ the subspace of $\TOPO(X,X)$ with underlying set $\Topo^h(X,X)$.

In Section~\ref{sec:motions} we will be interested in formalising how certain paths in $\TOPO^h(M,M)$, where $M$ is a manifold, induce  `motions' of subsets in $M$.
We will introduce a `pointwise' composition and inverse of such motions which requires that $\TOPO^h(M,M)$ is a topological group.

\begin{theorem}[{\cite[Thm.4]{arens}}]\label{le:top_group}
If $X$ is a locally connected, locally compact Hausdorff space then $\TOPO^h(X,X)$, is a topological group.
(This means the composition $(\sh{f},\sh{g})\mapsto \sh{g} \circ \sh{f} $ and the map $\sh{f}\mapsto \sh{f}^{-1}$ are both continuous.)
\end{theorem}

\begin{proof}
See Section~\ref{sec:TopGroupProof}.
\end{proof}

Notice that if a space $X$ satisfies the conditions of Theorem~\ref{le:top_group}, then $X$ also satisfies the conditions of Lemma~\ref{th:tensorhom}.

\medskip 

In general, for fixed topological spaces $X$, $Y$ and $Z$, the composition map $\TOPO(X,Y) \times \TOPO(Y,Z) \to \TOPO(X,Z)$ is continuous in each variable, despite the fact that it is not always continuous as a function of two variables (\cite[page 259, 2.1 and 2.2.]{dugundji}). We give the proof here, and will use this weaker result (in comparison to the previous lemma) where possible to emphasise where 
continuity of the composition map is really necessary for a given construction.

\begin{lemma}\label{le:comp_continuous_eachvariable}
Let $Y$ be a space. \\
(I) For any $\sh{g}\in \TOPO(X,Y)$, the map $-\circ\sh{g}\colon\TOPO(Y,Y)\to \TOPO(X,Y)$, $\sh{f}\mapsto \sh{f}\circ\sh{g}$ is continuous, and\\
(II) for any $\sh{g}\in \TOPO(Y,Z)$, the map
$\sh{g}\circ-\colon\TOPO(Y,Y)\to \TOPO(Y,Z)$, $\sh{f}\mapsto \sh{g}\circ\sh{f}$ is continuous.
\end{lemma}
\begin{proof}
(II)  For a subbasis open set $\coball{X}{Y}{K}{U}$ (with notation as in Definition~\ref{de:compact-open}), with $K\subseteq X$ compact and $U\subseteq Y$ open, 
	we have $\sh{f\circ g} \in \coball{X}{Y}{K}{U}\iff \sh{f}(\sh{g}(K))\subseteq U \iff \sh{f}\in \coball{Y}{Y}{\sh{g}(K)}{U}$, where the latter subset of the function space is open.\\
(I) For a subasis open set $\coball{Y}{Z}{K}{U}$ 
with $K\subseteq Y$ compact and $U\subseteq Y$ open, we have $\sh{g\circ f}\in \coball{Y}{Z}{K}{U}\iff \sh{g}(\sh{f}(K))\subseteq U\iff \sh{f}\in \coball{Y}{Y}{K}{\sh{g}^{-1}(U)}$, where the latter subset of the function space is open.
\end{proof}

\subsection{Groupoids and magmoids}
\label{sec:magmoids}

In this work constructions of groupoids are a recurrent theme.
Such constructions will often start from something 
`concrete' with a 
non-associative composition. 
Equivalence classes of these concrete things eventually become the morphisms of the constructed groupoid.
So it will be useful to have a general machinery for studying such constructions.
For example we can think of the underlying idea of a category as  objects, morphisms between objects, and a composition which is not necessarily associative, or unital 
- a categorified magma, or  {\it `magmoid'}.
We can then study congruences on these magmoids, some of which will lead to groupoids.

\defn{ \label{de:magmoid}
	A {\em magmoid} ${\mM}$ 
	is a triple 
	$$
	{\mM} \; = \;  (Ob(\mM),\mM(-,-),\Delta_{\mM})
	$$ 
	consisting of
	\begin{itemize}
		\item[(I)] a collection $Ob(\mM)$ of \textit{objects},
		\item[(II)] for each pair $X,Y\in Ob(\mM)$ a set $\mM(X,Y)$ of \textit{morphisms from $X$ to $Y$} (we use $f\colon X\to Y$ to indicate that $f$ is a morphism from $X$ to $Y$), and 
		\item[(III)] for each triple $X,Y,Z \in Ob(\mM)$ a {\it composition}
		\[
		\Delta_\mM\colon \mM(X,Y)\times \mM(Y,Z)\to \mM(X,Z).
		\]
	\end{itemize}
}
\noindent Magmoids can be compared with cubical sets with composition, as defined in \cite{Brown_Higgins_cubes}. Quotients there lead to cubical $n$-groupoids.

\medskip
For our example below we will need some notation.

\defn{ \label{de:pathspace}
	Let $X$ be a topological space.
	An element of $\Topo(\II,X)$ is called a {\em path} in $X$ and
	$ \TOPO(\II,X)$ is
	called the {\em path space of $X$}.
	\\
	Notation: Let $\gamma \in \Topo(\II,X)$.
	We use $\gamma_t$ for $\gamma(t)$. We  say $\gamma$ is a path from $x$ to $x'$
	when $\gamma_0 = x$ and $\gamma_1=x'$.
	For $x,x'\in X$, let 
	\[\Path X(x,x')=\{\gamma \colon \II \to X \;\vert\; \gamma\in \Topo(\II,X), \, \gamma_0=x, \, \gamma_1=x'\}.
	\]
	}

\newcommand{\PsiOps}{\Gamma_{\!\frac{1}{2}}} 

\prop{ \label{pr:path_comp}
	Let $X$ be a topological space.
	For any $x,x',x''\in X$, there exists a composition 
	\ali{
		\PsiOps \colon \Path X(x,x')\times \Path X(x',x'') &\to \Path X(x,x'')\\
		(\gamma, \gamma')&\mapsto \gamma'\gamma
	}
	(note the null composition symbol here) with 
	\begin{align} 
		\label{eq:pathcomp} 
		(\gamma'\gamma)_t= \begin{cases}
			\gamma_{2t} & 0\leq t\leq 1/2, \\
			\gamma'_{2(t-1/2)} & 1/2\leq t \leq 1.
		\end{cases}
	\end{align}	
	(Note the convention to choose distinguished point $t=1/2$, we could have chosen any $a \in (0,1)$.)
}
\begin{proof}
	We have $\gamma_1=\gamma'_0$ so Equation \eqref{eq:pathcomp} defines a continuous map.
	Notice $(\gamma'\gamma)_0=\gamma_0=x$ and $(\gamma'\gamma)_1=\gamma'_1=x''$.
	Hence $\gamma'\gamma\in \Path X(x,x'')$.
\end{proof}
\rem{We find the above convention for ordering path composition to be convenient as we will later map paths to functions.}

\defn{\label{de:path_magmoid}
Let $X$ be a topological space. 
From Prop.\ref{pr:path_comp} we may 
define the
{\em path magmoid} 
$$
\Path X  \;=\; (X,\Path X(-,-), \PsiOps ) .
$$
}
Note that the magmoid $\Path X$  is neither associative nor unital.

\defn{Let $\mM$ and $\mM'$ be magmoids. A {\em magmoid morphism} 
	$F\colon \mM\to \mM'$
	is a map sending each object $X\in Ob(\mM)$ to an object $F(X)\in Ob(\mM')$ and each morphism $f\colon X\to Y$ in $\mM$ to a morphism $F(f)\colon F(X)\to F(Y)$ in $\mM'$ such that for any  morphisms $f,g \in \mM$ 
	\[
	F(\Delta_{\mM}( f,g))= \Delta_{\mM'}( F(f),F(g))
	\]
	wherever $\Delta_\mM(f,g)$ is defined.
}

 Magmoid representation theory can be framed in terms of magmoid morphisms in a category of magmoids.
\begin{proposition}\label{pr:magmor_comp}
    For any set $S$ of magmoids, there exists a small category with objects $S$ and morphisms all magmoid morphisms between the elements of $S$.\\
	The \ppm{(partial)} composition of magmoid morphisms sends a pair of magmoid morphisms $F\colon \mM\to \mM'$ and $F'\colon \mM'\to \mM''$ to $F'\circ F\colon \mM\to \mM''$ with
	\[
	F'\circ F(f\colon X\to Y )=
	F'(F(f))\colon F'(F(X))\to F'(F(Y)).
	\]
\end{proposition}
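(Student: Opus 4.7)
The plan is to verify the axioms of a small category directly, treating the proposition as a routine consistency check on the definitions. First I would show that the composition $F'\circ F$ of two magmoid morphisms is again a magmoid morphism. This requires checking compatibility with the partial composition: for morphisms $f,g$ in $\mM$ with $\Delta_{\mM}(f,g)$ defined, one computes
\[
(F'\circ F)(\Delta_{\mM}(f,g)) \;=\; F'(F(\Delta_{\mM}(f,g))) \;=\; F'(\Delta_{\mM'}(F(f),F(g))) \;=\; \Delta_{\mM''}(F'(F(f)),F'(F(g))),
\]
where the second equality uses that $F$ is a magmoid morphism and the third uses that $F'$ is. Since the composite also clearly sends objects to objects and morphisms to morphisms with compatible source/target assignments, it is a magmoid morphism $\mM \to \mM''$.

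Next I would verify that for each magmoid $\mM \in S$ there is an identity morphism $\id_{\mM}\colon \mM \to \mM$, defined as the identity on objects and on morphism sets. This trivially respects $\Delta_{\mM}$, and composing on either side with any magmoid morphism $F$ recovers $F$ from the definition. Associativity of composition follows from associativity of composition of the underlying set-maps on objects and on morphism sets, since the formula $(F''\circ F')\circ F$ and $F''\circ (F'\circ F)$ agree pointwise on both objects and morphisms.

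Finally, to confirm smallness I would note that for any two magmoids $\mM, \mM' \in S$, a magmoid morphism $F\colon \mM \to \mM'$ is determined by its underlying function on the (set) class of objects $Ob(\mM)$ and the family of functions between the (set) morphism sets $\mM(X,Y)$ and $\mM'(F(X),F(Y))$. Since $S$ is a set and each $\mM \in S$ has a set of objects and set-valued hom collections (by Definition~\ref{de:magmoid}), the collection of all such $F$ is a set, and the disjoint union over pairs in $S\times S$ is again a set. The main "obstacle" --- really just a bookkeeping point --- is keeping track of which object-level map is induced so that $F'\circ F$ is evaluated unambiguously on each morphism; but this is immediate from the object-assignment data recorded in the definition of a magmoid morphism.
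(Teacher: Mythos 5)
Your proposal is correct and follows essentially the same route as the paper, which simply notes that well-definedness of $F'\circ F$ is straightforward and that associativity and identities are inherited from morphism composition in $\Set$; you have just written out those routine verifications (plus the smallness bookkeeping) explicitly.
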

\begin{proof}
	It is straightforward to check that $F'\circ F$ is well defined and is a magmoid morphism.
	Associativity and identities follow from the properties of morphisms in $\Set$.
\end{proof}

\subsection{Algebraic congruence and magmoid congruences}\label{sec:congruence}

\defn{A {\em congruence} $C$ on a magmoid $\mM$ consists of, for each pair $X,Y\in Ob(\mM)$ an equivalence relation $R_{X,Y}$ on $\mM(X,Y)$, such that 
	$f'\in[f]$, $g'\in[g]$ implies $\Delta_\mM(f',g')\in [\Delta_\mM(f,g)]$ where defined.
}

\exa{\label{ex:congruence_paths}
Let $X$ be a topological space.
For each $x,x'\in X$, define a relation on $\Path X(x,x')$ by $\gamma\sim \gamma'$ if there exists $f\in\Topo^h_{\partial \II}(\II,\II)$ such that $\gamma_t=\gamma'_{f(t)}$ for all $t\in \II$. (Intuitively, $\gamma'$ is a reparametrisation of $\gamma$.) Then it can be shown that the family of relations $(\Path X(x,x'),\sim)$ is a congruence on $\Path X$.}

\defn{Let $\mM=(Ob(\mM),\mM(-,-),\Delta_{\mM})$ be a magmoid and $C$ a congruence on $\mM$. The {\em quotient magmoid} of $\mM$ by $C$ is $\mM/C=(Ob(\mM),\mM(X,Y)/R_{X,Y},\Delta_{\mM/C})$ where for each triple $X,Y,Z\in Ob(\mM/C)$: 
		\ali{
			\Delta_{\mM/C}\colon\mM/C(X,Y)\times \mM/C(Y,Z)&\to \mM/C(X,Z)\\
			([f],[g])&\mapsto [\Delta_{\mM}(f,g)].	
		}
		(That the composition is well defined follows directly from the definition of a congruence.)
}

In practice we will often use the notation for the composition in $\mM$ to denote also the composition in the quotient.

\medskip
It follows directly from the definition of congruence that we have the following.

\lemm{
	Let $\mM$ be a magmoid and $C$ a congruence on $\mM$.
	There is an induced magmoid morphism  $\tilde{C}
	\colon  \mM \to \mM/C$, called the {\em quotient morphism}, which is the identity on objects and which sends morphisms to their equivalence class under $C$. \qedhere
}

\begin{defin}
	Let $\mM$ be a magmoid, $C$ a congruence.
	If there exists a magmoid $\mM'$ and full, non-identity magmoid morphisms $G\colon \mM \to \mM'$ and $H\colon \mM' \to \mM/C$ such that $\tilde{C}=H\circ G$, we say that the congruence $C$ has a {\em factor}.
\end{defin}

The classical definition of a groupoid (see e.g. \cite[Ch.~6]{brownt+g}) can be given as a magmoid plus extra structure. We give it here to fix notation.
We will see shortly that congruences have some useful characterisations when considered on magmoids which are also groupoids.

\defn{
	A {\em groupoid} ${\GG}$ is a tuple $\GG=(Ob(\GG),\GG(-,-),*_\GG,1_-,(-)\mapsto (-)^{-1})$ consisting of a magmoid $(Ob(\GG),\GG(-,-),*_\GG)$ such that 
	$Ob(\GG)$ is a set, and 
	\begin{itemize}
		\item[(IV)] for each $X\in Ob(X)$ a morphism
		$1_X\in \GG(X,X)$		
		called the \textit{identity};
		\item[(V)] for each pair $(X,Y)\in Ob(\GG)\times Ob(\GG)$ a function
		\ali{
		(-)^{-1}\colon \GG(X,Y)&\to \GG(Y,X)\\
		f&\mapsto f^{-1}
	}
		called the \textit{inverse assigning}, or just inverse, function;
	\end{itemize}
	such that the following axioms are satisfied.
	\begin{itemize}
		\item[$(\GG1)$] \textbf{Identity law:} for any morphism $f\colon X \to Y$, we have $1_Y*_\GG f = f = f*_\GG 1_X$.
		\item[$(\GG2)$] \textbf{Associativity:} for any triple of morphisms $ f\colon X\to Y$, $g\colon Y \to Z$ and $h\colon Z\to W$ we have $h*_\GG(g*_\GG f)=(h*_\GG g)*_\GG f$.
		\item [$(\GG3)$] \textbf{Inverse:}
		for any morphism $f\colon X\to Y$, we have $f^{-1}*_\GG f=1_X$ and $f*_\GG f^{-1}=1_Y$.
	\end{itemize}
We will sometimes replace $-$ notation with generic symbols where convenient. Let $\GG$ be a groupoid. By abuse of notation we will refer also to the underlying magmoid as $\GG$. 
}

\rem{
Note that the identities and inverses of a groupoid $\GG$ are uniquely determined from the underlying magmoid of $\GG$.}

\rem{Magmoid morphisms between groupoids automatically send identities to identities and inverses to inverses. So a functor of groupoids is simply a magmoid morphism between  underlying magmoids. This is not true for categories, where preservation of identities does not follow automatically from the preservation of composition. }

\prop{\label{pr:gpd_cong}
	Suppose $\GG=(Ob(\GG),\GG(-,-),*_G,1_,(-)^{-1})$ is a groupoid. Then for any congruence $C$ on $\GG$, there is a {\em quotient groupoid} $\GG/C=(Ob(\GG),\GG/C(-,-),*_{\GG/C},[1_{-}],[f]\mapsto [f^{-1}])$.}
\begin{proof}
    	$(\GG1)$ For all $[f]\colon X\to Y$ we have 
	\[
	[f]*_{\GG/C}[1_X]= [f*_{\GG} 1_X]=[f]=[1_Y*_{\GG} f ]=[1_Y]*_{\GG/C}[f].
	\]
	$(\GG2)$ Let $[f],[g],[h]$ be composable morphisms in $\GG/C$. Then
	\[
	[h]*_{\GG/C}([g]*_{\GG/C}[f])=h*_{\GG} g*_{\GG} f=([h]*_{\GG/C}[g])*_{\GG/C}[f].
	\]
	$(\GG3)$
	Any $[f]\in \GG/C(X,Y)$ has inverse $[f^{-1}]$ since 
	\ali{
		[f^{-1}]*_{\GG/C} [f]=[f^{-1}*_\GG f]=[1_X], \; \text{ and} &&
		[f]*_{\GG/C}[f^{-1}]=[f*_{\GG}f^{-1}]=[1_Y].&\qedhere
	}
\end{proof}
 
As discussed in Section~\ref{ss:stat_mot} below,
we are interested in starting from a magmoid, which describes a physical system,
and applying congruences until we arrive at a finitely generated category
(hopefully without pushing the interesting physics into the kernel).
Often we will find it convenient to do this by passing through a factor. 
When this factor is a groupoid $\GG$ we can construct a congruence on $\GG$ from a subgroupoid which is \textit{normal} and thus obtain a quotient \ppm{groupoid},
mirroring quotienting groups by normal subgroups.
We make this explicit here.

Everything in 
the remainder of
this section can be found in Section 1.4.3 of \cite{browngrpds}.
 
 \medskip
 
A subgroupoid $\HH$ of groupoid $\GG$ is said to be {\em wide} if $Ob(\HH)=Ob(\GG)$.
 
\defn{Let $\GG$ be a groupoid and $\HH$ a wide subgroupoid.  $Ob(\HH)=Ob(\GG)$. 
Then $\HH$ is said to be {\em normal} if
  for any morphism $g\colon X\to Y$ in $\GG$ and any $h\colon Y\to Y$ in $\HH$ we have $g^{-1}*_\GG h*_\GG g\colon X\to X$ is in $\HH$.\\
  We say $\HH$ is {\em totally disconnected} if for any $X,Y\in Ob(\HH)$ with $X\neq Y$ we have $\HH(X,Y)=\emptyset$.
 }

\lemm{\label{le:nsubgrpd_cong}
(See e.g. \cite[8.3.1]{brownt+g}.)
	Let $\GG$ be a groupoid and $\HH$ a normal, totally disconnected subgroupoid. For each $X,Y\in Ob(\GG)$ and $g,g'\in \GG(X,Y)$ the relation $g\sim g'$ if $g'^{-1}*_{\GG}g\in \HH$, is an equivalence relation on $\GG(X,Y)$.
	Moreover all such relations together form a congruence on $\GG$.\\
	We will denote this congruence also by $\HH$, the meaning will be clear from context. \qed}

\rem{Note that this is the weakest congruence such that all morphisms of the form $h\colon X\to X$ in $\HH$ become equivalent to the appropriate identity.}

\subsection{Interval 
\texorpdfstring{$\II=[0,1]$}{I=[0,1]}, 
space \texorpdfstring{$\TOPO(\II,X)$}{} 
and path-homotopy}
\label{sec:path} 
In this section we spend some time focusing on the space $\TOPO(\II,X)$ of paths in $X$.
We obtain the fundamental groupoid (Proposition \ref{pr:fundamentalgroupoid}) by quotienting the path magmoid by a congruence (Definition~\ref{de:path_magmoid}).
Some careful constructions of the fundamental groupoid can be found in the literature, for example in \cite{dieck} and \cite{brownt+g}, although our magmoid approach is non-standard and we will use (more radical versions of) similar ideas repeatedly in later sections so we think this `warm up' is worthwhile.

This also allows us to give a first example of the utility of the product-hom adjunction, Lemma~\ref{th:tensorhom};
paths in the fundamental groupoid are equivalent if and only if there is a path between them in the space of paths (Lemma~\ref{le:tensorhom_path}).
Throughout the rest of this paper we will use path-equivalence alongside several other equivalence relations so we also introduce some careful notation here.

\defn{ \label{de:pe}
	Let $X$ be a topological space.
	Define a relation on $\Path X (x,x')$ as follows.
	Suppose we have paths $\gamma,\gamma'\in\Path X(x,x')$, then $\gamma\simp \gamma'$ 
	if there exists a continuous map
	$H\colon \II \times \II \to X$
	such that 
	\begin{itemize}
		\item for all $t\in \II$, $H(t,0)=\gamma(t)$,
		\item for all $t\in \II$, $H(t,1)=\gamma'(t)$, and
		\item for all $s\in \II$,
		$H(0,s)=x$ and $H(1,s)=x'$.
	\end{itemize}
	Notation:
	We call such an $H$ a {\em path-homotopy} from 
	$\gamma$ to $\gamma'$.
}

\prop{\label{pr:pe}
	Let $X$ be a topological space. 
	For each pair $x,x'\in X$, $\simp$ is an equivalence relation on $\Path X(x,x')$.\\
	Notation: If $\gamma\simp \gamma'$ we say $\gamma$ and $\gamma'$ are {\em path-equivalent }.
 	We use $\classp{\gamma}$ for the path-equivalence class of $\gamma$.}
\begin{proof}
	We show that $\simp$ is reflexive, symmetric and transitive. Let $\gamma\in \Path X(x,x')$, $\gamma'\in \Path X(x,x')$ and $\gamma''\in \Path X(x,x')$ be paths with $\gamma\simp \gamma'$ and $\gamma'\simp \gamma''$.
	
	The relation is reflexive since the function $H(t,s)=\gamma_t$ is a path-homotopy from $\gamma$ to $\gamma$.
	
	By assumption, there exists a path-homotopy, say $H_{\gamma,\gamma'}$, from $\gamma$ to $\gamma'$. 
	The function $H_{\gamma',\gamma}(t,s)=H_{\gamma,\gamma'}(t,1-s)$ 
	is a 
	path-homotopy from $\gamma'$ to $\gamma$, hence the relation is symmetric.
	
	There also exists a path-homotopy, say $H_{\gamma',\gamma''}$, from $\gamma'$ to $\gamma''$.
	The function 
	\[
	H_{\gamma,\gamma''}(t,s)=\begin{cases}
		H_{\gamma,\gamma'}(t,2s) & 0\leq s\leq \frac{1}{2} \\
		H_{\gamma',\gamma''}(t,2(s-\frac{1}{2})) & \frac{1}{2} \leq s \leq 1.
	\end{cases}
	\]
	is a path-homotopy from $\gamma$ to $\gamma''$, so $\simp$ is transitive.
\end{proof} 

\begin{lemma}\label{le:tensorhom_path}
	Let $X$ be a topological space. 
	Let $\gamma,\gamma'\in \Path X(x,x')$ be paths.
	Then $\gamma\simp \gamma'$ if and only if there is a path
	$\tilde{H}\colon \II\to \Topo(\II,X)$ such that 
	$\tilde{H}(0)=\gamma$, $\tilde{H}(1)=\gamma'$ and for all $t\in \II$, $\tilde{H}(t)\in \Path X(x,x')$.
\end{lemma}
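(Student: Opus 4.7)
The statement is exactly the shape of a product–hom adjunction assertion with $Y=\II$, so the plan is to apply Lemma~\ref{th:tensorhom} (valid since $\II$ is locally compact Hausdorff) and then check that the boundary/endpoint conditions translate correctly between the two sides.

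First I would set up the bijection. Let $\sigma\colon\II\times\II\to\II\times\II$ be the coordinate swap $(s,t)\mapsto(t,s)$; this is a homeomorphism, so precomposition with $\sigma$ gives a bijection $\Topo(\II\times\II,X)\to\Topo(\II\times\II,X)$. Composing with the adjunction bijection $\Phi\colon\Topo(\II,\TOPO(\II,X))\to\Topo(\II\times\II,X)$ of Lemma~\ref{th:tensorhom} yields a bijection
\[
\Psi\colon\Topo(\II,\TOPO(\II,X))\longrightarrow\Topo(\II\times\II,X),\qquad \Psi(\tilde H)(t,s)=\tilde H(s)(t).
\]
So for any $\tilde H$ I set $H=\Psi(\tilde H)$ and conversely given $H$ I take $\tilde H=\Psi^{-1}(H)$; continuity in each direction is automatic from Lemma~\ref{th:tensorhom}.

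Second, I would match the boundary data. Suppose $\gamma\simp\gamma'$ via a path-homotopy $H$ as in Definition~\ref{de:pe}, and put $\tilde H=\Psi^{-1}(H)$. Then $\tilde H(0)(t)=H(t,0)=\gamma(t)$ and $\tilde H(1)(t)=H(t,1)=\gamma'(t)$, so $\tilde H(0)=\gamma$ and $\tilde H(1)=\gamma'$; and for any $s\in\II$, $\tilde H(s)(0)=H(0,s)=x$ and $\tilde H(s)(1)=H(1,s)=x'$, so $\tilde H(s)\in\Path X(x,x')$, as required. Conversely, given a path $\tilde H\colon\II\to\TOPO(\II,X)$ with the stated properties, $H=\Psi(\tilde H)$ satisfies $H(t,0)=\tilde H(0)(t)=\gamma(t)$, $H(t,1)=\tilde H(1)(t)=\gamma'(t)$, and $H(0,s)=\tilde H(s)(0)=x$, $H(1,s)=\tilde H(s)(1)=x'$; hence $H$ witnesses $\gamma\simp\gamma'$.

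\textbf{Anticipated difficulty.} There is essentially no obstacle beyond bookkeeping: the content of the lemma is exactly the adjunction, and the only care needed is to track which variable is the ``path parameter'' and which is the ``homotopy parameter'' so that the endpoint constraints $H(0,s)=x$, $H(1,s)=x'$ correspond to the constraint $\tilde H(s)\in\Path X(x,x')$, and the initial/terminal conditions on $H$ in $s$ correspond to $\tilde H(0)=\gamma$, $\tilde H(1)=\gamma'$. The swap $\sigma$ above is inserted precisely to align these conventions with the statement of Lemma~\ref{th:tensorhom}.
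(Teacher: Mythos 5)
Your proposal is correct and follows essentially the same route as the paper: both invoke the product--hom adjunction of Lemma~\ref{th:tensorhom} (using that $\II$ is locally compact Hausdorff) and then translate the endpoint conditions across the bijection. Your explicit coordinate swap just makes precise a convention the paper leaves implicit.
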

\begin{proof}
	We have that $\II$ is a locally compact Hausdorff topological space so Lemma~\ref{th:tensorhom} gives that there is a bijection between continuous maps $\II\times \II \to X$ and continuous maps $\II\to \TOPO(\II,X)$.
	We obtain the appropriate conditions by looking at the image of a path homotopy under this bijection.
\end{proof}

\lemm{\label{le:path_comp_well_defined}
	Let $X$ be a topological space.
	The equivalence relations $(\Path X(x,x'),\simp)$ for each pair $x,x'\in X$ are a congruence on $\Path X$.
}
\begin{proof}
	Suppose $\gamma,\gamma'\in \Path X(x,x')$ are path-equivalent and so there exists a path homotopy, say $H_{\gamma,\gamma'}$ from $\gamma$ to $\gamma'$.
	And suppose $\delta,\delta'\in \Path X(x',x'')$ are path-equivalent and so there exists a path homotopy, say $H_{\delta,\delta'}$ from $\delta$ to $\delta'$.
	Notice $H_{\gamma,\gamma'}(1,s) =
	H_{\delta,\delta'}(0,s)=x'$ and so the function
	\[
	H(t,s)=\begin{cases}
		H_{\gamma,\gamma'}(2t,s) & 0\leq t\leq \frac{1}{2} \\
		H_{\delta,\delta'}(2(t-\frac{1}{2}),s) & \frac{1}{2} \leq t\leq 1
	\end{cases}
	\]
	is a homotopy from $\delta\gamma$ to $\delta'\gamma'$.
\end{proof}

\prop{ \label{pr:fundamentalgroupoid}
	Let $X$ be a topological space. 
	There exists a groupoid
	 $$
	 \pi(X) \;=\; \Path X / \simp  \;=\; (X,\Path X(-,-)/\simp, \PsiOps ,\classp{e_x},\classp{\gamma^{rev}}) 
	 $$
	 with $\Path X$ as in Definition~\ref{de:path_magmoid}.
	 Here the identity morphism $\classp{e_x}$ at each object $x$ is the path-equivalence class of the constant path $(e_x)_t=x$ for all $t\in \II$.
	The inverse of a morphism $\classp{\gamma}$ from $x$ to $x'$ is the path-equivalence class of $\gamma^{rev}_t=\gamma_{1-t}$.\\
	We have obtained the {\em fundamental groupoid} of $X$.
	}
\begin{proof}
	($\GG1$)\; Suppose $\gamma\in \Path X(x,x')$,
	a suitable choice of path homotopy from $e_x\gamma$ to $\gamma$ is:
	\[
	H_{id}(t,s) = \begin{cases}
		\gamma_{\frac{t}{\frac{s}{2}+\frac{1}{2}}} & 0\leq t \leq \frac{s}{2}+\frac{1}{2}\\
		x & \frac{s}{2}+\frac{1}{2} \leq t\leq 1.
	\end{cases}
	\]
	A choice for $\gamma e_{x'}\simp \gamma$ can is given by splitting the segments at $\frac{1}{2}-\frac{s}{2}$, and using the subscript $\frac{t-\frac{1}{2}+\frac{s}{2}}{\frac{1}{2}+\frac{s}{2}}$.\\
	($\GG2$)\; The following function is a path homotopy $\gamma''(\gamma'\gamma)$ to $(\gamma''\gamma')\gamma$:
	\[
	H_{ass}(t,s)= \begin{cases}
		\gamma_{\frac{t}{\frac{s}{4}+\frac{1}{4}}} & 0\leq t\leq \frac{s}{4}+\frac{1}{4} \\
		\gamma'_{4(t-\frac{s}{4}-\frac{1}{4})}& \frac{s}{4}+\frac{1}{4}\leq t \leq \frac{s}{4}+\frac{1}{2} \\
		\gamma''_\frac{{t-\frac{s}{4}-\frac{1}{2}}}{\frac{1}{2}-\frac{s}{4}}& \frac{s}{4}+\frac{1}{2}\leq t \leq 1.
	\end{cases}
	\]
	($\GG3$)\;
	The following function is a homotopy $\gamma^{rev}\gamma$ to $e_x$:
	\[
	H_{in}(t,s)=\begin{cases}
		\gamma_{2t} & 0 \leq t\leq \frac{1}{2}-\frac{s}{2} \\
		\gamma_{1-s}& \frac{1}{2}-\frac{s}{2}\leq t \leq \frac{1}{2}+\frac{s}{2} \\
		\gamma_{1-2(t-\frac{1}{2})}& \frac{1}{2}+\frac{s}{2}\leq t \leq 1.
	\end{cases}
	\]
	The same segments, with first and last term independent of $s$, and middle term $\gamma_s$, defines a path homotopy  $\gamma\gamma^{rev}\simp e_x$.
\end{proof}

\rem{Let $X$ be a topological space and $x\in X$ be a point, we have that 
	$ \pi(X)(x,x)
	$
	is the fundamental group based at $x \in X$.}

\subsection{Action groupoid \texorpdfstring{$\Hom$}{Homeo(M)} of the action of self-homeomorphisms on subsets}\label{ss:selfhomeos}
In this paper, manifold means a Hausdorff topological manifold, which in particular is locally compact and locally connected.

    From here we will work always with $M$ a \axiomM{} possibly with boundary. 
    Then we have that $\TOPO^h(M,M)$ is a topological group (Theorem \ref{le:top_group}) and we can use the product-hom adjunction (Lemma \ref{th:tensorhom}).
    
    \medskip

\defn{ \label{def:actiongroupoid}
Let $P$ be a magma. 
A magma action of $P$ on a set $S$ 
is a map $\alphaa: P\times S \rightarrow S$ with $q(ps)=(qp)s$ (denoting the composition in $P$ of a pair $(p,q)$ as $qp$, and the image $\alpha(p,s)$ as $ps$).
Given  a magma action $\alpha$, we can construct an {\em action magmoid} 
$\mug_\alphaa$, which is a triple consisting of objects $S$, morphisms which are triples 
$(p,s,p s) \in P \times S \times S$, 
and a partial composition  
$((p,s,ps),(q,ps,qps))\mapsto(qp,s,qps)$.

If $P$ additionally has the structure of a group with identity $e\in P$, and $\alphaa$ additionally satisfies $es=s$, then $\mug_\alphaa$ is a groupoid with $(p,s,ps)^{-1}=(p^{-1},ps,s)$, called the {\em action groupoid}.}

For an action $\alphaa \colon P\times S\to S$, we will find it useful to keep track of both $P$ and $S$ in our notation for the action groupoid, so we denote $\mug_\alphaa$ as
$
\agrpd{P}{\alphaa}{S}.
$

Observe that since every magma (group) action on a set $\alphaa \colon P\times S\to S$ induces an action on the power set $\Power S$, 
for every action there is a corresponding action magmoid (groupoid) of the action on $\Power S$, which we denote $
\agrpd{P}{\alphaa}{\Power S}.
$

\medskip

Let $M$ be a \axiomM{} and $A\subset M$ a subset.
Recall the definition of $\Topo^{h}(M,M)$ from \S\ref{sec:tensor_hom}.
Let $A\subset X$ a subset and let $\Topo^h_A(M,M)$ denote the subset of $\Topo^h(M,M)$ of homeomorphisms which fix $A$ pointwise.

Here we organise the elements of $\Topo^{h}_A(M,M)$ into a groupoid $\Hom^A$ with objects $\Power M$, constructed from an action on $M$.
In general this category 
is too large to be an interesting object of study itself but it is a natural first step in the construction that follows.

\medskip

\defin{\label{Def:homeoMA}
The group $\Topo^h_A(M,M)$, of homeomorphisms fixing $A$ pointwise, acts on the set $M$ as $\sh{f}\acts m=\sh{f}(m)$. We denote the action groupoid of the induced action on $\Power M$ as
\[
\Hom^A= \agrpd{\Topo^h_A(M,M)}{\acts}{\Power M}.
\]}

We will denote triples $(\sh{f},N,\sh{f}(N))\in \Hom^A(N,N')$ as 
$\shmor{f}{A}{N}{N'}$. 
In this notation, 
the identity at each object $N$ is
$\shmot{\id_M}{A}{N}{N}$
 where $\id_M$ denotes the identity homeomorphism, and, given a morphism $\shmor{f}{A}{N}{N'}$, the inverse is the morphism $\shmor{f^{-1}}{A}{N'}{N}$.

We will use just $\Hom$ to denote $\Hom^{\emptyset}$, so morphism sets are of the form $\Homn$.

	    \lemm{\label{le:homcompl}
	    Let $M$ be a \axiomM{} and $A\subset M$ a subset. For any subsets $N,N'\subset M$ we have 
	    \[
	    \Hom^A(N,N')\cong \Hom^A(M\setminus N, M\setminus N').
	    \]}
	    \begin{proof}
Since any $\shmor{f}{A}{N}{N'}$ is a bijection, 
$\sh{f}(N)=N'$ if and only if $\sh{f}(M\setminus N)=M\setminus N'$.
\end{proof}

Abusing notation, we will also use $\Hom^A(N,N')$ to denote the set obtained by projecting to the first element of the triple. 
Then we have $\Topo^h(M,M)=\Hom(\emptyset,\emptyset)=\Hom(M,M)$ and every $\Hom^{A}(N,N')\subseteq \Topo^h(M,M)$. Notice each self-homeomorphism $\mc{f}$ of $M$ will belong to many such $\Hom^A(N,N')$.

	    		\lemm{\label{le:Hom(N,N')_togroup}
	    		Let $M$ be a \axiomM{} and $A\subset M$ a subset. With the induced topology, each $\Hom^A(N,N)$ becomes a topological subgroup of $\TOPO^h(M,M)$.}
		\begin{proof}
		Note that any subgroup of a topological group is itself a topological group with the induced topology.
		
		We check that $\Hom^A(N,N)\subseteq \TOPO^h(M,M)$ is a subgroup.
		Suppose we 
		are given self-homeomorphisms 
		$\shmor{f}{A}{N}{N}$ and $\shmor{g}{A}{N}{N}$, then $\sh{f}\circ \sh{g} (N)= \sh{f}(N)=N $
		and for all $a \in A$: $\sh{f}\circ \sh{g} (a)= \sh{f}(a)=a.$ So $\shmor{f\circ g}{A}{N}{N}$ is in $\Hom^A(N,N)$.
		Similarly $\shmor{f^{-1}}{A}{N}{N}$ is in $\Hom^A(N,N)$.
		\end{proof}

 \rem{There are various ways in which we could equip the subsets of $M$ with extra structure.
		For example we could let $N$ and $N'$ be submanifolds of $M$ equipped with an orientation and then consider homeomorphisms which preserve these orientations.
		}

\section{Motion groupoid \texorpdfstring{$\Mot^A$}{Mot(M)}}
\label{sec:motions}
In this section we construct the motion groupoid associated to a pair of a manifold $M$ and a pointwise fixed subset $A\subset M$ (e.g. $A=\partial M$ or $A=\emptyset$).
The core topological ideas used in this section are present in \cite{goldsmith}, and first appeared in \cite{dahm} (see also \cite{goldsmiththesis}), which construct a group of classes of motions which return a subset $N$, in the interior of a manifold $M$, to its initial position.

We proceed by first defining {\it \premot{}s} in a manifold $M$, and giving two choices of composition, $*$ and $\cdot$.
At this point there are no `objects', one choice of composition gives a magma, the other a group.
We obtain motions by considering an action of \premot{}s on $ M$, and thus on $\Power M$.
The two compositions on \premot{}s lead to two action magmoids, each of which has object set $\Power M$ and \textit{motions} as morphisms, and one of which is also a groupoid.

The $*$ composition is the intuitive composition where one motion is carried out, followed by another, similar to path composition. In addition it is only with this composition that we are able to interpret motion composition in terms of a composition of the \textit{wordlines} of the motion (Lemma~\ref{lem:concatW}). This leads to it being a more useful setting to work with for examples.
The $\cdot$ composition is introduced as it will be more convenient for many proofs, Lemma~\ref{le:stat_mots} for example.

For physical/engineering purposes, it is often necessary to have something finitary, thus we add a congruence to our magmoids.
We will find it most straightforward to construct this congruence in two stages, first quotienting by a congruence using path homotopy, under which these magmoids become the same groupoid.
This groupoid has, in general, uncountable morphism sets, and thus we add a further equivalence.
By
quotienting by a normal subgroupoid of classes containing a \textit{set-stationary motion},
we obtain the motion groupoid $\Mot$ (Theorem~\ref{th:mg}).
The object set is the power set $\Power M$ and the morphisms are equivalence classes of motions.

To make the notation more manageable we only give the full details of proofs when working in $\Hom$. In Section \ref{ss:Afixmot} we also construct a version using $\Hom^A$, i.e fixing a distinguished choice of subset $A\subset M$.
This leads to the motion groupoid $\Mot^A$.

In Section~\ref{ss:examples}
we have some examples which frame some of the questions that our construction allows us to ask. For example we can think about skeletons of our motion groupoids, or equivalently which subsets of a manifold $M$ are connected in the motion groupoid. 
Or we could instead look for subsets which are not connected by a morphism in the motion groupoid, but which do have isomorphic automorphism groups.

\subsection{\Premot{}s:  elements in \texorpdfstring{$\Topo^{}(\II,\TOPO^h(M,M))$}{Topo(I,Topoh(M,M)}
}

Here we define \premot{}s and introduce two compositions.

\defn{\label{de:premots}
Fix a \axiomM{} $M$.
A {\em \premot{} in $M$} is a path in $\TOPO^h(M,M)$ starting at $\id_M$. 
We define notation for the set of all \premot{}s in $M$,
$$
\premots \; = \; \{ f \in   \Topo^{}(\II,\TOPO^h(M,M)) \; | \; f_0 = \id_M   \} .
$$}

\exa{ \label{ex:Id_premot}
For any \axiomM{} $M$ 
the path $f_t = \id_M$ for all $t$, is a \premot.
We will denote this \premot{} $\Id_M$.}

\exa{\label{S1_pre-mot}
For $M=S^1$ (the unit circle) we may parameterise by 
$\theta\in\R/2\pi$ in the usual way.
Consider the functions $\tau_\phi : S^1 \rightarrow S^1$ 
($\phi \in \R$) given by $\theta \mapsto \theta+\phi$,
and note that these are homeomorphisms.
Then consider the path $f_t = \tau_{t \pi}$ (`half-twist'). 
This is a \premot{}.}

\begin{lemma} 
\label{le:raisin}
Let $S,R$ be  manifolds  and 
$\psi\colon S\to R$
a homeomorphism. 
Then there exists a bijection
 $
\premo{S} \to \premo{R}
$
denoted $g \mapsto g^\psi$
where 
\[ 
(g^\psi )_t(x)= \psi \circ g_t \circ \psi^{-1}(x).\]
\end{lemma}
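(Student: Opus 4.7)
The plan is to produce $g^\psi$ explicitly, verify it lies in $\premo{R}$, and exhibit a two-sided inverse given by conjugation by $\psi^{-1}$.

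First I would check the pointwise conditions. For each $t\in\II$, the map $(g^\psi)_t=\psi\circ g_t\circ\psi^{-1}\colon R\to R$ is a composition of homeomorphisms and hence lies in $\Topo^h(R,R)$. At $t=0$ we have $(g^\psi)_0=\psi\circ\id_S\circ\psi^{-1}=\id_R$, so the basepoint condition in Definition~\ref{de:premots} is met.

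The main step, and the main obstacle, is verifying that $g^\psi$ is a continuous path $\II\to\TOPO^h(R,R)$; continuity of composition in compact-open topology is subtle in general, so I would route the argument through the product-hom adjunction of Lemma~\ref{th:tensorhom}. Since $S$ and $R$ are manifolds, hence locally compact Hausdorff, the adjunction applies. The continuous map $g\in\Topo(\II,\TOPO^h(S,S))\subseteq\Topo(\II,\TOPO(S,S))$ corresponds under $\Phi$ to a continuous map $\tilde g\colon\II\times S\to S$, $(t,x)\mapsto g_t(x)$. Form the composite
\[
\II\times R \xrightarrow{\;\id_\II\times\psi^{-1}\;} \II\times S \xrightarrow{\;\tilde g\;} S \xrightarrow{\;\psi\;} R,
\]
which is continuous since each factor is. Call this composite $\widetilde{g^\psi}$; explicitly $\widetilde{g^\psi}(t,x)=\psi(g_t(\psi^{-1}(x)))$. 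Applying the inverse bijection $\Phi^{-1}$ from Lemma~\ref{th:tensorhom} yields a continuous map $\II\to\TOPO(R,R)$ whose value at $t$ is $(g^\psi)_t$. Since the image lies in $\Topo^h(R,R)$ by the pointwise check, and $\TOPO^h(R,R)$ carries the subspace topology from $\TOPO(R,R)$, the map is continuous into $\TOPO^h(R,R)$. Combined with $(g^\psi)_0=\id_R$, this places $g^\psi$ in $\premo{R}$.

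Finally, I would produce the inverse. Define $\Psi\colon\premo{R}\to\premo{S}$ by $h\mapsto h^{\psi^{-1}}$, where $(h^{\psi^{-1}})_t=\psi^{-1}\circ h_t\circ\psi$; the same argument with $\psi^{-1}\colon R\to S$ in place of $\psi$ shows this is well-defined. A direct pointwise computation gives $(g^\psi)^{\psi^{-1}}_t=\psi^{-1}\circ\psi\circ g_t\circ\psi^{-1}\circ\psi=g_t$ and symmetrically $(h^{\psi^{-1}})^\psi=h$, so $g\mapsto g^\psi$ and $h\mapsto h^{\psi^{-1}}$ are mutually inverse, establishing the claimed bijection.
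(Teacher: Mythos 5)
Your proof is correct, and its overall shape (define conjugation, check the pointwise conditions, invert by conjugating with $\psi^{-1}$) matches the paper's; the difference is in how continuity of $t\mapsto(g^\psi)_t$ is established. The paper disposes of this in one line by citing Lemma~\ref{le:comp_continuous_eachvariable}, i.e.\ the continuity of pre- and post-composition with the fixed maps $\psi^{-1}$ and $\psi$ on compact-open mapping spaces — a fact that needs no hypotheses beyond continuity of $\psi^{\pm1}$, so no local compactness enters. You instead route the argument through the product-hom adjunction of Lemma~\ref{th:tensorhom}: pass from $g$ to $\tilde g\colon\II\times S\to S$, conjugate at the level of spaces via $\id_\II\times\psi^{-1}$ and $\psi$, and apply $\Phi^{-1}$, which is legitimate precisely because $S$ and $R$ are manifolds, hence locally compact Hausdorff, and because $\TOPO^h(R,R)$ carries the subspace topology so landing in it pointwise suffices. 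What your route buys is self-containedness within the adjunction machinery the paper develops anyway (and an explicit check that each $(g^\psi)_t$ is a homeomorphism and that the two conjugations are mutually inverse, which the paper leaves implicit); what the paper's route buys is brevity and independence from local compactness. Both are valid proofs of the lemma.
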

\begin{proof}
That $g^\psi\colon \II \to \TOPO^h(R,R)$ is a continuous map follows from Lemma~\ref{le:comp_continuous_eachvariable}. It is clear that $(g^\psi)_0=\id_R$. 
The inverse $\premo{R} \to \premo{S}$ is given by $f\mapsto f^{{(\psi^{-1})}}$.
\end{proof}

\lemm{\label{le:pw_inv}
Let $M$ be a \axiomM{}. For any \premot{} $f$ in $M$, then $(f^{-1})_t=f_t^{-1}$ is a \premot{}.
}
\begin{proof}
By Theorem \ref{le:top_group} we have that $\Topo^h(M,M)$ is a topological group, so we have that the map $g \in \Topo^h(M,M) \mapsto g^{-1} \in \Topo^h(M,M)$ is continuous. 
It follows that the composition $t \mapsto f_t\mapsto f_t^{-1}$ is continuous.
Notice also that $(f^{-1})_0=\id_M^{-1}=\id_M$.
\end{proof}

\medskip
\noindent
\textbf{Composition of \premot{}s}\\
The usual non-associative `stack+shrink' composition of paths in 
$\Topo(\II,X)$ (see \eqref{eq:pathcomp} \ppm{on p.\ref{eq:pathcomp}})  is a partial composition, precisely $gf$ is  a path if the end of the path $f$ is the start of the path $g$.
Now suppose $X=\TOPO(Y,Y)$ for some space $Y$ 
and $f,g\in \Topo(\II,\TOPO(Y,Y))$. We can use the function composition in $\TOPO(Y,Y)$ to construct paths $g_0\circ f_t$ and $g_t\circ f_1$ which share an endpoint, and thus we can use the usual path composition on these modified paths.
\prop{ \label{pr:prestar_comp}
	Let $Y$ be a space.
	There exists a composition
	\ali{
		*\colon \Topo(\II,\TOPO(Y,Y))\times \Topo(\II,\TOPO(Y,Y))&\to \Topo(\II,\TOPO(Y,Y))\\
		(f,g)&\mapsto g*f
	}
	where
	\begin{align}
		(g*f)_t = \begin{cases}
			g_0\circ f_{2t} & 0\leq t\leq 1/2, \\
			g_{2(t-1/2)}\circ f_1 & 1/2\leq t \leq 1.
		\end{cases}
		\label{def:precomp} 
	\end{align}
}

\begin{proof}
     It follows from Lemma~\ref{le:comp_continuous_eachvariable} that $g*f$ is continuous on each segment.
     We also have that 
     the
     functions agree at $t=1/2$, hence Equation \eqref{def:precomp} defines 
an element
     in $\Topo(\II,\TOPO(Y,Y))$.
\end{proof}

\prop{ \label{de:premot_comp}
	Let $M$ be a \axiomM{}. There exists a composition
	\ali{
		*\colon \premots\times \premots&\to \premots\\
		(f,g)&\mapsto g*f
	}
	where
	\begin{align}
		\label{def:comp1} 
		(g*f)_t = \begin{cases}
			f_{2t} & 0\leq t\leq 1/2, \\
			g_{2(t-1/2)}\circ f_1 & 1/2\leq t \leq 1.
		\end{cases}
	\end{align}
	We denote the magma $(\premots,*)$.
}
\begin{proof}
    This is the restriction of the $*$ function of Proposition~\ref{pr:prestar_comp} to $\premots$
    so we need only to check that $g*f\in \premots$. 
    We have $(g*f)_0=f_0=\id_M$ and for all $t\in \II$, $(g*f)_t$ is a homeomorphism as it is the composition of two homeomorphisms.
\end{proof}
Note that the previous composition of flows
does not require $M$ a manifold.
Given a manifold $M$, we can also define another `pointwise' composition of paths in $\premots$ which relies on the fact that $\TOPO^h(M,M)$ is a topological group. 

\lemm{\label{le:dot_premot_comp}
	Let $M$ be a manifold.
	(I) There is an associative composition
	\ali{
		\cdot\;\colon \premots\times \premots&\to \premots\\
		(f,g)&\mapsto
		g\cdot f
	}
	where $(g\cdot f)_t=g_t\circ f_t$.\\
	(II) There is a group $(\premots, \cdot)$, with identity $\Id_M$ and inverse map $f\mapsto f^{-1}$ with $f^{-1}$ as defined in Lemma~\ref{le:pw_inv}.
}

\begin{proof}
	(I) We first check that $g\cdot f$ is a path. This can be seen by rewriting as
	\begin{align*}
	\II&\to \TOPO^h(M,M)\times\TOPO^h(M,M) \to \TOPO^h(M,M) \\
	t &\mapsto \makebox*{$\TOPO^h(M,M)\times\TOPO^h(M,M)$}{$(f_t,g_t)$} \mapsto g_t\circ f_t.
	\end{align*}
	The map into the product is continuous because it is continuous on each projection, and the second map is continuous because $\TOPO^h(M,M)$ is a topological group, by Theorem \ref{le:top_group}.
	Notice also that $(g\cdot f)_0=g_0\circ f_0=\id_M$, so we have that $g\cdot f\in \premots$.
	Associativity of the composition follows from the associativity of function composition in $\Set$.\\
	(II) For all $t\in \II$, $(\Id_M\cdot f)_t=\id_M\circ f_t=f_t=f_t\circ \id_M=(f\cdot \Id_M)_t$ and $(f\cdot f^{-1})_t=f_t\circ f_t^{-1}= \id_M=f_t^{-1}\circ f_t =(f^{-1}\cdot f)_t$.
\end{proof}

The following lemma says that, up to path-equivalence, both compositions are the same.

\begin{lemma}\label{le:pms_star_equiv_dot}
	Let $M$ be a manifold and $f,g\in \premots$. Then $g*f \; \simp\; g\cdot f$.
\end{lemma}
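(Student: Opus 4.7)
First note that both $g*f$ and $g\cdot f$ are paths in $\TOPO^h(M,M)$ with the same pair of endpoints: at $t=0$ both equal $\id_M$ (since $f_0=g_0=\id_M$), and at $t=1$ both equal $g_1\circ f_1$. So it makes sense to ask whether they are path-equivalent in the sense of Definition~\ref{de:pe} applied to $X=\TOPO^h(M,M)$, and I need only exhibit a continuous path-homotopy $H\colon \II\times\II \to \TOPO^h(M,M)$ between them.

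The geometric idea is to view both compositions as paths in the square $\II^2$ traversed via the evaluation map $(u,v)\mapsto g_v\circ f_u$. The composition $g*f$ corresponds to the L-shaped path $(0,0)\to(1,0)\to(1,1)$ (do all of $f$, then all of $g$), while $g\cdot f$ corresponds to the diagonal $(0,0)\to(1,1)$ (do both simultaneously). Since $\II^2$ is convex, these two paths are homotopic rel endpoints through an obvious family, and I pull this homotopy back through the evaluation map.

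Explicitly, I propose defining
\[
H(t,s) \;=\; g_{\alpha(t,s)} \circ f_{\beta(t,s)},
\]
where, for each $s\in\II$, $(\beta(\cdot,s),\alpha(\cdot,s))$ traces the piecewise-linear path in $\II^2$ from $(0,0)$ to the break point $(1-s/2,\,s/2)$ on $[0,1/2]$ and from $(1-s/2,\,s/2)$ to $(1,1)$ on $[1/2,1]$. Concretely,
\[
\beta(t,s)=\begin{cases} 2t(1-s/2) & 0\le t\le 1/2, \\ (1-s/2)+(2t-1)(s/2) & 1/2\le t\le 1, \end{cases}
\qquad
\alpha(t,s)=\begin{cases} ts & 0\le t\le 1/2, \\ s/2+(2t-1)(1-s/2) & 1/2\le t\le 1. \end{cases}
\]
A direct computation then shows $H(t,0)=(g*f)_t$, $H(t,1)=g_t\circ f_t=(g\cdot f)_t$, and $H(0,s)=\id_M$, $H(1,s)=g_1\circ f_1$ for all $s$, so the endpoints are fixed throughout the homotopy.

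The only remaining step is continuity of $H$. The maps $\alpha,\beta\colon\II\times\II\to\II$ are continuous (piecewise linear with matching values at $t=1/2$), so $t\mapsto f_{\beta(t,s)}$ and $t\mapsto g_{\alpha(t,s)}$ are continuous into $\TOPO^h(M,M)$, and their composition is continuous by the joint continuity of composition in the topological group $\TOPO^h(M,M)$ (Theorem~\ref{le:top_group}), exactly as used in Lemma~\ref{le:dot_premot_comp}. This yields $H\in\Topo(\II\times\II,\TOPO^h(M,M))$, establishing $g*f\simp g\cdot f$. The only real ``work'' is writing down the interpolating formula; conceptually the result is forced by convexity of $\II^2$.
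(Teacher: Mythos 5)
Your proof is correct and is essentially the same as the paper's: after simplifying, your reparametrisations $\alpha,\beta$ give exactly the subscripts in the paper's explicit homotopy \eqref{eq:star pequiv to dot}, namely $H(t,s)=g_{ts}\circ f_{2t(1-s)+ts}$ on $[0,1/2]$ and $H(t,s)=g_{2(t-1/2)(1-s)+ts}\circ f_{(1-s)+ts}$ on $[1/2,1]$. The only difference is presentational — you factor the formula through the convexity-of-$\II^2$ picture and spell out the continuity argument via Theorem~\ref{le:top_group} — which is fine.
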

Before the proof, let us fix some conventions. 
\Premot{}s are paths $f\colon \II \to \TOPO^h(M,M)$ and then homotopies of paths are maps $H\colon \II\times \II \to \TOPO^h(M,M) $.
We will always think of the first copy of $\II$ in a homotopy as the one parameterising the \premot{}, and will continue to use the parameter $t$.
For the second copy of $\II$, which parameterises the homotopy, we will use $s$. 

\begin{proof}
	The following function is a suitable path homotopy to prove the path-equivalence
	\begin{align}\label{eq:star pequiv to dot}
		H(t,s) = \begin{cases}
			g_{ts}\circ f_{2t(1-s)+ts} & 0\leq t\leq \frac{1}{2},\\
			g_{2(t-1/2)(1-s)+ts}\circ f_{(1-s)+ts} & \frac{1}{2}\leq t\leq 1.
		\end{cases}
	\end{align}
	Notice $H(t,0)=(g*f)_t$, $H(t,1)=(g\cdot f)_t$ and for all $s\in \II$ we have $H(0,s)=g_0\circ f_0 =\id_M$ and $H(1,s)= g_1\circ f_1$.
	\ppm{Note that continuity of each segment uses that $\TOPO^h(M,M)$ is a topological group by Theorem~\ref{le:top_group}.}
\end{proof} 

\rem{\label{rem:starprime}
There are other choices of compositions of \premot{}s which assign paths $g$ and $f$ to a path which is path-homotopic to $g*f$ and $g\cdot f$. For example
\ali{
(g*'f)_t = \begin{cases}
			g_{2t} & 0\leq t\leq 1/2, \\
			g_1\circ f_{2(t-1/2)} & 1/2\leq t \leq 1.
		\end{cases}
}
}

We can also generate from any \premot{} $f$, a \premot{} $\bar{f}$ which reverses the path. Intuitively
$\bar{f}$ is obtained from $f$ by first changing the direction of travel along the path, and then precomposing at each $t$ with $f_1^{-1}$ to force the reversed path to start at the identity.
\prop{\label{pr:rev}
	Let $M$ be a \axiomM{}.
	There exists a set map 
	\ali{
		\bar{\phantom{f}}\colon \premots&\to \premots{} \\
		f &\mapsto \bar{f}
	}
	with 
	\begin{align}\label{eq:rev}
		\bar{f}_t 
		\;=\;  f_{(1-t)}\circ f_1^{-1}.
\end{align}}
\begin{proof}
    By Lemma~\ref{le:comp_continuous_eachvariable}, the composition with $f_1^{-1}$ is continuous and so $\bar{f}$ is continuous.
	Also notice $\bar{f}_0=f_{1}\circ f_1^{-1}=\id_M$, and $\bar{f}_t$ is a composition of homeomorphisms, thus a homeomorphism.
\end{proof}

\rem{\label{rem:starbar_pequive_identity}
The operation $f \mapsto \bar{f}$ is an involution, namely $\bar{\bar{f}}=f$. Notice also that for a \premot{} $f$, $\bar{f}*f=f^{rev}f$, with path composition as in \eqref{eq:pathcomp} and $f^{rev}$ as in Proposition~\ref{pr:fundamentalgroupoid}. Thus we have already shown in the proof of Proposition~\ref{pr:fundamentalgroupoid} that $\bar{f}*f\simp \Id_M$. }

\subsection{Motions: the action of \premot{}s on subsets} \label{ss:motions}  

For a manifold $M$, there is a set map from $\premots \times M$ to $M$ defined by $(f,m)\mapsto f_1(m)$.
This lifts to a magma
action of $(\premots, *)$ on $M$, and a group action 
of $(\premots, \cdot)$ on $M$.
Thus we can form the action magmoid and action groupoid respectively of these actions on $\Power M$ (see Definition~\ref{def:actiongroupoid} and the following text).

A \textit{motion} is a morphism in either of these magmoids, whose morphisms are, by construction, the same.

\defn{\label{de:mot1}
Fix a \axiomM{} $M$.
A {\em motion in $M$} is a triple $(f,N,f_1(N))$ consisting of a \premot{} $f\in\premots{}$, a subset $N\subseteq M$ and the image of $N$ at the endpoint of $f$, namely $f_1(N)$. (Note $f_1(N)=N'$ if and only if $f_1\in \Homn$.)

\noindent { Notation:}
We will denote such a triple by $\mot{f}{}{N}{N'}$ where $f_1(N)=N'$, and say it is a motion from $N$ to $N'$. 
For subsets $N,N'\subseteq M$ we define
\[
\Motc{M}{N}{N'}=\{(f,N,f_1(N)) \text{ a motion in $M$}\,\vert\,  f_1(N)=N'\}.
\]
A motion is uniquely determined by a pair of a \premot{} $f$ and a subset $N\subseteq M$. This implies
\[
\Mtc=\bigcup_{N,N'\in \Power M}\Motc{M}{N}{N'} 
   \;\;\; \cong\; \premots \times \Power M
   ,
\]
where the union is over all pairs $N,N'\subseteq M$.}
As with $\Hom$, where convenient
we will also use $\Motc{M}{N}{N'}$ to denote the set obtained by projecting to the first element of the triple. 
Then each $f\in \premots$ will belong to many $\Motc{M}{N}{N'}$.

\medskip

The bar operation generates from any motion from $N$ to $N'$, a motion from $N'$ to $N$. 
\prop{\label{pr:bar_mot}
	Let $M$ be a \axiomM{}. For any 
	subsets $N,N'\subseteq M$ there is a set map 
	\ali{
		\bar{\phantom{f}}\colon \Motc{M}{N}{N'} &\to
		\Motc{M}{N'}{N}\\
		\mot{f}{}{N}{N'}&\mapsto \mot{\bar{f}}{}{N'}{N}
	}
	where $\bar{f}_t 
		\;=\;  f_{(1-t)}\circ f_1^{-1}$, as in Equation \eqref{eq:rev}.}
\begin{proof}
    Proposition \ref{pr:rev} gives that $\bar{f}$ is a \premot.
    Note that we have $\bar{f}_1(N')=N$,
	hence $\mot{\bar{f}}{}{N'}{N} \;\in \Motc{M}{N'}{N}$.
\end{proof}
	
\exa{ 
	For a manifold $M$, a subset $N\subseteq M$ and the \premot{} $\Id_M$ as in Example \ref{ex:Id_premot},
	$\mot{\Id_M}{}{N}{N}$ is a motion. 
	We will call this the `trivial motion' from $N$ to $N$.
	Note that the \premot{} $\Id_M$ becomes a motion from $N$ to $N$ for any $N$, but not a motion
	from $N$ to $N'$ unless $N=N'$.}

\exa{ \label{ex:half-twist motion}
The half-twist of $S^1$ (see Example \ref{S1_pre-mot}) becomes a motion in $S^1$ from $N$ to $\tau_\pi(N)$ for any $N\subseteq S^1$.}

\rem{\label{rem:motion_implies_homeo}
Suppose $\mot{f}{}{N}{N'}$ is a motion in $M$, then $N$ and $N'$ are homeomorphic.
}

\prop{\label{pr:mot_comp}
	Let M be a \axiomM{}. 
     There is a magma action $*\colon (\premots,*)\times M\to M$, $(f,m)\mapsto f_1(m)$. Hence we can construct the action magmoid (Def.~\ref{def:actiongroupoid}) of the corresponding action on $\Power M$
	\[
    \Mtcmag^* \; =\; \agrpd{\premots}{*}{\Power M} \;=\;
     ( \Power M , \Mtcmag(-,-)  , * ).
    \]}
    \begin{proof}
For all $m\in M$ and $f,g\in \premots$, $(g*f)_1(m)= g_1\circ f_1 (m) =g_1(f_1(m))$.
\end{proof}
   In our notation, the composition in $\Mtcmag^*$ is given by
    \ali{
		*\colon \Motc{M}{N}{N'}\times \Motc{M}{N'}{N''}&\to \Motc{M}{N}{N''}\\
		(\mot{f}{}{N}{N'},\mot{g}{}{N'}{N''})&\mapsto
		(\mot{g}{}{N'}{N''})*(\mot{f}{}{N}{N'}),
	}
	where 
	$(\mot{g}{}{N'}{N''})*(\mot{f}{}{N}{N'})=\mot{g*f}{}{N}{N''}$ with $g*f$ as defined in Equation \eqref{def:comp1}.

\prop{\label{pr:grpd_dot}\label{pr:dot_comp}
	Let $M$ be a manifold. There is a group action 
	$\cdot\colon (\premots,\cdot)\times M\to M $, $(f,m)\mapsto f_1(m)$.
	Hence we can construct the action groupoid (Def.~\ref{def:actiongroupoid}) of the corresponding action on $\Power M$
	\[
	\Mtcmag^{\;\cdot} \; = \;
	\agrpd{\premots}{\cdot}{\Power M}
	\; =\;
	 (\Power M , Mt_M(-,-) , \cdot,\Id_M,(f^{-1})_t=(f_t)^{-1}).
	\]
}
\begin{proof}
We have that for all $m\in M$ and $f,g\in \premots$, $(g\cdot f)_1(m)=g_1\circ f_1(m)=g_1(f_1(m))$ and $(\Id_M)_1(m)=\id_M(m)=m$.
\end{proof}

In our notation the composition in $\Mtcmag^{\;\cdot}$ is given by
	\ali{
		\cdot \;\colon \Motc{M}{N}{N'}\times \Motc{M}{N'}{N''}&\to \Motc{M}{N}{N''}\\
		(\mot{f}{}{N}{N'},\mot{g}{}{N'}{N''})&\mapsto
		(\mot{g}{}{N'}{N''})\cdot(\mot{f}{}{N}{N'})
	} 
where 
	$(\mot{g}{}{N'}{N''})\cdot(\mot{f}{}{N}{N'})=\mot{g\cdot f}{}{N}{N''}$ and $(g\cdot f)_t=g_t\circ f_t$.

Note that in the last three entries of the pentuple $\Mtcmag$ we give only information about what happens to the group element in each morphism, as it is clear what should happen to the elements of $\Power M$. We do this to keep notation readable and it will be common in our constructions.

\subsection{Schematic for \texorpdfstring{$\TOPO^h(M,M)$}{TOPOh(M,M)}}

In Fig.\ref{fig:my_lalala} we represent
the space $\TOPO^h(M,M)$ 
and elements of
$\Topo(\II,\TOPO^h(M,M))$ 
schematically 
(i.e. on the page, which is to say, the plane)
for arbitrary $M$.
Figure \ref{fig:my_lalala} further gives, schematically, two examples of 
motions in  $M$. 
Here  $\TOPO^h(M,M)$ is represented as 
(a couple of disconnected) 
regions of the plane, so we have that the various $\Hom(N,N')$s are 
possibly intersecting subregions.
The blue path $(a)$ represents a motion from $N$ to $N$. 
Notice this is a path starting and ending in the same shaded region of $\Hom(N,N)$. 
This is possible since $\Hom(N,N)$ must contain the identity.
(Although $\Hom(N,N)$ may also have path
connected components which do not contain the identity, as pictured.)
        The red path (b) is a motion from $N$ to $N'$ where $N\neq N'$. 
        
        Note a \premot{} corresponds to precisely one path in $\TOPO^h(M,M)$, although many motions can have the same underlying \premot{}, thus to make such a  diagram
        convey a motion 
        it is necessary to explicitly state the subsets in addition to the schematic representation of the path.

\begin{figure}
    \centering
\def\svgwidth{0.8\columnwidth}
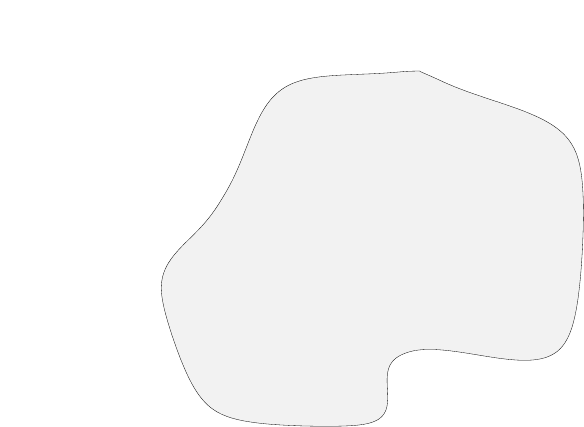
    \caption{
    A schematic representation of 
    $\Topo^h(M,M)$, 
    for a fixed but arbitrary $M$,
    as a not-necessarily connected, not-necessarily simply-connected subset of $\R^2$.
    In practice we are only interested in the connected component of the point $\id_M$. 
 The blue line (a) is then a motion from $N$ to $N$ and the red line $(b)$ a motion from $N$ to $N'$.
     }
    \label{fig:my_lalala}
\end{figure}

\subsection{Path homotopy congruence on motion magmoids}

\renewcommand{\too}{\rcurvearrowright}

Here we 
show that path-equivalence is a congruence on $\Mtcmag^*$, and that the corresponding quotient magmoid is a groupoid.
 We then show the same equivalence is a congurence on $\Mtcmag^{\;\cdot}$ and that the quotient magmoid is precisely the groupoid obtained from $\Mtcmag^*$.

\medskip

Notice that, since path homotopies fix the endpoints, for any motion $\mot{f}{}{N}{N'}$ and any  path-equivalence $f\simp f'$, $\mot{f'}{}{N}{N'}$ is a motion.

\lemm{\label{le:simp_cong}
  Let $M$ be a \axiomM. \\
  (I) For each pair $N,N'\subseteq M$ of subsets 
  the relation
  $$
  (\mot{f}{}{N}{N'})\sim(\mot{f'}{}{N}{N'}) \;\mbox{ if }\;
  f\simp f'
  $$ 
  is an equivalence relation on $\Mtcmag(N,N')$
  (see Definition~\ref{de:pe}
  for the definition of $\simp$
  ).\\
 (II) The equivalence relations $(\Mtcmag(N,N'),\sim \,)$ for each pair $N,N'\subseteq M$ are a congruence on $\Mtcmag^{*}$.
 \\
{Notation}: By abuse of notation we will also use $\simp$ to denote this relation.
We  
use $\classp{\mot{f	}{}{N}{N'}}$ 
or $\classp{f}$
for the path-equivalence class of $\mot{f	}{}{N}{N'}$.
}   

\begin{proof}
	(I) For any pair $N,N'$, we have that $\Mtcmag(N,N')\subset \Topo(\II,\Topo(M,M))$, thus the proof that path-homotopy is an equivalence relation on $\Topo(\II,\Topo(M,M))$ (Proposition~\ref{pr:pe}) is sufficient.
	\\
	(II) Suppose we have pairs of equivalent motions $(\mot{f}{}{N}{N'})\simp(\mot{f'}{}{N}{N'})$
	and $(\mot{g}{}{N'}{N''})\simp(\mot{g'}{}{N'}{N''})$. Then there exists a path homotopy, say $H_f$ from $f$ to $f'$ and a path homotopy, say $H_g$ from $g$ to $g'$.
	Notice that, since path homotopies fix the endpoints, for all $s\in \II$ we have $H_f(1,s)=f_1$.
	Thus the map
	\[
	H(t,s)=\begin{cases}
		H_f(2t,s) & 0\leq t\leq 1/2\\
		H_g(2(t-1/2),s)\circ f_1 & 1/2 \leq t\leq 1
	\end{cases}
	\]
	is a path homotopy $g*f$ to $g'*f'$.
\end{proof}

\lemm{\label{le:grpd_motstar}
    Let $M$ be a \axiomM{}.
The pentuple
	$$
	\Mtcmag^*/\simp \;\;\; = \;
	(\Power M ,\; \Mtcmag(N,N')/\simp,\; *,\; \classp{\Id_M},\; \classp{f}\mapsto \classp{\bar{f}})
	$$
	is a groupoid.
	}
\begin{proof}
First notice that, by Lemma~\ref{le:simp_cong}, $\simp$ is a congruence, hence there is a quotient magmoid $(\Power M ,\; \Mtcmag(N,N')/\simp,\; *)$.

	We have proved in Lemma~\ref{le:pms_star_equiv_dot} that $g*f\simp g\cdot f$, and 
	by Proposition~\ref{pr:mot_comp} 
	$\cdot$ is associative and unital with unit $\Id_M$. This is sufficient to prove $(\GG1)$ and $(\GG2)$.\\ 
	($\GG 3$)\;  Since we are considering a different inverse to the inverse in the group $(\premots,\cdot)$, we prove this directly.
	Note that for any morphism $\classp{\mot{f}{}{N}{N'}}$, $\mot{\bar{f}}{}{N'}{N}$ is well defined by Proposition~\ref{pr:bar_mot}. For any morphism $\classp{\mot{f}{}{N}{N'}}$, the following function
	\begin{align}\label{eq:mot_refl}
	H_{inv}(t,s)=
	\begin{cases}
	f_{2t(1-s)} & 0\leq t \leq \frac{1}{2}, \\
	f_{(1-2(t-1/2))(1-s)} & \frac{1}{2}\leq t \leq 1
	\end{cases}
	\end{align}
	is a homotopy from $\bar{f}*f$ to $\Id_M$.
	Observe that for each fixed $s$, the path $H_{\bar{f}*f}(t,s)$ starts at the identity, follows $f$ until $f_{(1-s)}$, 
	and then follows $f_{(1-t)}$ back to $\id_M$.
\end{proof}
\rem{Note that $\Mtcmag^*/\simp$ is the action groupoid 
$\agrpd{\left((\premots,*)/\simp\right)}{\acts}{\Power M}$,
where $\classp{f}\acts N=f_1(N)$. The proof of Lemma~\ref{le:grpd_motstar} is essentially a proof that $(\premots,*)/\simp$ is a group. The downside of this approach is that motions are obscured, and since our motivation is to model particle trajectories which  do not a priori include a choice of equivalence relation, we favour our approach.}

\lemm{\label{le:Mtcmag}
Let $M$ be a \axiomM{}. 
$(I)$
The relations $(\Mtcmag(N,N'),\simp)$ for each $N,N'\subseteq M$ are a congruence on $\Mtcmag^\cdot$, thus we have a groupoid: 
\[
\Mtcmag^{\;\cdot}/\simp\;\;=\;\;(\Power M, \Mtcmag(N,N')
/\simp 
,\cdot,\classp{\Id_M},\classp{f}\mapsto \classp{f^{-1}})
\]
$(II)$ We have that $\Mtcmag^\cdot =\Mtcmag^{*}$.\\
Notation: We will now denote this groupoid by just $\Mtcmag/\simp$.
}
\begin{proof}
$(I)$  
By Lemma~\ref{le:pms_star_equiv_dot} $f\cdot g\simp f*g$ for all \premot{}s, hence that $\simp$ is a congruence follows from Lemma~\ref{le:simp_cong}.
By Proposition~\ref{pr:dot_comp} $\Mtcmag^{\; \cdot}$ is a groupoid, thus, by Proposition~\ref{pr:gpd_cong}, the quotient is also a groupoid. \\
(II) By construction the two categories have the same objects and morphisms.
		By Lemma~\ref{le:pms_star_equiv_dot} the composition is the same up to path-equivalence. Thus the underlying magmoids are the same. By uniqueness of inverses and identities, they are the same groupoid.
\end{proof}

\rem{\label{rem:barinv_dotinv}
Note in particular that Lemma~\ref{le:Mtcmag} implies 
$\bar{f}\simp f^{-1}$.
}
The previous lemma allows us to work interchangeably with either choice of composition or inverse according to which simplifies each proof, this will be used throughout the paper.

\defn{\label{de:disk}
	We define the topological space $D^2$, called the 2-disk, as $\left\{x\in \R^2 \;\vert \; \; \vert x\vert\leq 1\right\}\subset \R^2$ with the subset topology.
}

\ppm{The groupoid $\Mtcmag/\simp $ typically has uncountable sets of morphisms. Morally this is because path-homotopy completely fixes endpoints, and thus there is still too much information being kept track of. To be more precise,} let $M$ be a manifold and $N,N'\subseteq M$ be  subsets.
Given two motions $\mot{f,f'}{}{N}{N'}$ such that $f_1\neq f_1'$, then their path-homotopy classes (which we recall are relative to end-points) are different, so $[
\mot{f}{}{N}{N'}]_p \neq [\mot{f'}{}{N}{N'}]_p$. In general there uncountably many choices of endpoints of homeomorphisms of $M$ sending $N$ to $N'$.

In particular, let $M=D^2$ and $N\subset \mathrm{int}(D^2)$ be a finite set in the interior of $D^2$.
Fix an $x\in \mathrm{int}(D^2)\setminus N$. For any $y\in \mathrm{int}(D^2)\setminus N$ there is a motion $\mot{{f^y}}{}{N}{N}$ with $f^{y}_1(x)=y$  and $f^y_t(N)=N$ for all $t\in \II$. (We are using the homogeneity of connected smooth manifolds discussed in \cite[\S 4]{Milnor_top}, together with the fact that if $M$ is connected, of dimension $\ge 2$, then $M$ minus a finite set of points is still connected.)
Note that $\classp{\mot{{f^y}}{}{N}{N}}\neq \classp{\mot{{f^{y'}}}{}{N}{N}}$ if $y\neq y'$, as $f^y_1(x)=y$ whereas $f^{y'}_1(x)=y'$.
There are uncountably many such $f^y$, hence the set $\Mtcmag/\simp(N,N)$ is uncountable.

Both the braid groups and the loop braid groups have presentations with a finite number of generators, thus are not uncountable \cite{artin,damiani}.
In the next section we impose a further quotient that will identify the motions $f^y$ and $f^{y'}$ .

\subsection{The motion groupoid \texorpdfstring{$\Mot$}{Mot}: congruence induced by  set-stationary motions} \label{ss:stat_mot}

Motivated by engineering/physical considerations,
we aim to
construct models 
where physical configurations are 
arranged into a countable/finitely-generated set of classes 
i.e. combinatorially. 
See
\cite{Fradkin, Bais1, Leinaas,BFM}
for examples of systems where the interesting physics is modelled by such structures.
Accordingly, 
by imposing path equivalence,
we have washed out some distinctions that do not affect the induced movement of 
our object subsets
(without this our sets are certainly  larger than combinatorial). 
However, for general subsets, so far, we are still only allowing motions to be equivalent if their underlying paths share the same end point, and these sets can still be very large.
Dahm's idea of `motion groups' partially addresses this problem \cite{dahm}.
Intuitively, the quotient used to construct a motion group can be thought of as de-emphasising the motion of the ambient space.
(This point of view is made precise by Proposition~\ref{prop:dahm} and Section~\ref{sec:laminated}.)
Here we prove there is a lift of Dahm's idea
to the groupoid setting.

\medskip
We start by defining \textit{\stationary{}} motions, 
motions $N\too N$ which leave $N$ fixed setwise. 
We then show that there is a normal subgroupoid in $\Mtcmag/\simp$ whose morphisms are those classes containing an \stationary{} motion for some $N\in M$, hence this normal subgroupoid induces a congruence.
This leads to the motion groupoid $\Mot$ in Theorem~\ref{th:mg}.

\newcommand{\topi}{\Topo(\II, \Topo^h(M,M))}  
\newcommand{\sstat}{\topi} 

\defn{
Let $M$ be a manifold, and $N\subseteq M$ a subset.
A motion 
	$\mot{f}{}{N}{N}$ in $M$ is said to be {\em \stationary} if $f_t\in \Hom(N,N)$ ($f_t(N)=N$) for all $t\in\II$.
	Define
\[
\mathrm{SetStat}_M(N,N)=\left\{ \mot{f}{}{N}{N} \; \vert \;  f_t(N)=N
\text{ for all } t \in \II \right\}.
\]
}

\begin{example}\label{ex:discrete_premot}
	Let $M=D^2$, the $2$-disk and let $N\subset M$ be a finite set of points. 
	Then a motion $\mot{f}{}{N}{N}$ in $M$ is \stationary{} if, and only if, $f_t(x)=x$, for all $x\in N$ and $t\in \II$. More generally this holds if $N$ is a totally disconnected subspace of $M$, e.g. $\mathbb{Q}$ in $\mathbb{R}$. 
\end{example}

\begin{example}\label{ex:bdyfix_D^2}
	Let $M= D^2$.
	Consider $D^2$ as a subset of $\C$.
	 Let $f$ in $\TOPO^h(D^2,D^2)$ be constructed as follows.
    Consider a continuous function $g\colon [0,1] \to \R^-_0,$ such that $g(0)=0$ and $g(1)=0$, and with a single local minimum in $x\in(0,1)$ with $f(x)=-\pi$. Let  $f_t$ be the map $z\mapsto z \exp(i g(|z|) t)$.
    Then sequential points on $f\in \premo{D^2}$ are represented by  Figure \ref{fig:my_label01}.
    
	Now pick a subset $N\subset D^2$ which is any circle centred on the centre of the disk, i.e. the set of all points a fixed distance from the centre using the metric induced from the complex plane.
	The motion $\mot{f}{}{N}{N}$ is \stationary{}.
	\end{example}

\begin{lemma}\label{le:stat_mots}
	Let $M$ be a \axiomM{} and $N,N'\subseteq M$ subsets. Let $\setstat(N,N)$ be the subset of
	$\Mtcmag/\simp (N,N)$ of those classes that intersect $\mathrm{SetStat}_M(N,N)$.
	Let $\setstat (N,N') = \emptyset$ if $N \neq N'$.
There is a totally disconnected, normal, wide subgroupoid of $\Mtcmag/\simp$,
\[
\setstat \; =\; (\Power M, \;\;
\setstat(N,N') , \;
* , \; \classp{\Id_M}, \; \classp{f}\mapsto \classp{\bar{f}} ).
\]
Note that 
\[
\setstat(N,N') = \{ \classp{\mot{f}{}{N}{N'}} \; \;
     |\;\;   
     \exists \text{ \stationary{} } \mot{f'}{}{N}{N'} \in \classp{\mot{f}{}{N}{N'}}
     \}.
\]
\end{lemma}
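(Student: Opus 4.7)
The plan is to verify that $\setstat$ is closed under the groupoid operations on $\Mtcmag/\simp$, and then establish wideness, total disconnectedness, and normality. The first three points are routine consequences of the definitions; normality is the only subtle step.

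For closure and wideness I observe that the premotion $\Id_M$ trivially satisfies $(\Id_M)_t(N) = N$ for every $t$ and every subset $N$, so $\classp{\Id_M} \in \setstat(N,N)$ for all $N$, covering the identities and giving wideness. If $\classp{f}$ and $\classp{g}$ lie in $\setstat(N,N)$, they contain stationary representatives $f', g'$; checking each segment of (\ref{def:comp1}) using $f'_1(N)=N$ shows $g'*f'$ is stationary, so $\classp{g*f} = \classp{g'*f'} \in \setstat(N,N)$. For inverses, $f'_1(N)=N$ forces $(f'_1)^{-1}(N)=N$, whence $\bar{f'}_t(N) = f'_{1-t}((f'_1)^{-1}(N)) = f'_{1-t}(N) = N$, so $\bar{f'}$ is stationary and $\classp{\bar{f}} = \classp{\bar{f'}} \in \setstat(N,N)$. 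Total disconnectedness is immediate from the definition.

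The main obstacle is normality. Given $\classp{g}\colon N \too N'$ in $\Mtcmag/\simp$ and $\classp{h}\in\setstat(N',N')$, one must exhibit a stationary representative of $\classp{\bar g*h*g}$. Select a stationary $h'\in\classp{h}$. The naive candidates $\bar g*h'*g$ and $g^{-1}\cdot h'\cdot g$ are not stationary at $N$, since $g_t(N)\neq N'$ at intermediate times so $h'_t$ does not fix the moving image. My strategy is to freeze the conjugating homeomorphism at its endpoint $g_1$, producing
\[
k_t \;:=\; g_1^{-1} \circ h'_t \circ g_1,
\]
which is continuous, satisfies $k_0=\id_M$, and
\[
k_t(N) \;=\; g_1^{-1}(h'_t(g_1(N))) \;=\; g_1^{-1}(h'_t(N')) \;=\; g_1^{-1}(N') \;=\; N,
\]
using $g_1(N)=N'$ and the stationarity of $h'$ at $N'$; thus $k$ is stationary at $N$ with endpoint $k_1 = g_1^{-1}\circ h'_1\circ g_1$.

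It then remains to verify $k \simp \bar g*h'*g$. Using Lemma~\ref{le:pms_star_equiv_dot} to replace $*$ by $\cdot$, and Remark~\ref{rem:barinv_dotinv} to replace $\bar g$ by $g^{-1}$, it suffices to give a path homotopy $g^{-1}\cdot h'\cdot g \simp k$. The map
\[
H(t,s) \;=\; g_{(1-s)t+s}^{-1} \circ h'_t \circ g_{(1-s)t+s}
\]
does the job: at $s=0$ it equals $g_t^{-1}\circ h'_t\circ g_t$, at $s=1$ it equals $k_t$, $H(0,s)=\id_M$ for all $s$, and $H(1,s)=g_1^{-1}\circ h'_1\circ g_1$ is constant in $s$. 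Continuity follows from the topological-group structure on $\TOPO^h(M,M)$. Hence $\classp{\bar g * h * g} = \classp{k} \in \setstat(N,N)$, completing normality. The key insight throughout is that path-homotopy is flexible enough to permit freezing the conjugation at $g_1$, so that the only instant at which $N$ needs to be mapped to $N'$ is one where $g$ in fact does so.
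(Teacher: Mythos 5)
Your proposal is correct and follows essentially the same route as the paper: the closure, inverse, wideness and total-disconnectedness checks are the standard ones, and your normality argument—freezing the conjugation at $g_1$ via the homotopy $H(t,s)=g_{(1-s)t+s}^{-1}\circ h'_t\circ g_{(1-s)t+s}$—is exactly the homotopy used in the paper's proof (there written $f_{t(1-s)+s}^{-1}\circ x_t\circ f_{t(1-s)+s}$), with the same appeal to the $*$ versus $\cdot$ equivalence to reduce to the pointwise conjugate.
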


\newcommand{\setstatN}{\setstat(N,N)} 

\begin{proof}
	
	First we will show that the tuple $\setstat$ is a subgroupoid.
	
	For each $N\subseteq M$, the identity $\classp{\mot{\Id_M}{}{N}{N}}$ is in $\setstatN$ as for all $t\in \II$,  $(\Id_M)_t(N)=\id_M(N)=N$.
	
	For the existence of inverses, and for closure of composition, observe that there is nothing to show unless $N=N'$. 
	For each $\classp{\mot{x}{}{N}{N}}\in \setstatN$, we may assume without loss of generality that $\mot{x}{}{N}{N}$ is a \stationary{} motion. Then the inverse $\classp{\mot{\bar{x}}{}{N}{N}}$ is in 
	$\setstatN$,
	since for all $t\in \II$, $\bar{x}_t(N)=x_{1-t}\circ x_1^{-1}(N)=x_{1-t}(N)=N$.
	
	Let $\classp{\mot{x}{}{N}{N}}$ and $\classp{\mot{x'}{}{N}{N}}$ be in $\setstatN$, we may again assume without loss of generality, that $\mot{x}{}{N}{N}$ and $\mot{x'}{}{N}{N}$ are \stationary{}.
	For all $t\in[0,1/2]$ we have that $(x'*x)_t(N)=x_t(N)=N$ and for $t\in [1/2,1] $ that $(x'*x)_t(N)=x'_t\circ x_1(N)=x'_t(N)=N$.
	Thus composition closes, and so $\setstat$ is a groupoid.
	
	Observe now that $\setstat$ is totally disconnected and wide by construction.

	Finally,	we have that $\setstat$ is normal, since for any morphism $\classp{\mot{f}{}{N}{N'}}\in \Mtcmag/\simp$ and for any $\classp{\mot{x}{}{N'}{N'}}$ in 
	$\setstat(N',N')$,
	with $\mot{x}{}{N'}{N'}$ $N'$-stationary{}, the following function
	\[
	H(t,s)=f_{t(1-s)+s}^{-1}  \circ x_t \circ f_{t(1-s)+s}
	\]
	is a path homotopy from $f^{-1}\cdot x \cdot f$ to the path
	$t \mapsto f^{-1}_1\circ x_t \circ f_1 \in \TOPO^h(M,M)$, which is an \stationary{} motion. 
\end{proof}

\newcommand{\simS}{\stackrel{\mathrm{Stat}}{\sim}}
 Hence,
by Lemma~\ref{le:nsubgrpd_cong}, we can form a quotient groupoid 
$(\Mtcmag/\simp)/\setstat$.
Precisely, morphisms $\classp{\mot{f}{}{N}{N'}}$ and  $\classp{\mot{g}{}{N}{N'}}$ in $\Mtcmag/\simp$ are related in the quotient, denoted 
\[\classp{\mot{f}{}{N}{N'}} \simS
\classp{\mot{g}{}{N}{N'}},\]
if $\classp{\mot{\overline{g}}{}{N'}{N}} * \classp{\mot{f}{}{N}{N'}}=\classp{\mot{\overline{g}*f}{}{N}{N}}$ is in $\setstat(N,N)$. (Or equivalently if $\classp{\mot{{g^{-1}}\cdot f}{}{N}{N}}$ is in $\setstat(N,N)$.)

We give our first definition of the motion groupoid as a groupoid which is canonically isomorphic to $(\Mtcmag/\simp)/\setstat$.

\prop{\label{pr:pe_implies_me}
	Let $M$ be a \axiomM{}. 
	Let $(\mot{f}{}{N}{N'})\simp (\mot{g}{}{N}{N'})$ be path-equivalent motions in $M$.
	Then
	$\mot{\bar{g} * f}{}{N}{N}$ is  path-equivalent to an \stationary{} motion. 
}

\begin{proof}
	We have $\classp{\mot{f}{}{N}{N'}}=\classp{\mot{g}{}{N}{N'}}$, hence, as $\Mtcmag/\simp$ is a groupoid (Lemma~\ref{le:grpd_motstar}), thus with unique inverses, $\classp{\mot{f}{}{N}{N'}}^{-1}=\classp{\mot{\bar{g}}{}{N}{N'}}$. This implies there is a path-homotopy $H$ from $\bar{g}*f$ to $\Id_M$, which is an \stationary{} motion.
\end{proof}

\prop{\label{pr:me1}
	For a \axiomM{} $M$ and $N,N' \subseteq M$, denote by
	$\simm$ \ppm{(`$m$' for motion)}
	the relation
	\[
	\mot{f}{}{N}{N'} \simm \mot{g}{}{N}{N'} \;\;\mbox{ if }\;\; 
	\classp{\overline{g} * f} \in\setstat(N,N)
	\]
	on $\Mtcmag(N,N')$.
	(I) This is an equivalence relation.\\
	(II) There exists a canonical bijection $\Mtcmag(N,N')/\simm \, \cong (\Mtcmag/\simp)/\setstat$ sending $\classm{\mot{f}{}{N}{N'}}$ to the $\simS$ equivalence class of $\classp{\mot{f}{}{N}{N'}}$.\\
	Notation: We call this {\em motion-equivalence} and denote by $\classm{\mot{f}{}{N}{N'}}$ the motion-equivalence class of $\mot{f}{}{N}{N'}$.
	}

\begin{proof}
First note the following general fact. Consider a set $X$, with an equivalence relation $\sim_1$, and  furthermore another equivalence relation $\sim_2$ on $X/\sim_1$.
Then $x\sim y$, if $[x]_{\sim_1} \sim_2 [y]_{\sim_1} $, is an equivalence relation on $X$. And moreover we have a bijection $X/\sim \to {(X/\sim_1)}/\sim_2$ 
such that $[x]_\sim \mapsto {[[x]_{\sim_1}]}_{\sim_2}$.

 Notice that another way to write $\simm$ is as
\[\mot{f}{}{N}{N'} \simm \mot{g}{}{N}{N'} \;\;\mbox{ if }\;\; \classp{\mot{f}{}{N}{N'}} \simS  \classp{\mot{g}{}{N}{N'}}.
\] 
Further $\simp$ is an equivalence relation by Lemma~\ref{le:simp_cong}, and $\simS$ is an equivalence relation, since it is the relation obtained from Lemma~\ref{le:nsubgrpd_cong} applied to $\setstat$, which is normal by Lemma~\ref{le:stat_mots}.
Thus both results follow from the first paragraph.
\end{proof}

In the following theorem we continue our convention of only giving the first element of the triple corresponding to a motion in the tuple defining a groupoid, to keep the notation readable. \\
Thus $\classm{f}$ denotes, for some choice of $N$, and $N'=f_1(N)$, $\classm{\mot{f}{}{N}{N'}}$.
We note however that the relation $\simm$ depends on the subsets.

\medskip

By Proposition~\ref{pr:me1} we have:

\begin{theorem}\label{th:mg}
	Let $M$ be a \axiomM{}.
	There is a groupoid 
	\[
	\Mot\;=\; 
	(\Power M,\; \Mtcmag(N,N')/\simm,*,\classm{\Id_M}, \; \classm{f}\mapsto \classm{\bar{f}}) \] 
	where
	
	\begin{itemize}
		\item[(I)] objects are subsets of $M$;
		\item[(II)] morphisms between subsets $N,N'$ are motion-equivalence classes $\classm{\mot{f}{}{N}{N'}}$ of motions, explicitly
		$$
		\mot{f}{}{N}{N'}\simm \mot{g}{}{N}{N'} 
		$$
		if 
		$\bar{g}*f \colon N \too N $ is path equivalent to an \stationary{} motion;
		\item[(III)] composition of morphisms
		is given by 
		\[
		\classm{\mot{g}{}{N'}{N''}}*\classm{\mot{f}{}{N}{N'}} = \classm{\mot{g*f}{}{N}{N''}}
		\]
		where
		\begin{align}
		(g*f)_t = \begin{cases}
		f_{2t} & 0\leq t\leq 1/2, \\
		g_{2(t-1/2)}\circ f_1 & 1/2\leq t \leq 1;
		\end{cases}
		\end{align}
		\item[(IV)] the identity at each object $N$ is the motion-equivalence class of $\mot{\Id_M}{}{N}{N}$, $(\Id_M)_t(m)=m$ for all $m\in M$;
		\item[(V)] the inverse for each morphism $\classm{\mot{f}{}{N}{N'}}$ is the motion-equivalence class of $\mot{\bar{f}}{}{N'}{N}$
		where $\bar{f}_t=f_{(1-t)}\circ f_1^{-1}$.
	\end{itemize}
Moreover, we have a canonical isomorphism of groupoids:
 \[
 \pushQED{\qed}
 (\Mtcmag/\simp)/\setstat \cong  \Mot.\qedhere
 \popQED\] 
\end{theorem}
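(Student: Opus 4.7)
The plan is to obtain $\Mot$ by transporting the groupoid structure of $(\Mtcmag/\simp)/\setstat$ across the canonical hom-set bijection established in Proposition~\ref{pr:me1}. This lets us prove the groupoid axioms and the final isomorphism claim in a single stroke: if the tuple described in the theorem corresponds, object-by-object and operation-by-operation, to the already-known quotient groupoid, then it automatically is a groupoid, and the correspondence is by construction a groupoid isomorphism.

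First I would recall the inputs that are already in hand. Lemma~\ref{le:grpd_motstar} gives that $\Mtcmag/\simp$ is a groupoid with composition $*$, identity $\classp{\Id_M}$, and inverse $\classp{f}\mapsto\classp{\bar f}$. Lemma~\ref{le:stat_mots} shows that $\setstat$ is a wide, totally disconnected, normal subgroupoid of $\Mtcmag/\simp$, so Lemma~\ref{le:nsubgrpd_cong} yields a congruence on $\Mtcmag/\simp$ and Proposition~\ref{pr:gpd_cong} then yields the quotient groupoid $(\Mtcmag/\simp)/\setstat$, with composition $[\classp{g}]_{\simS}*[\classp{f}]_{\simS}=[\classp{g*f}]_{\simS}$, identity $[\classp{\Id_M}]_{\simS}$, and inverse $[\classp{f}]_{\simS}\mapsto[\classp{\bar f}]_{\simS}$.

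Next I would use the bijection from Proposition~\ref{pr:me1},
\[
\Phi\colon \Mtcmag(N,N')/\simm \longrightarrow (\Mtcmag/\simp)/\setstat(N,N'),\qquad \classm{f}\longmapsto [\classp{f}]_{\simS},
\]
to verify that the proposed composition, identity, and inverse on $\Mot$ are well-defined and satisfy $(\GG1)$--$(\GG3)$. For composition, $\Phi(\classm{g*f})=[\classp{g*f}]_{\simS}=[\classp{g}]_{\simS}*[\classp{f}]_{\simS}=\Phi(\classm{g})*\Phi(\classm{f})$, which simultaneously shows well-definedness of $*$ on $\simm$-classes (since the right-hand side is) and verifies that $\Phi$ is a magmoid isomorphism. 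Identity and inverse transfer identically: $\Phi(\classm{\Id_M})$ is the identity in the quotient groupoid, and $\Phi(\classm{\bar f})=[\classp{\bar f}]_{\simS}=[\classp{f}]_{\simS}^{-1}$, where the last equality uses Lemma~\ref{le:grpd_motstar}. The groupoid axioms for $\Mot$ then follow by pulling back the corresponding identities from $(\Mtcmag/\simp)/\setstat$ along $\Phi$, and the stated tuple for $\Mot$ is visibly the same data as on the right, establishing the canonical isomorphism $(\Mtcmag/\simp)/\setstat\cong \Mot$.

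There is essentially no deep obstacle left, as the two hard pieces, namely that $\setstat$ is a normal subgroupoid (Lemma~\ref{le:stat_mots}) and that $\simm$ on motions agrees with the two-stage quotient (Proposition~\ref{pr:me1}), are already in the excerpt. The only mild subtlety is bookkeeping: making sure the well-definedness of $*$ at the level of $\simm$-classes is not invoked circularly. This is harmless because $*$ is already well-defined on $\simp$-classes by Lemma~\ref{le:simp_cong}, and the further passage to $\simS$-classes is automatic from Proposition~\ref{pr:gpd_cong}; the bijection $\Phi$ then just transports this well-definedness back to $\simm$-classes.
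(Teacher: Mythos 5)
Your proposal is correct and follows the paper's own route: the paper deduces Theorem~\ref{th:mg} directly from Proposition~\ref{pr:me1}, i.e.\ by transporting the groupoid structure of $(\Mtcmag/\simp)/\setstat$ (built via Lemmas~\ref{le:grpd_motstar}, \ref{le:stat_mots}, \ref{le:nsubgrpd_cong} and Proposition~\ref{pr:gpd_cong}) across the canonical bijection, exactly as you do. Your write-up simply makes explicit the bookkeeping that the paper leaves implicit.
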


\begin{remark}
We have from Lemma~\ref{le:Mtcmag} that $\Mtmagdot/\simp$ and $\Mtmagstar/\simp$ are the same groupoid. Thus in the statement of the theorem we could have chosen to define the composition in terms of $\cdot$, or the inverse in terms of the pointwise inverse $(f^{-1})_t=f_t^{-1}$, or both.
It follows that $\mot{f}{}{N}{N'}\simm \mot{g}{}{N}{N'} $ if and only if $\mot{g^{-1}\cdot f}{}{N}{N}$ is path-equivalent to an \stationary{} motion.
\end{remark}

\begin{remark}\label{rem:DahmMG}
\ppm{The automorphism groups $\Mot(N,N)$, where $N\subseteq M$ is a submanifold, are isomorphic to the motion groups constructed in \cite{dahm,goldsmith}, at least in the case when $M$ is compact. We do not know the precise relation between our motion groups and those of \textit{loc cit} when  $M$ is not compact, since the motion groups constructed by Dahm and Goldsmith only consider for homeomorphism of compact support, and furthermore all path-homotopies are required to only trace homeomorphisms of compact support.
We settle that the two settings give the same result for the case of
finite subsets contained in the interior of a manifold in
Remark~\ref{rem:braidDahm}.}
\end{remark}

In the rest of this Section, now that we have our core construction, we collect some straightforward but useful properties of $\Mot$.

\medskip

For certain choices of $M$ and $N$, there exist subgroupoids of $\Mot$ which have finite presentation.
We discuss this in Section~\ref{sec:artinbraids}.

\begin{example}\label{ex:M_premot}
	Let $M$ be a \axiomM{}, then $\Mot(M,M)$ is trivial. 
	This is because for any $f\in \premots$, $\mot{f}{}{M}{M}$ is a motion, and it is 
	$M$-stationary.
	Similarly $\Mot(\emptyset,\emptyset)\cong\Mot(M,M)$ is trivial.
\end{example}

\begin{defin}\label{def:worldline}
The {\em worldline} of a motion $\mot{f}{}{N}{N'}$ in a manifold $M$ is
\[\W\left(f\colon N \too N'\right):=  \bigcup_{t \in [0,1]} f_t(N) \times \{t\} \subseteq M\times \II.
\]
\end{defin}

It follows directly from the definitions that an \stationary{} motion is precisely a motion whose worldline is equal to $N\times \II$.

The group case of the following result is effectively proved in \cite[pg.6]{dahm}.

\begin{proposition}\label{prop:dahm}
Let $f,g\colon N \too N'$ be motions with the same worldline, so we have
\[\W(f\colon N \too N')=\W(g\colon N \too N').\]
Then $f\colon N \too N'$ and $g\colon N \too N'$ are motion equivalent.
\end{proposition}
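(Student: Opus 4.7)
The plan is to use the pointwise composition $\cdot$ from Lemma~\ref{le:dot_premot_comp} together with the characterisation of $\simm$ in the remark after Theorem~\ref{th:mg}, namely that $\mot{f}{}{N}{N'}\simm \mot{g}{}{N}{N'}$ iff $\mot{g^{-1}\cdot f}{}{N}{N}$ is path-equivalent to an $N$-stationary motion. The key observation is that the equality of worldlines $\W(f\colon N \too N')=\W(g\colon N \too N')$ translates directly into the pointwise (setwise) equality $f_t(N)=g_t(N)$ for every $t\in \II$, and this is precisely what is needed to make $g^{-1}\cdot f$ into an honest $N$-stationary motion, not just a motion path-equivalent to one.

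More concretely, I would first unpack the definition of worldline. Since
\[
\W(f\colon N \too N') = \bigcup_{t\in\II} f_t(N)\times\{t\} = \bigcup_{t\in\II} g_t(N)\times\{t\} = \W(g\colon N \too N'),
\]
and the slices at each $t$ are disjoint in $M\times\II$, we deduce that $f_t(N)=g_t(N)$ as subsets of $M$ for every $t\in \II$. Now form the premotion $h:=g^{-1}\cdot f$, which by Lemma~\ref{le:dot_premot_comp} lies in $\premots$ with $h_t=g_t^{-1}\circ f_t$ and $h_0=\id_M$. Since $f_1(N)=N'=g_1(N)$, we have $h_1(N)=g_1^{-1}(f_1(N))=N$, so $\mot{h}{}{N}{N}$ is a bona fide motion from $N$ to $N$.

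The main (and essentially only) step is then to observe that $h$ is $N$-stationary: for every $t\in\II$,
\[
h_t(N) \;=\; g_t^{-1}(f_t(N)) \;=\; g_t^{-1}(g_t(N)) \;=\; N,
\]
using the setwise identity of the slices. Hence $\mot{g^{-1}\cdot f}{}{N}{N} \in \mathrm{SetStat}_M(N,N)$, so it certainly lies in a path-equivalence class containing an $N$-stationary motion (namely itself). By the remark after Theorem~\ref{th:mg}, this is exactly the condition $\mot{f}{}{N}{N'} \simm \mot{g}{}{N}{N'}$, finishing the proof.

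There is no real obstacle here: the strength of having two compositions available (specifically the pointwise $\cdot$) makes this proof essentially immediate, which was part of the motivation stated earlier for introducing $\cdot$ as a computational convenience. Had we been forced to work only with $*$, we would instead have needed to verify that $\bar{g}*f$ is path-equivalent to an $N$-stationary motion, which would require exhibiting an explicit path-homotopy; the equivalence $\Mtcmag^{\;\cdot}/\simp = \Mtcmag^{*}/\simp$ from Lemma~\ref{le:Mtcmag} avoids this detour entirely.
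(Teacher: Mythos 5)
Your proposal is correct and follows essentially the same route as the paper: both extract the slicewise equality $f_t(N)=g_t(N)$ from the worldline identity, observe that the pointwise composite $g^{-1}\cdot f$ is therefore $N$-stationary, and invoke the path-equivalence $\bar{g}*f\simp g^{-1}\cdot f$ (Lemma~\ref{le:Mtcmag}, equivalently the remark after Theorem~\ref{th:mg}) to conclude motion equivalence. The only cosmetic difference is that you cite that remark directly while the paper phrases the same step via Lemma~\ref{le:Mtcmag}; the mathematical content is identical.
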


\begin{proof} 
By using Lemma~\ref{le:Mtcmag}, there is a motion $\mot{{g^{-1}}}{}{N'}{N}$ and 
\begin{align*}
\mot{\bar{g}*f}{}{N}{N}&\mot{\simp g^{-1} \cdot f}{}{N}{N}.
\end{align*}
The equality $\W(f\colon N \too N')=\W(g\colon N \too N')$ means:
\[ \bigcup_{t \in [0,1]} f_t(N) \times \{t\}=\bigcup_{t \in [0,1]}g_t(N) \times \{t\}.\]
In particular for each $t \in [0,1]$, we have $f_t(N)=g_t(N).$ 
This implies that for all $t\in \II$, $(g^{-1}\cdot f )_t(N)=(g^{-1}_t ( f_t(N))= (g^{-1}_t(g_t(N))=g^{-1}_t\circ g_t(N)=N $. Thus $g^{-1} \cdot f$ is $N$-stationary, and hence
$\mot{\bar{g}* f}{}{N}{N}$ is path-equivalent to an \stationary{} motion.
\end{proof}

In Section~\ref{sec:laminated} we will prove a generalisation of 
Prop.\ref{prop:dahm}, 
fully describing motion equivalence in terms of a notion of isotopy of worldlines of motions.

The following lemma says that the $*$ composition of motions descends to a composition of worldlines, the same is not true for the $\cdot$ or the $*'$ compositions of motions (Proposition~\ref{pr:grpd_dot} and Remark~\ref{rem:starprime}). 

\begin{lemma}\label{lem:concatW}
Consider motions $f\colon N \too N'$ and $g\colon N' \too N''$, then 
\[\W\left(g*f\colon N \too N''\right)\hspace{-0.12pt}=\hspace{-0.12pt}
\big\{(m,t/2) \hspace{-1pt}\mid\hspace{-1pt} (m,t) \in \W(f\colon N \too N')\big \} \cup \big\{(m,t/2+1/2) \hspace{-1pt}\mid \hspace{-1pt}(m,t) \in \W(g\colon N' \too N'')\big \},
\]
where $*$ composition is as in Equation~\eqref{def:comp1}.
\end{lemma}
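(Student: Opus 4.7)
The plan is to unfold the definition of the worldline in terms of the explicit piecewise formula \eqref{def:comp1} for $g*f$, split the union at $t=1/2$, and reparametrise each piece by an affine change of variable.

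First, by Definition~\ref{def:worldline},
\[
\W\left(g*f\colon N \too N''\right) \;=\; \bigcup_{t\in[0,1/2]} (g*f)_t(N)\times\{t\} \;\cup\; \bigcup_{t\in[1/2,1]} (g*f)_t(N)\times\{t\}.
\]
On the left piece we substitute the top case of \eqref{def:comp1}, giving $(g*f)_t(N)=f_{2t}(N)$; the change of variable $s=2t$ (so $t=s/2$, $s\in[0,1]$) then identifies this union with
\[
\bigcup_{s\in[0,1]} f_s(N)\times\{s/2\} \;=\; \{(m,s/2)\mid m\in f_s(N),\ s\in[0,1]\} \;=\; \{(m,t/2)\mid (m,t)\in \W(f\colon N \too N')\}.
\]
On the right piece we use the bottom case of \eqref{def:comp1} together with $f_1(N)=N'$ to obtain $(g*f)_t(N)=g_{2(t-1/2)}(N')$; the substitution $s=2(t-1/2)$ (so $t=s/2+1/2$, $s\in[0,1]$) identifies this with
\[
\bigcup_{s\in[0,1]} g_s(N')\times\{s/2+1/2\} \;=\; \{(m,t/2+1/2)\mid (m,t)\in \W(g\colon N' \too N'')\}.
\]
Taking the union of the two reparametrised pieces yields the claimed formula.

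There is no real obstacle here; the only thing to be careful about is the overlap at $t=1/2$, where both cases of \eqref{def:comp1} are required to agree, and they do since $f_1=g_0\circ f_1$ (as $g_0=\id_M$), so the two pieces contribute the same slice $N'\times\{1/2\}$ to the worldline and the union is unambiguous. This is precisely the feature (namely, that the $*$ composition carries the ``stack-and-rescale'' structure of path concatenation pointwise into $M\times\II$) that makes the $*$ composition the correct one for relating motions to worldlines, and explains why the analogous identity fails for $\cdot$ and $*'$, in which the ambient space moves throughout both halves rather than being frozen on one side.
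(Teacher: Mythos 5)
Your proof is correct and follows the same route as the paper, which simply states that the lemma follows from the definition of $*$ in Proposition~\ref{de:premot_comp}; you have merely written out explicitly the splitting at $t=1/2$, the affine reparametrisations, and the consistency of the overlap slice $N'\times\{1/2\}$ that the paper leaves implicit.
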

\begin{proof}This follows from the definition of $*$ in  Proposition \ref{de:premot_comp}.
\end{proof}

\lemm{\label{Relating_Mot}
Let $M$ and $M'$ be \axiomM{}s such that there exists a homeomorphism $\psi\colon M\to M'$. Then there is an isomorphism of categories
	\[
	\Psi\colon \Mot[M]\to \Mot[M']
	\]
	defined as follows.
	On objects $N\subseteq M$, $\Psi(N)=\psi(N)$.
	For a motion $\mot{f}{}{N}{N'}$ in $M$, let
	$(\psi\circ f \circ\psi^{-1})_t=\psi \circ f_t \circ \psi^{-1}$.
	Then $\Psi$ sends the equivalence class $\classm{\mot{f}{}{N}{N'}}$ to the equivalence class
	$\classm{\psi\circ f \circ\psi^{-1}\colon \psi(N)\to \psi(N')}				$.}
\begin{proof}
    We have from Lemma~\ref{le:raisin} that $\psi\circ f \circ \psi^{-1}$ is in $\premo{M'}$.
	Notice also that $(\psi\circ f \circ \psi^{-1})_1(\psi(N))=\psi\circ f_1 \circ \psi^{-1}(\psi(N))=\psi\circ f_1(N)=\psi (N')$.
	
	We check $\Psi$ is well defined. Suppose $\mot{f}{}{N}{N'}$ and $\mot{f'}{}{N}{N'}$ are equivalent motions in $M$, so there is a path homotopy $\bar{f'}*f$ to a path, say $x$, such that $\mot{x}{}{N}{N}$ is an \stationary{} motion, let us call this $H$.
	It is straightforward to check that 
	the function 
	$(\psi\circ H\circ \psi^{-1})(t,s)=\psi\circ H(t,s)\circ \psi^{-1}$ is a homotopy making $\overline{\Psi(f')}*\Psi(f)$ path-equivalent to $\mot{\psi\circ x\circ {\psi^{-1}}}{}{\psi(N)}{\psi(N)}$ which is a $\psi(N)$-stationary{} motion.
	
	We check $\Psi$ preserves composition.
	Suppose $\mot{f}{}{N}{N'}$ and $\mot{g}{}{N'}{N''}$ are motions, then $\Psi(g\cdot f )_t=\psi\circ g_t\circ f_t \circ \psi^{-1}=\psi\circ g_t\circ\psi^{-1}\circ \psi \circ f_t \circ \psi^{-1}=(\psi(g)\cdot\psi(f))_t$.
	
	The inverse functor to $\Psi$ is defined in the same way using the  homeomorphism $\psi^{-1}\colon M'\to M$.
\end{proof}

\exa{
\label{ex:R2D2} There exists a homeomorphism from $D^2\setminus \partial D^2$ to $\R^2$, thus, 
by Lemma~\ref{Relating_Mot}, 
there is an isomorphism $\Mot[D^2\setminus \partial D^2]\cong \Mot[\R^2]$.
}

\begin{corollary}\label{co:Relating_Mot}
Let $M$ be a \axiomM{} and $N,N'\subseteq M$ subsets such that there exists a homeomorphism $\sh{f}\colon M\to M$ with $\sh{f}(N)=N'$. Then there is a group isomorphism 
\[
\Mot[M](N,N)\xrightarrow{\cong}\Mot[M](N',N').
\]
\end{corollary}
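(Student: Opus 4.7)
The plan is to read the corollary as a direct consequence of Lemma~\ref{Relating_Mot} applied in the case $M'=M$. Specifically, I would take $\psi = \sh{f}\colon M\to M$ (which is a homeomorphism by hypothesis) and invoke Lemma~\ref{Relating_Mot} to obtain an isomorphism of categories
\[
\Psi\colon \Mot[M] \xrightarrow{\cong} \Mot[M].
\]
Since $\Psi$ acts on objects by $K \mapsto \sh{f}(K)$, and $\sh{f}(N) = N'$ by assumption, we have $\Psi(N) = N'$.

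Next, as $\Psi$ is an isomorphism of categories, it restricts on each pair of objects to a bijection on morphism sets. In particular it restricts to a bijection
\[
\Psi\big|_{\Mot[M](N,N)}\colon \Mot[M](N,N) \longrightarrow \Mot[M](\Psi(N),\Psi(N)) = \Mot[M](N',N').
\]
Because $\Psi$ is a functor, it preserves identities and composition, so the restricted map is a group homomorphism on the automorphism groups; combined with bijectivity, it is therefore a group isomorphism.

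There is essentially no obstacle here beyond unpacking notation; the substance of the argument is already contained in Lemma~\ref{Relating_Mot}, and the corollary is simply the specialisation to the endomorphism case $M'=M$ followed by restriction to automorphism groups. The only small point to mention explicitly is the standard observation that any isomorphism of categories restricts to a group isomorphism between automorphism groups of corresponding objects.
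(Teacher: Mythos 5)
Your proposal is correct and is essentially the paper's own argument: the paper proves the corollary simply by setting $\psi=\sh{f}$ in Lemma~\ref{Relating_Mot}, and your write-up just spells out the routine step that an isomorphism of categories restricts to a group isomorphism between automorphism groups of corresponding objects.
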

\begin{proof}
Letting $\psi=\sh{f}$ in Lemma~\ref{Relating_Mot}
gives the isomorphism.
\end{proof}

\begin{example}\label{ex:finiteset} It is folklore, and discussed for instance in \cite{n-trans}, with roots in \cite[\S 4]{Milnor_top}, that, if $M$ has dimension $\ge 2$, and is connected, then given any positive integer $n$, and a pair of $n$-tuples, $(x_1,\dots, x_n)$ and $(y_1,\dots, y_n)$, of points in the interior of $M$, there exists a homeomorphism $f\colon M \to M$ such that $f(x_i)=y_i$, for $i=1,\dots, n$. This is not true in general if $\dim(M)=1$, but, given that $M$ is homeomorphic\footnote{Here we are assuming that $M$ is not only Hausdorff, but also second countable.} to either $S^1, [0,1)$, $[0,1]$ or $\R$, we can still prove that, if $K$ and $K'$ are finite subsets of the same cardinality in the interior of $M$, then there exists a homeomorphism of $M$ sending $K$ to $K'$. 
From this it follows that, whenever $M$ is connected and $K$ is a finite subset of $M\setminus \partial M$ the isomorphism type of $\Mot[M](K,K)$ depends only on the cardinality of $K$.
This is proved in Lemma~\ref{le:motion_exists_between_finitesets}.
\end{example}

\begin{lemma}\label{le:Mot_auto}
Let $M$ be a \axiomM{}.
There is an involutive automorphism
\[
\Omega\colon \Mot \to \Mot
\]
which sends an object $N\subseteq M$ to its complement $M\setminus N$ and which sends a morphism $\classm{\mot{f}{}{N}{N'}}$ to  $\classm{\mot{f}{}{M\setminus N}{M\setminus N'}}$.
\end{lemma}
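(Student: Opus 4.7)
The key observation is that complementation in $M$ is equivariant with respect to the action of $\TOPO^h(M,M)$: if $h\in \TOPO^h(M,M)$ is a homeomorphism and $h(N)=N'$, then, since $h$ is a bijection, $h(M\setminus N)=M\setminus h(N)=M\setminus N'$. In particular, if $f\in\premots$ with $f_1(N)=N'$, then $f_1(M\setminus N)=M\setminus N'$, so $\mot{f}{}{M\setminus N}{M\setminus N'}$ is a well-defined motion. This was already noted, essentially, in Lemma~\ref{le:homcompl} at the level of $\Hom$. The plan is to bootstrap this observation through the construction of $\Mot$.

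First, I would check that $\Omega$ is well-defined on equivalence classes. Suppose $\mot{f}{}{N}{N'}\simm \mot{g}{}{N}{N'}$; by Proposition~\ref{pr:me1} this means that there is a path-homotopy $H$ from $\bar g*f$ to some $\mot{x}{}{N}{N}$ with $x_t(N)=N$ for all $t\in\II$. Applying equivariance of complementation pointwise, $x_t(M\setminus N)=M\setminus N$ for all $t$, so $\mot{x}{}{M\setminus N}{M\setminus N}$ is an $(M\setminus N)$-stationary motion. The same $H$ (which is just a map $\II\times\II\to\TOPO^h(M,M)$ and so does not depend on any choice of subset) witnesses $\bar g*f\simp x$ as premotions, hence, viewed with the new source and target, gives $\mot{\bar g*f}{}{M\setminus N}{M\setminus N}\simp \mot{x}{}{M\setminus N}{M\setminus N}$. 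Applying Proposition~\ref{pr:me1} in reverse yields $\mot{f}{}{M\setminus N}{M\setminus N'}\simm \mot{g}{}{M\setminus N}{M\setminus N'}$, so $\Omega$ descends to equivalence classes.

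Functoriality is then essentially automatic, since the $*$-composition, the bar operation $f\mapsto\bar f$, and the identity $\Id_M$ all depend only on the underlying premotions, not on the chosen subsets. Explicitly:
\[
\Omega(\classm{\mot{g}{}{N'}{N''}}*\classm{\mot{f}{}{N}{N'}})
=\classm{\mot{g*f}{}{M\setminus N}{M\setminus N''}}
=\Omega(\classm{\mot{g}{}{N'}{N''}})*\Omega(\classm{\mot{f}{}{N}{N'}}),
\]
$\Omega(\classm{\Id_M\colon N\too N})=\classm{\Id_M\colon M\setminus N \too M\setminus N}$, and $\Omega$ sends inverses to inverses. On objects $\Omega$ is the complementation map $N\mapsto M\setminus N$, which is an involution on $\Power M$, and on morphisms we have
\[
\Omega^2(\classm{\mot{f}{}{N}{N'}})
=\Omega(\classm{\mot{f}{}{M\setminus N}{M\setminus N'}})
=\classm{\mot{f}{}{N}{N'}},
\]
so $\Omega\circ\Omega=\mathrm{id}_{\Mot}$. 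In particular $\Omega$ is a bijection on objects and morphisms, hence an automorphism. No part of this should be a real obstacle; the whole argument reduces to the single observation that stationarity and homeomorphism-action both commute with complementation.
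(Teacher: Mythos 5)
Your proposal is correct and follows essentially the same route as the paper's proof: well-definedness of the complementary motion via the bijectivity argument of Lemma~\ref{le:homcompl}, reuse of the very same path-homotopy to transfer motion-equivalence to the complements, and the observation that composition (and the rest of the groupoid structure) is preserved because it only depends on the underlying \premot{}s. Your write-up is in fact slightly more careful than the paper's, since you make explicit that an $N$-stationary motion is automatically $(M\setminus N)$-stationary and that $\Omega$ squares to the identity.
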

\begin{proof}
First notice that by Lemma \ref{le:homcompl}, $f_1\in \Hom(M\setminus N,M\setminus N')$.
We also need to check this functor is well defined. 
Suppose $\mot{f}{}{N}{N'}$ and $\mot{f'}{}{N}{N'}$ are motion-equivalent, so there is a path homotopy $f'*f$ to a stationary motion. 
So then $\mot{f}{}{M\setminus N}{M\setminus N'}\simm \mot{f'}{}{M\setminus N}{M\setminus N'}$ using the same homotopy.
It is immediate from the definitions that composition is preserved.
\end{proof}

\subsection{
Pointwise
\texorpdfstring{$A$}{A}-fixing motion groupoid, \texorpdfstring{$\Mot^A$}{MotA}}\label{ss:Afixmot}
So far we have avoided working with  homeomorphisms which fix a distinguished subset to avoid overloading the notation and thus make the exposition clearer.
Everything we have done so far in this section could have been done by working instead with paths in $\TOPO^h_A(M,M)$ for $M$ a manifold and $A\subset M$ a subset.
Recall $f\in \TOPO^h_{A}(M,M)$ is a self-homeomorphism with $f(a)=a$ for all $a \in A$. We have the following adjusted definitions.

\defn{
	Fix a \axiomM{} $M$ and a subset $A\subset M$.
	An {\em $A$-fixing  \premot{}} in $M$ is a path in $\TOPO^h_{A}(M,M)$ starting at $\id_M$.
	We define notation for the set of all A-fixing \premot{}s in $M$,
	$$
	\premots^A \; = \; \{ f \in   \Topo^{}(\II,\Topo^h_A(M,M)) \; | \; f_0 = id_M   \} .
	$$}

\defn{
	Let $M$ be a \axiomM{} and $A\subset M$ a subset.
	An { \em $A$-fixing motion} in $M$ is a triple $(f,N,f_1(N))$ consisting of an A-fixing \premot{} $f\in\premots^A{}$, a subset $N\subset M$ and the image of $N$ at the endpoint of $f$, $f_1(N)$.\\
	\noindent { Notation:}
	We will denote such a triple by $\mot{f}{A}{N}{N'}$ where $f_1(N)=N'$, and say it is an $A$-fixing motion from $N$ to $N'$.
	For subsets $N,N'\subset M$ we define
	\[
	\MotAc{M}{N}{N'}=\{(f,N,f_1(N)) \text{ a motion in $M$}\,\vert\,  f_1(N)=N'\}.
	\]
}
In practice we will often be interested in the case $A=\partial M$.

\exa{All motions of $\II$ are $\partial \II$-fixing motions.}

\exa{The half-twist motions of $S^1$, described in Example \ref{ex:half-twist motion} are not $A$-fixing motions for any non-empty subset $A\subset S^1$.}

\exa{Let $f$ be the flow in $D^2$ described in Example~\ref{ex:bdyfix_D^2} (and represented in Figure~\ref{fig:my_label01}), then $f\colon N\to N'$ a $\partial D^2$-fixing motion for any $N,N'\subseteq D^2$.}

We have the following version of the motion groupoid where morphisms are classes of $A$-fixing motions up to an equivalence of paths in
$\TOPO_A^h(M,M)$.
\begin{theorem} \label{th:mg2A} 
Let $M$ be a \axiomM{} and $A\subset M$ a subset. 
There is a groupoid
\[
\Mot^A \; =\; (\Power M,\; \Mtcmag^A(N,N')/\simm,*,\classm{\Id_M}, \; \classm{f}\mapsto \classm{\bar{f}})
\]
where $A$-fixing motions $\mot{f}{A}{N}{N'}$ and $\mot{g}{A}{N}{N'}$ are equivalent if $\bar{g}*f$ is path-equivalent to an $A$-fixing \stationary{} motion as paths in $\TOPO_A^h(M,M)$.
\end{theorem}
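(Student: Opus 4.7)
The plan is to mimic the construction of $\Mot$ (Theorem~\ref{th:mg}) verbatim, with $\Topo^h(M,M)$ systematically replaced by its subgroup $\Topo^h_A(M,M)$, and to verify at each step that the $A$-fixing condition is preserved. The main conceptual point is that $\Topo^h_A(M,M)$ is a topological subgroup of $\Topo^h(M,M)$ (the pointwise stabiliser of $A$ under the evaluation action), so Theorem~\ref{le:top_group} applies to it and all the magmoid/groupoid machinery from \S\ref{ss:motions}--\S\ref{ss:stat_mot} lifts directly.

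First I would set up the $A$-fixing analogues of the basic operations. For $f,g \in \premots^A$, the premotion $g*f$ of Proposition~\ref{de:premot_comp} is again $A$-fixing at every $t$ (each $f_t$ and $g_t$ fixes $A$, hence so do their compositions), and likewise for $g\cdot f$ and for $\bar f_t = f_{1-t}\circ f_1^{-1}$, using that $\Topo^h_A(M,M)$ is closed under composition and inverses. Hence we obtain a magma $(\premots^A,*)$, a group $(\premots^A,\cdot)$, and the corresponding action magmoid $\Mtcmag^{*,A}$ and action groupoid $\Mtcmag^{\cdot,A}$ with object set $\Power M$, exactly as in Propositions~\ref{pr:mot_comp} and \ref{pr:dot_comp}. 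Lemma~\ref{le:pms_star_equiv_dot} carries over because the explicit path homotopy in equation~\eqref{eq:star pequiv to dot} is built pointwise from compositions of the $A$-fixing maps $f_t, g_t$, so each $H(t,s)$ still lies in $\Topo^h_A(M,M)$.

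Next I would show that path-equivalence in $\TOPO^h_A(M,M)$ is a congruence on $\Mtcmag^{*,A}$ and $\Mtcmag^{\cdot,A}$, giving a common quotient groupoid $\Mtcmag^A/\simp$; the argument is the one in Lemma~\ref{le:simp_cong} and Lemma~\ref{le:Mtcmag}, noting again that the explicit concatenation homotopies are formed from $A$-fixing pieces, and that the inverse homotopy $H_{inv}$ of equation~\eqref{eq:mot_refl} takes values $f_{r(t,s)} \in \Topo^h_A(M,M)$. Then define $A$-fixing set-stationary motions as in the unrestricted case and establish the analogue of Lemma~\ref{le:stat_mots}: the subclasses $\setstat^A(N,N)$ form a totally disconnected, wide, normal subgroupoid of $\Mtcmag^A/\simp$. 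Normality requires the conjugation homotopy $H(t,s) = f_{t(1-s)+s}^{-1} \circ x_t \circ f_{t(1-s)+s}$; since each factor is in $\Topo^h_A(M,M)$, so is $H(t,s)$, so the argument transports.

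Finally, apply Lemma~\ref{le:nsubgrpd_cong} to obtain the quotient groupoid $(\Mtcmag^A/\simp)/\setstat^A$ and repeat Proposition~\ref{pr:me1} to identify it canonically with $\Mtcmag^A(-,-)/\simm$, yielding $\Mot^A$ with identity $\classm{\Id_M}$ and inverses $\classm{f}\mapsto \classm{\bar f}$. The only real obstacle, and it is a mild one, is bookkeeping: one must check that every explicit homotopy written down in \S\ref{ss:stat_mot} is built pointwise from the $A$-fixing data on which it depends, so that it stays inside $\Topo^h_A(M,M)$ and therefore witnesses the relevant relation in the $A$-fixing setting. Since every homotopy used above is either a reparametrisation of a single $A$-fixing premotion or a composition of such, this bookkeeping is uniformly straightforward, and no new ideas beyond those of Theorem~\ref{th:mg} are needed.
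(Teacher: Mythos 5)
Your proposal is correct and takes essentially the same route as the paper: the paper's proof of Theorem~\ref{th:mg2A} simply observes that $\bar f$, $f^{-1}$, $g*f$ and $g\cdot f$ are $A$-fixing when $f,g$ are, and that every homotopy constructed in the proof of Theorem~\ref{th:mg} and its supporting lemmas is built pointwise from the input paths and hence stays in $\TOPO^h_A(M,M)$, so all arguments carry over verbatim. Your more explicit tracking of the individual homotopies (the $*$ versus $\cdot$ homotopy, the concatenation and inverse homotopies, and the conjugation homotopy for normality of the stationary subgroupoid) is just a spelled-out version of that same bookkeeping.
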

\begin{proof}
 Notice that if $\mot{f}{A}{N}{N'},\mot{g}{A}{N'}{N''}$ are $A$-fixing motions
then $\bar{f}$, $f^{-1}$, $g*f$ and $g\cdot f$ are all $A$-fixing motions.
All motions constructed in homotopies required for the proof of Theorem \ref{th:mg} and associated lemmas are $A$-fixing if the input paths are $A$-fixing.
Thus all proofs work in exactly the same way for $A$-fixing motions.
\end{proof}

\lemm{\label{Relating_MotA}
Let $M$ and $M'$ be \axiomM{}s such that there exists a homeomorphism $\psi\colon M\to M'$. Then there is a isomorphism of categories
				\[
				\Psi\colon \Mot[M]^A\to \Mot[M']^{\psi(A)}
				\]
				defined as in Lemma \ref{Relating_Mot}.
}
\begin{proof}
We can use the same proof as in Lemma \ref{Relating_Mot} together with the fact that for any motion $\mot{f}{A}{N}{N}$, the path $(\psi\circ f\circ\psi^{-1})_t=\psi\circ f_t\circ\psi^{-1}$ fixes $A$ pointwise.
\end{proof}
\begin{corollary}
	Let $M$ be a \axiomM{} and $A \subset M$  subset. Let $N,N'\subset M$ be subsets such that there exists a homeomorphism $\sh{f}\colon M\to M$ with $\sh{f}(N)=N'$ and $\sh{f}(a)=a$ for all $a\in A$. Then there is a group isomorphism
	\[
	\Mot[M]^A(N,N)\xrightarrow{\cong}\Mot[M]^A(N',N').
	\]
\end{corollary}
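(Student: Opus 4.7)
The plan is to deduce this corollary immediately from Lemma~\ref{Relating_MotA}, in direct parallel with how Corollary~\ref{co:Relating_Mot} was obtained from Lemma~\ref{Relating_Mot}. Specifically, I would apply Lemma~\ref{Relating_MotA} with $M' = M$ and with $\psi = \sh{f}$, where $\sh{f}$ is the given homeomorphism of $M$ fixing $A$ pointwise.

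The crucial point to verify is that $\psi(A) = A$. Since $\sh{f}(a) = a$ for all $a \in A$, we have $\sh{f}(A) = A$, so Lemma~\ref{Relating_MotA} yields an isomorphism of groupoids $\Psi \colon \Mot[M]^A \to \Mot[M]^{\sh{f}(A)} = \Mot[M]^A$. On objects this functor sends $N$ to $\sh{f}(N) = N'$.

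Being an isomorphism of groupoids, $\Psi$ restricts to a group isomorphism on automorphism groups at each object. In particular, restricting to the automorphism group at $N$ gives the desired group isomorphism
\[
\Mot[M]^A(N,N) \xrightarrow{\cong} \Mot[M]^A(N', N'),
\]
explicitly given by $\classm{\mot{g}{A}{N}{N}} \mapsto \classm{\sh{f} \circ g \circ \sh{f}^{-1}\colon N' \to N'}$. There is no real obstacle here, as all the work was already carried out in the proof of Lemma~\ref{Relating_MotA}; this corollary is simply the observation that when the codomain manifold coincides with the domain and $\psi$ fixes $A$, the induced groupoid isomorphism stays within $\Mot[M]^A$ and may be restricted to conjugation of motion groups.
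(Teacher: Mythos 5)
Your proposal is correct and matches the paper's own argument: the paper proves this corollary exactly as for Corollary~\ref{co:Relating_Mot}, by taking $\psi=\sh{f}$ in Lemma~\ref{Relating_MotA} and using that $\sh{f}(A)=A$ so the induced isomorphism lands back in $\Mot[M]^A$, then restricting to automorphism groups. Your explicit description of the map as conjugation by $\sh{f}$ is a welcome (and accurate) extra detail.
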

\begin{proof}
	As for Corollary~\ref{co:Relating_Mot}. 
\end{proof}

\begin{lemma}
	Let $M$ be a \axiomM{}.
	There is an involutive automorphism
	\[
	\Omega\colon \Mot^A \to \Mot^A
	\]
	which sends and object $N\subset M$ to it's complement $M\setminus N$ and which sends a morphism $\classm{\mot{f}{A}{N}{N'}}$ to  $\classm{\mot{f}{A}{M\setminus N}{M\setminus N'}}$.
\end{lemma}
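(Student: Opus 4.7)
The plan is to mirror the proof of Lemma~\ref{le:Mot_auto} verbatim, checking at each step that the $A$-fixing condition is preserved --- which is essentially automatic, since the construction does not alter the underlying path of homeomorphisms but only relabels the pair of subsets being acted upon.

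First I would check that $\Omega$ is well-defined on the underlying magmoid $\Mtcmag^{A}$. If $\mot{f}{A}{N}{N'}$ is an $A$-fixing motion, then $f\in \premots^A$, so in particular each $f_t$ fixes $A$ pointwise; this is exactly the condition needed for $\mot{f}{A}{M\setminus N}{M\setminus N'}$ to be an $A$-fixing motion. That $f_1(M\setminus N)=M\setminus N'$ follows from $f_1$ being a bijection with $f_1(N)=N'$, which is the content of (the $A$-fixing version of) Lemma~\ref{le:homcompl}.

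Next I would verify well-definedness on motion-equivalence classes. Suppose $\mot{f}{A}{N}{N'}\simm \mot{g}{A}{N}{N'}$ in $\Mot^A$; then by definition $\bar{g}*f$ is path-equivalent, as paths in $\TOPO^h_A(M,M)$, to an $A$-fixing \stationary{} motion $\mot{x}{A}{N}{N}$. The same path homotopy $H$ --- viewed now as a homotopy relating the motions $\mot{\bar{g}*f}{A}{M\setminus N}{M\setminus N}$ and $\mot{x}{A}{M\setminus N}{M\setminus N}$ --- works here, because each $H(t,s)\in\TOPO^h_A(M,M)$ is a bijection and $x_t(N)=N$ implies $x_t(M\setminus N)=M\setminus N$ for all $t\in \II$. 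So $\Omega(\classm{f})$ is independent of the representative.

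Functoriality is immediate from the definition of $\Omega$ on the underlying triples: identities go to identities since $\Id_M$ fixes every subset setwise, and $\Omega(\classm{g*f})=\classm{\mot{g*f}{A}{M\setminus N}{M\setminus N''}}=\Omega(\classm{g})*\Omega(\classm{f})$ because the underlying paths are unchanged. Involutivity follows from $M\setminus(M\setminus N)=N$. There is no real obstacle here --- the only thing to be careful about is ensuring that the path-homotopies realising motion-equivalence remain $A$-fixing when re-interpreted as homotopies between complement-motions, and this is automatic because the paths of homeomorphisms themselves are unchanged.
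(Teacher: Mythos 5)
Your proposal is correct and follows essentially the same route as the paper: the paper's proof (via Lemma~\ref{le:Mot_auto}) likewise invokes Lemma~\ref{le:homcompl} for well-definedness on motions, reuses the same path homotopy to a stationary motion for well-definedness on equivalence classes, and notes that preservation of composition and the $A$-fixing conditions are immediate since the underlying paths of homeomorphisms are unchanged.
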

\begin{proof}
	This is the same as for Lemma~\ref{le:Mot_auto}.
\end{proof}

\subsection{Examples}\label{ss:examples}

Here we will consider some examples 
which serve to illustrate
some
key aspects of the
richness of  the construction.

By Lemma~\ref{Relating_MotA} we have that if $M$ and $M'$ are homeomorphic manifolds, $\Mot$ and $\Mot[M']$ are isomorphic groupoids.
Thus it is enough to consider one $M$ for each homeomorphism class.

An interesting problem in each case is to give a characterisation of a skeleton.
This is far from straightforward, even if we restrict to objects that are themselves
manifolds. Note that subsets $N,N'\subseteq M$ being homeomorphic submanifolds is not a sufficient condition to ensure an isomorphism connecting them in the motion groupoid.
For example, let $M=\II^2$ and let $N\subset \mathrm{int}(\II^2)$
	be the circle of points a distance $1/4$ from the point $(1/2,1/2)$.
	Let $L$ be the point $(1/2,1/2)$, and $L'$ the point $(3/4,3/4)$.
	Then $N\cup L$ and $N\cup L'$ are homeomorphic but $\Mot[\ppm{\II^2}](N\cup L,N\cup L')=\emptyset$.
Sections~\ref{sec:Iexamples} and \ref{sec:Rexamples} below discuss isomorphisms between objects in $\Mot[\II]$ and $\Mot[\R]$.

We can think of the skeleton question as looking for `inner' isomorphisms, objects which are connected by an isomorphism in the motion groupoid.
This perspective frames a comparison  
with `outer' isomorphisms.  
By this we mean
the phenomenon illustrated by the following question:
for a manifold $M$, which subsets $N,N'\subseteq M$ have 
a constructible group isomorphism $\chi\colon \Mot(N,N) \rightarrow$ 
$\Mot(N',N')$,
but with $\Mot(N,N')$ empty? 
Such isomorphisms are potentially useful tools in the construction of specific motion (sub)groupoids. 
\ppm{We give examples} in Section~\ref{sec:outerisos}.

Observe that even in a skeleton most objects are undefinable so it is a good exercise to restrict to a full subgroupoid of particular interest.
Given a subset $Q$ of the object class $\Power M$ of $\Mot[M]^{A}$, 
we write $\Mot[M]^{A}|_{Q}$ for the corresponding full subgroupoid.
For example, let $M$ be $S^3$, the $3$-sphere, and $Q$ the set of subsets which are homeomorphic, using the subset topology, to a disjoint union of copies of $S^1$. Then
finding the
connected components of 
$\Mot[S^3]|_{Q}$
is equivalent to classifying unoriented links up to isotopy.
Note, however, that motion groupoids  do not only bookkeep links, up to isotopy, but also the theory of possible ambient isotopies  between links, up to motion equivalence.
The latter has rich features, as indicated for instance by the case of an unlink in $D^3$, whose automorphism group in $\Mot[D^3]^{\partial D^3}$ gives the extended loop braid group, \cite{damiani}; see Definition~\ref{de:loopbraid} and Proposition~\ref{pr:D^m} below. Given the fact that  the components of the  space of embeddings of $S^1$ inside $S^3$ typically have non-trivial fundamental group \cite{Budney}, we expect that   $\Mot[S^3](K,K)$, where $K$ is a knot, will usually be non-trivial.

\subsubsection{On \texorpdfstring{$\Mot[\II]$}{MotI}}\label{sec:Iexamples}

\begin{proposition}\label{pr:compactsubsetI}
Suppose $N\subset \II\setminus \{0,1\}$ is a compact subset with a finite number of connected components, so $N$ is a union of points and closed intervals.
We can assign a word 
in $\{ a,b \}$
to $N$ as follows: each point in $N$ is represented by an $a$ and each interval by a $b$, ordered in the obvious way using the natural ordering on $\II$.
Let $N'\subset \II\setminus \{0,1\}$ be another compact subset with a finite number of connected components.
 Then $|\Mot[\II](N,N')|=1$ if the word assigned to $N$ and $N'$ is the same, otherwise $\Mot[\II](N,N')=\emptyset$.
\end{proposition}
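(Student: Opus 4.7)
The plan is to exploit the classical fact that $\TOPO^h(\II,\II)$ decomposes into exactly two path-components --- the increasing (order-preserving) homeomorphisms $\TOPO^h_+(\II,\II)$ and the order-reversing ones --- and that $\TOPO^h_+(\II,\II)$ is contractible. Contractibility I would establish by convexity: for each $h\in \TOPO^h_+(\II,\II)$ every affine combination $t\mapsto (1-s)t+s\, h(t)$ (with $s\in \II$) is again a strictly increasing continuous bijection of $\II$ fixing $0$ and $1$, hence lies in $\TOPO^h_+(\II,\II)$, and by Lemma~\ref{th:tensorhom} (applied to the map $(s,t)\mapsto (1-s)t+s\,h(t)$) this defines a continuous contraction to $\id_\II$. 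Since any \premot{} $f$ in $\II$ is a continuous path starting at $\id_\II\in \TOPO^h_+(\II,\II)$, it must remain entirely inside $\TOPO^h_+(\II,\II)$; in particular $f_1$ is an increasing homeomorphism fixing $\{0,1\}$.

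For the forward ("only if") implication, I would observe that an increasing bijection of $\II$ preserves the left-to-right order of connected components and sends points to points and closed intervals to closed intervals. Hence, if $\mot{f}{}{N}{N'}$ is a motion, the words associated with $N$ and $N'$ necessarily coincide, so $\Mot[\II](N,N')=\emptyset$ whenever the words differ.

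For the converse, assuming the words of $N$ and $N'$ agree, I would list the connected-component endpoints of $N$ (respectively $N'$), adjoined with $\{0,1\}$, as finite increasing sequences $0=p_0<p_1<\cdots<p_k=1$ and $0=q_0<q_1<\cdots<q_k=1$ of the same length, and define $h\colon \II\to \II$ by piecewise-linear interpolation sending $p_i\mapsto q_i$. Matching of the words guarantees that $h(N)=N'$, and clearly $h\in\TOPO^h_+(\II,\II)$. The straight-line path $f_t:=(1-t)\id_\II+t\, h$ is then a \premot{} by the same convexity argument as above, yielding a motion $\mot{f}{}{N}{N'}$ and so $\Mot[\II](N,N')\neq\emptyset$.

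For uniqueness, given two motions $\mot{f}{}{N}{N'}$ and $\mot{g}{}{N}{N'}$, both are paths in the contractible (hence simply connected) space $\TOPO^h_+(\II,\II)$ sharing endpoints $\id_\II$ and $f_1=g_1$, so they are path-homotopic relative to the endpoints; that is, $f\simp g$. Applying Proposition~\ref{pr:pe_implies_me}, $\mot{\bar{g}*f}{}{N}{N}$ is path-equivalent to an \stationary{} motion, so $f\simm g$ by Proposition~\ref{pr:me1}, giving $|\Mot[\II](N,N')|=1$. I do not anticipate a serious obstacle: the only place requiring mild care is checking continuity of the straight-line homotopy into the compact-open topology, which is handled cleanly by Lemma~\ref{th:tensorhom}.
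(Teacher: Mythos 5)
Your overall route matches the paper's sketch (order-preservation rules out mismatched words; a piecewise-linear homeomorphism plus the straight-line path gives existence), but the uniqueness step has a genuine gap. You assert that two motions $\mot{f}{}{N}{N'}$ and $\mot{g}{}{N}{N'}$ "share endpoints $\id_\II$ and $f_1=g_1$", and then conclude $f\simp g$ from simple connectivity. In general $f_1\neq g_1$: both are increasing homeomorphisms carrying $N$ to $N'$, but they can differ off $N$ (e.g.\ $N=N'=\{1/2\}$, with $f=\Id_\II$ and $g$ any flow fixing $1/2$ but moving other points). Indeed the paper stresses that the path-equivalence classes in $\Mtcmag[\II]/\simp(N,N')$ are distinguished by the endpoint homeomorphism and are typically uncountable, so an argument that only produces path-homotopies relative to endpoints cannot give $|\Mot[\II](N,N')|=1$; the quotient by stationary motions is essential, and your citations of Propositions~\ref{pr:pe_implies_me} and \ref{pr:me1} are applied only after the unjustified identification $f_1=g_1$.

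The gap is repairable with the tools you already set up. Form $\mot{\bar{g}*f}{}{N}{N}$; its endpoint $e:=(\bar{g}*f)_1$ is an increasing homeomorphism of $\II$ with $e(N)=N$, hence (by order-preservation and the matching of components) $e$ maps each connected component of $N$ onto itself, fixing every point component and both endpoints of every interval component. Consequently the straight-line path $\gamma_s=(1-s)\id_\II+s\,e$ satisfies $\gamma_s(N)=N$ for all $s$, so $\mot{\gamma}{}{N}{N}$ is an $N$-stationary motion with $\gamma_1=e$. Since $\bar{g}*f$ and $\gamma$ are paths in the increasing component with the same endpoints, the convex interpolation $(t,s)\mapsto (1-s)(\bar{g}*f)_t+s\,\gamma_t$ (continuous into the compact-open topology by Lemma~\ref{th:tensorhom}) is a path-homotopy relative to endpoints, so $\bar{g}*f$ is path-equivalent to a stationary motion and $f\simm g$. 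With this replacement your argument becomes a complete proof, and it is essentially the route the paper's sketch intends (the paper invokes the intermediate value theorem where you invoke order-preservation and convexity).
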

\begin{proof}(Sketch)
It is possible to construct even a piecewise linear motion from $N$ to $N'$ if the word assigned to $N$ and $N'$ is the same.
The result that all motions $N$ to $N'$ are equivalent follows from the intermediate value theorem.
That $\Mot[\II](N,N')=\emptyset$ if the words assigned to $N$ and $N'$ are different follows from the fact that any orientation preserving homeomorphism of $\II$ is order preserving.
\end{proof}

Note homeomorphisms send boundary points to boundary points and interior points to interior points, so any continuous path of homeomorphisms $\II\to \II$ starting at the identity fixes the boundary points.
Thus we can generalise the previous proposition to: for $A,B\subset \II$ compact subsets with a finite number of connected components, we have $|\Mot[\II](A,B)|=1$ if and only if $A\cap \{0,1\}=B\cap\{0,1\}$ and $A$ and $B$ give the same word. Otherwise $\Mot[\II](A,B)=\emptyset$.

\medskip

If we consider non-compact subsets we must also pay attention to the embeddings.
\begin{example}
Suppose $N=(1/4,1/2)\cup (1/2,3/4)$ and $N'=(1/4,3/8)\cup (5/8,3/4)$,
then we have $\Mot[\II](N,N')=\emptyset$.
\end{example}

The automorphism group $\Mot[\II](N,N)$ for $N\subset \II$ with a finite number of connected components is always trivial. This changes dramatically if more complicated subsets of $\II$ are considered.
	We will prove in Proposition~\ref{rem:mcgI} that for $M=\II$ and $N=\II \cap \mathbb{Q}$, $\Mot[\II]^{\partial \II}(N,N)$ is uncountably infinite.

\subsubsection{On \texorpdfstring{$\Mot[\R]$}{MotR}} \label{sec:Rexamples}

\begin{proposition}
 $\Mot[\R](\Q,\Z)=\emptyset$.
 \end{proposition}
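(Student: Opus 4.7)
The plan is to show that there cannot exist a motion $\mot{f}{}{\Q}{\Z}$ in $\R$ by arguing that any such motion would force a homeomorphism between $\Q$ and $\Z$ (equipped with their subspace topologies from $\R$), which is impossible on topological grounds.

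First I would invoke Remark~\ref{rem:motion_implies_homeo}, which says that if $\mot{f}{}{N}{N'}$ is a motion in a manifold $M$, then $N$ and $N'$ are homeomorphic. This is because $f_1 \in \TOPO^h(\R,\R)$ is a self-homeomorphism of $\R$, and thus restricts to a homeomorphism between the subspaces $\Q$ and $f_1(\Q) = \Z$. So it suffices to show that $\Q$ and $\Z$, with the subspace topology from $\R$, are not homeomorphic.

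For this, I would use the simple topological invariant of having isolated points. Every point $n \in \Z$ is isolated in the subspace topology, since $(n - 1/2, n + 1/2) \cap \Z = \{n\}$; hence $\Z$ is a discrete space. In contrast, no point of $\Q$ is isolated: for any $q \in \Q$ and any $\epsilon > 0$, the interval $(q - \epsilon, q + \epsilon)$ meets $\Q \setminus \{q\}$ by density of the rationals. Since the property of having an isolated point is a homeomorphism invariant, $\Q$ and $\Z$ cannot be homeomorphic, and thus $\Mot[\R](\Q,\Z) = \emptyset$.

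There is essentially no obstacle here; the statement is a direct application of the general principle that motions preserve the homeomorphism type of the underlying subset. The only care needed is to remember that the relevant topology on $\Q$ and $\Z$ is the subspace topology from $\R$, which is exactly the topology preserved by the restriction of $f_1$.
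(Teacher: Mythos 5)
Your proof is correct, but it takes a slightly different route from the paper. The paper argues directly that no homeomorphism $\theta\colon \R \to \R$ can send $\Q$ to $\Z$, because self-homeomorphisms of $\R$ carry dense subsets to dense subsets, and $\Q$ is dense in $\R$ while $\Z$ is not; this is an argument about the embeddings in the ambient space. You instead pass through Remark~\ref{rem:motion_implies_homeo} and compare the intrinsic subspace topologies: $\Z$ is discrete while $\Q$ has no isolated points, so they are not homeomorphic as spaces. Both arguments are one-liners and both are valid here. Yours has the mild advantage of being independent of the embedding — it shows $\Mot[M](N,N')=\emptyset$ for any manifold $M$ and any subsets $N,N'$ homeomorphic to $\Q$ and $\Z$ respectively — whereas the paper's density argument is specific to subsets of $\R$ but fits the surrounding discussion, which emphasises that for non-compact subsets the existence of morphisms can genuinely depend on the embedding and not just the homeomorphism type.
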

\begin{proof}
This can be seen by observing that there is no homeomorphism $\theta\colon \R \to \R$ sending $\Q$ to $\Z$, since homeomorphisms of $\R$ must map dense subsets  to dense subsets. 
\end{proof}

In contrast $\Mot[\R](\Q,\Q)$ is uncountable. A proof follows the same ideas used in the proof of Theorem~\ref{th:mcgI} together with Theorem~\ref{th:mot_to_mcg} below.

\medskip 

\noindent{Question:}
Let $N\neq N'$ be countable dense subsets of $\R$.
Then does this imply the existence of a motion
$\mot{f}{}{N}{N'}$ in $\R$?

\begin{example}\label{ex:R}
	Let $M=\R$. 
	Then there is a group isomorphism $\phi\colon (\Z,+) \xrightarrow{\cong}\Mot[\R](\Z,\Z) $ such that, for  $n \in (\Z,+)$, $\phi(n)$ is the motion-equivalence class of the motion $\mot{f}{}{\Z}{\Z}$ such that $f_t (x)= x+tn$.
\end{example}

\subsubsection{Relating automorphism groups in \texorpdfstring{$\Mot[M]$}{MotM}}\label{sec:outerisos}

To construct motion (sub)groupoids,
it 
is useful to be able to obtain the automorphism group of an object in terms of the automorphism group of another object. If objects are connected in the motion groupoid then this is straightforward. Otherwise we may still be able to construct a canonical `outer' isomorphism between automorphism groups, or we may be able to construct a group homomorphism.
The following examples investigate this in various cases.

\exa{\label{ex:outer_iso2}
	In general, there will not exist a morphism in $\Mot(N,M\setminus N)$. 
However, by Lemma~\ref{le:homcompl} we have a group isomorphism $\Mot(N,N)\cong\Mot(M\setminus N,M\setminus N)$.

(Although we can construct specific cases 
	for which $\Mot(N,M\setminus N)\neq\emptyset$.
	For example let $M=S^1$, 
	and $\mot{\tau_{t\pi}}{}{N}{\tau_{\pi}(N)}$ the half-twist motion as in Example~\ref{ex:half-twist motion} --
	letting $N=[0,\pi)\subset S^1$, we have that 
	the motion-equivalence class is in $\Mot[S^1]([0,\pi),[\pi,0))$.)

	}

\begin{proposition}\label{pr:closuremap}
Let $M$ be a manifold, and let $\mathrm{cl}(- )$ denote the closure of a subset.
(I) \ppm{Let $S,T\subseteq M$ be subsets.} There is a well defined set map: 
 \[\Gamma^M_{S,T}\colon \Mot[M](S,T) \to \Mot[M]\big(\mathrm{cl}(S),\mathrm{cl}(T)\big),\] which sends a motion equivalence class $\classm{\mot{f}{}{S}{T}}$ to the motion equivalence class $\classm{\mot{f}{}{\mathrm{cl}(S)}{\mathrm{cl}(T)}}$. \\
(II) Moreover, there is a functor $\mathbf{\Gamma}^M\colon \Mot[M] \to \Mot[M]$ which is defined by $N \mapsto \mathrm{cl}(N)$ on objects, and on morphisms restricts to the map $\Gamma^M_{N,N'}\colon \Mot[M](N,N') \to \Mot[M]\big(\mathrm{cl}(N),\mathrm{cl}(N')\big).$\\
It follows directly from (II) that when $S=T$, the map $\Gamma^M_{S,S}$ is a group homomorphism. In this case, we denote the map by $\Gamma^M_{S}$.
\end{proposition}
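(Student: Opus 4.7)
The plan hinges on one elementary but crucial fact: for any homeomorphism $h\colon M \to M$ and any subset $S \subseteq M$, one has $h(\mathrm{cl}(S)) = \mathrm{cl}(h(S))$. This is because a homeomorphism and its inverse are both continuous, hence preserve closure in both directions. I would first isolate this as a preliminary observation, since every subsequent step reduces to applying it pointwise in $t \in \II$.

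For part (I), I would proceed in two substeps. First, given a motion $\mot{f}{}{S}{T}$ (so $f_1(S)=T$), the key fact immediately gives $f_1(\mathrm{cl}(S)) = \mathrm{cl}(f_1(S)) = \mathrm{cl}(T)$, so $\mot{f}{}{\mathrm{cl}(S)}{\mathrm{cl}(T)}$ is indeed a motion. Second, I would check that motion-equivalence is preserved. Suppose $\mot{f}{}{S}{T} \simm \mot{g}{}{S}{T}$; by definition $\bar{g}*f$, viewed as a \premot, is path-equivalent (as a path in $\TOPO^h(M,M)$) to some $x \in \premots$ whose associated motion $\mot{x}{}{S}{S}$ is $S$-stationary, i.e. $x_t(S)=S$ for all $t$. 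Applying the key fact pointwise in $t$ yields $x_t(\mathrm{cl}(S)) = \mathrm{cl}(x_t(S)) = \mathrm{cl}(S)$, so the same $x$ defines a $\mathrm{cl}(S)$-stationary motion. Crucially, the path-homotopy witnessing $\bar{g}*f \simp x$ lives in $\Topo(\II\times\II, \TOPO^h(M,M))$ and does not reference the subset at all, so it also witnesses $\mot{\bar{g}*f}{}{\mathrm{cl}(S)}{\mathrm{cl}(S)} \simp \mot{x}{}{\mathrm{cl}(S)}{\mathrm{cl}(S)}$. Hence $\mot{f}{}{\mathrm{cl}(S)}{\mathrm{cl}(T)} \simm \mot{g}{}{\mathrm{cl}(S)}{\mathrm{cl}(T)}$, and $\Gamma^M_{S,T}$ is well-defined.

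For part (II), I would assemble the maps $\Gamma^M_{S,T}$ into a functor $\mathbf{\Gamma}^M\colon \Mot[M] \to \Mot[M]$ sending $N \mapsto \mathrm{cl}(N)$ on objects. Preservation of identities is clear, since the identity motion $\Id_M$ at $N$ is sent to the identity motion $\Id_M$ at $\mathrm{cl}(N)$. For composition, given $\mot{f}{}{S}{T}$ and $\mot{g}{}{T}{U}$, the underlying \premot{} $g*f$ is independent of the chosen subsets, so it suffices to observe that $(g*f)_1(\mathrm{cl}(S)) = g_1 f_1(\mathrm{cl}(S)) = g_1(\mathrm{cl}(T)) = \mathrm{cl}(U)$, whence
\[
\Gamma^M_{S,U}\big(\classm{g*f}\big) = \classm{\mot{g*f}{}{\mathrm{cl}(S)}{\mathrm{cl}(U)}} = \Gamma^M_{T,U}\big(\classm{g}\big) * \Gamma^M_{S,T}\big(\classm{f}\big).
\]
The final sentence of the statement — that $\Gamma^M_S := \Gamma^M_{S,S}$ is a group homomorphism — is then immediate from functoriality restricted to the automorphism group at $S$.

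There is no real obstacle here; everything is driven by the single preliminary fact about homeomorphisms preserving closure, together with the observation that the relation $\simm$ is controlled by path-homotopies in a space that is agnostic to the choice of subset. The only point that requires a moment of care is noticing that $S$-stationarity passes to $\mathrm{cl}(S)$-stationarity via the same homeomorphism at each time slice, which is exactly where the preliminary fact is deployed.
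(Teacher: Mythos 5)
Your proposal is correct and follows essentially the same route as the paper: the single fact that a homeomorphism $h$ of $M$ satisfies $h(\mathrm{cl}(S))=\mathrm{cl}(h(S))$, applied at the endpoint to see that a motion of $S$ induces a motion of $\mathrm{cl}(S)$, and applied at each time slice to see that $S$-stationarity passes to $\mathrm{cl}(S)$-stationarity, with composition preserved because the underlying \premot{} is unchanged. You spell out the subset-independence of the path-homotopy more explicitly than the paper does, but this is the same argument.
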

\begin{proof}
(I) Let $h\colon M\to M$ be a homeomorphism such that $T=h(S)$, then $h\big(\mathrm{cl}(S)\big)=\mathrm{cl}(T)$.
It follows that if $f$ is a \premot{} in $M$ such that $\mot{f}{}{S}{T}$ is a motion, then $\mot{f}{}{\mathrm{cl}(S)}{\mathrm{cl}(T)}$ is a motion. It also follows that if $S=T$, and $\mot{f}{}{S}{S}$ is
$S$-stationary, then $\mot{f}{}{\mathrm{cl}(S)}{\mathrm{cl}(S)}$ is also $\mathrm{cl}(S)$-stationary. \\
(II) That composition is preserved follows directly from the definition.
\end{proof}

This map is, in general, neither injective, nor surjective.
Example~\ref{ex:interval} gives a case for which  $\Gamma^M_{S,T}$ is not surjective, and Example~\ref{ex:circle_D2} gives a case for which  $\Gamma^M_{S,T}$ is not injective.

\exa{\label{ex:interval} 
Let $M=D^2$ (here $D^2 =\{x\in \C\vert \,|x|\leq 1 \}\subset \C$).
Let $N=[-a,a]$ be a closed interval in the real axis with $0<a<1$, and let $N'=(-a,a]$.

There is a path in $\Topo(\II,\TOPO^h(D^2,D^2))$, which we label $\tau_{\pi}$, such that $\tau_{\pi t}$ is a $\pi t$ rotation of $D^2$.
Now $\tau_{\pi}$ gives a motion from $N$ to $N$, but not from $N'$ to $N'$.
Any motion $\mot{f}{}{N'}{N'}$ must satisfy $f_1(a)=a$.

An \stationary{} motion $\mot{s}{}{N}{N}$ must satisfy, for all $t\in \II$, $s_t(a)=a$ and $s_t(-a)=-a$, as there is no path in $\Hom(N,N)$ starting at the identity and ending in a homeomorphism sending $a$ to $-a$.
Suppose $(\mot{f}{}{N}{N})\simm (\mot{\tau_\pi}{}{N}{N})$, then $\bar{f}*\tau_\pi\simp s$, where $\mot{s}{}{N}{N}$ is some stationary motion. 
So we have $(\bar{f}*\tau_\pi)_1(a)=a$. We know $\tau_{\pi1}(a)=-a$, so this implies $\bar{f}_1(-a)=a$, and hence $f_1(a)=-a$. 
So all $f\in \classm{\tau_\pi}$ satisfy $f_1(a)=-a$. Hence $\classm{\tau_\pi}$ has no preimage in $\Mot(N',N')$ under $\Gamma^{D^2}_{N'}$.
	
	Notice that we do have a group isomorphism $\Gamma^{D^2}_{(-a,a)}\colon \Mot[D^2]((-a,a),(-a,a))\xrightarrow{\sim} \Mot[D^2](N,N)$ (using the notation of Proposition~\ref{pr:closuremap}).
	This can be shown by constructing an inverse, which is possible since a motion of  a closed interval induces a motion of the open interval obtained by removing the boundary. 
}

\exa{
\label{ex:circle_D2}
	As in the previous example, let $M=D^2$ seen as a subset of $\C$. Let $N=\{x\in \C\vert\; |x|=1/2 \}$ be a circle centred on the centre of the disk, and $N'=N\setminus \{-1/2,1/2\}$, so $N'=cl(N)$. 
	Let $\tau_{\pi}$ be the gradual rotation in $\TOPO^h(D^2,D^2)$, as constructed in Example~\ref{ex:interval}. 
	Then we have that $\mot{\tau_{\pi}}{}{N}{N}$ and $\mot{\tau_{\pi}}{}{N'}{N'}$
	are motions.
	
	Notice that $\tau_{\pi t}(N)=N$ for all $t\in \II$, thus $\tau_{\pi t}$ is a stationary motion, and $(\mot{\tau_{\pi}}{}{N}{N}) \simm (\mot{\Id_M}{}{N}{N})$.
	
	For $N'$, $\tau_{\pi t}(N')\neq N'$ unless $t\in \{0,1\}$, thus we do not obtain a motion-equivalence between $\mot{\tau_{\pi}}{}{N'}{N'}$ and $\mot{\Id_M}{}{N'}{N'}$ in the same way. In fact $\mot{\Id_M*\tau_{\pi}}{}{N'}{N'}$ is not path equivalent to any stationary motion. This is because $(\Id_M*\tau_{\pi})_1(x)=y$, and there is no path in $\Hom(N',N')$ starting at the identity and ending at a homeomorphism sending $x$ to $y$.
 Thus $\Id_M*\tau_\pi$ is not path equivalent to a stationary motion.

	Using the notation of Proposition~\ref{pr:closuremap}, there is a map from $\Gamma^{D^2}_{N'}\colon\Mtcmag(N',N') \to \Mtcmag(N,N)$ sending a motion to the motion with the same underlying \premot{}. By the previous paragraphs, this is not injective.
 } 

The following example shows that, in certain cases, we can keep track of the change to the automorphism group in a motion groupoid when a single point is removed from the subset being considered.

\exa{
Let $M=S^1$ and $N=S^1\setminus \{x\}$ where $x\in S^1$ is any point. The group $\Mot[S^1](S^1,S^1)$ is trivial. The group $\Mot[S^1](S^1\setminus \{x\}, S^1\setminus \{x\})\cong\Z$, this  follows from the fact that $\Mot[S^1](\{x\},\{x\})\cong\Z$ (Example~\ref{ex:S1}) and Lemma~\ref{le:Mot_auto}.
}

An interesting question is to determine the kernel of the map  $\Gamma^{D^2}_{{S^1\setminus\{x\}}}\colon\Mot[D^2](S^1\setminus\{x\},S^1\setminus\{x\})\to \Mot[D^2](S^1,S^1)$, where $D^2$ is the $2$-disk, $S^1$ is a circle centred on the centre of the disk and $x$ is any point in $S^1$.

\exa{\label{ex:punctured_hopf}
Let $N\subset \II^3$ a subset which is a Hopf link in the interior of $\II^3$. Let $N'=N\setminus \{x\}$ where $x\in N$ is any point.
		Then 
		$\Mot[\II^3](N,N')=\emptyset$. There is a homomorphism 
		$\Gamma^{\II^3}_{N'}\colon \Mot[\II^3](N',N')\to \Mot[\II^3](N,N)$. Again it would be interesting to characterise the kernel of this map.
	
	 Let $K\subset \II^3$ be a subset with $2$ unknotted unlinked connected components homeomorphic to $S^1$, and let $K'=K\setminus \{y\}$ where $y\in K$ is any point.
	Then $\Mot(N,K)=\emptyset$, this can be seen by noticing that the fundamental group of the complement of $N$ and $K$ are different.
	It follows that $\Mot(N',K')=\emptyset$ since, if such a morphism were to exist, it would have image in $\Mot(N,K)$ under $\Gamma_{N',K'}^{\II^3}$, contradicting the previous sentence.
	}

\exa{\label{ex:dehn}
Let $M$ be the torus $T^2=S^1\times S^1$, and let $N=S^1\times \{1\}.$
Let $N'$ be the image of $N$ under a Dehn twist about $\{1\}\times S^1$.  Then the curves $N$ and $N'$ are not isotopic so there is no path $f$ in $\TOPO^h(T^2,T^2)$, starting in $\id_{T^2}$ and with $f_1(N)=N'$.
However $\Mot[T^2](N,N)\cong\Mot[T^2](N',N')$.
This is just a case of Corollary~\ref{co:Relating_Mot}.}

For extended examples and first steps in representation theory see \cite{TorzewskaMartinMartinsIV}.

\subsection{Motions as maps from \texorpdfstring{$M\times \II$}{M times I}, schematics and movie
representations}\label{sec:schematics}

In this section we give two 
further equivalent ways to define
motions in a manifold $M$, 
in terms of certain maps from $M\times \II$.
Equivalence Theorem~\ref{le:motion_MxI} 
is significant because it
indicates that we can 
connect to the embedded cobordism/generalised tangle picture of manifolds embedded in $M\times \II$,
as trailed in \S\ref{sec:Intro}. 
(Note that the equivalences are still different so this does not immediately imply a functor between the two settings, this will be investigated further in Section~\ref{sec:laminated}.)
The various definitions of motions lead
us to some useful schematic representations,
so we also discuss these 
below.

\medskip

We now give an interpretation of \premot{}s, and hence motions in a manifold $M$ as a subset of $\Topo(M\times \II, M)$.

\defn{\label{de:premotsmov}
Fix a \axiomM{} $M$. Let $\premotsmov\subset\Topo(M\times \II,M)$ denote the subset of elements $g\in \Topo(M\times \II,M)$
such that:
\begin{itemize}
    \item[(I)] for all $t\in \II$, $m\mapsto g(m,t)$ is a homeomorphism $M \to M$, and
    \item[(II)] for all $m\in M$, $g(m,0)=m$.
\end{itemize}}

Let $M$ be a \axiomM{}.
Letting $X=\II$ and $Y=Z=M$ in Lemma~\ref{th:tensorhom}, and composing the bijection with a map flipping $\II$ and $M$ in the product, gives a bijection which, by abuse of notation we label also as $\Phi$:
\begin{align*}
    \Phi\colon \Topo(\II, \TOPO(M,M))&\to \Topo(M\times \II,M) \\
    f &\mapsto ((m,t)\mapsto f_t(m)).
\end{align*}

\lemm{\label{le:premotsmov}
Let $M$ be a \axiomM{}. The restriction of the map $\Phi$ given before the lemma,
yields a bijection 
\[
\Phi\colon \premots \xrightarrow{\sim} \premotsmov.
\]
}
\begin{proof}
	We have that $\Phi$ is a bijection so we just need to check that $\Phi(\premots)\subseteq\premotsmov$ and that $\Phi^{-1}(\premotsmov)\subseteq\premots$ where $\Phi^{-1}$ sends a map $g\colon M\times \II\to M$ to the map $t\mapsto(m\mapsto g(m,t))$.
	
	Let $f\in \premots$ be a \premot. 
	Then $\Phi(f)|_{M\times \{t\}}=f_t$ which is a homeomorphism and $\Phi(f)(m,0)=f_0(m)=m$.
	Hence $\Phi(f)\in \premotsmov$.
	
	Let $g\in \premotsmov$. 
	Then $m\mapsto g(m,t)$ is a homeomorphism for all $t\in \II$ and $\Phi^{-1}(g)(0)=(m\mapsto g(m,0))=id_M$. Hence $\Phi^{-1}(g)\in \premots$.
\end{proof}

Let $M$ be a \axiomM{} and $g\in\Topo(M\times \II, M)$ be in  $\premotsmov$.
Our first schematics are based on the `movie presentations' of
\cite{Carter}. 
A movie presentation of $g$ consists of a number of pictures where each picture corresponds to a chosen value of $t$, ordered by $t\in \II$.
We may also
add `grid line' subsets in $M$  
--
these help to show the homeomorphism at $t$ of $M$. An example is given by
Figure~\ref{fig:my_label01}. 
Here the grid lines are a polar grid
at $t=0$.
By Lemma~\ref{le:premotsmov} $g$ corresponds to a motion and thus movie presentations can be considered as representations of motions.
Movie presentations are used in \cite{Carter} for schematics representing the images of isotopies, and elements of $\premotsmov$ are precisely isotopies.

\defn{Let $M$ be a \axiomM{} and $N,N'\subseteq M$.
Let $\Mtcmagmov(N,N')\subseteq \premotsmov$ denote the subset of elements $g\in \premotsmov$
such that $g(N\times \{1\})=N'$.
}
\lemm{\label{le:movie_mot}
Let $M$ be a \axiomM{} and $N,N'\subset M$.
The restriction of the map $\Phi$ from Lemma~\ref{le:premotsmov}
 yields a bijection
\[
\Phi\colon\Mtcmag(N,N')\mapsto \Mtcmagmov(N,N').
\]
}
\begin{proof}
	Notice first that each $\Mtcmagmov(N,N')$ is a subset of $\premotsmov$.
	We have from Lemma~\ref{le:premotsmov} that $\Phi$ gives a bijection $\premots\cong \premotsmov$ so 
	we only need to check that $\Phi(\Mtcmag(N,N
	))\subseteq\Mtcmagmov(N,N')$ and $\Phi^{-1}(\Mtcmagmov(N,N
	))\subseteq\Mtcmag(N,N')$.
	
	Suppose $\mot{f}{}{N}{N'}$ is a motion,
	then $\Phi(f)(N\times \{1\})=f_1(N)=N'$.
	Suppose $f'\in \Mtcmagmov(N,N')$, then $\Phi^{-1}(f')_1(N)=f'(N\times \{1\})=N'$.
\end{proof}

Using Lemma~\ref{le:movie_mot}, we may add subsets to movie schematics.
This can be seen in Figure~\ref{fig:my_label002}: the middle schematic shows a subset consisting of two points in the disk, and the right most a point and a line; the images of each subset at various ordered $t\in \II$ are shown in each disk with $t$ progressing up the page.
Note that in this case we have not included grid lines. 
Roughly the homeomorphisms shown in Figure~\ref{fig:my_label01} moves the subsets as shown in the first five disks.

\medskip

Next we give our second interpretation of motions in a manifold $M$ as a subset of $\Topo^h(M\times \II, M\times \II)$.

\defn{\label{de:premotshom}
Fix a \axiomM{} $M$. 
Let 
 $ \premotshom \subset \Topo^h(M\times \II,M\times \II)$
denote the subset of homeomorphisms $g\in\Topo^h(M\times \II,M\times \II)$ 
such that
	\begin{itemize}
		    \item[(I)] $g(m,0)=(m,0)$ for all $m\in M$, and
		    \item[(II)] $g(M\times \{t\})=M\times \{t\}$ for all $t\in \II$. 	\end{itemize}}
\rem{
To prove the following we need both  that $\TOPO^h(M,M)$ is a topological group and the product-hom adjunction of Lemma~\ref{th:tensorhom}.
\footnote{Note we use the fact that $M$ is a manifold, so that $\TOPO^h(M,M)$ is a topological group. An alternative proof of this result that holds if $M$ is compact (and not necessarily a manifold) follows from the fact that any continuous bijection between compact Hausdorff spaces is a homeomorphism.}}
\lemm{\label{le:homeo_MxItoMxI} 
Let $M$ be a \axiomM{}.
There is a bijection
\ali{\Theta\colon\premots
\;&\to\; \premotshom, \\
f\;&\mapsto\; ((m,t)\mapsto (f_t(m),t)).}
}
\begin{proof}
	We first check the $\Theta$ is well defined.
	Let $f\in \premots$.
	Then $\Theta(f)$ is continuous since the projection onto the first coordinate of the map $(m,t)\mapsto (f_t(m),t)$ is $\Phi(f)$ with $\Phi$ as in Lemma~\ref{le:premotsmov}, and the projection on the second coordinate is clearly continuous.
	
	We also have $\Theta(f)(m,0)=(f_0(m),0)=(m,0)$ and $\Theta(f)(M\times \{t\})=f_t(M)\times \{t\}=M\times \{t\}$. 
	
	To complete the proof of well definedness, it remains only to check that $\Theta(f)$ is a homeomorphism. The map $(m,t)\mapsto (f_t(m),t)$ has inverse $(m,t) \mapsto (f_t^{-1}(m),t)$. 
	Let us see that the inverse is continuous.
	We have that $f$ is a \premot{} and so Lemma~\ref{le:pw_inv} gives that $f^{-1}$ is a \premot{}, specifically it is a continuous map $\II\to\TOPO(M,M)$. Hence $(m,t)\mapsto (f_t^{-1}(m),t)$, which is the image of $f^{-1}$ under $\Theta$, is continuous.
	Thus $\Theta$ is a well defined homeomorphism.
	
	We now show that $\Theta$ is a bijection, by constructing an inverse.
	Consider the following map.
	\ali{
		\Theta^{-1}\colon\premotshom &\to \premots\\
		g&\mapsto (t\mapsto (m\mapsto p_0\circ g(m,t))
	}
	It is straightforward to check that for any $f\in \premots$ we have $\Theta^{-1}\circ\Theta(f
	)=f$ and that for any $g\in \premotshom$ we have $\Theta\circ \Theta^{-1}(g)=g$.
	It remains to check that $\Theta^{-1}$ is well defined.
	Let $g\in \premotshom$.
	The map $\Theta^{-1}(g)$ is continuous as it is equal to $\Phi^{-1}(p_0\circ g)$, with $\Phi$ as in Lemma~\ref{le:premotsmov}.
	
	We have $(\Theta^{-1}(g))_0(m)=p_0\circ g(m,0)=m$ so $\Theta^{-1}(g)_0=\id_M$.
	For all $t\in\II$ the restriction $g|_{M\times \{t\}}$ is also a homeomorphism onto its image which, by (II) in the definition of $\premotshom$, is $M\times \{t\}$.
	The projection $p_0\colon M\times \{t\}\to M$ is an isomorphism. Hence for all $t\in \II$, $\Theta^{-1}(g)_t= p_0\circ g|_{M\times \{t\}}$ is in $\TOPO^h(M,M)$.
\end{proof}

\begin{figure}
    \centering
        \includegraphics[width=1.965in]{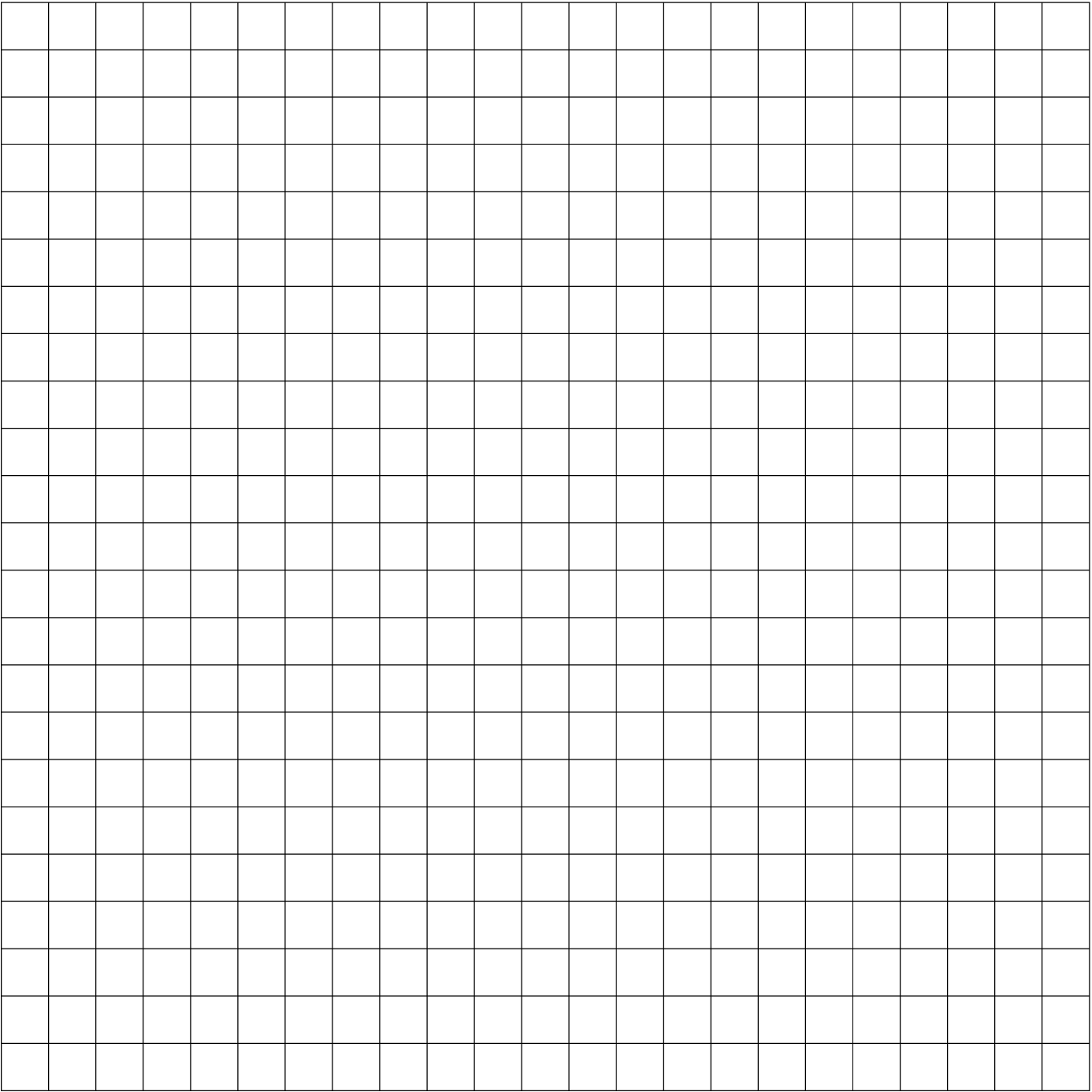}
\hspace{.061in} { } 
\raisebox{.851in}{$\mapsto$}
    \hspace{.1in} 
    \includegraphics[width=1.965in]{Figs/grid1a.eps}
\caption{Flare schematic for the homeomorphism $\Id_{\II\times\II}$ with $F\subset \II\times \II$ the marked grid.
This is also the image under $\Theta$ of the constant path in $\TOPO^h(\II,\II)$ starting at $\id_{\II}$. 
}
\label{fig:flare-id}
\end{figure}

\begin{figure}
    \centering
        \includegraphics[width=1.965in]{Figs/grid1a.eps}
\hspace{.061in} { } 
\raisebox{.851in}{$\mapsto$}
    \hspace{.1in} 
    \includegraphics[width=1.965in]{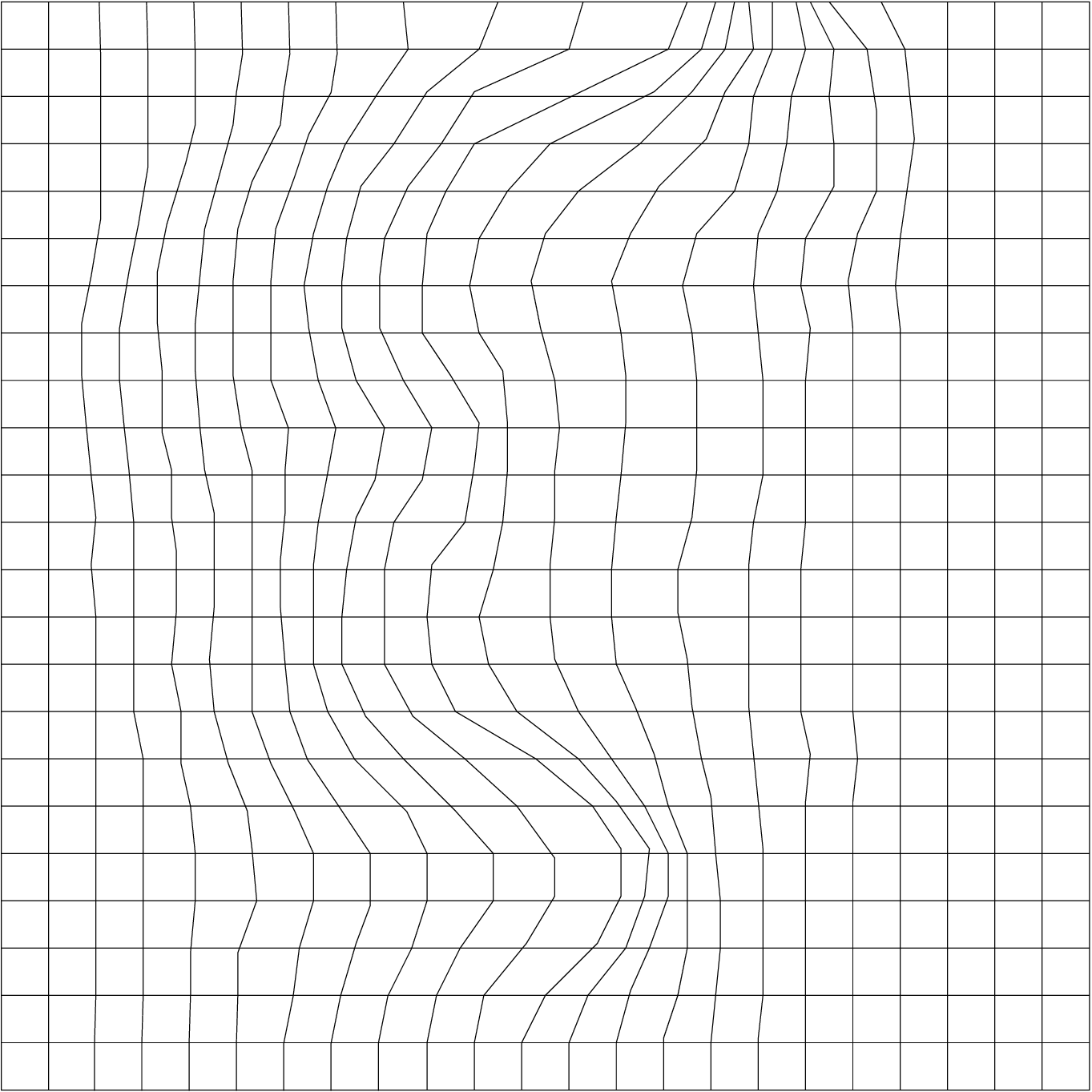}
\caption{Flare line schematic for a non-identity self-homeomorphism of $\II\times\II$ in $\premotshom$. 
This homeomorphism restricts to the identity on the south, east and west
but not the north part of the boundary.
It is the image 
under $\Theta$ 
of a path in $\TOPO^h(\II,\II)$ starting at $\id_{\II}$ 
(mapped to the southern edge)
but not ending at $\id_{\II}$.
}
\label{fig:flare-noid}
\end{figure}

We now introduce `flare schematic' representations for individual \premot{}s (at least for $M$ of low dimension). 
These are to be understood as follows.
For a \axiomM{} $M$, a flare schematic represents 
a homeomorphism $g\colon M\times \II \to M\times \II $ in $\premotshom$, which by Lemma~\ref{le:homeo_MxItoMxI} represents a \premot{}.
To construct a flare schematic for $g$ we proceed as follows.
We first choose a subset $F\subset M\times \II$ which is a $1$-dimensional complex, i.e. the union of a set of (perhaps neatly) embedded real intervals 
-- flare lines. The idea is to choose $F$ such that 
the image of $F$ under $g$ reveals information about $g$.
We note that this is only a useful exercise for sufficiently well behaved $g$, and in fact we consider only cases for which we have
a one-size-fits-all $F$ that is 
some regular array of lines.
Finally we make
a picture (i.e. a black-in-white-out representation of a projection onto the plane) of 
the 
subset $F$; together with a picture of the image $g(F)$. 

Our first examples are Figures~\ref{fig:flare-id} and \ref{fig:flare-noid} where $M=\II$.
The ambient space $\II$ is oriented horizontally left to right, and $t$ in the second copy of $\II$ progresses up the page. We fix a choice of $F\subset \II$ which is shown on the left hand side of both figures.
 The right hand figures represent the image 
 of $F$ under some homeomorphism in $\premotshom$.
 In Figure~\ref{fig:flare-id} this is the identity morphism $\id_{\II\times \II}$ and in Figure~\ref{fig:flare-noid} we have a non-identity homeomorphism.

Figures~\ref{fig:1-2-1},
\ref{fig:1-1-1}, \ref{fig:S1.1} and \ref{fig:S1-7} show
self-homeomorphisms of $M\times \II$ in $\premotshom$ where $M=S^1$.
Again we 
choose a subset $F\subset S^1\times\II$ which consists of the product of eight marked points in $S^1$ with $\II$, together with `horizontal' (in the sense of being orthogonal to $\II$) lines.
We have drawn $-\times \II$ radially, thus marked points become radial lines, and
horizontal lines 
will here
become concentric circles.
We note that in this case we vary the number of horizontal lines in $F$ depending on each self-homeomorphism.
These are by construction fixed setwise and thus are really only a guide to the eye.

We turn now to the paths themselves.
The paths in $\TOPO(S^1,S^1)$ represented by Figures~\ref{fig:1-1-1} and \ref{fig:1-2-1} both end at a different self-homeomorphism to $\id_{S^1}
$.
The paths represented by Figures~\ref{fig:S1.1} and \ref{fig:S1-7} instead both end at $\id_{S^1}$.

The 
path in Figure~\ref{fig:S1.1} is path homotopic to the constant path.
The 
path in Figure~\ref{fig:S1-7} is not.

\newcommand{\sdisk}{4}  

\newcommand{\Sonecaption}{Illustration of a path of self-homeomorphisms of the circle $M = S^1$,
realised as a homeomorphism $M\times\II \rightarrow M\times\II$. 
    The circle is drawn together with eight marked points upon it, 
    to reveal the space `moving' under the path of self-homeomorphisms.
    In this case the $-\times\II$ is drawn 
    radially, outside-to-inside, rather than
    bottom-to-top on the page (so the drawing scale changes
    with radial distance; while the angular coordinate does not).
    }

\begin{figure} 
    \centering
    \includegraphics[width=\sdisk cm]{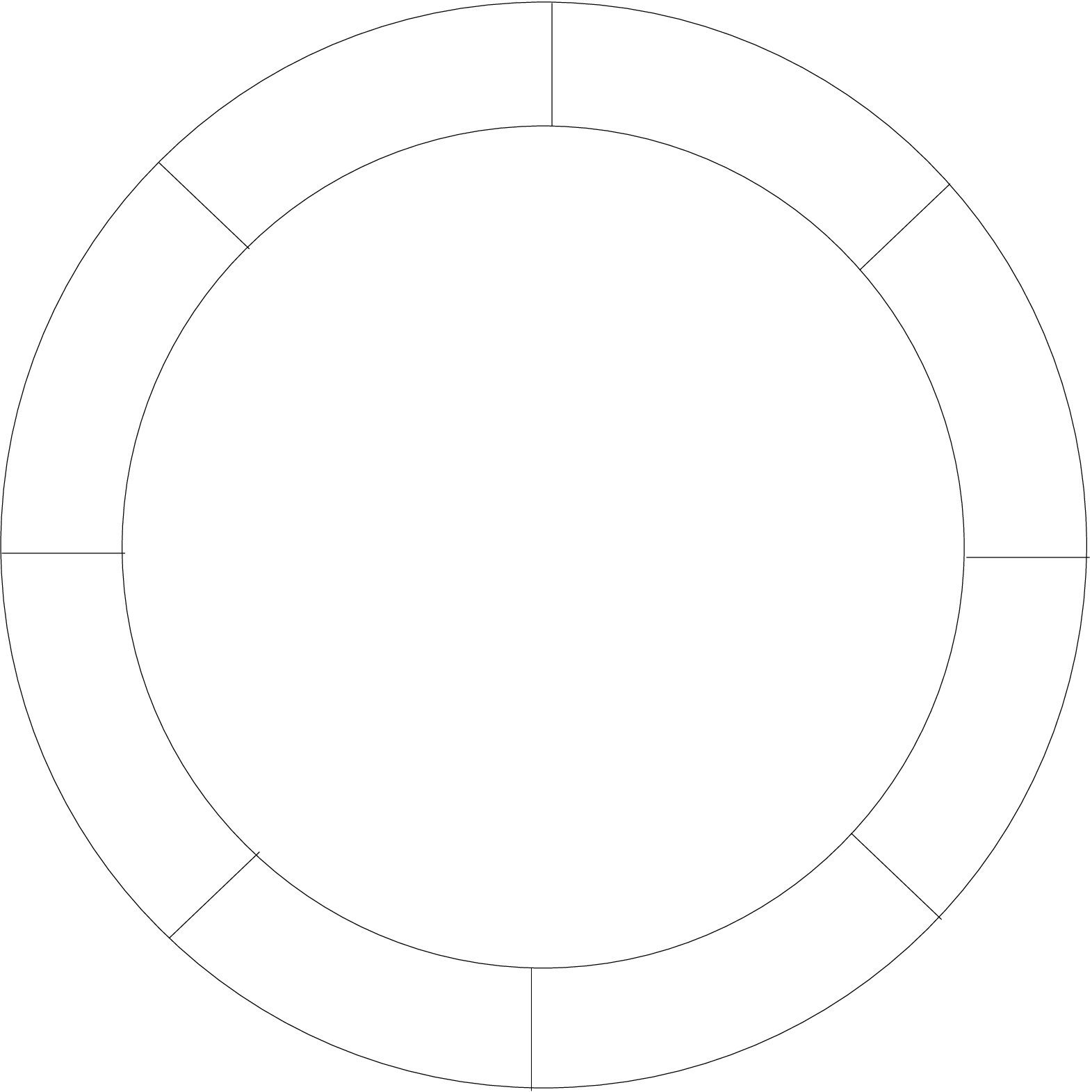}
    \raisebox{1.9cm}{$\;\;\mapsto\;\;$}
    \includegraphics[width=\sdisk cm]{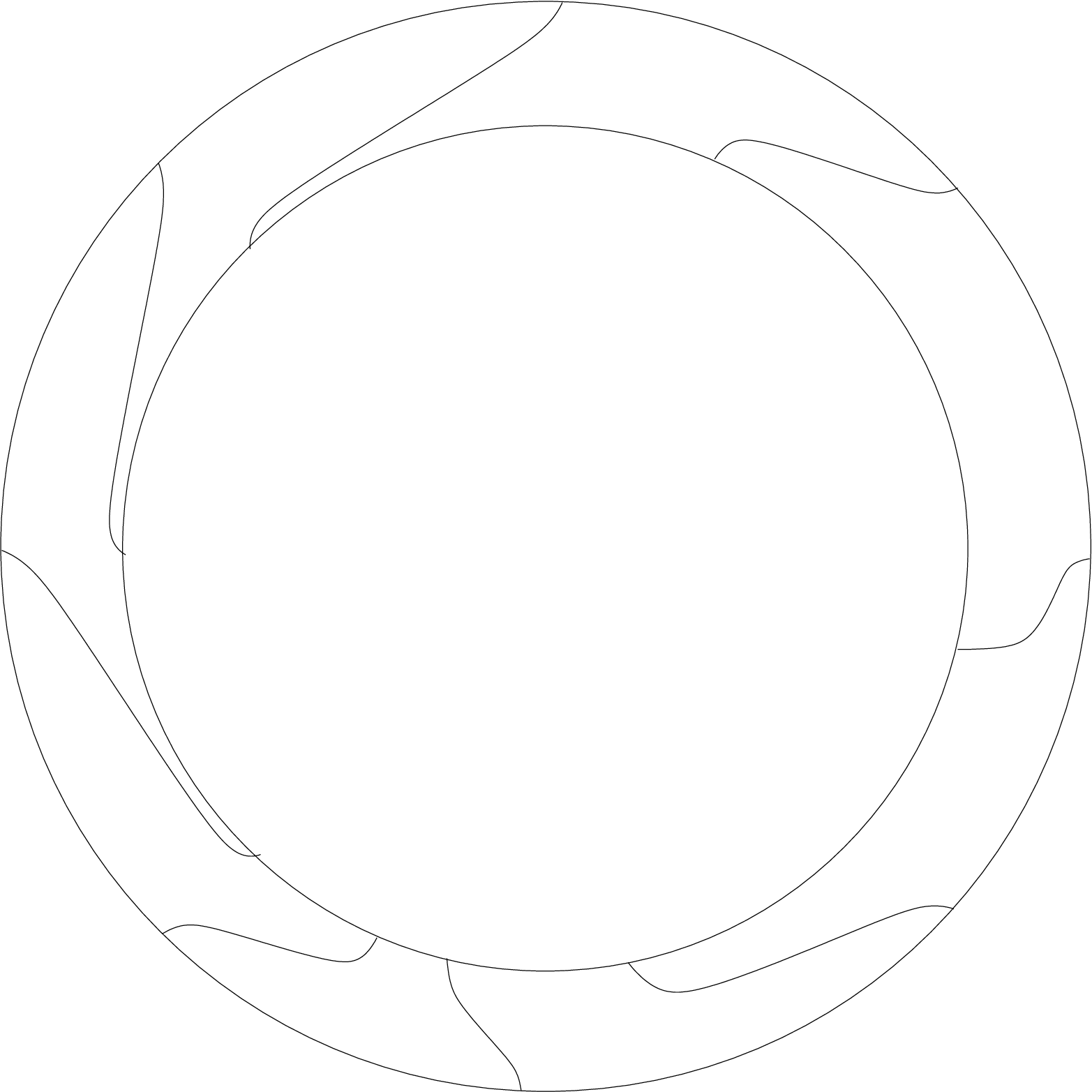}
\caption{\Sonecaption
The path in $\Topo^h(S^1, S^1)$ illustrated here does not end at the same homeomorphism in which it begins.}
\label{fig:1-2-1}
\end{figure}

\begin{figure} 
    \centering
    \includegraphics[width=\sdisk cm]{Figs3/S1-1-1.eps}
    \raisebox{1.9cm}{$\;\;\mapsto\;\;$}
    \includegraphics[width=\sdisk cm]{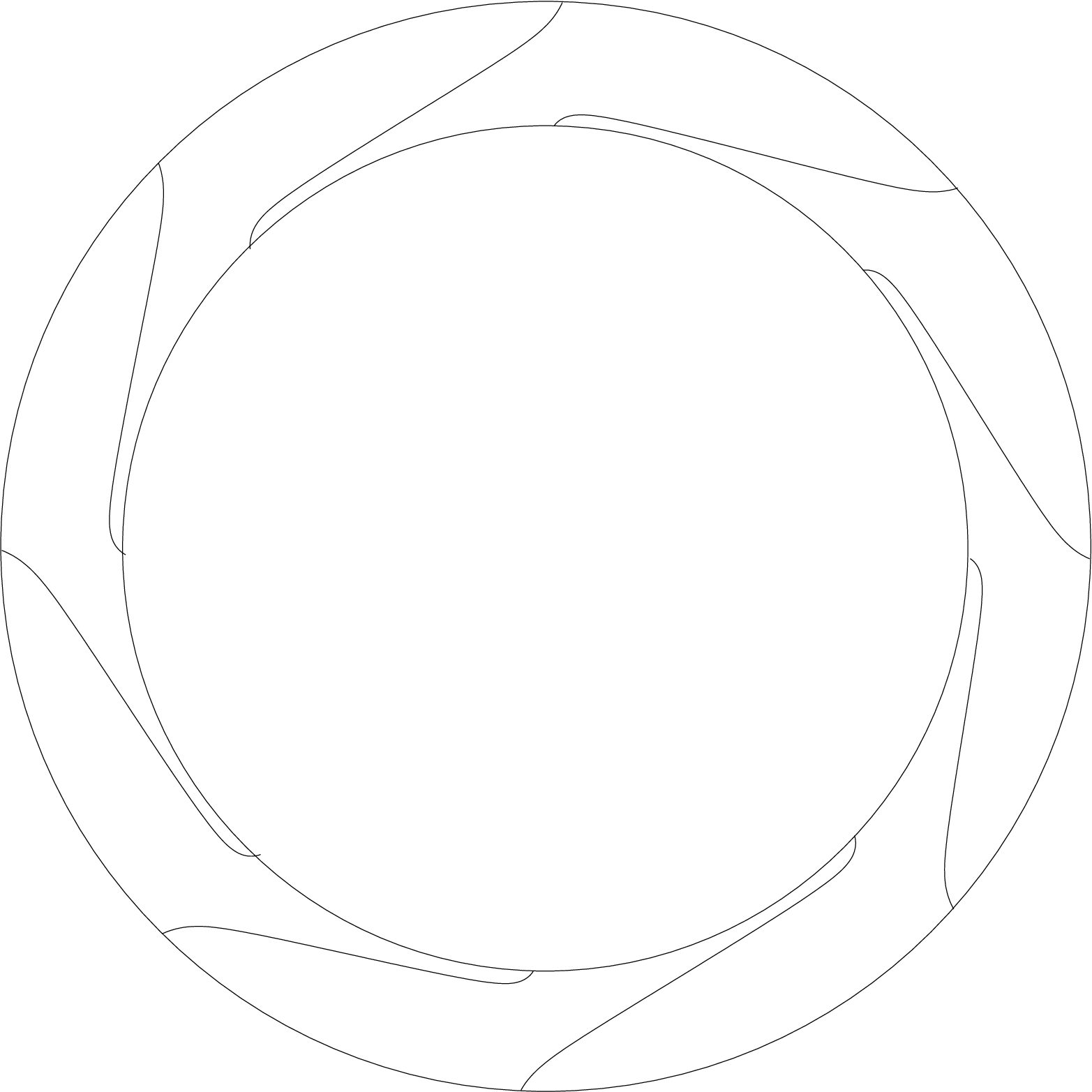}
\caption{
The path in $\Topo^h(S^1, S^1)$ illustrated here does not end at the same homeomorphism in which it begins.}
\label{fig:1-1-1}
\end{figure}

\begin{figure}
  \centering
    \includegraphics[width=6.08cm]{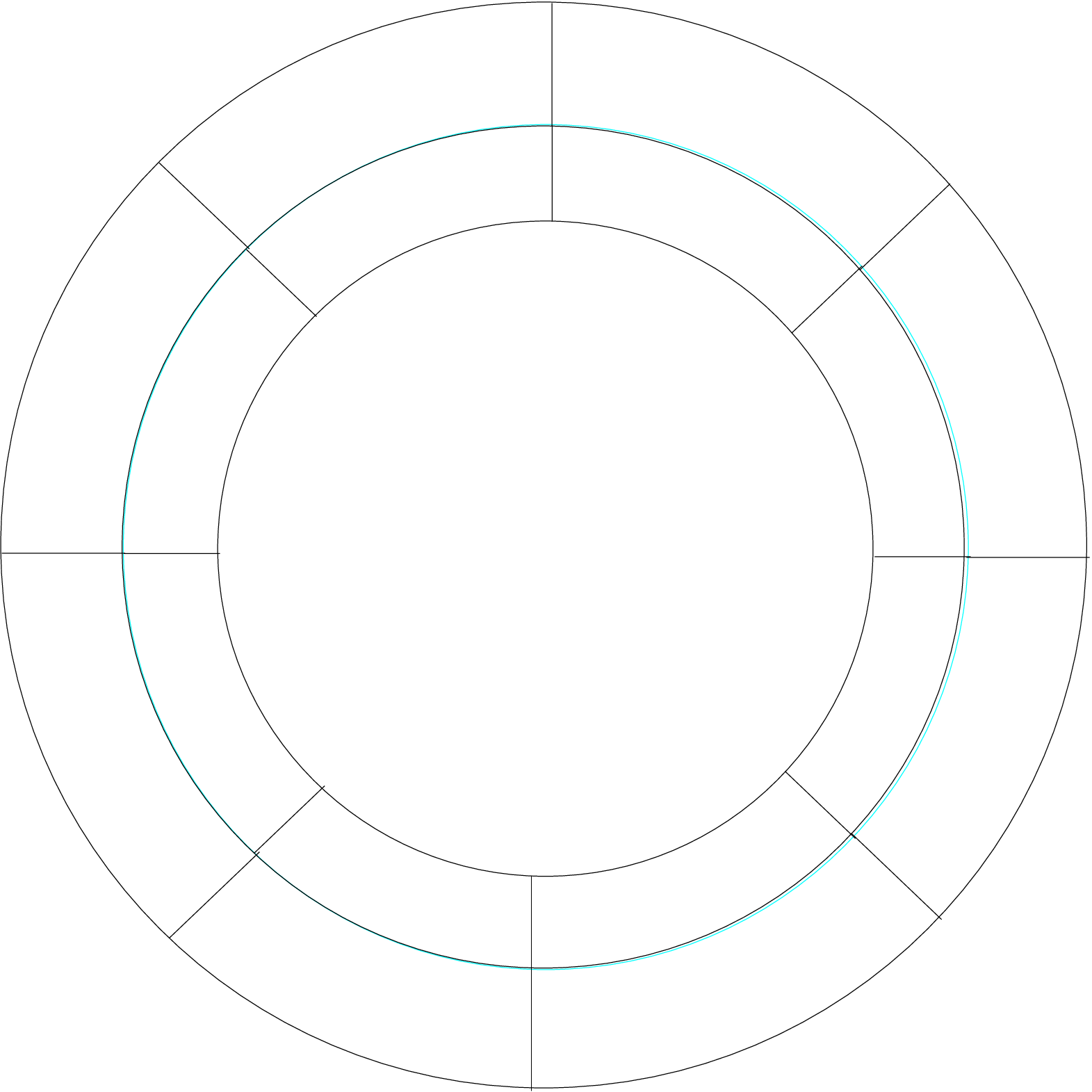}
    \raisebox{2.9cm}{$\;\;\mapsto\;\;$}
    \includegraphics[width=6.08cm]{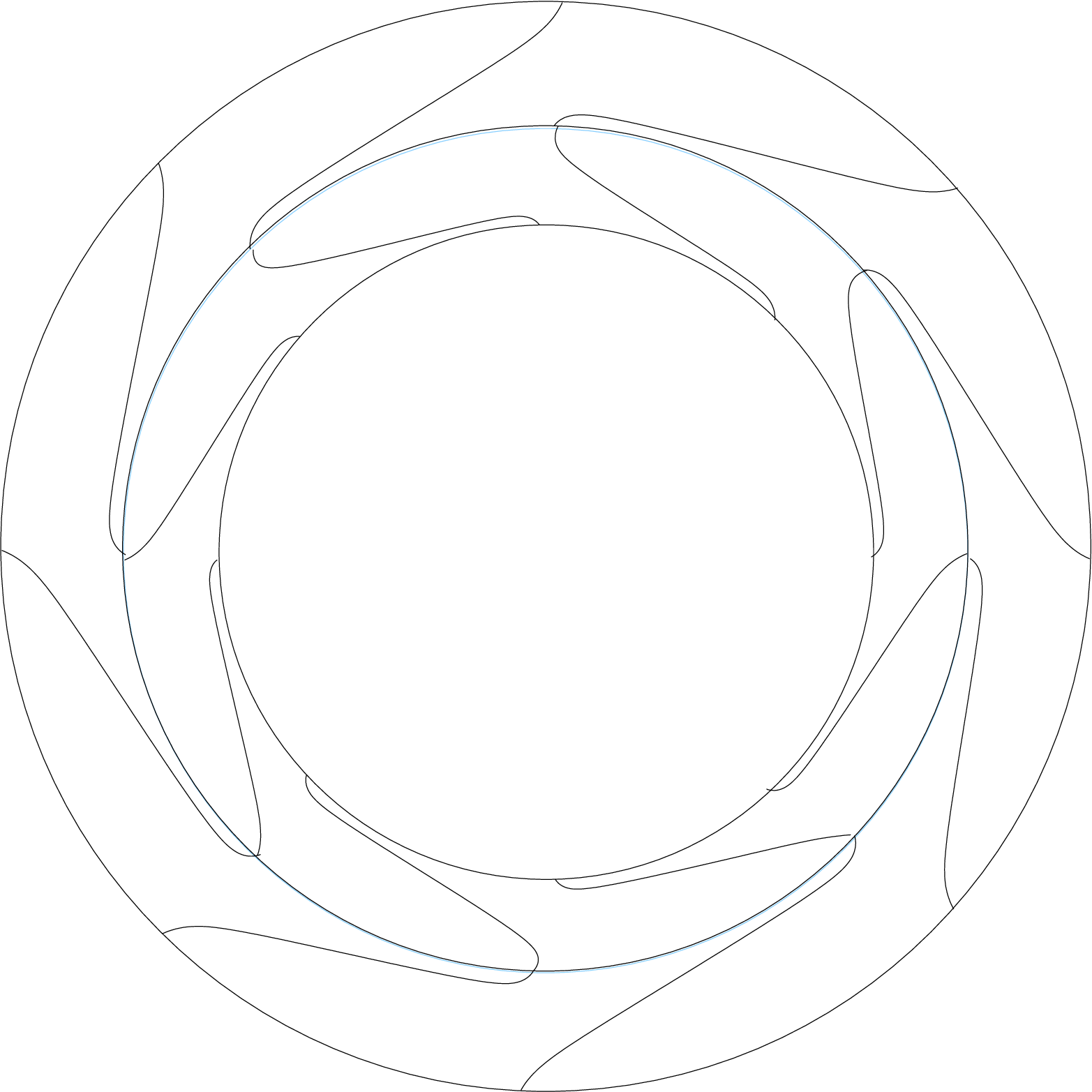}
\caption{Illustration of a path of self-homeomorphisms of the circle $M = S^1$,
realised as a homeomorphism $M\times\II \rightarrow M\times\II$. 
 }
\label{fig:S1.1}
\end{figure}

\begin{figure} 
\hspace{-.5cm}
    \includegraphics[width=7.8cm]{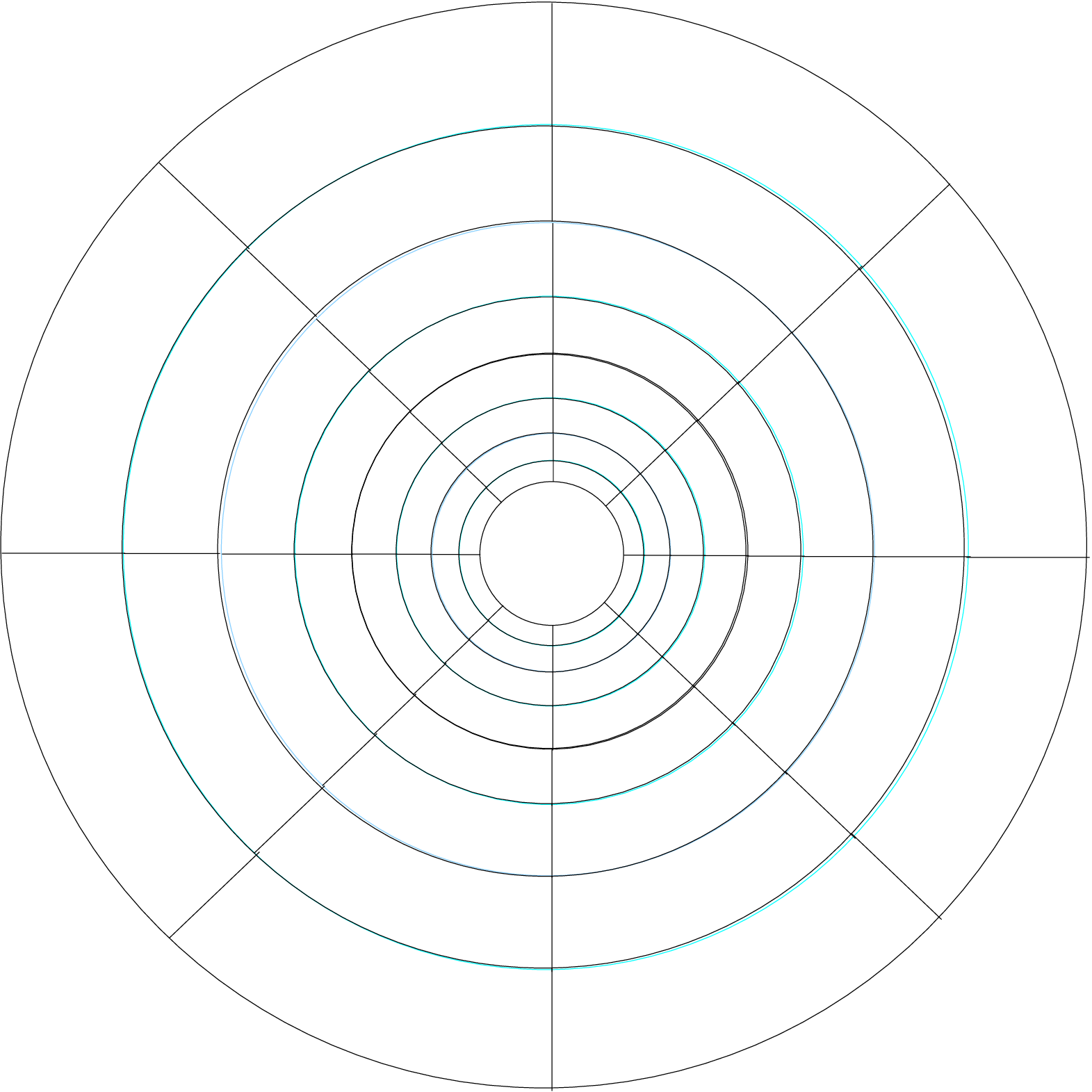}
    \raisebox{3.7cm}{$\;\;\mapsto\;\;$}
    \includegraphics[width=7.8cm]{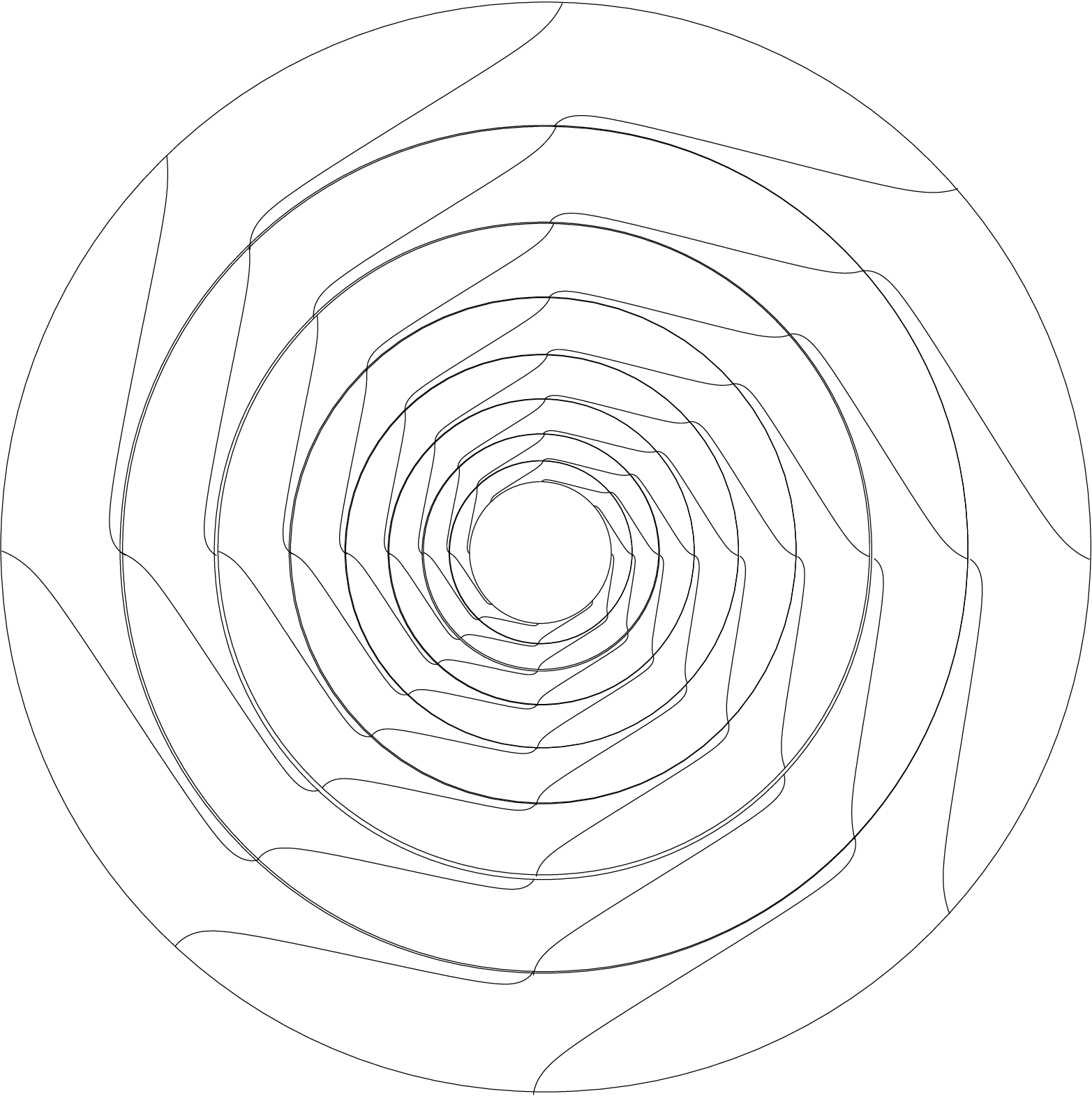}
    \caption{
    Illustration of a path of self-homeomorphisms of the circle $M = S^1$. 
    Comparing with Fig.\ref{fig:S1.1},
    both paths can be taken to start at $\Id_\II$, and both finish at the same point.
    }
    \label{fig:S1-7}
\end{figure}

\defn{\label{de:Mthom}
Let $M$ be a \axiomM{} and $N,N'\subseteq M$.
Let $ \Mtcmaghom(N,N') \subseteq \premotshom$
denote the subset of 
homeomorphisms
$g\in\premotshom$ 
such that $g(N\times \{1\})=N'\times \{1\}$.	
}

\begin{theorem}
\label{le:motion_MxI}
 Let $M$ be a \axiomM{} and $N,N'\subseteq M$.
The restriction of $\Theta$ as in Lemma~\ref{le:homeo_MxItoMxI} yields a bijection
\ali{\Theta\colon\Mtcmag(N,N')
\;&\to\; \Mtcmaghom(N,N')
.}
\end{theorem}
\begin{proof}
	Notice each $\Mtcmaghom(N,N')$ is a subset of $\premotshom$.
	Lemma~\ref{le:homeo_MxItoMxI} gives that $\Theta$ yields a bijection $\premots\cong\premotshom$ hence we only need to check that $\Theta(\Mtcmag(N,N'))\subseteq \Mtcmaghom(N,N')$ and $\Theta^{-1}(\Mtcmaghom(N,N'))\subseteq \Mtcmag(N,N')$.
	
	If $\mot{f}{}{N}{N'}$ is a motion, then $\Theta (f)(N\times \{1\})=f_1(N)\times \{1\}=N'\times \{1\}$.
	Now suppose $f'\colon M\times \II\to M\times \II$ is a homeomorphism with  $f'(N\times \{1\})=N'\times \{1\}$, then 
	$\Theta^{-1} (f')_1(N)=p_0\circ f'(N\times \{1\})=p_0(N'\times \{1\})=N'$, so $\Theta^{-1}(f)$ is in $\Mtcmag(N,N')$.
\end{proof}

Theorem~\ref{le:motion_MxI}
says that we can add subsets to flare schematics to obtain representations of motions.
In Figure~\ref{fig:my_labelNNN}
we have two more flare schematics corresponding to 
different motions in $\II$. 
Here we have omitted
the image showing 
$F\subset\II\times \II$ since it is the same as for Figures~\ref{fig:flare-id} and \ref{fig:flare-noid}.
These schematics represent motions from various 
intervals, so we mark these intervals in addition to the flares. Notice that the motion represented in Figure~\ref{fig:my_labelNNN}(b) is an \stationary[N''] motion.

\medskip 

Recall the from Definition~\ref{def:worldline} that $\W\left(f\colon N \too N'\right)= \bigcup_{t \in \II} f_t(N) \times \{t\}$. The following proposition follows directly from the definition of $\Theta\colon \Mtcmag(N,N')\to \Mtcmaghom(N,N')$.
\begin{proposition}
For a motion $\mot{f}{}{N}{N'}$ in a manifold $M$
\[
\pushQED{\qed}
\W\left(f\colon N \too N'\right)= 
\Theta(f)(N \times \II) \subseteq M\times \II.\qedhere
\popQED
\]
\end{proposition}

Thus the image of the subsets progressing up the page in a flare schematic is the \textit{worldline} of a motion.
The worldline retains the information about the movement of the chosen subset, and forgets the information about the movement of the ambient space.

\medskip

\begin{figure}
    \centering
    \includegraphics[width=1.965in]{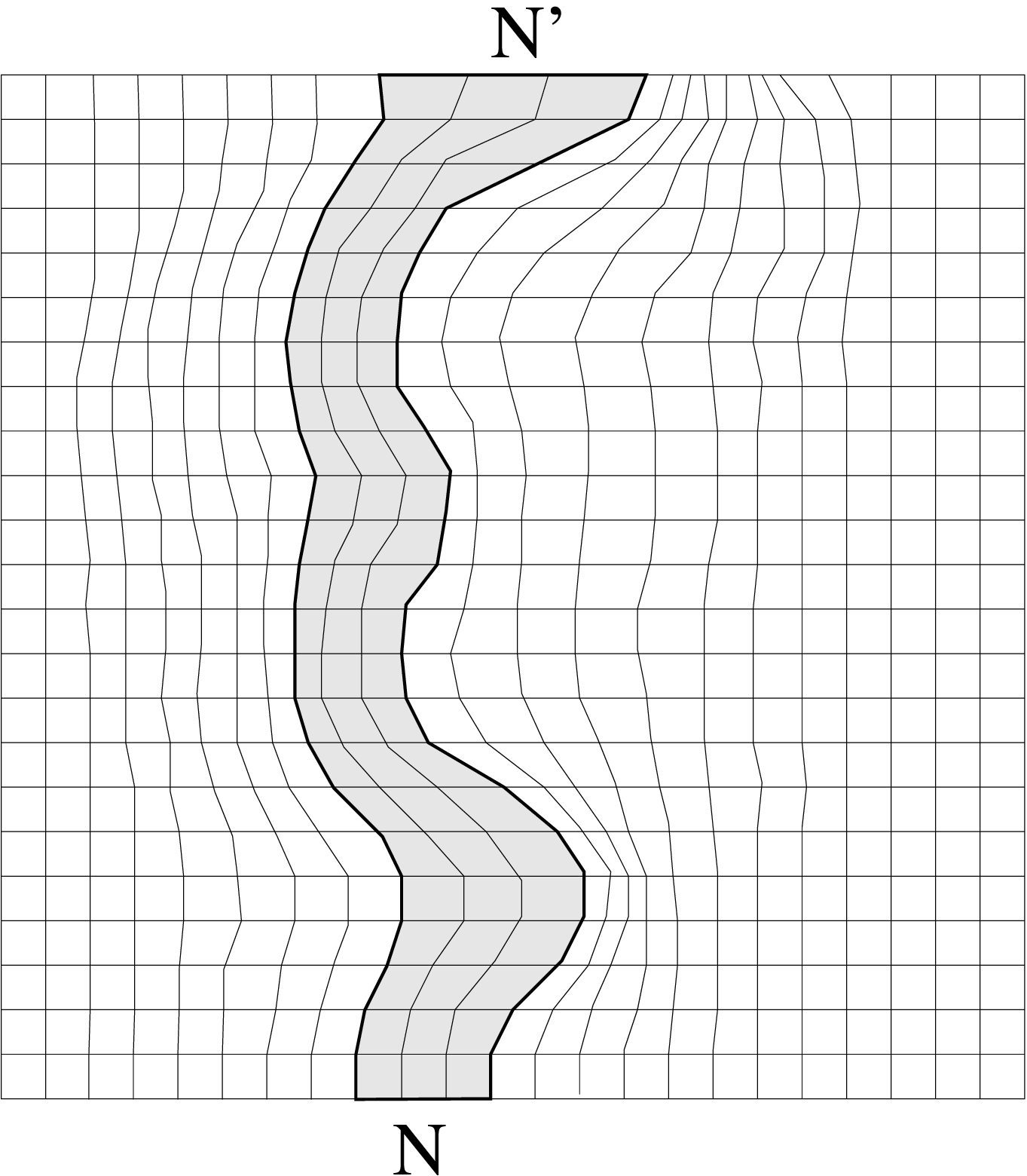}
    \hspace{.6in}
        \includegraphics[width=1.965in]{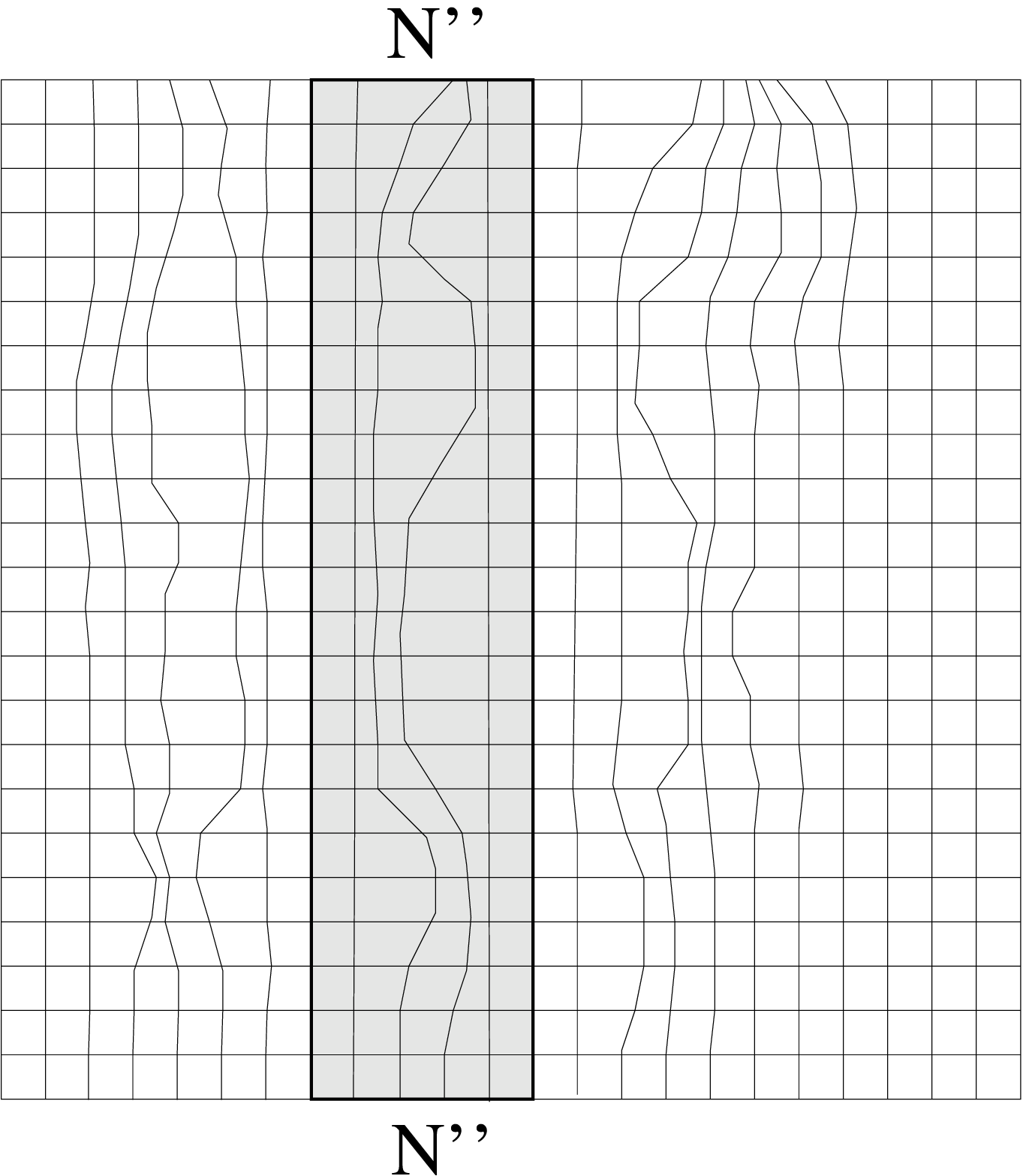}
    \caption{
    Flare schematic for motions (a) from $N$ to $N'$ and (b) from $N''$ to $N''$ in case $M=\II$, where  $N,N',N''$ are intervals in $\II$. The black represents the image of a regular grid subset of $\II\times \II$,
    and the grey shading represents the images of $N$ and $N'$, progressing up the page - the worldlines of the motions.}
    \label{fig:my_labelNNN}
\end{figure}

\begin{figure}[!ht]
    \centering
    \includegraphics[width=2in]{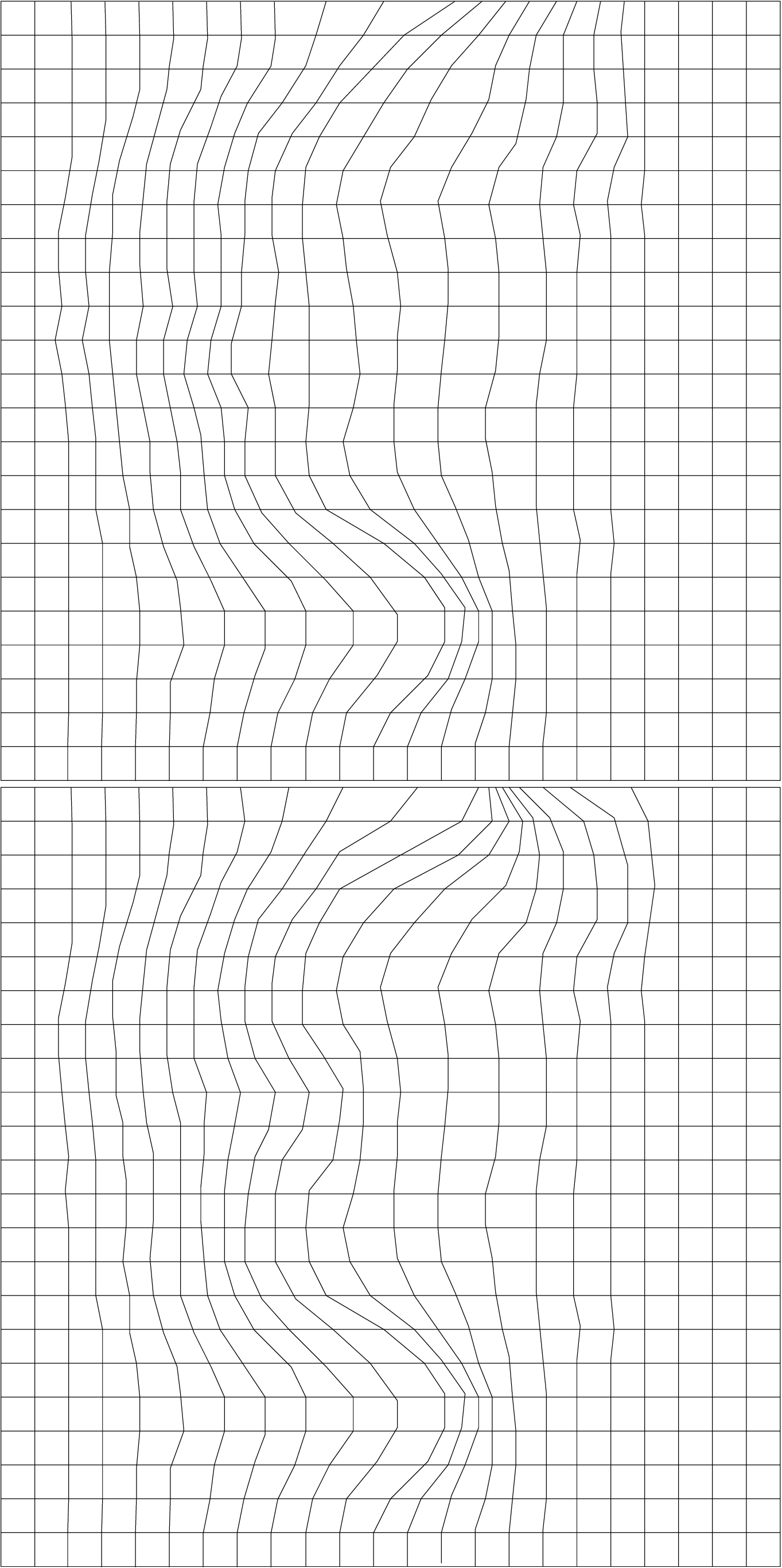}
    \hspace{.421in}
    \includegraphics[width=2in]{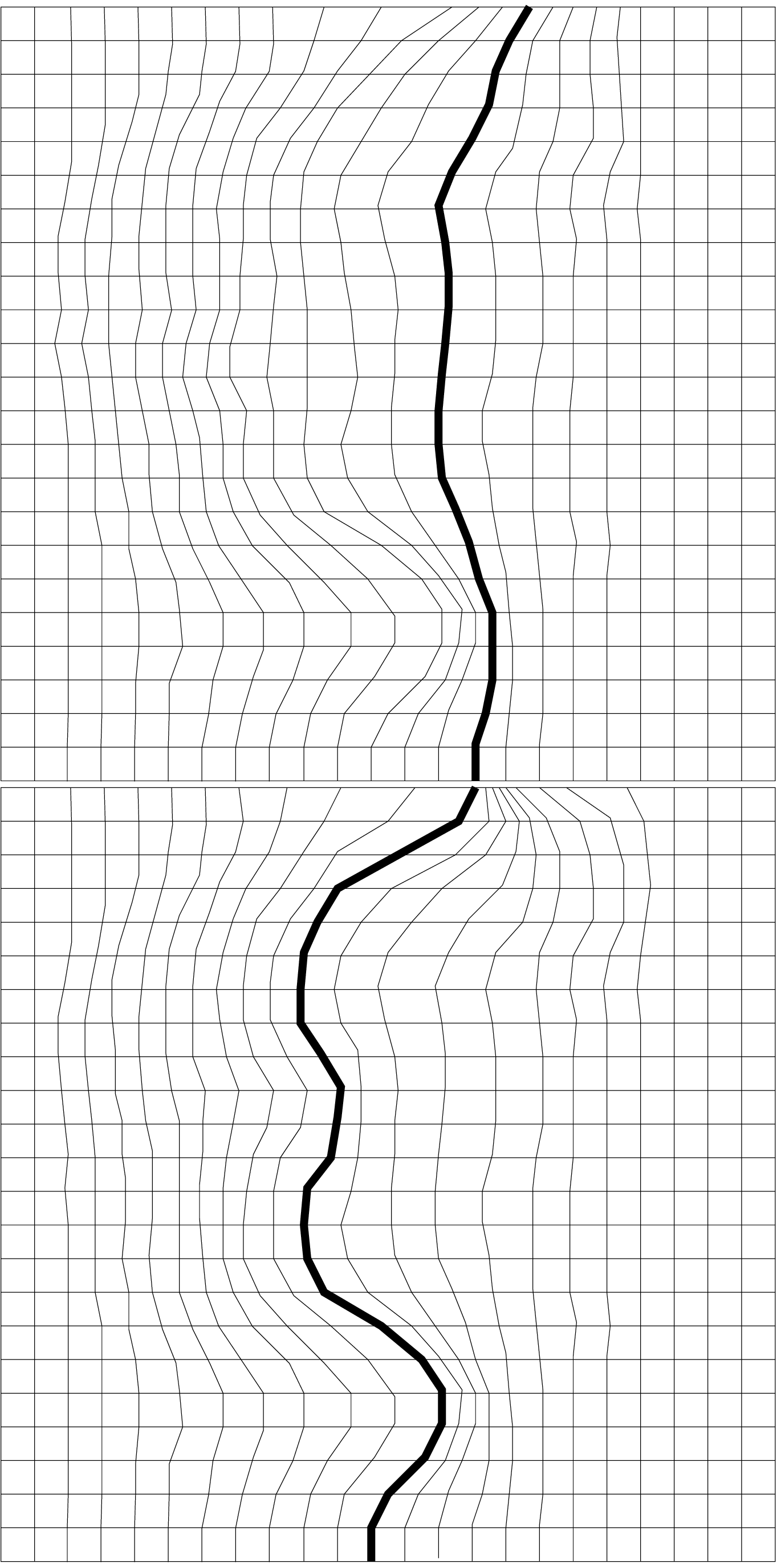}
    \caption { Schematic for  composition of 
motions.
    (a) formal stack of flare schematics of paths;
    (b) formal stack of pictures of paths with a choice of subset $N\subseteq \II$ and the worldline of $N$ marked.
    }
    \label{fig:Fschema32}
\end{figure}

In Fig.\ref{fig:Fschema32} we have an example of the $*$ composition of motions in our flare schematic representation. 
Fig.\ref{fig:Fschema32}(a) simply shows the flare-schematics for two \premot{}s in a formal stack
-- note that this is not itself a flare-schematic for a motion, since the lines
are not matched at the join. 
To turn this picture into a flare schematic, we must trace the images 
 of the marked points along the bottom boundary in the top half of the schematic. This corresponds to composing with $f_1$ in Equation~\eqref{def:comp1}.
In Fig.\ref{fig:Fschema32}(b) we consider what happens when we move to motions.
Lemma~\ref{lem:concatW} gives that simply stacking (and shrinking) the flare schematics corresponding to composable motions will give a schematic with the correct worldline of the composition. Thus if we were to turn the right hand picture into a flare schematic, the bold line representing worldline will remain the same.
We also have that given a formal stack of flare schematics for \premot{}s, choosing a subset
along the bottom boundary
 and tracking it under the first \premot{} 
 determines a choice of subset in the second motion such that paths of self-homeomorphisms become composable motions.

\medskip

A similar situation also applies to the construction of $\bar{f}$.
Suppose we have a flare schematic for a \premot{} $f$ and we turn it upside down (respectively inside-out in the $S^1$ case)
it is not a flare schematic of a motion,
because
$f_{(1-t)}$ is not the identity at $t=0$; but the composition with $f_1^{-1}$ 
in $\bar{f}$
`fixes' this.
However suppose now we add a subset and a worldline to $f$, then the naive `flip' of the flare schematic gives the worldline of $\bar{f}$.

\rem{\label{re:ambientisotopy} If we take the definition of ambient isotopy given by, for example, \cite[Sec.0.3]{kamada} in terms of maps $M\times \II$ to $ M \times \II$, Definition~\ref{de:Mthom} says precisely that $g\in \Mtcmaghom(N,N')$ is an ambient isotopy from 
$N$ to $N'$ in $M$.
There are several other 
definitions of the term `ambient isotopy' of subsets in the literature.
 The equivalences of $\Mtcmag(N,N')$, $\Mtcmagmov(N,N')$ and $\Mtcmaghom(N,N')$ proved in Lemma~\ref{le:movie_mot} and Theorem~\ref{le:motion_MxI} 
prove that the notions of ambient isotopy given by \cite{kamada} and the two notions of isotopy given by \cite[Pg.8]{Crowell_Fox}, in terms of maps from $M\times \II$ to $M$ and in terms of paths in $\Topo^h(M,M)$, are all equivalent.
Thus we have that there exists a motion from $N$ to $N'$ in $M$ if and only if $N$ and $N'$ are ambient isotopic.
In particular knots $K$ and $K'$ in $S^3$ are equivalent if and only if there is a motion in $S^3$ from $K$ to $K'$.
 }

\section{Key alternative ways to understand motion equivalence}\label{sec:equivalence}

In this section we give two alternative ways to understand motion equivalence.
We give more details on each at the start of \S\ref{sec:rel groupoid} and \S\ref{sec:laminated} below. 

\subsection{Relative path homotopy between motions}
\label{sec:rel groupoid}

In this section we introduce the relation, relative path-homotopy on the sets $\Mtcmag(N,N')$.
We prove in Theorem~\ref{th:mg2} that this relation is the same as the relation $\simm$ constructed in the previous section. 

Relative path-homotopy is the same equivalence relation used in the construction of the relative fundamental set $\pi_1(X,Y,*)$ of a pointed pair of spaces.
Thus it will allow us to use the relative homotopy long exact sequence to prove the relationship between motion groupoids and mapping class groupoids in Section~\ref{sec:mg_to_mcg}.
For $X$ a space and $f,g$ paths, recall the set of homotopies
$\phomotopy{X}{f}{g}$.
For $M$ a manifold, $f,g$ flows and $N,N'$ subsets,
let
$$
\Topo(\II^2,\TOPO^h(M,M))\rel{f}{N}{N'}{g} 
 = \hspace{8.3cm}
 $$
\[
 \{ H \in \Topo(\II^2,\TOPO^h(M,M)) \; | \; H(t,0)=f_t,
 H(t,1) = g_t, H(0,s)=\Id_M, H(1,s)(N)=N'=f_1(N) \}
\]

\defn{\label{de:rp equiv}
	Fix a manifold $M$. 
	Define a relation on 
	$\Mtcmag(N,N')$
	as follows.
	Let $\mot{f}{}{N}{N'}\simrp{}\mot{g}{}{N}{N'}$ if {$\Topo(\II^2,\TOPO^h(M,M))\rel{f}{N}{N'}{g}\neq \emptyset$. This means that}  there exists a continuous map:
	\[
	H\colon \II \times \II \to \TOPO^h(M,M),
	\]
	such that:
	\begin{itemize}
		\item for any fixed $s\in \II$, $t\mapsto H(t,s)$ is a motion from $N$ to $N'$,
		i.e. 
		$H(0,s)=\Id_M$ and
		$H(1,s) \in \Hom[M](N,N')$,
		{meaning that} $H(1,s)(N)=(N')$,
		\item for all $t\in \II$, $H(t,0)=f_t$, and
		\item for all $t\in \II$, $H(t,1)=g_t$.
	\end{itemize}
	Notation: We call such a map a {\em relative path-homotopy}, and the motions $\mot{f}{}{N}{N'}$ and $\mot{g}{}{N}{N'}$ are said to be {\em relative path-homotopic}. 
}

\lemm{\label{le:rpe}
Fix a manifold $M$. 
For each pair $N,N'$, the relation
$\simrp$ is an equivalence relation on $\Mtcmag(N,N')$.
\\ 
{Notation:}
		We call $\simrp$ equivalence classes {\em relative path-equivalence} classes and use $\classrp{\mot{f}{}{N}{N'}}$ for the {equivalence} class of $\mot{f}{}{N}{N'}$.
		}
		
		Figure~\ref{fig:rpath_homotopy} gives examples of relative path-homotopic, and non relative path-homotopic motions in our schema introduced in Figure~\ref{fig:my_lalala}. 
\begin{figure}
		\centering
		\def\svgwidth{0.6\columnwidth}
		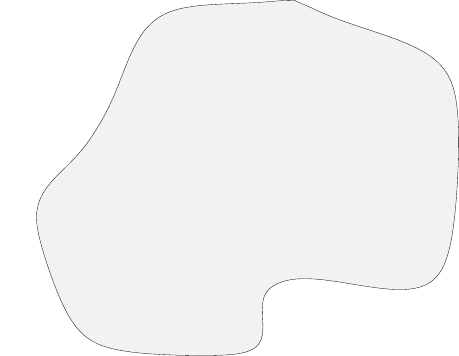
		\caption{Let $M$ be a manifold, and $N,N'\subset M$ subsets.
		Here we use the same schema as in Figure~\ref{fig:my_lalala}.
		For illustration we suppose
		$\TOPO^h(M,M)$ is a connected region of the plane homeomorphic to $S^1\times \II$.
		The paths labelled $(a)$, $(b)$ and $(c)$ represent motions from $N$ to $N'$ in $M$. There is a relative path-homotopy from $(b)$ to $(c)$, but not from $(a)$ to $(b)$ or to $(c)$.} 
		\label{fig:rpath_homotopy}
\end{figure}
	
\begin{proof}
	Let $\mot{f}{}{N}{N'}$, $\mot{g}{}{N}{N'}$ and $\mot{h}{}{N}{N'}$ be motions.
	We can prove reflexivity by observing that the homotopy $H(t,s)=f_t$ for all  $s\in \II$ is a relative path-homotopy from $\mot{f}{}{N}{N'}$ to itself. 
	
	For symmetry let $H_{f,g}$ be the relative path-homotopy from $\mot{f}{}{N}{N'}$ to $\mot{g}{}{N}{N'}$.
 	Then the function $H_{g,f}(t,s)=H_{f,g}(t,1-s)$ is a relative path-homotopy from $\mot{g}{}{N}{N'}$ to $\mot{f}{}{N}{N'}$.
	
	For transitivity let $H_{g,h}$ be the relative path-homotopy from $\mot{g}{}{N}{N'}$ to $\mot{h}{}{N}{N'}$.
	Then 
	\[
	H_{f,h}(t,s)=
	\begin{cases}
	H_{f,g}(t,2s) & 0\leq s \leq \frac{1}{2} \\
	H_{g,h}(t,2(s-\frac{1}{2})) &\frac{1}{2}\leq s\leq 1
	\end{cases}
	\]
	is a relative path-homotopy from $\mot{f}{}{N}{N'}$ to $\mot{h}{}{N}{N'}$.
\end{proof}

Recall the definition of path-equivalence from
Definition~\ref{de:pe} and also path-equivalence of motions, \ppm{$\simp$,} from Lemma~\ref{le:simp_cong}.

\prop{\label{pr:p_implies_rp}
Let $\mot{f}{}{N}{N'} \;\simp\; \mot{g}{}{N}{N'}$ be path-equivalent motions, then $\mot{f}{}{N}{N'}\;\simrp\; \mot{g}{}{N}{N'}$.}
\begin{proof}
A path-homotopy, $H$, from $f$ to $g$ has fixed endpoints, thus for fixed $s$, $t\mapsto H(t,s)$ is a motion from $N$ to $N'$. Hence $H$ is a relative path-homotopy {from} $\mot{f}{}{N}{N'}$ to $ \mot{g}{}{N}{N'}$. 
\end{proof}

Recall the construction of $\Hom$ from \S\ref{ss:selfhomeos}.
Morphisms in $\Hom(N,N')$ are triples $\shmor{f}{}{N}{N'}$, although it will often be useful to think of $\Hom(N,N')$ as the projection to the first element of the triple, and further to topologise this as a subset of $\TOPO^h(M,M)$. The meaning will be clear from context.

		\lemm{\label{le:rp_implies_m}
		Suppose we have relative path-{homotopic} motions $\mot{f}{}{N}{N'}\;\simrp\; \mot{f'}{}{N}{N'}$, then 
		${\mot{f}{}{N}{N'}\;\simm\; \mot{f'}{}{N}{N'}}$.}
	\begin{proof}
	Let $H$ be a relative path-homotopy from $\mot{f}{}{N}{N'}$ to $\mot{f'}{}{N}{N'}$.
	We must show that $ \mot{\bar{f}'*f}{}{N}{N}$ is path-equivalent to a stationary motion from $N$ to $N$.
	
	Notice first that $s \mapsto H(1,1-s)$ is a path from $f'_1$ to $f_1$, which is in $\Homn$ for all $s$. We relabel this path as $\gamma$.
	We define a path
	$\tilde{\gamma}$ by $\tilde{\gamma}_s=\gamma_s\circ f'^{-1}_1$, so 
	$\mot{\tilde{\gamma}}{}{N'}{N'}$
	is a stationary motion with $\tilde{\gamma}_1=f_1\circ f'^{-1}_1$.
	
	We can use $H$ to construct a path-homotopy from $f$ to the path composition $\gamma f'$. 
	For example, a suitable path-homotopy is: 
	\[
	H_1(t,s)=\begin{cases}
	H(\frac{2t}{2-s},s) & t\leq 1-\frac{s}{2} \\
	\gamma_{2t-1} & 1-\frac{s}{2} \leq t.
	\end{cases}
	\]
	For fixed $s\in \II$, the path $t \mapsto H_1(t,s)$ starts at the identity, traces the whole of the path $t \mapsto H(t,s)$ followed by the part of the path $\gamma$ starting from $\gamma_{1-s}=H(1,s)$ and ending at $\gamma_1$. 
	Note that the path composition, $\gamma f'$ is precisely the motion composition $\tilde{\gamma}*f'$, so $f\simp \tilde{\gamma}*f'$.

	{By Lemma~\ref{le:simp_cong}, path-equivalence is a congruence on the magmoid $\Mtmagstar$  (see Proposition~\ref{pr:mot_comp}),
	hence we have that $\bar{f}'*f$ is path-equivalent to $ \bar{f}'*(\tilde{\gamma} * f')$.}

	Now recall that $\mot{\tilde{\gamma}}{}{N'}{N'}$ is stationary. Therefore, using the normalcy of  the subgroupoid  $\setstat(N,N)$ of $\Mtcmag/\simp$, proved in Lemma \ref{le:stat_mots}, 
	it follows that the motion
	$\mot{\bar{f}'*(\tilde{\gamma}* f')}{}{N}{N}$ is path-equivalent to a stationary motion from $N$ to $N$.
	Hence $\mot{\bar{f}'*f}{}{N}{N}$ is path-equivalent to a stationary motion.
\end{proof}
		
		\lemm{\label{le:m_implies_rp} 
		Suppose we have motion-equivalent motions $\mot{f}{}{N}{N'}\;\simm\; \mot{f'}{}{N}{N'}$. 
		Then we have 
		$\mot{f}{}{N}{N'}\;\simrp\; \mot{f'}{}{N}{N'}$.}

		\begin{proof}
		We have from Theorem~\ref{th:mg} that $\Mot$ is a groupoid. Hence it follows from the uniqueness of inverses that $\mot{f}{}{N}{N'}\;\simm\; \mot{f'}{}{N}{N'}$ implies 
	$\mot{\bar{f}}{}{N'}{N}\;\simm \;\mot{\bar{f}'}{}{N'}{N}$.
	Thus there exists a path-homotopy, say $H$, from $f'*\bar{f}$ to a stationary motion $\mot{\gamma}{}{N'}{N'}$.

We also have that  ${\mot{f}{}{N}{N'}\;\simrp \; \mot{\gamma*f}{}{N}{N'}}$. To prove this we can use the following function:
	\[
	H_1(t,s)=\begin{cases}
	f_{\frac{2t}{2-s}} & t\leq 1-\frac{s}{2} \\
	\gamma_{2(t+\frac{s}{2}-1)}\circ f_1 & 1-\frac{s}{2}\leq t.
	\end{cases}
	\]
	Notice that $H_1(t,0)$ is the path $f$ and $H_1(t,1)$ is the path $ \gamma *f$.
	For any fixed $s\in \II$, we have $H_1(0,s)=\id_M$, and also: \[H_1(1,s)(N)=\gamma_{s}\circ f_1(N)=\gamma_s(N')=N'.\]
	Note that $H_1$ is continuous as both functions agree when $t=1-\frac{s}{2}$.
	Hence we have that $H_1$ is a relative path-homotopy, proving that ${\mot{f}{}{N}{N'}\;\simrp\; \mot{\gamma*f}{}{N}{N'}}$.

Using the associativity of $\cdot$ (Proposition~\ref{pr:dot_comp}) together with
Lemma \ref{le:pms_star_equiv_dot}, 
the fact that $*$ and $\cdot$ composition are the same up to path-equivalence, the fact that path-equivalence is a congruence on motions (Lemma~\ref{le:simp_cong}), and that $\bar{f}*f\simp\Id_M$ (Lemma~\ref{le:Mtcmag}), we have 
 \[\gamma*f\simp \gamma\cdot f\simp (f'*\bar{f})\cdot f \simp f'\cdot \bar{f}\cdot f\simp f'\cdot(\bar{f}* f)\simp f'\cdot \Id_M=f',\]
	so $\gamma*f\simp f'$.
	
	Since path-equivalence implies relative path-equivalence (Proposition~\ref{pr:p_implies_rp}), we have $\mot{\gamma*f}{}{N}{N'}\;\simrp\; \mot{f'}{}{N}{N'}$, and hence we have  ${\mot{f}{}{N}{N'}\;\simrp\; \mot{f'}{}{N}{N'}}$.
		\end{proof}

All proofs in this section work in exactly the same way restricting to $A$-fixing motions, and defining $A$-fixing relative path-homotopies between $A$-fixing motions in the obvious way.
Hence 
from  Lemmas~\ref{le:rp_implies_m} and \ref{le:m_implies_rp}
we have: 
\begin{theorem}\label{th:mg2}
For a manifold $M$ and an $A$-fixing motion $\mot{f}{A}{N}{N'}$ in $M$ we have
\[
        \classrp{\mot{f}{A}{N}{N'}}=\classm{\mot{f}{A}{N}{N'}}.
        \]
             In particular, quotienting $\Mtcmag^A$ by relative path-equivalence leads to the same groupoid as quotienting by motion-equivalence. \qed
\end{theorem}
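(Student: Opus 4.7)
The plan is essentially to harvest the work already done: Lemmas~\ref{le:rp_implies_m} and \ref{le:m_implies_rp} together establish the two implications $\simrp \Rightarrow \simm$ and $\simm \Rightarrow \simrp$ for motions (in the $A = \emptyset$ case). So I would first simply invoke these two lemmas to obtain the equality of equivalence classes $\classrp{\mot{f}{}{N}{N'}} = \classm{\mot{f}{}{N}{N'}}$ in the unfixed case.

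The only remaining content is to verify that the argument transfers verbatim to the $A$-fixing setting. This is a bookkeeping check rather than a new idea: I would examine the constructions used in the proofs of Lemmas~\ref{le:rp_implies_m} and \ref{le:m_implies_rp} and note that each intermediate object produced is built by composing, inverting, or concatenating the given motions/paths, together with piecewise-linear rescaling of the time parameter. Since $\Topo^h_A(M,M)$ is closed under composition and inverses (it is a topological subgroup of $\TOPO^h(M,M)$), every such construction remains in $\Topo^h_A(M,M)$ whenever the input motions lie there. In particular, the auxiliary path $\tilde{\gamma}_s = \gamma_s \circ f'^{-1}_1$ and the homotopies $H_1$ appearing in the two lemmas are $A$-fixing as soon as $f, f', \gamma$ are $A$-fixing. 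Consequently the path-homotopies land in $\Topo(\II^2, \TOPO_A^h(M,M))$, as required by the $A$-fixing version of relative path-equivalence.

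Having established the two implications for $A$-fixing motions, the equality of equivalence classes is immediate. For the second sentence of the theorem, I would observe that since $\simrp$ and $\simm$ coincide as relations on the morphism sets of $\Mtcmag^A$, the set-level quotient $\Mtcmag^A / \simrp$ carries exactly the same data as $\Mot^A$ from Theorem~\ref{th:mg2A}, and hence is a groupoid with the same composition, identities and inverses.

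The main obstacle, such as it is, is purely notational: making clear that we are not proving anything new in the body of the argument, just upgrading Lemmas~\ref{le:rp_implies_m} and \ref{le:m_implies_rp} to include the pointwise fixing of $A$. There is no genuine analytic difficulty since the $A$-fixing property is preserved by all the operations used (composition, inversion, reparametrisation, and the explicit piecewise formulas for the homotopies $H_1$). So I would keep the proof very short: cite the two lemmas, remark on the $A$-fixing upgrade, and conclude.
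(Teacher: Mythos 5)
Your proposal is correct and matches the paper's own treatment: the paper likewise deduces Theorem~\ref{th:mg2} directly from Lemmas~\ref{le:rp_implies_m} and \ref{le:m_implies_rp}, remarking that all the homotopies constructed in those proofs remain $A$-fixing when the input motions are $A$-fixing. Your extra sentence justifying the $A$-fixing upgrade (closure of $\Topo^h_A(M,M)$ under the operations used) is a slightly more explicit version of the same observation.
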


\begin{remark}
Let $M$ be a manifold. Combining Theorems \ref{th:mg} and \ref{th:mg2} we hence can see that the motion groupoid $\Mot[M]$ is such that given $N,N'\subset M$ then $\Mot[M](N,N')$ is the set of relative path homotopy classes of motions $N\too N'$. The composition of $\classrp{f\colon N \too N'}$ with $\classrp{g\colon N' \too N''}$ is $\classrp{g*f\colon N \too N''}$, which by Lemma \ref{le:pms_star_equiv_dot}  is the same as  $\classrp{g\cdot f\colon N \too N''}$. In particular, indeed the compositions $*$ and $\cdot$ in the magmoids   $\Mtcmag^{\;\cdot}$ and \ppm{$\Mtcmag^{\;\ast}$}  (the notation is as in \S \ref{ss:motions}) do descend to the quotient under $\simrp$.

For instance, if $f,f'\colon N \too N'$ are relative path-homotopic via $H\colon \II \times \II \to \TOPO^h(M,M)$ and $g,g'\colon N' \too N''$ are relative path-homotopic via $J\colon \II \times \II \to \TOPO^h(M,M)$, then the functions $K,K'\colon \II\times \II \to \TOPO^h(M,M)$ below are relative path-homotopies connecting $g*f\colon N \too N''$ to $g'*f'\colon N \too N''$, and  $g\cdot f\colon N \too N''$ to $g'\cdot f'\colon N \too N''$, respectively,
\begin{align*}
K(t,s)&=\begin{cases} H(2t,s), \textrm{if } t \in [0,1/2]\\
J(2t-1,s)\circ H(1,s), \textrm{if } t \in [1/2,1],\end{cases}\\
K'(t,s)&= J(t,s)\circ H(t,s).
\end{align*}
The same formulae also work in the $A$-fixing case.
 \end{remark}

\begin{remark}\label{rem:dahm_rel_path}
Suppose that the motions $f,g\colon N \too N'$ have the same worldline (Definition \ref{def:worldline}), then they are motion equivalent, by  Proposition \ref{prop:dahm}.
By Theorem~\ref{th:mg2} it follows that $f\colon N \too N'$ and $g\colon N \too N'$ are relative path-homotopic. An explicit relative path-homotopy from $g\colon N \too N'$ to $f\colon N \too N'$, is, for example, $H\colon \II \times \II \to \TOPO^h(M,M)$, defined as:
\[H(t,s)=g_t\circ g_{ts}^{-1} \circ f_{ts}.
\]   
To see that $H$ is  a relative path-homotopy, note that if $s \in I$ then $g_s(N)=f_s(N)$.
As above, the same formulae  also work in the $A$-fixing case.
\end{remark}

\subsection{Level preserving isotopies between worldlines of motions}\label{sec:laminated}

Recall from Definition~\ref{def:worldline} that
the worldline of a motion $\mot{f}{}{N}{N'}$ in $M$ is
\[\W\left(f\colon N \too N'\right)= \bigcup_{t \in [0,1]} f_t(N) \times \{t\}\subseteq M\times \II.
\]
In Section~\ref{ss:stat_mot} we had Proposition~\ref{prop:dahm}, which said that two motions represent the same morphism in the groupoid $\Mot$ if their worldlines are the same{; cf. also Remark~\ref{rem:dahm_rel_path} just above.} 
Here we prove that we can strengthen this result, by which we mean that we can formulate motion equivalence entirely in terms of a relation on worldlines of motions.

We begin by defining when two subsets of a space of the form $M\times \II $ are \textit{level preserving ambient isotopic} relative to some fixed subsets. The main result in this section is Theorem~\ref{th:me_bflpai} which says that
two motions are motion-equivalent if, and only if, their worldlines are level preserving ambient isotopic relative to $M\times (\{0\}\cup \{1\})$, pointwise.

\medskip

The following definitions of level preserving homeomorphism and level preserving isotopy can be found in \cite{waldhausen}.

 \defn{For a space $M$, a homeomorphism $f\colon M\times \II\to M\times \II$ is called {\em level preserving} if it is of the form $f(m,t)=(p_0\circ f(m,t),t)$, where $p_0$ is the projection to the first component of the product.}

\defn{\label{de:lpi}
For a space $M$, a {\em level preserving isotopy} of $M\times \II$ is an isotopy {of $M\times \II$} through level preserving maps. This means there exists a continuous map
$
H\colon (M\times \II)\times \II \to M\times \II
$,
such that:
\begin{itemize}
    \item for each fixed $s\in\II$, $(m,t)\mapsto H(m,t,s)$ is a level preserving homeomorphism from $M\times \II$ to $M\times \II$, 
    \item for all $m\in M$ and $t\in\II$, $H(m,t,0)=(m,t)$.
\end{itemize}
}

\rem{\label{rem:lpi_premot}Definition~\ref{de:lpi} says that a level preserving isotopy of $M\times \II$ is an element of $\premo{M\times \II}^{mov}$ (Definition~\ref{de:premotsmov}), and thus  {it is in canonical correspondence with a \premot{} of $M\times \II$}. 
We will  be using the existence of {level preserving isotopies} to construct 
{an equivalence relation between subsets of $M\times\II$}, 
rather then keeping track of the isotopy itself.
{Therefore we use the term “isotopy”, rather than “\premot{}”, since we} feel it is unhelpful to conflate the two notions.
}

\defn{Let $M$ be a space, and let $E,P\subseteq M\times \II$ be fixed subsets.
Two subsets $K,L\subseteq M\times \II$ are {\em level preserving ambient isotopic}, relative to $E$ setwise and $P$ pointwise, if
there is a level preserving isotopy of $M\times \II$, such that:
\begin{itemize}
    \item $H(K\times \{1\})=L$,
    \item for all $s\in\II$, $H(E\times \{s\})=E$, and
    \item for all $p\in P$, and $s\in \II$, $H(p,s)=p$.
\end{itemize}
{We will say that $H$ is a level preserving isotopy,  relative to $E$ setwise and $P$ pointwise, from, or connecting, $K$ to $L$.}
}

\begin{lemma}
For a space $M$, and subsets $K,L \subset M\times \II$, $K\sim L$ if there exists a level preserving isotopy  $M\times \II$ from $K$ to $L$, relative to $E$ setwise and $P$ pointwise, is an equivalence relation on subsets of $M\times \II$.
\end{lemma}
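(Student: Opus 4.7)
The plan is to verify the three axioms of an equivalence relation directly, by writing down explicit isotopies in each case and observing that the three side-conditions (level preservation, preserving $E$ setwise, fixing $P$ pointwise) are inherited by all the constructions. I would note up front the elementary fact that the level-preserving homeomorphisms of $M\times \II$ form a subgroup of $\TOPO^h(M\times\II, M\times\II)$: if $H_s(m,t)=(p_0H(m,t,s),t)$ then its inverse has the same form, and a composition of such maps is again of this form. This makes it essentially automatic that the three constructions below stay within the class of level-preserving isotopies.

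\textbf{Reflexivity.} Take $H(m,t,s)=(m,t)$. Every $H_s$ is the identity, which is level preserving, preserves $E$ setwise, and fixes $P$ pointwise; and $H_1(K)=K$.

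\textbf{Symmetry.} Given $H$ realising $K\sim L$, define
\[
H'(m,t,s) \; := \; H\bigl(H_1^{-1}(m,t),\, 1-s\bigr),
\]
where $H_1^{-1}$ makes sense because $H_1$ is a homeomorphism. Then $H'_0=\id_{M\times \II}$, and $H'_1=H_1^{-1}$, which sends $L$ to $K$. Each $H'_s=H_{1-s}\circ H_1^{-1}$ is level preserving (closure of the level-preserving subgroup under inversion and composition), sends $E$ to $E$ (since both $H_{1-s}$ and $H_1^{-1}$ do), and fixes $P$ pointwise (same reason).

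\textbf{Transitivity.} Given $H$ realising $K\sim L$ and $H'$ realising $L\sim J$, define
\[
H''(m,t,s) \; := \; \begin{cases} H(m,t,2s), & 0\leq s\leq 1/2,\\[2pt] H'\bigl(H_1(m,t),\, 2s-1\bigr), & 1/2\leq s\leq 1.\end{cases}
\]
The two branches agree at $s=1/2$ (both equal $H_1(m,t)$), so by the pasting lemma $H''$ is continuous. We have $H''_0=\id_{M\times \II}$, and $H''_1=H'_1\circ H_1$, which sends $K$ to $J$. Each slice $H''_s$ is a composition of level-preserving maps, so level preserving; setwise preservation of $E$ and pointwise fixing of $P$ are likewise preserved under composition, so they descend to $H''$.

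There is no real obstacle here: the only thing one must be slightly careful about is that the three side-conditions all happen to be stable under composition and inversion of level-preserving homeomorphisms, so once that is observed the three standard path-groupoid constructions (constant, reverse, concatenate) transfer verbatim to the setting of level-preserving ambient isotopies.
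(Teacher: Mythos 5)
Your proof is correct, and your reflexivity and symmetry arguments coincide with the paper's (the paper likewise inverts the time-one homeomorphism and runs the isotopy backwards, $(m,t,s)\mapsto H\bigl(H_1^{-1}(m,t),1-s\bigr)$). The only divergence is transitivity: the paper composes the two isotopies pointwise in the deformation parameter, $(m,t,s)\mapsto H_2\bigl(H_1(m,t,s),s\bigr)$ (the ``$\cdot$''-style composition, whose continuity it gets from the topological-group structure of $\TOPO^h$), whereas you concatenate them in $s$ (the ``$*$''-style composition, with continuity from the pasting lemma). The paper's own remark immediately after the proof notes both options work, so your variant is a legitimate and equally economical alternative; your explicit observation that level-preserving homeomorphisms fixing $P$ and preserving $E$ form a subgroup closed under composition and inversion is a clean way to package the side-condition checks that the paper leaves implicit.
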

\begin{proof}
Let $J,K,L\subseteq M\times \II$ be subsets, such that there are level preserving ambient isotopies relative to $E$ setwise and $P$ pointwise, $H_1\colon (M\times\II)\times \II\to M\times \II$, from $J$ to $K$, and $H_2\colon (M\times\II)\times \II\to M\times \II$, from $K$ to $L$. 
Reflexivity is proved by choosing the map $(m,t,s)\mapsto (m,t)$ which relates $K$ to itself.

{Let $f\colon M\times \II \to M\times \II$ be the inverse of the homeomorphism $(m,t) \mapsto H_1(m,t,1)$.
The map $(m,t,s)\mapsto H_1(f(m,t),1-s)$ then} relates $K$ to $J$, so we have symmetry. 

For transitivity, we may choose the map $(M\times \II)\times \II\to M\times \II$, $(m,t,s)\mapsto H_2(H_1(m,t,s),s)$, which  {relates} $J$ to $L$.
\end{proof}

\rem{By Remark~\ref{rem:lpi_premot} level preserving isotopies are \premot{}s, thus they can be composed as \premot{}s. In the previous proof, we essentially used the $\cdot$ composition to prove transitivity, we could have also used $*$ composition and used Lemma~\ref{le:comp_continuous_eachvariable} to prove continuity, as opposed to Theorem~\ref{le:top_group}. {On the other hand, symmetry was proven by using the reverse of flows in Proposition~\ref{pr:rev}.}}

Below we have two theorems proving that two \textit{a priori} different equivalence relations on worldlines of motions, each defined using level preserving ambient isotopies, both induce the same equivalence relation on motions as motion equivalence.
\begin{lemma}\label{le:hatH}
Let $M$ be a manifold. Suppose we have  motion equivalent motions  $\mot{f}{}{N}{N'}\simm \mot{f'}{}{N}{N'}$ in $M$, and so, by Theorem~\ref{th:mg2}, there exists a relative path homotopy, say $H\colon \II\times \II\to \TOPO^h(M,M)$, from $f$ to $f'$. Then there exists a continuous map: \begin{align*}
 \hat{H}\colon M\times \II\times\II &\to M\times \II \\
(m,t,s)& \mapsto\big((H(t,s)\circ f_t^{-1})(m),t\big).
\end{align*}
Moreover $\hat{H}$ is a level preserving ambient isotopy from the worldline $\W(\mot{f}{}{N}{N'})$ to the worldline $\W(\mot{f'}{}{N}{N'})$, relative to $M\times \{0\}$  pointwise, and $N'\times \{1\}$  setwise.
\end{lemma}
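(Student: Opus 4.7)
The plan is to verify in turn the four required properties: continuity of $\hat{H}$, the fact that each slice $\hat{H}(-,-,s)$ is a level preserving self-homeomorphism of $M\times \II$, the fact that $\hat{H}(-,-,0)=\id_{M\times \II}$ and the image of $\W(\mot{f}{}{N}{N'})$ at $s=1$ is $\W(\mot{f'}{}{N}{N'})$, and finally the relative conditions at $t=0$ and $t=1$.

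For continuity, I would factor $\hat{H}$ as the composition
\[
M\times \II \times \II \; \longrightarrow \; \TOPO^h(M,M)\times M \; \longrightarrow \; M\times \II,
\]
where the first map is $(m,t,s)\mapsto (H(t,s)\circ f_t^{-1},m)$ and the second is the evaluation paired with projection to $\II$. The map $(t,s)\mapsto H(t,s)\circ f_t^{-1}$ is continuous into $\TOPO^h(M,M)$ using Lemma~\ref{le:pw_inv} (which gives continuity of pointwise inverse) together with continuity of composition in the topological group $\TOPO^h(M,M)$ (Theorem~\ref{le:top_group}). Then the evaluation $\TOPO^h(M,M)\times M\to M$ is continuous because $M$, being a manifold, is locally compact Hausdorff, so Lemma~\ref{th:tensorhom} applies (identifying $\id_{\TOPO^h(M,M)}$ with an element of $\Topo(\TOPO^h(M,M)\times M,M)$).

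For each fixed $s\in \II$, the slice $\hat{H}(-,-,s)$ is level preserving by construction, and a homeomorphism of $M\times \II$ since it has a two-sided continuous inverse given by $(m,t)\mapsto\big((f_t\circ H(t,s)^{-1})(m),t\big)$ (continuity follows by the same argument as above). At $s=0$, since $H(t,0)=f_t$, we get $\hat{H}(m,t,0)=\big((f_t\circ f_t^{-1})(m),t\big)=(m,t)$, so $\hat{H}$ is an isotopy starting at the identity, i.e.\ level preserving ambient. At $s=1$, for any $(f_t(n),t)\in \W(\mot{f}{}{N}{N'})$ with $n\in N$ we have
\[
\hat{H}(f_t(n),t,1)=\big((H(t,1)\circ f_t^{-1})(f_t(n)),t\big)=\big(f'_t(n),t\big)\in \W(\mot{f'}{}{N}{N'}),
\]
and the bijection $H(t,1)\circ f_t^{-1}\colon f_t(N)\to f'_t(N)$ shows the image is exactly $\W(\mot{f'}{}{N}{N'})$.

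For the relative conditions: the defining equations $H(0,s)=\id_M$ and $f_0=\id_M$ of a relative path-homotopy give
\[
\hat{H}(m,0,s)=\big((\id_M\circ \id_M^{-1})(m),0\big)=(m,0),
\]
so $M\times\{0\}$ is fixed pointwise. For $N'\times\{1\}$: for $n'\in N'$, write $n'=f_1(n)$ with $n\in N$; since $H(1,s)(N)=N'$, we have $H(1,s)(n)\in N'$, hence $\hat{H}(n',1,s)=\big(H(1,s)(n),1\big)\in N'\times\{1\}$, showing that $N'\times\{1\}$ is preserved setwise (bijectively, since the restriction $H(1,s)\circ f_1^{-1}|_{N'}$ is a bijection $N'\to N'$). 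This completes all the checks.

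The main obstacle is really just bookkeeping: making sure continuity of $\hat{H}$ uses the correct locally-compact-Hausdorff hypothesis via the product-hom adjunction (Lemma~\ref{th:tensorhom}), since otherwise the pointwise formula looks like it evaluates a path in a function space, which is not automatically continuous in the base point variable.
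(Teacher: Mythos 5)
Your proof is correct and follows essentially the same route as the paper: continuity of $(t,s)\mapsto H(t,s)\circ f_t^{-1}$ via the topological group structure of $\TOPO^h(M,M)$ (Theorem~\ref{le:top_group}), continuity of $\hat{H}$ via the product-hom adjunction (Lemma~\ref{th:tensorhom}), and direct verification of the isotopy, worldline and relative conditions. The only difference is cosmetic: you invoke the adjunction through the continuity of the evaluation map $\TOPO^h(M,M)\times M\to M$, whereas the paper applies the bijection $\Phi$ directly to $(t,s)\mapsto H(t,s)\circ f_t^{-1}$; your checks of the slice homeomorphisms and the relative conditions are simply more explicit than the paper's ``immediate from construction''.
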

\begin{proof}
We first prove that $\hat{H}$ is continuous.
Define a map $H'\colon \II\times \II\to \TOPO^h(M,M)$ by $(t,s)\mapsto H(t,s)\circ f_t^{-1}$. This is continuous since, by Theorem~\ref{le:top_group}, $\TOPO^h(M,M)$, with composition, is a topological group.
Notice in particular, that for all $t\in\II$, $H'(t,0)=\id_M$.
 Using the bijection $\Phi$, following from the product-hom adjunction (Lemma~\ref{th:tensorhom}), with $X=\II\times \II$, $Y=M$ and $Z=M$, we have a continuous map $\Phi(H')\colon M\times \II \times \II \to M$. The map $\hat{H}\colon M\times \II \times \II \to M\times \II$ is explicitly defined by: \[(m,t,s)\mapsto (\Phi(H')(m,t,s),t)=((H(t,s)\circ f_t^{-1})(m),t),\] which is continuous since each component is continuous.

From its construction, it is immediate that $\hat{H}\colon M\times \II \times \II \to M\times \II$   is a level preserving isotopy, and it follows directly from the definition of a relative path homotopy (Definition \ref{de:rp equiv}) that $\hat{H}$ fixes $M\times \{0\}$ pointwise, and $N'\times \{1\}$ setwise.

Now note that:
\begin{align*}
   \hat{H}\big(\W (\mot{f}{}{N}{N'})\times \{1\}\big)
   &=\hat{H}\bigg(\bigg(\bigcup_{t \in [0,1]} f_t(N) \times \{t\}\bigg)\times \{1\} \bigg)\\
&=\bigcup_{t \in [0,1]} H(t,1)(N) \times \{t\} \\
&=\bigcup_{t \in [0,1]} f'_t(N) \times \{t\} \\
&=\W(f'\colon N \too N'\big). \qedhere
\end{align*}
\end{proof}
Conversely we have the following.
\begin{lemma}\label{le:levelpreserving_implies_motionequiv}
 Let $M$ be a manifold and let $\mot{f,f'}{}{N}{N'}$ be motions in $M$. 
Suppose that their worldlines $\W(f\colon N \too N'\big)$ and $\W(f'\colon N \too N'\big)$ are level preserving ambient isotopic, relative to $M\times \{0\}$ pointwise and $N'\times \{1\}$ setwise. Then $\mot{f}{}{N}{N'}\simm \mot{f'}{}{N}{N'}$.
\end{lemma}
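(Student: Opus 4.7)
\emph{Proof plan.} The strategy is to convert the ambient isotopy $H$ of the worldline into a relative path-homotopy of motions. A level preserving homeomorphism of $M\times \II$ has, for each time $t$, a ``slice'' homeomorphism of $M$, and the whole isotopy gives a two-parameter family of homeomorphisms of $M$. Using the product-hom adjunction (Lemma~\ref{th:tensorhom}), together with the fact that the manifold $M$ is locally compact Hausdorff, from the given $H\colon (M\times \II)\times\II \to M\times \II$ I would extract a continuous map $h\colon \II \times \II \to \TOPO^h(M,M)$ with $H(m,t,s)=(h(t,s)(m),t)$; continuity into $\TOPO^h(M,M)$ follows from the adjunction (the image lies in $\TOPO^h(M,M)$ because each slice of a level preserving self-homeomorphism of $M\times\II$ is a self-homeomorphism of $M$).

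Next, I would translate the hypotheses on $H$ into properties of $h$: the condition $H(m,t,0)=(m,t)$ becomes $h(t,0)=\id_M$ for all $t$; the pointwise fixing of $M\times\{0\}$ becomes $h(0,s)=\id_M$ for all $s$; the setwise fixing of $N'\times\{1\}$ becomes $h(1,s)(N')=N'$ for all $s$; and the condition $H(\W(f)\times\{1\})=\W(f')$, combined with level preservation, becomes $h(t,1)(f_t(N))=f_t'(N)$ for all $t$.

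Now I would introduce the auxiliary motion $g\colon N\too N'$ defined by $g_t=h(t,1)\circ f_t$ (this is a motion since $g_0=\id_M$, $g_1(N)=h(1,1)(N')=N'$, and continuity into $\TOPO^h(M,M)$ follows from Theorem~\ref{le:top_group}). By construction $\W(g\colon N\too N')=\W(f'\colon N\too N')$, so Proposition~\ref{prop:dahm} gives $g\simm f'$.

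Finally, I would exhibit a relative path-homotopy from $f$ to $g$ by setting
\[
H'\colon \II\times \II \to \TOPO^h(M,M),\qquad H'(t,s)=h(t,s)\circ f_t.
\]
Continuity follows from the topological group structure on $\TOPO^h(M,M)$. The four required properties (see Definition~\ref{de:rp equiv}) are immediate from the properties of $h$ listed above: $H'(t,0)=f_t$, $H'(t,1)=g_t$, $H'(0,s)=\id_M$, and $H'(1,s)(N)=h(1,s)(N')=N'$. Hence $f\simrp g$, so by Theorem~\ref{th:mg2}, $f\simm g$, and composing the two motion-equivalences gives $f\simm f'$. The only step requiring care is the first one, namely verifying that the slicewise decomposition of $H$ produces a jointly continuous map with values in $\TOPO^h(M,M)$ equipped with the compact-open topology; all subsequent steps are direct verifications.
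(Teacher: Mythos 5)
Your proposal is correct and follows essentially the same route as the paper: both extract a continuous map $\II\times\II\to\TOPO^h(M,M)$ from the level preserving isotopy via the product-hom adjunction, use it as a relative path-homotopy from $f$ to an auxiliary motion $g$ whose worldline equals $\W(f')$, and conclude via Theorem~\ref{th:mg2} and Proposition~\ref{prop:dahm}. The only (cosmetic) difference is that you first split off the slice map $h$ and then compose with $f_t$ using the topological group structure, whereas the paper applies the adjunction once to the already-composed map $(m,t,s)\mapsto p_0\big(H(f_t(m),t,s)\big)$.
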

\begin{proof}
There exists a level preserving isotopy, say $H\colon M\times \II\times \II \to M\times \II$, from $\W(f\colon N \too N'\big)$ to $\W(f'\colon N \too N'\big)$,
relative to $M\times \{0\}$ pointwise and $N'\times \{1\}$ setwise. 
We construct a motion $\mot{g}{}{N}{N'}$, and a relative path homotopy $\hat{H}$ from $\mot{f}{}{N}{N'}$ to $\mot{g}{}{N}{N'}$, and then show that
$\W(\mot{g}{}{N}{N'})= \W(\mot{f'}{}{N}{N'})$.
It follows (by Proposition \ref{prop:dahm}) that the motions $f'\colon N\too N'$ and $g\colon N \too N'$ are motion equivalent, and hence $\mot{f}{}{N}{N'}\simm\mot{f'}{}{N}{N'}$.

There is a continuous map from $M\times \II\times \II$ to $M$ given by $(m,t,s)\mapsto p_0\big( H(f_t(m),t,s)\big)$, where $p_0$ is the projection to the first coordinate.
Applying the bijection $\Phi^{-1}$, as in the proof of the product-hom adjunction (Lemma~\ref{th:tensorhom}), 
 we obtain a continuous map $\hat{H}\colon \II\times \II \to \TOPO^h(M,M)$, defined by $(t,s)\mapsto \big (m\mapsto p_0 (H(f_t(m),t,s)) \big)$.
Notice that $\hat{H}(t,0)=f_t$, because $H(m,t,0)=(m,t)$.

Let $g_t=\hat{H}(t,1)$. Then by construction $t \mapsto g_t$ is a path in  $\TOPO^h(M,M)$, starting at $\id_M$. {(It starts at $\id_M$ given that $H$ is relative to $M\times \{0\}$, pointwise.)} Also 
\begin{align*}
g_1(N)&= \{ p_0(H(f_1(n),1,1)) \mid n \in N\} \\
&= \{ p_0(H(n',1,1)) \mid n' \in N'\}\\&=N'.
\end{align*}
{(Where we now used the fact that $H$ is relative to $N'\times \{1\}$, setwise.)}
So we have a motion
$\mot{g}{}{N}{N'}$. Moreover, using {again} that $H(N'\times \{1\} \times \{s\})=N'\times \{1\}$,  for all $s \in \II$, the function $\hat{H}$ is a relative path homotopy from $\mot{f}{}{N}{N'}$ to $\mot{g}{}{N}{N'}$, and hence, using Theorem \ref{th:mg2}, $\mot{f}{}{N}{N'}\simm \mot{g}{}{N}{N'}$.

The worldline of $g$ is the same as the worldline of $f'$, since we have
\begin{align*}
\W(g\colon N \too N')&=\bigcup_{t \in [0,1]} g_t(N) \times \{t\}\\
&=\bigcup_{t \in [0,1]} \hat{H}(t,1)(N) \times \{t\} \\
&=\bigcup_{t \in [0,1]} \bigg(p_0 \big(H( f_t(N)\times \{t\}\times \{1\})\big)\bigg) \times \{t\} \\
&=\bigcup_{t \in [0,1]} \big(p_0 ( f'_t(N)\times \{t\})\big) \times \{t\} \\
&=\bigcup_{t \in [0,1]} f'_t(N) \times \{t\} \\&=
\W(f'\colon N \too N'). \qedhere
\end{align*}
\end{proof}

\begin{theorem}\label{th:lpai_setwise}
 Let $M$ be a manifold. Let $A$ be a subset of $M$. Two $A$-fixing motions $\mot{f,f'}{}{N}{N'}$ in $M$ are motion equivalent if, and only if, their worldlines are level preserving ambient isotopic, relative to $(M\times \{0\})\cup (A\times \II)$ pointwise, and $N'\times \{1\}$ setwise.
\end{theorem}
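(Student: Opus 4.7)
The plan is to adapt the arguments of Lemmas~\ref{le:hatH} and \ref{le:levelpreserving_implies_motionequiv}, which together establish the case $A=\emptyset$, and to check at each step that the $A$-fixing hypothesis propagates through the constructions so as to yield (or to arise from) the additional pointwise fixing of $A\times \II$.

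For the forward direction, suppose $\mot{f}{A}{N}{N'}\simm \mot{f'}{A}{N}{N'}$. By the $A$-fixing version of Theorem~\ref{th:mg2} (which, as noted in its proof, works identically with paths in $\TOPO^h_A(M,M)$), there exists a relative path homotopy $H\colon \II\times \II \to \TOPO^h_A(M,M)$ from $f$ to $f'$; in particular $H(t,s)$ fixes $A$ pointwise for every $(t,s)$. I would then form $\hat H(m,t,s)=\big((H(t,s)\circ f_t^{-1})(m),t\big)$ exactly as in Lemma~\ref{le:hatH}; continuity and the level-preserving isotopy property go through verbatim, and $\hat H$ takes $\W(f\colon N\too N')$ to $\W(f'\colon N\too N')$ by the same computation. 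The new observation is that, since $f_t$ and $H(t,s)$ both fix $A$ pointwise, so does $H(t,s)\circ f_t^{-1}$, and hence $\hat H(a,t,s)=(a,t)$ for all $a\in A$ and all $t,s\in \II$. This gives the required pointwise fixing of $A\times \II$ in addition to $M\times \{0\}$, and preserves $N'\times \{1\}$ setwise as before.

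For the converse, suppose the worldlines are level preserving ambient isotopic via some $H\colon M\times \II\times \II\to M\times \II$ satisfying all three conditions. Define $g_t(m)=p_0\big(H(f_t(m),t,1)\big)$ and $\hat H(t,s)(m)=p_0\big(H(f_t(m),t,s)\big)$, as in Lemma~\ref{le:levelpreserving_implies_motionequiv}. The arguments there give a motion $\mot{g}{}{N}{N'}$ and a relative path homotopy $\hat H$ from $f$ to $g$, and moreover $\W(g\colon N\too N')=\W(f'\colon N\too N')$. The additional ingredient I need is that the $A$-fixing hypothesis on $f$, together with the assumption that $H$ fixes $A\times \II$ pointwise, forces $f_t(a)=a$ and then $H(a,t,s)=(a,s)$ for $a\in A$; hence $\hat H(t,s)(a)=a$ and $g_t(a)=a$. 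Thus $g$ is $A$-fixing and $\hat H$ is a relative path homotopy through $A$-fixing motions, so the $A$-fixing version of Theorem~\ref{th:mg2} gives $\mot{f}{A}{N}{N'}\simm \mot{g}{A}{N}{N'}$.

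Finally, to conclude $\mot{g}{A}{N}{N'}\simm \mot{f'}{A}{N}{N'}$ from the equality of their worldlines, I would appeal to Proposition~\ref{prop:dahm}, once I check that its proof stays within the $A$-fixing world: inspecting the proof, the witness to motion equivalence is that $g^{-1}\cdot f'$ is $N$-stationary, and $g^{-1}\cdot f'$ is $A$-fixing whenever $g$ and $f'$ are. This is the only real subtlety, and it is a routine check. Combining the two motion equivalences gives $\mot{f}{A}{N}{N'}\simm \mot{f'}{A}{N}{N'}$. The main obstacle is purely bookkeeping --- verifying that every auxiliary map introduced in Lemmas~\ref{le:hatH} and \ref{le:levelpreserving_implies_motionequiv} preserves $A$ pointwise --- rather than any new topological input.
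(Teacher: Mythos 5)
Your proposal is correct and follows essentially the same route as the paper: the paper's proof likewise derives both directions from Lemmas~\ref{le:hatH} and \ref{le:levelpreserving_implies_motionequiv}, noting that $\hat H$ fixes $A\times\II$ pointwise when $H$ is $A$-fixing and that the constructed relative path homotopy (and the appeal to Proposition~\ref{prop:dahm}) stays in the $A$-fixing setting; you simply spell out these ``straightforward to check'' steps. (Minor slip: for $a\in A$ the hypothesis gives $H(a,t,s)=(a,t)$, not $(a,s)$, but your subsequent use of $p_0$ is unaffected.)
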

\begin{proof}
Lemma~\ref{le:hatH} gives that motion equivalence implies a level preserving ambient isotopy between the worldlines 
{relative to}
$M\times \{0\}$ pointwise, and $N'\times \{1\}$ setwise. It is straightforward to check that $\hat {H}$, as in the previous proof, is relative to $A\times\II$ pointwise if $H$ fixes $A$.

Lemma~\ref{le:levelpreserving_implies_motionequiv} gives the reverse implication, and again it is straightforward to check that the constructed relative path homotopy is $A$-fixing if the level preserving isotopy $H$ fixes $A\times \II$.
\end{proof}

The previous result can be refined.
\begin{theorem}\label{th:me_bflpai}
Let $M$ be a manifold. Let $A$ be a subset of $M$. Two $A$-fixing motions $\mot{f,f'}{}{N}{N'}$ are motion equivalent if, and only if, their worldlines are level preserving ambient isotopic, relative to $(M\times (\{0,1\}))\cup (A\times \II)$, pointwise.
\end{theorem}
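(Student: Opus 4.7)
A level preserving ambient isotopy relative to $(M\times\{0,1\})\cup(A\times\II)$ pointwise is a fortiori such relative to $(M\times\{0\})\cup(A\times\II)$ pointwise and $N'\times\{1\}$ setwise, so Theorem~\ref{th:lpai_setwise} immediately yields $f\simm f'$. For the converse, the obstruction is that the isotopy $\hat H$ produced by Lemma~\ref{le:hatH} acts at the top level as the path $\alpha_s:=H(1,s)\circ f_1^{-1}\in\Hom(N',N')$, which at $s=1$ equals $f_1'\circ f_1^{-1}$ and is generically not $\id_M$ (since $f_1$ and $f_1'$ need only agree on $N$). So the setwise-to-pointwise upgrade at $M\times\{1\}$ cannot be done on $\hat H$ directly.

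\textbf{Reparametrization.} My plan is to pre-conjugate by reparametrizations that make each motion stationary on $[1/2,1]$, creating cylindrical room at the top of each worldline in which to cancel $\alpha_s$. Set $f''_t:=f_{\min(2t,1)}$ and $f'''_t:=f'_{\min(2t,1)}$, so $\W(f'')$ and $\W(f''')$ both equal $N'\times[1/2,1]$ for $t\in[1/2,1]$. The formula
\[
\Psi_s(m,t) := \bigl(f_{\tau(t,s)}\circ f_t^{-1}(m),\,t\bigr),\qquad \tau(t,s):=(1-s)t+s\min(2t,1),
\]
defines an $A$-fixing level preserving ambient isotopy from $\W(f)$ to $\W(f'')$ with $\Psi_s^0=\Psi_s^1=\id_M$ (because $\tau(0,s)=0$ and $\tau(1,s)=1$), and an analogous $\Psi'$ goes from $\W(f')$ to $\W(f''')$. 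Applying Theorem~\ref{th:lpai_setwise} to $\Psi$ and $\Psi'$ yields $f\simm f''$ and $f'\simm f'''$, and hence $f''\simm f'''$ by transitivity.

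\textbf{Middle isotopy, its correction, and concatenation.} Invoke Theorem~\ref{th:lpai_setwise} once more on $f''\simm f'''$ to obtain $\Phi\colon\W(f'')\to\W(f''')$ fixing $(M\times\{0\})\cup(A\times\II)$ pointwise and $N'\times\{1\}$ setwise. Writing $\alpha_s:=\Phi_s^1\in\Hom(N',N')$, define the corrective isotopy
\[
\beta_s(m,t) := \bigl(\alpha_{\rho(t)s}^{-1}(m),\,t\bigr),\qquad \rho(t):=\max(2t-1,0),
\]
and set $\tilde\Phi_s:=\beta_s\circ\Phi_s$. Since $\rho$ vanishes on $[0,1/2]$, $\tilde\Phi_s^t=\Phi_s^t$ there, so the worldline is sent correctly at levels $t\le 1/2$; at $t=1$, $\tilde\Phi_s^1=\alpha_s^{-1}\circ\alpha_s=\id_M$; and for $t\in(1/2,1)$ the crucial check is $\tilde\Phi_1^t(N')=\alpha_{2t-1}^{-1}(\Phi_1^t(N'))=\alpha_{2t-1}^{-1}(N')=N'$, where the penultimate equality uses $\Phi_1^t(N')=N'$ (both worldlines equal $N'$ over this range) and the last uses $\alpha_s\in\Hom(N',N')$. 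The $A$-fixing passes through since $\alpha_s$ inherits it from $\Phi$. Concatenating $\Psi$, $\tilde\Phi$, and the reverse of $\Psi'$ (using the standard $*$ composition of level preserving isotopies) yields the desired isotopy from $\W(f)$ to $\W(f')$ relative to $(M\times\{0,1\})\cup(A\times\II)$ pointwise. The main obstacle is the design of $\beta$: the cancellation of $\alpha_s$ on $M\times\{1\}$ must not destroy agreement with the target worldline at lower levels, and this is precisely what the cylindrical top halves created by reparametrization permit---the support $\rho=0$ on $[0,1/2]$ leaves the $\Phi$-portion untouched there, while on $[1/2,1]$ the correction only has to permute the fixed set $N'$, where $\alpha_s$ already acts.
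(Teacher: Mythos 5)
Your proof is correct and follows essentially the same route as the paper's: both directions hinge on padding each motion with a stationary top half (your $f''=\Id_M*f$, $f'''=\Id_M*f'$) so that the worldlines end in the cylinder $N'\times[\tfrac12,1]$, over which the top-level homeomorphism can be cancelled before concatenating three level preserving isotopies. The only cosmetic difference is that the paper rewinds the homotopy parameter in its explicit middle isotopy $\hat{J}$, whereas you post-compose an abstract isotopy $\Phi$ obtained from Theorem~\ref{th:lpai_setwise} with the correction $\beta_s$ built from $\alpha_s^{-1}$, which is legitimate since inversion is continuous in $\TOPO^h(M,M)$ and $\alpha_s$ preserves $N'$ and fixes $A$.
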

\begin{proof}
Suppose  that the worldlines $\W(f\colon N \too N'\big)$ and $\W(f'\colon N \too N'\big)$ are level preserving ambient isotopic, relative to $(M\times (\{0,1\}))\cup (A\times \II)$, pointwise.
Then, clearly, using the same isotopy, they are level preserving ambient isotopic, relative to $(M\times \{0\})\cup (A\times\II)$, pointwise, and to $N'\times \{1\}$, setwise. Using Theorem~\ref{th:lpai_setwise}, we see that $\mot{f}{}{N}{N'}\simm\mot{f'}{}{N}{N'}$.

Now suppose $\mot{f}{}{N}{N'}\simm \mot{f'}{}{N}{N'}$. By Lemma~\ref{le:grpd_motstar}, there is a path homotopy $F\colon \II\times \II \to \TOPO^h(M,M)$ from $f$ to $\Id_M*f$. 
Since $F$ is a path homotopy, it has fixed endpoints and so, for all $s\in\II$, we have $F(0,s)=\id_M$ and $F(1,s)=f_1$. Therefore, $\hat{F}$,
constructed as in Lemma~\ref{le:hatH}:
\begin{align*}
 \hat{F}\colon M\times \II\times\II &\to M\times \II \\
(m,t,s)& \mapsto\big((F(t,s)\circ f_t^{-1})(m),t\big),
\end{align*}
is a level preserving ambient isotopy from $\W(\mot{f}{}{N}{N'})$ to $\W(\mot{\Id_M*f}{}{N}{N'})$, which 
{is relative to}
$(M\times (\{0,1\}))\cup (A\times \II)$ pointwise.
Similarly there is a path homotopy $F'$ from $\Id_M*f' $ to $f'$. The same construction gives a level preserving
ambient isotopy $\hat{F'}$ from $\W(\mot{\Id_M*f'}{}{N}{N'})$ to $\W(\mot{f'}{}{N}{N'})$ which 
{is relative to}
$(M\times (\{0,1\}))\cup (A\times \II)$ pointwise.

It remains to construct a level preserving
ambient isotopy from $\W(\mot{\Id_M*f}{}{N}{N'})$ to $\W(\mot{\Id_M*f'}{}{N}{N'})$, which 
{is relative to}
$(M\times (\{0,1\}))\cup (A\times \II)$ pointwise.
Since $\mot{f}{}{N}{N'}\simm \mot{f'}{}{N}{N'}$, there exists a relative path homotopy $H\colon \II\times \II \to \TOPO^h(M,M)$ from $f$ to $f'$.
Thus there exists a level preserving isotopy $\hat{H}\colon M\times \II\times \II \to M\times \II$ from $\W(\mot{f}{}{N}{N'})$ to $\W(\mot{f'}{}{N}{N'})$, 
{relative to}
$(M\times \{0\})\cup(A\times \II)$ pointwise and $N'\times \{1\}$ setwise, constructed as in Lemma~\ref{le:hatH}.

We now consider the function $\hat{J}\colon M\times \II\times \II\to M\times \II$, defined as:
\[
\hat{J}(m,t,s)=
\begin{cases}
\big(p_0( \hat{H}(m,2t,s)),t\big), & 0\leq t \leq \frac{1}{2} \\
\bigg(p_0\Big( \hat{H}\big(m,1,s(1-2(t-\frac{1}{2}))\big)\Big),t\bigg), &\frac{1}{2}\leq t \leq 1.
\end{cases}
\]
Let us see that $\hat{J}$ is a level preserving isotopy of $M\times \II$, 
{relative to}
$M\times \{0,1\}$, pointwise.
By construction, it follows that, for each $s\in\II$, we have a level preserving homeomorphism  $M\times \II \to M\times \II$, sending $(m,t)$ to $\hat{J}(m,t,s)$, and also that $J(m,t,0)=(m,t).$
That $\hat{J}$ 
{is relative to}
$M\times \{0\}$ pointwise, follows directly from the fact that $\hat{H}$ is so. Moreover
for all $m\in M$, $s\in\II$, $\hat{J}(m,1,s)=(m,1)$ since 
$
\hat{H}(m,1,0)=(m,1)$.
That $\hat{J}$ 
{is relative to}
$A\times \II$ pointwise follows directly from the fact that $H$ fixes $A$.

We now show that $\hat{J}$ is a level preserving ambient isotopy from $\W(\mot{\Id_M*f}{}{N}{N'})$ to $\W(\mot{\Id_M*f'}{}{N}{N'})$.
First notice that for all $t\in [1/2,1]$ and $s\in \II$, $\hat{J}(N'\times\{t\}\times \{s\})=N'\times \{t\}$, since $\hat{H}(N'\times \{1\}\times \{s\})=N'\times \{1\}$.
We have \phantom{\qedhere}
\begin{align*}
   \hat{J}\big(\W (\mot{\Id_M*f}{}{N}{N'})\times \{1\}\big)
   &=\hat{J}\bigg(\Big( \Big (\bigcup_{t \in [0,1/2]} f_{2t}(N) \times \{t\}\Big) \cup \Big(\bigcup_{t \in [1/2,1]} N' \times \{t\}\Big)\Big)\times \{1\}\bigg) \\
   &=\hat{H}\bigg(\Big(\bigcup_{t \in [0,1/2]} f_{2t}(N) \times \{t\}\Big)\times \{1\}\bigg)\cup \hat{J}\bigg(\Big(\bigcup_{t \in [1/2,1]} N' \times \{t\}\Big)\times \{1\}\bigg) \\
   &=\Big(\bigcup_{t \in [0,1/2]} f'_{2t}(N) \times \{t\}\Big)\cup \Big(\bigcup_{t \in [1/2,1]} N' \times \{t\}\Big)\\
&=\W\big(\Id_M*f'\colon N \too N'\big). &&  \hspace{-1em}{\qedsymbol}
\end{align*}
\end{proof}
\begin{remark} Suppose that $f,g\colon N \too N'$ are motion equivalent, fixing $A$. In the proof of Theorem~\ref{th:me_bflpai}, we constructed a level preserving isotopy from $\W(g\colon N \too N'\big)$ to $\W(f\colon N \too N'\big)$, relative to $M\times \{0,1\} \cup A\times \II$, pointwise, by concatenating three level preserving isotopies. 

There are other ways to construct such a  level preserving isotopy from $\W(g\colon N \too N'\big)$ to $\W(f\colon N \too N'\big)$. Let us  show another construction.

Since $f,g\colon N \too N'$ are motion equivalent, fixing $A$, then $f^{-1}\cdot g\colon N \too N$ is path-homotopic, in $\TOPO_A^h(M,M)$, see \S \ref{ss:Afixmot}, to an $N$-stationary motion $\eta\colon N \too N$, fixing $A$; we are using the notation in Lemma~ \ref{le:dot_premot_comp}. (Recall also that $(f^{-1})_t=f_t^{-1}.$) Applying Lemma~\ref{le:grpd_motstar}, it follows that there exists a path-homotopy from $\Id\colon N \too N$ to $f\cdot \eta \cdot g^{-1}\colon N \to N$, fixing $A$. Call it $H\colon \II\times \II \to \TOPO^h_A(M,M).$ Then the function below is a level preserving isotopy from $\W(g\colon N \too N'\big)$ to $\W(f\colon N \too N'\big) $, relative to $M\times \{0,1\} \cup A\times \II$, pointwise:
\[
J(m,t,s)=\big(H(t,s)(m),t).
\]
\end{remark}

 The previous theorem allows us to give a proof of the (well-known) result that the automorphism group of a point in the motion groupoid $\R^n$ is trivial.

\begin{example}
Consider $M=V$, where $V$ is a finite dimensional space, with the unique Hausdorff topology that makes it a topological vector space. (In the case when $V=\R^n$, then this is just $\R^n$,  with the usual topology.) Note that $V$ is a manifold.
Let $v \in V$. Let us determine $\Mot[V](\{v\},\{v\})$. 

Let $f\colon \{v\} \too \{v\}$ be any motion. 
Let us prove that $f$ is motion equivalent to the identity motion $\Id_V\colon \{v\} \too \{v\}.$
For each $t \in  [0,1]$, let $v_t=v-f_t(v)$. Hence $v_0=v_1=\vec{0}$, the zero vector in $V$. Also the mapping $t \in \II  \mapsto v_t \in V$ is continuous.
Consider the mapping 
\begin{align*}
H\colon V \times \II \times \II &\longrightarrow V\times \II\\
(w,t,s) &\longmapsto (w+sv_t,t).
\end{align*}
Then clearly $H$ is a level preserving isotopy of $V\times I$, which 
{is relative to}
$V\times \{0,1\}$, pointwise. Also clearly:
\[H\big (\W(f\colon \{v\} \too \{v\}) \times \{1\}\big)=\W(\Id\colon \{v\} \too \{v\}).\]
So, by Theorem~\ref{th:me_bflpai}, $f\colon \{v\} \too \{v\}$  is motion equivalent to the identity motion $\Id\colon \{v\} \too \{v\}.$

The same argument, with the obvious modifications, proves that that the full subcategory of $\Mot[V]$, with objects the singletons has a unique arrow between each ordered pair of objects.
\end{example}

\begin{defin}\label{de:wequivalence}
Let $M$ be a manifold, $A$ a subset of $M$, and $N,N'\subseteq M$ be subsets.
Define a relation on 
$\Mtcmag(N,N')$
by $\mot{f}{}{N}{N'}\sim \mot{f'}{}{N}{N'}$
if $\W(\mot{f}{}{N}{N'})=\W(\mot{f'}{}{N}{N'})$, or if 
$f\simp f'$ (here $f\simp f'$ means that $f$ and $f'$ are path-homotopic, as maps $\II\to \TOPO^h_A(M,M)$).
Let $\simww$ be the equivalence (i.e. transitive, symmetric and reflexive) closure
of this relation. Denote the class of a motion $\mot{f}{}{N}{N'}$ under this relation by $\classww{\mot{f}{}{N}{N'}}$.
\end{defin}

\begin{theorem}\label{th:motequiv_worldlineequiv}
    Let $M$ be a manifold and $N,N'\subseteq M$ be subsets, and $\mot{f}{}{N}{N'}$ be an $A$-fixing motion for some $A\subset M$. Then 
    \[
    \classm{\mot{f}{}{N}{N'}}=\classww{\mot{f}{}{N}{N'}}.
    \]
\end{theorem}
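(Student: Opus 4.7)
The plan is to prove the two set-theoretic inclusions. The inclusion $\classww{\mot{f}{}{N}{N'}} \subseteq \classm{\mot{f}{}{N}{N'}}$ is essentially immediate: the two generating relations of $\simww$ each individually imply motion equivalence. Path-homotopic motions are motion equivalent by Proposition~\ref{pr:p_implies_rp} combined with Lemma~\ref{le:rp_implies_m} (or equivalently Theorem~\ref{th:mg2}), and motions with the same worldline are motion equivalent by Proposition~\ref{prop:dahm}. Since $\simm$ is an equivalence relation, it must contain the equivalence closure $\simww$ of the union of these two relations.

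For the reverse inclusion, suppose $\mot{f}{}{N}{N'} \simm \mot{f'}{}{N}{N'}$, where both are $A$-fixing. By Theorem~\ref{th:me_bflpai}, there exists a level preserving ambient isotopy $J \colon M \times \II \times \II \to M \times \II$ from $\W(\mot{f}{}{N}{N'})$ to $\W(\mot{f'}{}{N}{N'})$, relative to $(M\times\{0,1\})\cup(A\times \II)$ pointwise. Mimicking the construction used in Lemma~\ref{le:levelpreserving_implies_motionequiv}, I define $\hat J\colon \II \times \II \to \TOPO^h(M,M)$ by $\hat J(t,s)(m)=p_0(J(f_t(m),t,s))$, and set $g_t = \hat J(t,1)$. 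The same verification as in that lemma shows that $g\colon N\too N'$ is a motion with $\W(\mot{g}{}{N}{N'}) = \W(\mot{f'}{}{N}{N'})$.

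The main point, and what makes the construction work in the present setting, is that because $J$ now fixes $M\times \{0,1\}$ pointwise (rather than only $M\times \{0\}$ pointwise as in Lemma~\ref{le:levelpreserving_implies_motionequiv}), we obtain for all $s\in \II$
\[
\hat J(0,s)(m) = p_0(J(m,0,s)) = m, \qquad \hat J(1,s)(m) = p_0(J(f_1(m),1,s)) = f_1(m).
\]
Thus $\hat J$ is not merely a relative path homotopy but a genuine path homotopy from $f$ to $g$; it is also $A$-fixing, since $\hat J(t,s)(a) = p_0(J(a,t,s)) = a$ for $a\in A$. Hence $f\simp g$ as paths in $\TOPO_A^h(M,M)$.

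Combining the two steps, $f\simp g$ gives $\mot{f}{}{N}{N'}\simww\mot{g}{}{N}{N'}$ directly from the path-homotopy generator, and $\W(\mot{g}{}{N}{N'}) = \W(\mot{f'}{}{N}{N'})$ gives $\mot{g}{}{N}{N'}\simww \mot{f'}{}{N}{N'}$ from the worldline-equality generator. By transitivity of $\simww$, we conclude $\mot{f}{}{N}{N'}\simww\mot{f'}{}{N}{N'}$, completing the inclusion $\classm{\cdot}\subseteq \classww{\cdot}$. The only real obstacle is seeing that Theorem~\ref{th:me_bflpai}'s stronger pointwise-fixing condition upgrades the construction of Lemma~\ref{le:levelpreserving_implies_motionequiv} from a relative path homotopy to an honest path homotopy, but this follows by direct inspection of the formula for $\hat J$ at $t=0$ and $t=1$.
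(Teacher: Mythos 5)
Your proof is correct and follows essentially the same route as the paper: the easy inclusion comes from Proposition~\ref{prop:dahm} together with the fact that path-equivalence implies motion-equivalence, and the reverse inclusion applies Theorem~\ref{th:me_bflpai} and reruns the construction of Lemma~\ref{le:levelpreserving_implies_motionequiv}, noting that the pointwise fixing of $M\times\{0,1\}$ (and $A\times\II$) upgrades the resulting homotopy to a genuine $A$-fixing path homotopy from $f$ to a motion $g$ with the same worldline as $f'$. Your explicit endpoint check at $t=0,1$ is exactly the point the paper leaves implicit, so the two arguments coincide.
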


\begin{proof}
By Proposition~\ref{prop:dahm}, motions which have the same worldline are motion equivalent, and path equivalent motions are motion equivalent (Proposition~\ref{pr:pe_implies_me}). Each of these results works in the same way in the $A$-fixing case.
Hence $\simww$ implies $\simm$.

Suppose two motions $\mot{f,f'}{}{N}{N'}$ are motion equivalent, fixing $A$. 
Then, by Theorem~\ref{th:me_bflpai}, there exists a level preserving ambient isotopy $H$ 
relative to
$(M\times (\{0,1\}))\cup(A\times \II)$ from $\W(\mot{f}{}{N}{N'})$ to $\W(\mot{f'}{}{N}{N'})$.
From such an $H$, the map $\hat{H}\colon \II\times \II \to \TOPO^h(M,M)$ constructed as in the proof of Lemma~\ref{le:levelpreserving_implies_motionequiv} is a path homotopy from the motion $f$ to a motion $g$, with $\W(\mot{g}{}{N}{N'})=\W(\mot{f'}{}{N}{N'})$.
Thus $f\simww g\simww f'$.
\end{proof}
\begin{remark} Note that the main idea of the second step of the proof of the previous theorem is that if $f,f'\colon N \too N'$ are motion equivalent, fixing $A$, then $f$ is path-homotopic to $g$, where $g$ has the same 
worldline as $f'$. Again, there are multiple ways to construct such 
a path-homotopy. 
Alternatively to what we did in the proof, we can for instance use the following construction. We know that $f^{-1} \cdot f'\colon N \too N$ is path-homotopic to $\eta\colon N \too N$, $N$-stationary. So we have a path-homotopy $H\colon \II\times\II \to \TOPO^h_A(M,M)$ from $f\colon N \too N'$ to $f'\cdot \eta^{-1} \colon N \too N'$. Now note that the later motion has the same worldline as $f'$.
\end{remark}

\section{The groupoid \texorpdfstring{$\FMot{M}^A$}{}: Connections to Artin braids}
\label{sec:braids}

Recall from \S\ref{sec:Intro} that we define an engine as a construction that takes a manifold (possibly with a subset) as input and produces a power-set magmoid,
and ultimately a groupoid, as output.
More specifically we have in mind that the group obtained 
when the engine is
restricted all the way down to fixed finite subsets of $\R^2$, 
is some version of the braid group.
In this section we begin by giving 
an engine that, 
when restricted to fixed finite subsets of $\R^2$,
coincides with the definition of Artin braids \cite{artin}.
Our definition also generalises that of \cite[Sec.II]{dahm}, who defined braids similarly to Artin,  albeit in general topological manifolds.

We call our morphisms here, prior to applying any equivalence, 
`\fakemotion{}s'.
The quotient 
of the magmoid of \fakemotion{}s
by {\it strong isotopy} is then a groupoid, $\FMot{M}^A$ (where $A\subseteq M$). There is a `forgetful' functor $\mathbf{T}\colon \Mot^A \to \FMot{M}^A$. 

In Section~\ref{sec:artinbraids}  we will restrict to the full subgroupoids of  $\Mot^A$  and $\FMot{M}^A$  with objects the set of finite subsets of a manifold. (Essentially treated by Dahm in the group case.) As we will see, the functor $\T$ will then restrict to an isomorphism between the full subgroupoids. 
Since for a finite subset $K\subset \R^2$, $\FMot{\R^2}(K,K)$ 
is isomorphic to
the braid group on $|K|$ strands, $\Braid_{|K|}$,
 we can then use Artin's presentation of braid groups \cite{artin} to prove that certain subgroupoids of our construction of motion groupoids have finite presentation.

In addition to relating them to classical constructions, in \S\ref{sec:magmoidFMt} we can also consider another overarching question:
how should we think of these algebraic structures that we construct --- how useful are they?
This raises the interesting question of how to characterise magmoids generally --- a hard problem because they are often very wild, and their representation theory is presently in its infancy. One of the simplest non-trivial questions we can ask, is how to parameterise the connected components of a given magmoid.

\subsection{The magmoid \texorpdfstring{$\FMotmag{M}$}{FMotmag} of fake motions and its connected components}\label{sec:magmoidFMt}

\medskip
Consider a manifold $M$ and a  
subset $N\subseteq M$ and define 
$\Topo_{\Braid{}}(N\times\II,M)$ to be the subset of elements 
$f\in \Topo(N\times\II,M)$ such that: 
\\ (I) for each $t\in\II$, the map from $N$ to $M$ given by $n \mapsto f(n,t)$ is an embedding,
and
\\ 
 (II) for all $n \in N$,
$f(n,0)=n$.

We note that, when $N$ is a submanifold of $M$, a common name for an element of $\Topo_{\Braid{}}(N\times\II,M)$ is an {\em isotopy of $N$ inside $M$}.

\begin{defin}\label{de:fakemotion}
Let $N,N'\subseteq M$. We say that  $f \in\Topo_{\Braid{}}(N\times\II,M)$ is a {\em \fakemotion}, from $N$ to $N'$, and we write $f\colon N \not \too N'$, if $f(N,1)=N'$.
Denote by $ \dahm{N}{M}{N'}\subseteq \Topo_{\Braid{}}(N\times\II,M)$, the subset of \fakemotion{}s from $N$ to $N'$.
(Note here that $N,N'$ are necessarily homeomorphic via $f(-,1)$.)
\end{defin}

The reason for the nomenclature `\fakemotion{}' of $N$ in $M$ is that it looks like $f$ describes a reversible evolution of $N$ over time, but there is no intrinsic guarantee that there is a deformation of $M$ that will induce $f$.
We return to this point shortly.

\newcommand{\ox}{\smallsquare} 

\begin{lemma}\label{le:mag_fakemotion}
{Let $M$ be a manifold.}
There is a partial composition 
of \fakemotion{}s 
$\ox\colon \dahm{N}{M}{N'}\times \dahm{N'}{M}{N''} \to \dahm{N}{M}{N''}$ 
that 
maps $(f,g) $ to 
$g\ox f$, formally given by
\begin{align}\label{eq:composition_fakemotion}
g\smallsquare f(n,t):= 
\begin{cases} f(n,2t), & 0\leq t \leq 1/2 \\
g(f(n,1),2t-1), &  1/2\leq t\leq 1\, .
\end{cases}
\end{align}
Hence 
$(\Power M,\dahm{N}{M}{N'},\smallsquare)$
is a magmoid of \fakemotion{}s. 
We denote it $\FMotmag{M}$.
\end{lemma}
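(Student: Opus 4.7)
The plan is to verify the three defining properties of $\dahm{N}{M}{N''}$ for the candidate $g \smallsquare f$, and then observe that these assembly data give a magmoid in the sense of Definition~\ref{de:magmoid}. No real obstacle is anticipated; the argument is entirely a pasting/composition check, and the formula has been chosen precisely so that the two pieces agree at $t=1/2$ using the initial condition $g(n',0)=n'$.

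First I would check continuity of $g \smallsquare f\colon N\times\II \to M$. The two pieces defining $g\smallsquare f$ are continuous on their respective closed slabs $N\times[0,1/2]$ and $N\times[1/2,1]$, being compositions of continuous maps. At $t=1/2$, the upper piece gives $f(n,1)$ and the lower piece gives $g(f(n,1),0) = f(n,1)$, where I use condition (II) in the definition of $\Topo_{\Braid}(N'\times \II, M)$ applied to $g$. So the gluing lemma yields continuity on $N\times \II$.

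Next I would verify the two structural conditions (I) and (II) for $g\smallsquare f$. For (II) at $t=0$: $g\smallsquare f(n,0) = f(n,0) = n$, by condition (II) for $f$. For (I), fix $t\in\II$. If $t\le 1/2$ then $n\mapsto g\smallsquare f(n,t)=f(n,2t)$ is an embedding, being the restriction of $f$ to the slice at time $2t$. If $t\ge 1/2$ then $n\mapsto g\smallsquare f(n,t)=g(f(n,1),2t-1)$ is the composition of the embedding $n\mapsto f(n,1)\colon N\to N'$ (which is an embedding onto $N'$ by the definition of $\dahm{N}{M}{N'}$) with the embedding $n'\mapsto g(n',2t-1)\colon N'\to M$; a composition of embeddings is an embedding.

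Finally, I would confirm $g\smallsquare f(N,1)=N''$. Since $f(-,1)\colon N\to N'$ is an embedding hitting $N'$ it is a bijection $N\to N'$, and likewise $g(-,1)\colon N'\to N''$ is a bijection. So
\[
g\smallsquare f(N,1) \;=\; g(f(N,1),1) \;=\; g(N',1) \;=\; N''.
\]
This shows $g\smallsquare f\in \dahm{N}{M}{N''}$, so $\smallsquare$ restricts to the claimed partial composition. To conclude, the triple $(\Power M,\dahm{-}{M}{-},\smallsquare)$ satisfies Definition~\ref{de:magmoid}: its object class is $\Power M$, the morphism set from $N$ to $N'$ is $\dahm{N}{M}{N'}$, and the composition is a set map of the required type, with no further associativity or unit axiom to check. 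Hence $\FMotmag{M}$ is a magmoid as asserted.
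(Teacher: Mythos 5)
Your proof is correct and follows essentially the same route as the paper's: gluing the two pieces at $t=1/2$ using $g(f(n,1),0)=f(n,1)$ and continuity on the closed slabs, checking the embedding condition on each half (with the $t\ge 1/2$ case handled as a composition of $f(-,1)\colon N\to N'$ with the time-$(2t-1)$ slice of $g$), and verifying the endpoint condition $g(f(N,1),1)=N''$. The only additions are the explicit check of condition (II) at $t=0$ and the remark that a magmoid requires no further axioms, both of which the paper leaves implicit.
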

\begin{proof}
{Observe first that the second row of \eqref{eq:composition_fakemotion} gives a well-defined map in $ \Set(N\times[1/2,1],M)$ by the matching middle $N'$ of the domain of $\ox$. 
Furthermore, since
$g(f(n,1),0)=f(n,1)$,
the segments agree at $t=1/2$, hence $g\ox f \in \Set(N\times\II,M)$ is well-defined. Since  $g\ox f$ is continuous when restricted to $N\times  [0,1/2]$ and $N\times [1/2,1]$, closed subsets of $N\times \II$, it follows that $g\ox f\colon N \times \II \to M$ is continuous. }

It is {immediate} that the map {$N \to (g \ox f)(N,t)$, such that $n \mapsto (g \ox f)(n,t)$ is an embedding, whenever $t \in [0,1/2]$.}
 {If $t \in [1/2,1]$, the map $n \mapsto (g \ox f)(n,t)$ is an embedding as it is the composition of the homeomorphisms $N \ni n \mapsto f(n,1) \in N'$ with the embedding $N'\ni n' \mapsto g(n',2t-1) \in M$.} 
Finally,
$g(f(N,1),1)=g(N',1)=N''$.
\end{proof}

We return now to the question of which objects are connected in $\FMotmag{M}$.
The following examples will be sufficient to intrigue the reader.

\exa{\label{ex:fakeflow_pttobdy}Let $x\in (0,1)$ be any point. There exists $f\in \Topo_{\Braid{}}(\{x\}\times \II,[0,1])$, given by $f(x,t)=x(1-t)$. 
So we have a \fakemotion\ $f\colon \{x\} \not\too \{0\}$, sending an interior point in $\II$ to a boundary point.
}

{The following example  gives a concrete way to write down the process, see e.g. \cite[page 2]{Burde}, of deforming any knot, here a \textit{string knot},
continuously into an unknotted string knot, in a way that cannot in general be induced by a movement of space.}

\exa{\label{ex:fakeflow_knot}Let $M= [-1,1]\times \R^2$, with the topology induced by $\R^3$. We let $g\colon [-1,1] \to M$ be an embedding (so equivalently, since $[-1,1]$ is compact, $g$ is continuous and injective), such that $g(-1)=(-1,0,0)$ and $g(1)=(1,0,0)$. 
Also suppose that $g\big( (-1,1)\big) \subset \mathrm{int}(M)$. Let $K=g([-1,1])\subset M$. Observe  that $K$ is
a (possibly wild and also non-trivial) string knot.
And then consider $H\colon [-1,1] \times \II \to M$, defined as:
\[H(t,s):=\begin{cases}
(t,0,0), \textrm{ if } \big ( t \in [-1,-1+s] 
\textrm{ or } t \in [1-s,1] 
\big) 
\textrm{ and } s<1, \\
(1-s){g}\left(\frac{t}{1-s}\right), \textrm{if } t \in [-1+s,1-s],  \textrm{ and } s<1,\\
(t,0,0), \textrm { if } s=1.
\end{cases}
 \]
Define $f\colon K \times \II \to M$ as $f(k,t)=H(g^{-1}(k),t)$.
We then have a \fakemotion\ from $K$ to the trivial string knot $[-1,1] \times \{(0,0)\}$.
 }

\exa{\label{ex:fakeflow_circletoint}
Let $M=\R^2$,
$S^1=\{x\in \R^2 \,|\, |x|=1\}$
and $y\in S^1$ be any point.
Then it is straightforward to construct a \fakemotion{}  $\fakemot{f}{S^1\setminus\{y\}}{(0,1)\times \{0\}}$. {Similarly, if $H'$ is the Hopf link, with one point taken out of one of the components, then there exists a fake motion $H'\not \too L$, where $L'$ is the unlink, where we have taken a point out of one of the components.}
}

{This leads us naturally back to the question of connection to the other structures introduced in this paper, and indeed to classical constructions. To address both of these, we proceed as follows.}

\medskip

Recall from Lemma~\ref{le:movie_mot} that there is a bijection $\Phi\colon \Mtcmag(N,N')\to \Mtcmagmov(N,N')\subset \Topo(M\times \II,M)$ such that $\Phi(f)(m,t)=f_t(m)$.
Since motions are made from  homeomorphisms of $M$,
the restriction of $\Phi(f)$ to $N\times \II$ lies in $\Topo_{\Braid{}}(N\times\II,M)$.
Thus there is a set map 
\[
T' : \Motc{M}{N}{N} \rightarrow 
 \dahm{N}{M}{N'}
\]
given by $T'(f)(n,t) = f_t (n)$.

 An interesting question is to ask, for some manifold $M$ and subsets $N$ and $N'$, which elements of $\dahm{N}{M}{N'}$ have a preimage under $T'$. Notice that no such preimage exists in Examples~\ref{ex:fakeflow_pttobdy} and \ref{ex:fakeflow_circletoint}. Also, the fake motion $f$ in Example \ref{ex:fakeflow_knot} does not have a preimage under $T'$  whenever $K$ is a non-trivial string knot.

For fake motions of submanifolds, establishing a preimage amounts to establishing an isotopy extension property (to an ambient isotopy) in the context of topological manifolds.  The Edwards-Kirby isotopy extension theorem  for topological manifolds \cite{Edwards_Kirby} is a powerful general result in this direction. A recent treatment of this result  can be found in chapter 19, due to Arunima Ray, of \cite{TopManifolds}.

\medskip

In the following result we can see that if a preimage of a fake motion exists then it will be unique under motion equivalence.
\begin{lemma} \label{le:T'equal>motionequivalent}
Let $\mot{f,g}{}{N}{N'}$ be motions in $M$.
If $T'(\mot{f}{}{N}{N'})=T'(\mot{g}{}{N}{N'})$, then $f\simm g$. 
\end{lemma}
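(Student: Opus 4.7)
The plan is to unpack the definition of $T'$ and observe that equality of $T'$ values forces equality of worldlines, at which point Proposition~\ref{prop:dahm} closes the argument.

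First I would recall that, by construction, $T'(\mot{f}{}{N}{N'})$ is the element of $\dahm{N}{M}{N'}$ given by the formula $(n,t)\mapsto f_t(n)$. Therefore the hypothesis $T'(\mot{f}{}{N}{N'})=T'(\mot{g}{}{N}{N'})$ is literally the statement that, for every $n\in N$ and every $t\in\II$,
\[
f_t(n)=g_t(n).
\]
In particular, for each fixed $t\in\II$, the homeomorphisms $f_t,g_t\colon M\to M$ restrict to the same map on $N$, so $f_t(N)=g_t(N)$.

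Next I would compute the worldlines (Definition~\ref{def:worldline}):
\[
\W(\mot{f}{}{N}{N'})=\bigcup_{t\in\II} f_t(N)\times\{t\}=\bigcup_{t\in\II} g_t(N)\times\{t\}=\W(\mot{g}{}{N}{N'}).
\]
So $f$ and $g$, viewed as motions $N\too N'$, have identical worldlines.

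Finally, by Proposition~\ref{prop:dahm}, any two motions $N\too N'$ with the same worldline are motion equivalent. Hence $f\simm g$, as required. There is no real obstacle here; the entire content of the lemma is that $T'$ only remembers the worldline of a motion, and Proposition~\ref{prop:dahm} is precisely the statement that motion equivalence is insensitive to information beyond the worldline.
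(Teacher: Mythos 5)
Your proof is correct and follows exactly the paper's own argument: equality of $T'$ values gives equal worldlines, and Proposition~\ref{prop:dahm} then yields motion equivalence. The extra pointwise unpacking you include is just a more explicit version of the same step.
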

\begin{proof}
It follows directly from the definition of the worldline that $T'(\mot{f}{}{N}{N'})=T'(\mot{g}{}{N}{N'})$
implies  $\W(\mot{f}{}{N}{N'})=\W(\mot{g}{}{N}{N'})$. By Proposition~\ref{prop:dahm} motions with the same worldline are  motion equivalent.
\end{proof}

\subsection{The groupoid $\FMot{M}$ and the T-functor}

\begin{defin} \label{de:strongiso}
For manifold $M$ and subsets $N,N'\subseteq M$, two \fakemotion{}s $f,f'\colon N \times \II \to M$ in $\dahm{N}{M}{N'}$ are said to be {\em strongly isotopic}\footnote{This nomenclature is borrowed from \cite{artin}, where strong isotopies between braids are introduced.} if there exists a homotopy $H\colon (N \times \II) \times \II \to M$ from $f$ to $f'$ (i.e.  
$H(n,t,0)=f(n,t)$ and $H(n,t,1)=f'(n,t)$)
such that, for each $s \in \II$, the map 
from $N\times \II $ to $M$ given by $(n,t) \mapsto H(n,t,s)$ is in $\dahm{N}{M}{N'}$.\\
Notation: We write $f\simfk f'$ to mean the \fakemotion{}s $f$ and $f'$ are strongly isotopic.
\end{defin}

\rem{Let $M$ be a manifold, if $N\subseteq M$ is locally compact, then, by the product-hom adjunction (Lemma \ref{th:tensorhom}), there is a one-to-one correspondence between \fakemotion{}s of $N$ in $M$ and maps $f\colon \II \to \TOPO(N,M)$ satisfying  that $f_0$ is the inclusion of $N$ in $M$, and  $f_t\colon N \to M$ is always an embedding. 
However, note that, in general, a subset $N\subseteq M$ need not be locally compact, so we do not know if this correspondence holds in general. In particular, we do not know whether there is a way to write {the} strong isotopy relation mirroring relative path-homotopy, as a map from $\II^2$, when $N$ is not locally compact.
}

\begin{lemma}\label{lem:Fake-braids}
Fix a manifold $M$.
(I) The relations $\simfk$ on each $\dahm{N}{M}{N'}$
is a congruence on $\FMotmag{M}$.
(II) The quotient is a groupoid.  We denote it \[\FMot{M}=(\Power M,\dahm{N}{M}{N'}/\simfk, \;\smallsquare \;, \;\classfk{f}\mapsto\classfk{\bar{f}} \;  ,\; \classfk{\id_{\Braid,N}}).\]
\end{lemma}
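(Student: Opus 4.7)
The plan is to mirror the construction of the fundamental groupoid in Proposition~\ref{pr:fundamentalgroupoid} and of $\Mtcmag/\simp$ in Lemma~\ref{le:grpd_motstar}, with the added technical burden that every slice of every homotopy we construct must still be a \fakemotion, i.e.\ the partial map $n\mapsto H(n,t,s)$ must be an embedding for every $(t,s)$.

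For (I), I would first verify that $\simfk$ is an equivalence relation on each $\dahm{N}{M}{N'}$. Reflexivity uses the constant homotopy $H(n,t,s)=f(n,t)$; symmetry uses $H'(n,t,s)=H(n,t,1-s)$; and transitivity uses the usual $s$-stacking (split along $s=1/2$ and rescale), which is continuous and whose slices are the slices of $H$ or $H'$, hence \fakemotion{}s. For the congruence property, suppose $f\simfk f'$ via $H$ and $g\simfk g'$ via $J$ with compatible endpoints. Define
\[
K(n,t,s)=\begin{cases} H(n,2t,s), & 0\le t\le 1/2,\\ J(H(n,1,s),2t-1,s), & 1/2\le t\le 1.\end{cases}
\]
Continuity at $t=1/2$ uses $J(-,0,s)=\mathrm{incl}$ and matching at $H(n,1,s)\in N'$. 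For each fixed $(t,s)$ the slice $n\mapsto K(n,t,s)$ is either $n\mapsto H(n,2t,s)$, an embedding by hypothesis, or the composition of the homeomorphism $n\mapsto H(n,1,s)\colon N\to N'$ with the embedding $n'\mapsto J(n',2t-1,s)$, hence an embedding. At $s=0$ and $s=1$ we recover $g\smallsquare f$ and $g'\smallsquare f'$, so $K$ is the required strong isotopy.

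For (II), we must exhibit identities, inverses and verify the groupoid axioms at the level of $\simfk$-classes. For each $N\subseteq M$, take $\id_{\Braid,N}(n,t):=n$; this is in $\dahm{N}{M}{N}$. For a \fakemotion{} $f\colon N\not\too N'$, define $\bar f\colon N'\times\II\to M$ by $\bar f(n',t):=f\bigl(f(-,1)^{-1}(n'),\,1-t\bigr)$, which is well-defined because $f(-,1)\colon N\to N'$ is a homeomorphism; every slice is the composition of homeomorphisms with an embedding, and $\bar f(-,0)=\mathrm{incl}_{N'}$, $\bar f(N',1)=N$, so $\bar f\in\dahm{N'}{M}{N}$. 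The identity law $\id_{\Braid,N'}\smallsquare f\simfk f\simfk f\smallsquare\id_{\Braid,N}$ and associativity $(h\smallsquare g)\smallsquare f\simfk h\smallsquare(g\smallsquare f)$ follow from one-variable reparametrisations: the same explicit interpolations used for $H_{id}$ and $H_{ass}$ in the proof of Proposition~\ref{pr:fundamentalgroupoid} work verbatim, provided each occurrence of $\gamma_u$ is replaced by the slice $n\mapsto f(n,u)$ (and similarly for $g,h$). Every slice of these reparametrisations is, by construction, a single slice of $f$, $g$ or $h$ (or of $\id_{\Braid,-}$), hence automatically an embedding.

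For the inverse law, to prove $\bar f\smallsquare f\simfk \id_{\Braid,N}$, I would use
\[
K(n,t,s)=\begin{cases} f(n,\,2t(1-s)), & 0\le t\le 1/2,\\ f(n,\,(1-2(t-1/2))(1-s)), & 1/2\le t\le 1.\end{cases}
\]
At $s=0$ this is $\bar f\smallsquare f$ (after the identification $\bar f(f(n,1),u)=f(n,1-u)$) and at $s=1$ it is $\id_{\Braid,N}$; every slice is an embedding because it is a reparametrised slice of $f$. A symmetric formula gives $f\smallsquare\bar f\simfk\id_{\Braid,N'}$. This completes verification of the groupoid axioms $(\GG1)$--$(\GG3)$ on classes, yielding the stated groupoid $\FMot{M}$.

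The main obstacle is purely bookkeeping: repeatedly verifying the ``each slice is an embedding'' condition under the piecewise constructions. The continuity and endpoint-matching arguments are routine generalisations of those in \S\ref{sec:path}; the new content is precisely that every homotopy we write down must factor, slice by slice, through a prescribed embedding, which is why the compositions with the homeomorphism $f(-,1)$ in the definitions of $\bar f$ and of the congruence homotopy $K$ are essential.
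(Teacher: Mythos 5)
Your proposal is correct and follows essentially the same route as the paper: the same equivalence-relation checks, the identical congruence homotopy $K$, the same identity $\id_{\Braid,N}$ and inverse $\bar f$ via $\sigma_f^{-1}$, and reparametrisation homotopies whose slices are slices of $f,g,h$ (possibly precomposed with the homeomorphism $f(-,1)$), for the groupoid axioms. The only cosmetic difference is that your inverse-law homotopy uses the two-piece formula of Lemma~\ref{le:grpd_motstar} rather than the three-piece one the paper adapts from Proposition~\ref{pr:fundamentalgroupoid}; both are valid.
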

\noindent We have used the notation $\classfk{f}$ or $\classfk{f\colon N \not \too N'}$ to denote the strong isotopy equivalence class of a fake motion $f\in \dahm{N}{M}{N'}$. Note that in contrast with the motion case, the source and target of a fake motion are uniquely specified by $f\colon N\times \II\to M$.
\begin{proof}
(I). Suitable homotopies to prove that indeed isotopy between fake motions defines an equivalence relation follow similar constructions to those in the proofs of Proposition~\ref{pr:pe}.
In order to prove that the composition  $\smallsquare$ of fake motions descends to the quotient, let $f,f'\in \dahm{N}{M}{N'}$, $g,g'\in \dahm{N'}{M}{N''}$ with strong isotopies $H_f$ from $f$ to $f'$ and $H_g$ from $g$ to $g'$, then 
\[H(n,t,s)= \begin{cases} 
H_f(n,2t,s), & 0\leq t \leq 1/2, \\
H_g\big(H_f(n,1,s), 2t-1,s\big), & 1/2\leq t\leq 1
\end{cases} \] 
is a strong isotopy from $g\smallsquare f$ to $g'\smallsquare f'$. (Note that it may be that $N$ is not closed in $M$. Since $N\times [0,1/2]$ and $N\times [1/2,1]$ are nevertheless closed in $N\times [0,1]$, we can {still} apply the glue lemma to see that $H$ is continuous.)

 (II) For $N\subseteq M$, the identity on $N$ is given by the strong  isotopy class of $\id_{\Braid,N}\colon N\times \II\to M$, $(n,t)\mapsto n$ (the \emph{identity fake motion}).
For a \fakemotion\ $f\in \dahm{N}{M}{N'}$, and fixed $t\in [0,1]$, consider the map $\sigma_{f}\colon N \to N'$ given by $n\mapsto f(n,1)$. Note that $\sigma_{f}$ is a homeomorphism. Define $\bar{f}\colon N' \not \too N$ as  $(n',t)\mapsto 
f({\sigma_{f}^{-1}}(n'),1-t)$.
Explicit homotopies proving the identity, associativity and inverse axioms follow a similar construction to those in the proof of Proposition~\ref{pr:fundamentalgroupoid}. To illustrate the idea we give the homotopies which prove the inverse axiom. 
Thus the following function is a strong isotopy $\bar{f}\ox f$ to $\id_{\Braid,{N}}$:
	\[
	H_{in}(n,t,s)=\begin{cases}
		f(n,2t), & 0 \leq t\leq \frac{1}{2}-\frac{s}{2}, \\
		f(n,1-s),& \frac{1}{2}-\frac{s}{2}\leq t \leq \frac{1}{2}+\frac{s}{2}, \\
		f(n , 2-2t),& \frac{1}{2}+\frac{s}{2}\leq t \leq 1.
	\end{cases}
	\]
	Similarly 
		\[
	H'_{in}(n,t,s)=\begin{cases}
		f(\sigma_f^{-1}(n),1-2t), & 0 \leq t\leq \frac{1}{2}-\frac{s}{2}, \\
		f(\sigma_f^{-1}(n),s), & \frac{1}{2}-\frac{s}{2}\leq t \leq \frac{1}{2}+\frac{s}{2} ,\\
		f(\sigma_f^{-1}(n) , 2t-1),& \frac{1}{2}+\frac{s}{2}\leq t \leq 1.
	\end{cases}
	\]
	is a suitable strong isotopy from $f\ox \bar{f}$ to $\id_{\Braid,{N'}}$.
\end{proof}

As with the motion groupoid, we can construct a version of $\FMot{M}$ which depends on a fixed choice of subset $A\subset M$.
Let $M$ be a manifold $M$, and $A\subset M$ a subset. Given $N,N'\subseteq M$, a fake motion $f$, from $N$ to $N'$, relative to $A$, is by defined to be a fake motion $f\in \dahm{N}{M}{N'}$, such that furthermore:
\begin{enumerate}
    \item[(III)] If $a \in N\cap A$, then $f(a,t)=a$, for all $t$, and
    \item[(IV)] If $a \in N \setminus A$, then $f(t,a)\notin A$ for all $t$.
\end{enumerate}
We let $\dahmA{N}{M}{N'}\subset \dahm{N}{M}{N'}$ be the set of fake motions from $N$ to $N'$, relative to $A$. Note that the composition $\ox$ restricts to fake-motions relative to $A$, thus there is a magmoid $\FMotmag{M}^A$. If we restrict to strong isotopies $H$ connecting $f,f'\in \dahmA{N}{M}{N'} $, such that $H( -,-,s)$ is always in $\dahmA{N}{M}{N'}$ this a magmoid congruence on $\FMotmag{M}^A$, called strong isotopy relative to $A$ (also denoted $\simfk$). We hence have a quotient groupoid, denoted $\FMot{M}^A$.

\begin{example}
{In} Example~\ref{ex:fakeflow_pttobdy}, $f$ is not a fake motion relative to $\{0,1\}$. On the other hand, looking at Example~\ref{ex:fakeflow_knot}, $f$ is a fake motion relative to $ \{-1,1\} \times \R^2$.
\end{example}
\medskip

Note that $T'$ restricts to a set map
$\MotAc{M}{N}{N'} \rightarrow 
 \dahmA{N}{M}{N'}$, which we denote also by $T'$.
For a manifold $M$, and subset $A\subset M$, the map $T'$ extends to a functor from the motion groupoid to the groupoid of fake motions.
\begin{theorem}\label{thm:MottoFmot}
Let $M$ be a manifold, and $A\subset M$ a subset. Given $N,N'\subseteq M$,
there is a well defined functor of groupoids, \[\mathbf{T}\colon \Mot^A \to \FMot{M}^A,\]
which is the identity on objects, and  sends $\classm{\mot{f}{}{N}{N'}}$ to
$\classfk{\fakemot{T'(f)}{N}{N'}}$ where $T'(f)(n,t)= f_t(n)$.
 (Here $\simm$ denotes motion equivalence, fixing $A$ and  $\simfk$ denotes strong isotopy of fake motions, relative to $A$.)
\end{theorem}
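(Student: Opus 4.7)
The plan is to split the argument into three tasks: (a) verify that $T'$ sends $A$-fixing motions to fake motions relative to $A$, so $T'(f)\in \dahmA{N}{M}{N'}$; (b) show $T'$ descends to equivalence classes, i.e.\ $f\simm g$ implies $T'(f)\simfk T'(g)$; and (c) check functoriality. I expect (c) to be immediate from a direct comparison of the piecewise defining formulas, while (a) and (b) both follow from two ingredients already in hand: the product-hom adjunction of Lemma \ref{th:tensorhom} for continuity, and Theorem \ref{th:mg2} identifying motion equivalence with relative path-homotopy.

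For (a), given $\mot{f}{A}{N}{N'}$, continuity of $T'(f)\colon N\times \II \to M$ follows by applying the bijection $\Phi$ of Lemma \ref{th:tensorhom} (using that $M$ is locally compact Hausdorff) to $f\colon \II \to \TOPO^h(M,M)$, producing a continuous map $M\times \II \to M$, and then restricting the domain to $N\times \II$. The map $n\mapsto f_t(n)$ is an embedding for each $t$ since $f_t$ is a self-homeomorphism of $M$, while $T'(f)(n,0)=n$ and $T'(f)(N,1)=N'$ are built into the definition of a motion. The conditions (III) and (IV) for an $A$-fixing fake motion hold because $f_t$ fixes $A$ pointwise, and injectivity of $f_t$ combined with $f_t|_A=\id_A$ forces $f_t(N\setminus A)\cap A=\emptyset$.

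For (b), take $f\simm g$ fixing $A$; by Theorem \ref{th:mg2} there is a relative path-homotopy $H\colon \II\times \II \to \TOPO^h_A(M,M)$ from $f$ to $g$ with $H(0,s)=\id_M$ and $H(1,s)(N)=N'$ for all $s$. Define $K\colon N\times \II\times \II \to M$ by $K(n,t,s):=H(t,s)(n)$. Its continuity follows again by applying the product-hom adjunction to $H$ (this time with domain $\II\times \II$) and restricting to $N\times \II\times \II$. For each fixed $s$, $K(-,-,s)$ is an $A$-fixing fake motion from $N$ to $N'$ by precisely the reasoning of (a), and $K(n,t,0)=T'(f)(n,t)$, $K(n,t,1)=T'(g)(n,t)$, so $K$ is the required strong isotopy relative to $A$. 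Finally, for (c), unwinding the piecewise formulas \eqref{def:comp1} and \eqref{eq:composition_fakemotion} gives the on-the-nose equalities $T'(g*f)=T'(g)\ox T'(f)$ and $T'(\Id_M)=\id_{\Braid,N}$ of actual fake motions, which trivially descend to $\FMot{M}^A$. No step is a serious obstacle; the real content is the observation that the same object $(n,t,s)\mapsto H(t,s)(n)$ translates a relative path-homotopy into a strong isotopy, which is precisely what is needed to glue parts (a) and (b) together.
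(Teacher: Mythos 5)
Your proposal is correct and follows essentially the same route as the paper: the heart of both arguments is applying Theorem \ref{th:mg2} to obtain a relative path-homotopy $H$ and observing that $(n,t,s)\mapsto H(t,s)(n)$ is exactly a strong isotopy relative to $A$, with preservation of composition read off directly from the formulas \eqref{def:comp1} and \eqref{eq:composition_fakemotion}. The extra verifications you include in part (a) (continuity via the product-hom adjunction and conditions (III)--(IV)) are handled in the paper just before the theorem, where $T'$ is noted to restrict to a map into $\dahmA{N}{M}{N'}$, so they add welcome detail but no new idea.
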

\begin{proof}
Suppose motions $\mot{f,f'}{}{N}{N'},$ fixing $A$, are motion equivalent, fixing $A$. Then, by Theorem~\ref{th:mg2}, there exists a 
relative path-homotopy
$H\colon \II \times \II \to 	\TOPO^h_A(M,M) $, from $f$ to $f'$. And then, by construction,  the map $H'\colon ( N\times \II)\times \II \to M$ such that 
$H'\big((n,t),s\big)= 
H(t,s)(n) 
$
defines a strong isotopy, from $T'(f)$ to $T'(f')$, relative to $A$. 
Moreover, given motions $f,f'\colon K \too K$, fixing $A$, we have that $T'(f)\ox T'(f')=T'{(f*f')}$. The latter follows immediately from the conventions for the composition $\ox$ and $*$ given in \eqref{eq:composition_fakemotion} and \eqref{def:comp1}. From this it follows that $T_A$ indeed preserves the compositions in $\Mot^A$ and  $\FMot{M}^A$.
\end{proof}

\subsection{Motion groupoids and Artin braid group(oid)s}\label{sec:artinbraids}

Armed with Theorem~\ref{th:mg2}, 
we can now unpack our 
observation from Section~\ref{ss:stat_mot} that at least certain
non-trivial motion groups are finitely presented.
(We will see in Proposition~\ref{rem:mcgI}
that some are not.) 

Our strategy here is simply to make contact with Artin's proof that Artin's braid group 
$\braid$ 
(defined in (\ref{def:Bn}) below)
is presented by Artin's presentation
\cite{artin}.
(The strategy is simple; the contact is not.)

Let $M$ be a manifold and $K$ be a finite subset of $M\setminus \partial M$. We will define $\mathbf{B}(M,K)$ as the group of morphisms from $K$ to $K$ in $\FMot{M}^{\partial M}$. Then we prove that $\T$ restricts to a group isomorphism from $\Mot^{\partial M}(K,K)$ to $\mathbf{B}(M,K)$ (Theorem~\ref{th:geometric_braids}).
In particular we have an isomorphism when we further specialise to $M=\R^2$.

Of course a major point of our construction of motion groupoids, and fake motion groupoids, is that
it coincides with Dahm's construction when restricted to the
group case, so in this final step 
we will be able 
essentially to follow Dahm's version.

\begin{defin}Suppose that $M$ is a manifold and $K\subset M\setminus \partial M$ a finite subset. An {\em \Artinb} 
of $K$ in $M$ is an element 
$f\in \dahmfix{K}{M}{K}{\partial M}$.
\end{defin}

 It follows from Lemma~\ref{lem:Fake-braids} that \Artinb{}s
 compose under $\ox$ (defined in \eqref{eq:composition_fakemotion}), the composition descends to the quotient under  
strong isotopy, and moreover defines a group structure on the set of equivalence classes of Artin braids. 

\begin{defin} \label{de:braidgroup}
Let $M$ be a manifold and $K\subset M\setminus \partial M$ a finite subset. Denote by $\mathbf{B}(M,K)$ the group $\FMot{M}^{\partial M}(K,K)$ ($\FMot{M}^{A}$ is defined in Lemma~\ref{lem:Fake-braids}). 
\end{defin}

\rem{\label{rem:braidD^2partialD^2} It follows directly from the definition that $\FMot{M\setminus \partial M}(K,K)=\FMot{M}^{\partial M}(K,K)$. 
}

Note that, by a similar argument as used in Example~\ref{ex:finiteset}, if $M$ is connected then the isomorphism type of  $\mathbf{B}(M,K)$ depends only on the cardinality of $K$. (Albeit up to non-canonical isomorphism.)

In case when $M=\R^2$, and 
$K=K_n = \{ (0,1),(0,2),...,(0,n) \}$, 
we define 
\begin{equation} \label{def:Bn}
\braid=\mathbf{B}(\R^2,K_n)
=\FMot{\R^2}(K_n , K_n)
.
\end{equation}
Then $\braid$ coincides with the 
definition of the braid group on $n$-strands, arising from Artin's construction in \cite[Thm.5]{artin} (and in \cite{artin25}).

\begin{lemma}
\label{le:geobraid_grouphom} 
Let $M$ be a manifold. 
Let
$n$ be a positive integer and $K\subset M{\setminus \partial M}$ a finite subset with $n$ elements.
The formal map
\[
T\colon \Mot[M]^{\partial M}(K,K) \to \mathbf{B}(M,K)
\]
that
sends the motion equivalence class of a motion $f\colon K \too K$
 to the 
strong isotopy
class of the \Artinb{}  
$X_f\colon K\times \II\to M$ of $K$ in $M$, such that
$(k,t)  \mapsto f_t(k) $, is well-defined, and a group homomorphism.

\end{lemma}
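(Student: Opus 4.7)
The plan is to deduce this lemma from the general functoriality statement Theorem~\ref{thm:MottoFmot} (with $A=\partial M$), observing that $T$ is precisely the restriction of $\mathbf{T}\colon \Mot^{\partial M}\to \FMot{M}^{\partial M}$ to the automorphism group at the object $K\in\Power M$. Since $\mathbf{B}(M,K)=\FMot{M}^{\partial M}(K,K)$ by Definition~\ref{de:braidgroup}, the target identification is immediate, and restrictions of functors to endomorphism sets are always monoid (here group) homomorphisms. So the only genuine content is to verify that, for a motion $f\colon K\too K$ fixing $\partial M$, the associated map $X_f(k,t)=f_t(k)$ really lies in $\dahmfix{K}{M}{K}{\partial M}$, i.e.\ is an Artin braid relative to $\partial M$ in the sense of Lemma~\ref{lem:Fake-braids}.

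The verifications needed for $X_f$ to be a fake motion (in $\dahm{K}{M}{K}$) are routine: continuity of $X_f\colon K\times\II\to M$ follows from Lemma~\ref{th:tensorhom} (product--hom adjunction), since $K$ is locally compact Hausdorff (in fact finite discrete) and $f\colon\II\to\TOPO^h(M,M)$ is continuous by assumption; for each fixed $t$, the map $k\mapsto f_t(k)$ is an embedding because $f_t$ is a homeomorphism of $M$; the initial condition $X_f(k,0)=k$ holds because $f_0=\id_M$; and the endpoint condition $X_f(K,1)=f_1(K)=K$ holds because $f\colon K\too K$ is a motion from $K$ to $K$.

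The one point that requires the hypothesis $K\subset M\setminus\partial M$ is the fake-motion-relative-to-$A$ condition, i.e.\ conditions (III) and (IV) appearing just before the definition of $\FMot{M}^A$. Condition (III), that $X_f(a,t)=a$ for $a\in K\cap\partial M$, is vacuous since $K\cap\partial M=\emptyset$. For condition (IV), I need $f_t(k)\notin\partial M$ for every $k\in K$ and $t\in\II$. This follows because $f_t\in\TOPO^h_{\partial M}(M,M)$ fixes $\partial M$ pointwise, hence $f_t(\partial M)=\partial M$; since $f_t$ is a homeomorphism of $M$, it also sends $M\setminus\partial M$ bijectively to $M\setminus\partial M$, so $k\in K\subset M\setminus\partial M$ gives $f_t(k)\in M\setminus\partial M$. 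Thus $X_f\in\dahmfix{K}{M}{K}{\partial M}$.

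With this verified, well-definedness on equivalence classes is immediate from Theorem~\ref{thm:MottoFmot}: if $[f]_m=[g]_m$ in $\Mot^{\partial M}(K,K)$, then $\mathbf{T}$ sends them to the same class in $\FMot{M}^{\partial M}$, and this class lies in $\mathbf{B}(M,K)$ by the preceding paragraph. The group homomorphism property $T([g]_m\ast[f]_m)=T([g]_m)\smallsquare T([f]_m)$ is likewise inherited from functoriality of $\mathbf{T}$, together with the identification of the composition $\ast$ in $\Mot^{\partial M}$ with the composition $\smallsquare$ in $\FMot{M}^{\partial M}$ at the level of underlying maps (compare \eqref{def:comp1} with \eqref{eq:composition_fakemotion}). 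There is no hard step here; the proof is essentially a bookkeeping exercise, with the only subtle ingredient being the use of the hypothesis $K\subset M\setminus\partial M$ to secure condition (IV).
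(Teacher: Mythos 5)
Your proposal is correct and follows essentially the same route as the paper: the paper's proof of this lemma is literally that it is the restriction of the functor $\mathbf{T}$ of Theorem~\ref{thm:MottoFmot} to the automorphism group at $K$, with the fact that $T'$ lands in $\dahmfix{K}{M}{K}{\partial M}$ (your conditions (III) and (IV), secured by $K\subset M\setminus\partial M$ and invariance of the boundary under homeomorphisms) treated as the routine verification you spell out. Your extra bookkeeping is accurate but adds nothing beyond what the paper already asserts implicitly.
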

\begin{proof}
This is a restriction of the functor given in  Theorem~\ref{thm:MottoFmot}.
\end{proof}

\begin{remark}\label{rem:partial_moves}
Note that the previous result holds when the boundary of the manifold $M$ is not necessarily fixed, meaning that we have a homomorphism $\T\colon \Mot[M](K,K)\to \mathbf{B}(M,K),$ defined by using the same formulae as in the boundary-fixing setting. This follows from the fact that, given $x \in \mathrm{int}(M)$, then if $f\colon M \to M $ is any homeomorphism we have that $f(x) \in  \mathrm{int}(M)$. 
\end{remark}

The proof of the following theorem closely follows 
ideas present in \cite[Thm.II.1.2]{dahm}.
We note that our setup is not precisely the same as Dahm's, who restricts to homeomorphisms with compact support and does not treat manifolds with boundary. 

\begin{theorem}
\label{th:geometric_braids} 
Let $M$ be a manifold. 
Let
$n$ be a positive integer and $K\subset M{\setminus \partial M}$ a finite subset with $n$ elements.
The group homomorphism
\[
T\colon \Mot[M]^{\partial M}(K,K) \to \mathbf{B}(M,K)
\]
given in Lemma~\ref{le:geobraid_grouphom} is a group isomorphism.

In particular we have $\Mot[\R^2](K,K)\cong \braid.$ 
\end{theorem}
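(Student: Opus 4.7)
The plan is to prove that $T$ is a bijection, given that Lemma~\ref{le:geobraid_grouphom} already provides the group homomorphism. We treat surjectivity and injectivity separately, in both cases relying on isotopy extension results that, because the source is either a finite set of points or a disjoint union of $n$ arcs, can be made concrete and elementary.

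For surjectivity, let $X\colon K\times \II \to M$ represent a class in $\mathbf{B}(M,K)$, so $X$ is a fake motion relative to $\partial M$ with $X(K,1)=K$. I need to produce $f\in \premo{M}^{\partial M}$ with $f_t(k)=X(k,t)$ for every $k\in K$ and $t\in \II$; equivalently, I need to extend the isotopy of the finite set $K$ to an ambient isotopy of $M$ starting at $\id_M$, fixing $\partial M$ pointwise. Since $K\subset \mathrm{int}(M)$ and, for each $t$, the map $k\mapsto X(k,t)$ lands in $\mathrm{int}(M)$, the compact set $X(K\times \II)\subset \mathrm{int}(M)$ is separated from $\partial M$; pick a relatively compact open $U\subset \mathrm{int}(M)$ containing it. For each strand $X(k,-)$, cover it by finitely many Euclidean coordinate charts of $M$ contained in $U$, in each of which there is a compactly supported ambient isotopy of $\R^m$ carrying out the required local move. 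Gluing these using a partition of unity subordinate to the cover produces $f_t$; the resulting homeomorphisms are supported in $U$, hence fix $\partial M$ pointwise. Once $f$ is constructed, $T(\classm{f})=\classfk{X}$ by definition.

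For injectivity, suppose $T(\classm{f})=T(\classm{g})$, so there is a strong isotopy $H\colon K\times \II \times \II \to M$ relative to $\partial M$ with $H(-,-,0)=X_f$ and $H(-,-,1)=X_g$. By Theorem~\ref{th:me_bflpai}, it is enough to build a level preserving ambient isotopy of $M\times \II$, relative to $(M\times \{0,1\})\cup (\partial M\times \II)$ pointwise, from $\W(f\colon K\too K)$ to $\W(g\colon K\too K)$. Define $\hat H\colon K\times \II \times \II \to M\times \II$ by $\hat H(k,t,s)=(H(k,t,s),t)$: for each $s$, the map $(k,t)\mapsto \hat H(k,t,s)$ is a level preserving embedding of the disjoint union of arcs $K\times \II$ into $M\times \II$, with image $\W(H(-,-,s)\colon K\not\too K)$, and at $s=0,1$ the image coincides pointwise with $K\times \{0\}$ and $K\times \{1\}$ respectively. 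Applying a level preserving version of isotopy extension to this $s$-parametrised family of embeddings of $n$ disjoint arcs produces the required ambient isotopy of $M\times \II$, supported away from $\partial M\times \II$ by the same collar argument as in the surjectivity step.

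The main obstacle in both halves is the isotopy extension: abstractly this is the Edwards--Kirby theorem for topological manifolds, which is deep. However, because the submanifold being moved is either finitely many points (for surjectivity) or finitely many pairwise disjoint arcs in $M\times \II$ (for injectivity), the extensions can be built by hand using small coordinate charts, compactly supported radial flows, and partitions of unity, as in \cite[Thm.II.1.2]{dahm}. The only novelty here relative to Dahm's setting is the boundary-fixing constraint, which is handled automatically by ensuring all constructed supports lie in a fixed compact neighbourhood of the strands (or tubes) away from $\partial M$; this is possible because the relative-to-$\partial M$ condition on fake motions forces all trajectories to stay in $\mathrm{int}(M)$. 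The final statement $\Mot[\R^2](K,K)\cong \braid$ then follows by specialising $M=\R^2$ and using \eqref{def:Bn}.
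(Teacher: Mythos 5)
Your overall architecture (surjectivity by extending an isotopy of the finite set $K$ to an ambient one; injectivity by a two-parameter extension of the strong isotopy) is the right one, and it is close in spirit to the paper's. But two of your key steps do not work as written. First, in the surjectivity half you propose to ``glue'' the locally supported moves ``using a partition of unity subordinate to the cover''. In the topological category there is nothing to average: a partition of unity lets you glue functions or vector fields, not homeomorphisms, and a convex combination of homeomorphisms need not be injective. The elementary fix (and the paper's actual argument, Lemma~\ref{lem:xhat}) is different: choose, for each time, pairwise disjoint coordinate balls around the points of $K$, use inside each ball an explicit continuous family $F^n(x,y)\in \TOPO^h_{\partial B^n}(B^n,B^n)$ of boundary-fixing ``radial'' homeomorphisms moving $x$ to $y$ (so the local moves have disjoint supports and are glued simply by defining the map piecewise, identity outside the balls), and then compose these block homeomorphisms over a finite subdivision of $\II$ obtained from the Lebesgue number lemma. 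No partition of unity enters, and continuity into $\TOPO^h_{\partial M}(M,M)$ is checked via the product-hom adjunction.

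Second, in the injectivity half the reduction to Theorem~\ref{th:me_bflpai} is legitimate, but the step you rely on --- ``applying a level preserving version of isotopy extension to this $s$-parametrised family of embeddings of $n$ disjoint arcs'' --- is precisely the hard content, not something you can quote or build ``by hand'' as stated. Ordinary isotopy extension applied to the arcs in $M\times\II$ gives an ambient isotopy that is not level preserving (and the arcs meet $M\times\{0,1\}$, so even that application is delicate), whereas what Theorem~\ref{th:me_bflpai} requires is exactly a level preserving ambient isotopy rel $(M\times\{0,1\})\cup(\partial M\times\II)$; producing it from a strong isotopy of fake motions is a parametrised (two-variable) isotopy extension statement, i.e.\ Lees/Edwards--Kirby, which the paper deliberately avoids (it only cites it in a footnote as a shortcut). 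The paper's actual injectivity argument stays elementary: cover the $(t,s)$-square by rectangles on which fixed disjoint charts work, extend the fake motion $X_s$ for each $s$ in a grid to a motion $\hat X^j_s$ by the same local construction as in surjectivity, observe that over each rectangle $(t,s)\mapsto \hat X^j_s(t)$ is already a relative path-homotopy, and bridge between adjacent rectangles using the fact that two motions with the same worldline are relative path-homotopic (Proposition~\ref{prop:dahm} via Theorem~\ref{th:mg2}, or Remark~\ref{rem:dahm_rel_path}); the same worldline trick also matches $f$ and $f'$ to the endpoints of the chain. Unless you either carry out that two-parameter bookkeeping or explicitly invoke a parametrised isotopy extension theorem, your injectivity proof has a genuine gap.
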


To prove 
Theorem~\ref{th:geometric_braids}
we will require some preliminary lemmas.

\smallskip

\newcommand{\OB}{OB} 
\newcommand{\B}{B} 

 Recall that 
for $M$ a topological manifold, possibly with non-empty boundary, the group of homeomorphisms $M \to M$ fixing the boundary, with the compact-open topology, is denoted $\TOPO^h_{\partial M}(M,M)$.
 For the remainder of this section, we also put 
 $B^n=\{x \in \R^n\mid |x| \leq 1\}$,  
 $\OB^n=\{x \in \R^n\mid |x| < 1\}=B^n\setminus \partial B^n$ 
 and  $S^n=\partial B^n$.

\begin{lemma}Let $n$ be a non-negative integer. 
There exists a continuous function:
\[F^n\colon OB^n \times OB^n \to \TOPO^h_{\partial B^n}(B^n,B^n),
\]
such that, 
given any $x,y \in OB^n$, we have $F^n(x,y)(x)=y$, and moreover $F^n(x,x)\colon B^n \to B^n$ is the identity function.
\end{lemma}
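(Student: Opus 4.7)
The plan is to build $F^n(x,y)$ as $g_y \circ g_x^{-1}$ for a suitable one-parameter family $x \mapsto g_x$ of boundary-fixing self-homeomorphisms of $B^n$, with $g_x$ sending the origin to $x$ and $g_0 = \id_{B^n}$.

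First I would define, for $x \in \OB^n$ and $z \in B^n$,
\[
g_x(z) = z + (1 - |z|)\,x.
\]
Geometrically, $g_x$ maps the radial segment from $0$ to a boundary point $\hat z$ linearly onto the segment from $x$ to $\hat z$. From this picture it is immediate that $g_x$ restricts to the identity on $S^{n-1} = \partial B^n$ (since $1-|z| = 0$ there), that $g_x(0) = x$, and that $g_0 = \id_{B^n}$. To see $g_x(B^n)\subseteq B^n$, note that $g_x(z)$ lies on the segment from $x\in \OB^n$ to $\hat z \in B^n$, hence in $B^n$ by convexity. Injectivity follows from the fact that any two distinct segments $[x,\hat z_1]$, $[x,\hat z_2]$ with $\hat z_1\ne \hat z_2$ meet only at $x$, and $g_x(z)=x$ forces $z=0$; surjectivity follows by observing that for $w\ne x$ the line through $x$ and $w$ hits $\partial B^n$ in a unique point $\hat z$ in the direction from $x$ to $w$, giving $w = r\hat z + (1-r)x = g_x(r\hat z)$ for an appropriate $r\in [0,1]$. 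Since $g_x$ is a continuous bijection of the compact Hausdorff space $B^n$, it is a homeomorphism. So $g_x \in \TOPO^h_{\partial B^n}(B^n,B^n)$.

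Next I would verify that the assignment $g\colon \OB^n \to \TOPO^h_{\partial B^n}(B^n,B^n)$, $x \mapsto g_x$, is continuous. Since $B^n$ is compact, Proposition~\ref{pr:comsup1} identifies the compact-open topology on $\Topo(B^n,B^n)$ with the topology of uniform convergence, so continuity reduces to the uniform estimate $\sup_{z\in B^n}|g_x(z) - g_{x'}(z)| = \sup_{z\in B^n}(1-|z|)|x-x'| \le |x-x'|$, which is clear. Now I would define
\[
F^n\colon \OB^n \times \OB^n \longrightarrow \TOPO^h_{\partial B^n}(B^n,B^n), \qquad F^n(x,y) = g_y \circ g_x^{-1}.
\]
The identity $F^n(x,y)(x) = g_y(g_x^{-1}(x)) = g_y(0) = y$ and the identity $F^n(x,x) = g_x\circ g_x^{-1} = \id_{B^n}$ follow directly. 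Continuity of $F^n$ is then assembled from continuity of $g$, continuity of inversion, and continuity of composition in the topological group $\TOPO^h(B^n,B^n)$ (Theorem~\ref{le:top_group}), combined with the fact that the restriction map $\TOPO^h(B^n,B^n)\to \TOPO^h_{\partial B^n}(B^n,B^n)$ used here is inherited via subspace topology.

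The only mildly technical step is the verification that $g_x$ is a homeomorphism; the surjectivity argument is the place where one has to work a little, but it is immediate from the radial-segment picture. Everything else is a formal manipulation once $g$ is in hand.
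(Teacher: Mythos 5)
Your proposal is correct and takes essentially the same route as the paper: your $g_x(z)=z+(1-|z|)x$ is precisely the paper's $F^n(0,x)(z)=x+z-x|z|$, and the paper likewise defines $F^n(x,y)=F^n(0,y)\circ F^n(x,0)$, i.e. $g_y\circ g_x^{-1}$, using continuity of inversion and composition in the topological group $\TOPO^h(B^n,B^n)$ (Theorem~\ref{le:top_group}). The only cosmetic difference is that you verify continuity of $x\mapsto g_x$ via the uniform-metric description of the compact-open topology (Proposition~\ref{pr:comsup1}) and a Lipschitz estimate, whereas the paper deduces it from joint continuity via the product-hom adjunction (Lemma~\ref{th:tensorhom}); both are valid.
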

There are several different ways to construct  $F^n$ with the given properties.  
 We give such a construction in the following proof. See also \cite[\S 4]{Milnor_top} for a construction that can be done in the smooth setting.

\begin{proof}
 Given $x,w \in \R^n$ we put $(x,w]=\{x+t(w-x) \mid t \in (0,1]\}.$ 
Observe that for each $x\in \OB^n$, the union 
$ \{x\} \cup \bigcup_{w\in S^n} (x,w]$
gives a `radial' partition of $\B^n$. 
Thus, for each $(x,y) \in \OB^n \times \OB^n $,
there is an element 
$F^n(x,y)$ 
in $\Set(\B^n, \B^n)$, in fact a bijection,
that sends the segment 
$[x,w]=\{tx +(1-t)w \mid t \in [0,1]\}\subset B^n$, linearly, to the segment $[y,w]$, so that $tx +(1-t)w  \mapsto ty +(1-t)w.$
By construction, $F^n({x,y})\colon B^n \to B^n$ is  the identity when restricted to the boundary of $B^n.$ Also $F^n({x,y})(x)=y$ and $F^n(y,x)$ is the inverse of $F^n(x,y)$. This finishes our construction of $F^n(x,y)$.

Let us give some detail as to why indeed we defined a continuous function $F^n\colon OB^n \times OB^n \to \TOPO^h_{\partial B^n}(B^n,B^n)$. We use Lemma~\ref{th:tensorhom}.

Note that $F^n(0,y)(z)=y+z-y|z|.$ So $F^n(0,y)\colon B^n \to B^n$ is continuous, and since $F^n(0,y)$ is a bijection, and $B^n$ is compact, $F^n(0,y)$ is a homeomorphism. Furthermore, the fact that $(y,z)\mapsto F^n(0,y)(z)$ is continuous, jointly in $y \in OB^n$ and $z \in B^n$, gives that we have a continuous function
$OB^n \ni y \mapsto F^n(0,y) \in \TOPO^h_{\partial B^n}(B^n,B^n)$, if the latter is given the compact-open topology.  (By Lemma \ref{th:tensorhom})

Since  $\TOPO^h(B^n,B^n)$ is a topological group, Theorem \ref{le:top_group}, it follows that the map 
$OB^n \ni x \mapsto F^n(x,0) \in \TOPO^h_{\partial B^n}(B^n,B^n)$ is continuous. (Where we note that $F^n(x,0)$ is the inverse of $F^n(0,x)$.) And then, again using that $\TOPO^h(B^n,B^n)$ is a topological group, it follows that the function
\[OB^n\times OB^n \ni (x,y) \mapsto F^n(0,y) \circ F^n(x,0) \in\TOPO^h_{\partial B^n}(B^n,B^n)\] is continuous.
Now note $F^n(x,y)=F^n(0,y) \circ F^n(x,0)$.
\end{proof}

From here  until the end of the proof of Theorem~\ref{th:geometric_braids}, we fix some choice of $F^n\colon OB^n \times OB^n \to \TOPO^h_{\partial B^n }(B^n,B^n),$ satisfying the conditions of the previous lemma.
  
 \medskip 
  
  The following lemma is a
special case of the isotopy extension theorem for topological manifolds. (As already mentioned, for that theorem see \cite{Edwards_Kirby} and  \cite[Chapter 19]{TopManifolds}.)

 \begin{lemma}\label{lem:xhat}
Let $X\colon K \times \II \to M \setminus \partial M$ be an Artin braid of $K$ in $M$, or more generally any element of $\Topo_{\Braid{}}^{\partial M}(K\times\II,M)$.
 There exists a continuous function $\hat{X} \colon \II \to \TOPO^h_{\partial M}(M,M)$, with $\hat{X}(0)=\id_M$, and such that if $k\in K$ and $t\in \II$ we have $\hat{X}(t)(k)=X(k,t)$.
 \end{lemma}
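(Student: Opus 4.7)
The plan is to build $\hat{X}$ piece-by-piece in time by choosing a partition of $\II$ fine enough that the trajectories of distinct points of $K$ can be separated into disjoint coordinate balls, and then using the function $F^n$ from the previous lemma (with $n=\dim M$) inside each ball to realise the prescribed motion of each point as an ambient homeomorphism which is the identity outside a small neighbourhood. I will assume without loss of generality that $K\subset M\setminus \partial M$, since any point of $K$ lying in $\partial M$ is forced to be stationary by the boundary-fixing condition and needs no construction.

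Write $K=\{k_1,\dots,k_m\}$ and let $d=\dim M$. First I would choose a partition $0=t_0<t_1<\dots<t_N=1$ of $\II$ together with, for each pair $(i,j)\in\{1,\dots,m\}\times\{1,\dots,N\}$, coordinate data $\phi_{i,j}\colon B^d\xrightarrow{\cong}\overline{U_{i,j}}$, where $U_{i,j}\subset M\setminus\partial M$ is open, such that $X(\{k_i\}\times[t_{j-1},t_j])\subset \phi_{i,j}(\mathrm{int}(B^d))$ and, for each fixed $j$, the closures $\overline{U_{1,j}},\dots,\overline{U_{m,j}}$ are pairwise disjoint. Existence of such data is a Lebesgue-number argument: at each $t\in \II$, the distinctness of the points $X(k_i,t)$, together with $M\setminus\partial M$ being locally Euclidean and Hausdorff, lets one pick pairwise disjoint coordinate balls around them inside $M\setminus \partial M$; by continuity of $X$ these balls contain the corresponding trajectories for times in some relative open neighbourhood of $t$; covering the compact $\II$ by finitely many such neighbourhoods and subdividing yields the partition.

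With this data in hand I would define, for each $j$ and $t\in [t_{j-1},t_j]$,
\[
h_j(t)(y)=\begin{cases}
\phi_{i,j}\!\left(F^d\!\big(\phi_{i,j}^{-1}(X(k_i,t_{j-1})),\,\phi_{i,j}^{-1}(X(k_i,t))\big)\!\big(\phi_{i,j}^{-1}(y)\big)\right), & y\in \overline{U_{i,j}},\\
y, & y\in M\setminus \bigcup_{i=1}^m U_{i,j}.
\end{cases}
\]
The two branches agree on $\partial U_{i,j}$ because $F^d(x,y)$ fixes $\partial B^d$ pointwise, so $h_j(t)\in \TOPO^h_{\partial M}(M,M)$ by the gluing lemma, and continuity of $t\mapsto h_j(t)$ in the compact-open topology follows via the product--hom adjunction (Lemma~\ref{th:tensorhom}) from the joint continuity of the evaluation map $M\times[t_{j-1},t_j]\to M$, which in turn is inherited from the continuity of $F^d$ and of $X$. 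I would then concatenate, setting $\hat{X}(t)=h_j(t)\circ h_{j-1}(t_{j-1})\circ\dots\circ h_1(t_1)$ for $t\in[t_{j-1},t_j]$. Matching at each breakpoint $t_j$ is automatic because $h_{j+1}(t_j)=\id_M$, so $\hat{X}\colon \II\to \TOPO^h_{\partial M}(M,M)$ is continuous by the pasting lemma. An induction on $j$ then gives $\hat{X}(0)=\id_M$ and $\hat{X}(t)(k_i)=X(k_i,t)$, using the key property $h_j(t)(X(k_i,t_{j-1}))=X(k_i,t)$.

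The main obstacle is the partitioning step: the embedding condition on $X$ only separates the points $X(k_i,t)$ instantaneously at each $t$, so the real work is in promoting this to a uniform finite-partition statement by combining compactness of $\II$ with local compactness and the Hausdorff property of $M\setminus\partial M$. Once that step is done, the remainder is essentially a mechanical assembly built on the explicit construction of $F^d$ from the previous lemma.
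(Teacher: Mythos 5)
Your proposal is correct and follows essentially the same route as the paper's proof: a Lebesgue-number refinement of $\II$ so that on each subinterval the trajectories of the points of $K$ lie in pairwise disjoint coordinate balls, the map $F^n$ from the preceding lemma applied inside each ball and extended by the identity off it, and then concatenation of these boundary-fixing ambient homeomorphisms across the subintervals. The only (harmless) difference is that you anchor $F^d$ at the left endpoint $t_{j-1}$ of each subinterval rather than at the reference time $t_i$ of the covering interval as in the paper's displayed formula, which if anything makes the tracking induction $\hat{X}(t)(k_i)=X(k_i,t)$ slightly more transparent.
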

 Before we start the proof, note that the proof is quite simple in the case when $M=B^n$ and $|K|=1$. In this case, putting $K=\{x_0\}$ and $X(x_0,t)=x_t$, then $\hat{X}(t):=F^n(x_0,x_t)$ works. The proof of the result is strongly rooted in this idea.
  
\begin{proof}
Let $n=\dim(M)$.
For each $t\in \II$ and $k\in K$, we can find a subset $U_k^t\subset M\setminus \partial M$, together with a homeomorphism $f_k^t\colon U_k^t\to  B^n$, such that $X(k,t) \in \mathrm{int}(U_k^t)$, and such that, moreover, for fixed $t$, all of the $U_k^t$, $k\in K$, are pairwise disjoint. The continuity of $X$ implies that there exists $\epsilon_t >0$ such that $X(k,s) \in \mathrm{int}(U_k^t)$, if $|s-t|\leq \epsilon_t$ for all $k \in K$.
(We have used the fact that the manifold is, by assumption, Hausdorff, and also that Artin braids take values only in the interior of $M$, by definition the set of points of $M$ that have a neighbourhood homeomorphic to an open ball.)

The intervals $(t-\epsilon_t, t+\epsilon_t)$ form an open cover of $\II$. We then use the fact that $\II$ is compact, and the Lebesgue number lemma (see e.g. \cite{munkres}), to find a positive integer $N$ and $0=v_0 < v_1 < \dots < v_{N+1}=1,$ such that each interval $[v_i,v_{i+1}]$ is contained in an interval of the form $(t_i-\epsilon_{t_i}, t_i+\epsilon_{t_i})$, for some $t_i\in [0,1]$. 

For each $i \in \{0,\dots,N\}$, define 
$Y_i\colon [v_{i},v_{i+1}] \to  \TOPO^h_{\partial M}(M,M)$ as\footnote{{Here note that, by the invariance of domain theorem, $\mathrm{int}(U_k^t)$ can be seen both as the interior of the set $U_k^t$, in $M$, and as the interior of the manifold $U_k^t$. This can be used to show that the function below is continuous, applying the pasting lemma to the closed subsets $M\setminus \mathrm{int}(U_k^t)$ and $U_k^t$.}}:
\[Y_i(t)(x):=
\begin{cases}
 (f_k^{t_i})^{-1} \Big (F^n\big( \; f_k^{t_i}(X(k,t_i))\; ,\;
 f_k^{t_i}( X(k,t)) \; \big)\Big), & \textrm{ if } x \in U_k^{t_i}, \textrm{ for some } k \in K\\
 x, & \textrm{otherwise}.
\end{cases}
\]
And finally $\hat{X}\colon \II \to \TOPO^h_{\partial M}(M,M)$, defined as:
\begin{equation}\label{eq:xhat}
 \hat{X}(t)=
\begin{cases}
Y_0(t), & \textrm{ if } t \in [v_0,v_1],\\
Y_1(t)\circ Y_0(v_1), & \textrm{ if } t \in [v_1,v_2],\\
Y_2(t)\circ Y_1(v_2) \circ Y_0(v_1), & \textrm{ if } t \in [v_2,v_3],\\
  \dots\\
Y_{N}(t)\circ Y_{N-1}(v_N) \circ \dots \circ Y_2(v_3)\circ Y_1(v_2) \circ Y_0(v_1), &\textrm{ if } t \in [v_{N},v_{N+1}],
\end{cases}
\end{equation}
has the required properties. 
\end{proof}

\begin{proof} (Of Theorem~\ref{th:geometric_braids})
The surjectivity of $ 
T\colon \Mot[M]^{\partial M}(K,K) \to \mathbf{B}(M,K)$
follows directly from the above lemma.

Let us now prove the injectivity of $T$\footnote{A quicker proof of  injectivity, relying on less elementary results, follows from Lees- \cite{Lees}  parametrised   isotopy extension theorem, {in the form} stated in \cite[Theorem 3.9]{Kupers2015}.}. 
For this we follow a similar technique to the proof of Lemma~\ref{lem:xhat}, with an extra dimension.
Consider relative path-equivalent motions $f,f'\colon K \too K$, fixing $\partial M$.  Suppose that the geometric braids $X_f,X_{f'}\colon K \times I \to M\setminus \partial M$ are strongly isotopic. Let $H\colon (K\times I) \times I \to M\setminus \partial M $ be a strong isotopy connecting $X_f$ and $X_{f'}$. For each $(t,s)\in \II \times \II $, there exists pairwise disjoint subsets $U^{(t,s)}_k\subset M \setminus \partial M$, $k \in K$, together with homeomorphisms:
\[
f^{(t,s)}_k\colon U^{(t,s)}_k \to B^n
\]
and such that $H(k,t,s) \in \mathrm{int}(U^{(t,s)}_k).$ And again, using the continuity of $H$, there exists an open rectangle $I_{t,s}\subset \II\times \II$, with $(t,s) \in I_{t,s}$, and such that
 $H(k,t',s') \in \mathrm{int}(U^{(t,s)}_k)$, for all $k \in K$ and $(t',s')\in I_{t,s}$.
 
 The $I_{t,s}$ form an open cover of $\II \times \II$. By using the Lebesgue number theorem, there exists a positive integer $N$ and $0=m_0 < m_1 <\dots< m_{N}< m_{N+1}=1 $, such that  each $[m_i,m_{i+1}]\times [m_j,m_{j+1}] $ is contained in some $I_{t_i,s_j}$.
 
 Given $s \in [0,1]$, let $X_s\colon K\times \II \to M\setminus \partial M$ be defined by $X_s(k,t)=H(k,t,s)$. 
 
For $j\in \{0,1,\dots, N\}$, following the notation at the end of the proof of Lemma \ref{lem:xhat},  if $s \in [m_j,m_{j+1}]$, define $\hat{X}_s^j\colon \II \to  \TOPO^h_{\partial M}(M,M) $ using Equation \eqref{eq:xhat}, with $X_s$ instead of $X$, and using the homeomorphisms below: 
 \begin{equation}\label{list_homeos}
f^{(t_i,s_j)}_k\colon U^{(t_i,s_j)}_k \to B^n
, \textrm{ where } i\in \{0,1,\dots, N\}, \textrm{ and } k \in K.
\end{equation}
Then, by construction, the map $\II \times [m_j,m_{j+1}] \to  \TOPO^h_{\partial M}(M,M)$ given by $(t,s) \mapsto \hat{X}_s^{j}(t)$ defines a relative path-homotopy from $\hat{X}_{m_j}^{j}\colon K \too K$ to $\hat{X}_{m_{j+1}}^{j}\colon K \too K$, fixing $\partial M$. (Albeit using the interval $[m_j,m_{j+1}]$ rather than $[0,1]$.) So $(\hat{X}_{m_j}^j\colon K \too K) \simrp (\hat{X}_{m_{j+1}}^j\colon K \too K)$, for each $j\in {0,1,\dots, N}$. (Here and below, $\simrp$ means relative path-homotopy fixing $\partial M$.)

Note that 
$\hat{X}_{m_{j+1}}^j\colon K \too K$  and $\hat{X}_{m_{j+1}}^{j+1}\colon K \too K$ are not necessarily the same motion: 
the lists of homeomorphisms,
\begin{equation*}
f^{(t_i,s_j)}_k\colon U^{(t_i,s_j)}_k \to B^n
, \textrm{ where } i\in \{0,1,\dots, N\}, \textrm{ and } k \in K,
\end{equation*}
and 
\begin{equation*}
f^{(t_i,s_{j+1})}_k\colon U^{(t_i,s_{j+1})}_k \to B^n
, \textrm{ where } i\in \{0,1,\dots, N\}, \textrm{ and } k \in K,
\end{equation*}
may be different.
However, by construction, the two motions $\hat{X}_{m_{j+1}}^j\colon K \too K$  and $\hat{X}_{m_{j+1}}^{j+1}\colon K \too K$ have the same worldline, which explicitly is
\[\bigcup_{t\in [0,1]} X_{m_{j+1}}(K \times \{t\}) \times \{t\} .\]
And therefore, using Theorem  \ref{th:mg2} combined with Proposition \ref{prop:dahm} (or equivalently Remark \ref{rem:dahm_rel_path}) we have:
$(\hat{X}_{m_{j+1}}^j\colon K \too K) \simrp (\hat{X}_{m_{j+1}}^{j+1}\colon K \too K)$. 

Since the discussion holds for all $j \in \{0,1,\dots, N\}$, by transitivity, we have that
$(\hat{X}^0_{m_0}\colon K \too K)\simrp (\hat{X}^{N+1}_{m_{N+1}}\colon K \too K)$.

To finish proving that $f,f'\colon K \too K$ are relative path homotopic, fixing $\partial M$, we now observe that the worldline of $f\colon K \too K$ is the same as that of  $\hat{X}^0_{m_0}\colon K \too K,$ explicitly 
\[\bigcup_{t\in [0,1]} X(K \times \{t\}) \times \{t\} .\]
Analogously the worldline of $f'\colon K \too K$ is the same as that of  $\hat{X}^{N+1}_{m_{N+1}}\colon K \too K$. So we can apply Remark  \ref{rem:dahm_rel_path} again to prove that  $f,f'\colon K \too K$ are relative path homotopic, fixing $\partial M$.
\end{proof}

\rem{ It follows from Theorem~\ref{th:geometric_braids} together with the finite presentation of $\braid$ in \cite{artin}, that $\Mot[\R^2](K,K)$ is finitely generated.}

Note that exactly the same argument as used in the proof of Theorem~\ref{th:geometric_braids} will give the following result.
\begin{theorem}\label{th:groupoid_artinbraid}
Let $M$ be a manifold. The groupoid map $\mathbf{T} \colon \Mot^{\partial M} \to \FMot{M}^{\partial M}$ restricts to an isomorphism between the  respective full subgroupoids with objects the finite subsets of $M\setminus \partial M$. \qed.
\end{theorem}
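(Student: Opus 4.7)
The plan is to mimic the proof of Theorem~\ref{th:geometric_braids} verbatim, observing that nowhere in that argument is it essential that the source and the target of the Artin braid coincide. First I would note that $\mathbf{T}$ does restrict to the required subgroupoids: if $f\colon K \too K'$ is a motion with $K,K'$ finite subsets of $M\setminus \partial M$, then $f_t(K)$ is always a finite subset of $M\setminus \partial M$ (homeomorphisms preserve cardinality and send interior to interior), so $T'(f)\colon K\times \II \to M\setminus \partial M$ is indeed a fake motion in the appropriate subgroupoid, and conversely.

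For surjectivity on each hom-set $\FMot{M}^{\partial M}(K,K')$, I would apply Lemma~\ref{lem:xhat} to any fake motion $X\colon K\not\too K'$. The lemma produces $\hat{X}\colon \II \to \TOPO^h_{\partial M}(M,M)$ with $\hat X(0)=\id_M$ and $\hat X(t)(k)=X(k,t)$ for all $k\in K$. In particular $\hat X(1)(K)=X(K,1)=K'$, so $\hat{X}$ is a $\partial M$-fixing motion from $K$ to $K'$ with $T'(\hat X)=X$. The lemma's statement did not require source and target to coincide, so nothing changes.

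For injectivity, I would repeat the second half of the proof of Theorem~\ref{th:geometric_braids} verbatim. Given motions $f,f'\colon K \too K'$ fixing $\partial M$ and a strong isotopy $H\colon (K\times \II)\times \II \to M\setminus \partial M$ between $X_f$ and $X_{f'}$, one covers $\II\times \II$ by small rectangles $I_{t,s}$ on each of which all points $H(k,t',s')$, $k\in K$, lie in disjoint Euclidean charts; the Lebesgue number lemma produces a grid $0=m_0<\dots<m_{N+1}=1$ refining the cover. For each $s$ one applies the chart-by-chart construction of Lemma~\ref{lem:xhat} to obtain $\hat X^j_s$, and the joint continuity in $(t,s)$ gives a relative path-homotopy (fixing $\partial M$) between $\hat X^j_{m_j}$ and $\hat X^j_{m_{j+1}}$ as motions $K\too K'$. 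The motions $\hat X^j_{m_{j+1}}$ and $\hat X^{j+1}_{m_{j+1}}$ have the same worldline, so by Remark~\ref{rem:dahm_rel_path} they are relative path-homotopic fixing $\partial M$; chaining all these equivalences and invoking Remark~\ref{rem:dahm_rel_path} once more to compare $f$ with $\hat X^0_{m_0}$ and $f'$ with $\hat X^N_{m_{N+1}}$ shows $f\simrp f'$, hence $\classm{f}=\classm{f'}$ by Theorem~\ref{th:mg2}.

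The final step is functoriality with the composition, which is immediate: by the conventions \eqref{def:comp1} and \eqref{eq:composition_fakemotion}, for motions $f\colon K\too K'$ and $g\colon K'\too K''$ one has $T'(g*f)=T'(g)\ox T'(f)$ pointwise, and $\mathbf{T}$ is already a functor on all of $\Mot^{\partial M}$ by Theorem~\ref{thm:MottoFmot}, so restriction to the finite-subset subgroupoid gives a groupoid morphism. Together with bijectivity on each hom-set, this yields the desired isomorphism of subgroupoids. I do not anticipate any real obstacle beyond bookkeeping: both the chart-by-chart construction of $\hat{X}$ and the grid argument are exactly the ones already carried out, and the finite-cardinality hypothesis is only used through Lemma~\ref{lem:xhat} (disjoint Euclidean neighbourhoods around finitely many points), which is insensitive to whether source and target agree.
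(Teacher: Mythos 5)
Your proposal is correct and matches the paper's intent exactly: the paper proves Theorem~\ref{th:groupoid_artinbraid} simply by noting that the argument of Theorem~\ref{th:geometric_braids} (surjectivity via Lemma~\ref{lem:xhat}, injectivity via the Lebesgue-number grid argument and Remark~\ref{rem:dahm_rel_path}) goes through unchanged when the source and target finite subsets differ, which is precisely what you spell out. Your added observations that $\mathbf{T}$ restricts to these subgroupoids and that functoriality comes from Theorem~\ref{thm:MottoFmot} are the same bookkeeping the paper leaves implicit.
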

Also, applying the same argument to the map from Remark~\ref{rem:partial_moves} we also have: 
\begin{theorem}\label{th:geometricbraid_nofix}Let $M$ be a manifold.  We have an isomorphism between  the full subgroupoids of $\Mot$ and $\FMot{M}^{\partial M}$ with objects the finite subsets of $M\setminus \partial M$. \qed.
\end{theorem}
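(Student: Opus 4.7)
The plan is to imitate the arguments behind Theorems~\ref{th:geometric_braids} and~\ref{th:groupoid_artinbraid} almost verbatim, observing at each step that the only place where $\partial M$ enters is in constructing the target/source fake motions (which automatically respect $\partial M$ because they arise from homeomorphisms of $M$) and in extending fake motions to motions (which is already done in boundary-fixing form by Lemma~\ref{lem:xhat}). Concretely, I will define $\mathbf{T}'\colon \Mot \to \FMot{M}^{\partial M}$, on the full subgroupoid of $\Mot$ with objects the finite subsets of $M\setminus \partial M$, by the same formula as in Theorem~\ref{thm:MottoFmot}: a motion class $\classm{\mot{f}{}{K}{K'}}$ is sent to the strong-isotopy-relative-to-$\partial M$ class of $X_f(k,t)=f_t(k)$.

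For well-definedness I need to check that $X_f \in \dahmfix{K}{M}{K'}{\partial M}$: condition (III) is vacuous since $K\cap \partial M=\emptyset$, and condition (IV) holds because each $f_t$ is a self-homeomorphism of $M$ (sending $\mathrm{int}(M)$ to $\mathrm{int}(M)$, so $f_t(k)\notin\partial M$). If $f\simm f'$ in $\Mot$, Theorem~\ref{th:mg2} yields a relative path-homotopy $H\colon \II\times \II \to \TOPO^h(M,M)$ (not necessarily fixing $\partial M$); the induced $H'((k,t),s)=H(t,s)(k)$ still takes values in $M\setminus \partial M$ for the same reason, and therefore is a strong isotopy relative to $\partial M$. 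Preservation of composition is identical to the computation in Theorem~\ref{thm:MottoFmot}.

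For surjectivity, given any $X\colon K\times \II \to M$ representing a morphism of $\FMot{M}^{\partial M}$, Lemma~\ref{lem:xhat} produces $\hat{X}\in \premots^{\partial M}$ with $\hat X(0)=\id_M$ and $\hat X(t)(k)=X(k,t)$, yielding a motion in $\Mot^{\partial M}(K,K')$ and hence a fortiori a motion in $\Mot(K,K')$ whose image under $\mathbf{T}'$ is $\classfk{X\colon K\not\too K'}$. For injectivity, suppose $X_f$ and $X_{f'}$ are strongly isotopic relative to $\partial M$ via $H\colon (K\times \II)\times \II \to M$. I will run the proof of Theorem~\ref{th:geometric_braids} verbatim on $H$: the Lebesgue-number argument followed by the pointwise formula \eqref{eq:xhat} produces, for each level $s$, a motion $\hat X^j_s\colon K \too K'$ which lies in $\Mot^{\partial M}$ (because the plugs $F^n$ all fix $\partial B^n$ and are extended by the identity), and the same chain argument gives a chain of relative path-homotopies fixing $\partial M$ between the endpoint motions $\hat X^0_{m_0}$ and $\hat X^{N+1}_{m_{N+1}}$, so these are motion equivalent in $\Mot^{\partial M}$, and hence in $\Mot$. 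Finally, $f$ and $\hat X^0_{m_0}$ have identical worldline $\bigcup_{t\in\II}X(K\times\{t\})\times\{t\}$, as do $f'$ and $\hat X^{N+1}_{m_{N+1}}$, so Proposition~\ref{prop:dahm} (which makes no boundary-fixing hypothesis) gives $f\simm \hat X^0_{m_0}$ and $f'\simm \hat X^{N+1}_{m_{N+1}}$ in $\Mot$; chaining all equivalences yields $f\simm f'$ in $\Mot$.

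The work is essentially bookkeeping; the only subtlety to watch is that motion equivalence obtained inside $\Mot^{\partial M}$ passes to $\Mot$ via the evident forgetful inclusion of representatives and of homotopies, and that Proposition~\ref{prop:dahm} bridges the `boundary-fixing' motion $\hat X^0_{m_0}$ with the `possibly not boundary-fixing' motion $f$ in the common ambient $\Mot$. Because every proof ingredient used above is already established in the paper, no additional obstruction is anticipated.
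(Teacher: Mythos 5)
Your proposal is correct and is essentially the paper's own argument: the paper proves this theorem by "applying the same argument" as in Theorems~\ref{th:geometric_braids} and~\ref{th:groupoid_artinbraid} to the map of Remark~\ref{rem:partial_moves}, which is precisely the functor you define, and your verification (interior-preservation for well-definedness, Lemma~\ref{lem:xhat} for surjectivity, the Lebesgue-number chain plus Proposition~\ref{prop:dahm} for injectivity, with the $\partial M$-fixing equivalences passed to the coarser relation in $\Mot$) just spells out those steps. No gap; this matches the intended proof.
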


\rem{ Let $f\colon \Mot[\R^2]\to \Mot[D^2 \setminus \partial D^2]$ be a homeomorphism, and $K=\{(0,1),(0,2),\dots,(0,n)\}$. By Example~\ref{ex:R2D2}, $\Mot[\R^2](K,K)\cong  \Mot[D^2 \setminus \partial D^2](f(K),f(K))$.  Thus, by Theorem~\ref{th:geometric_braids} there is an isomorphism $\braid{}\cong \Braid(D^2\setminus \partial D^2,f(K))$. }

\rem{\label{rem:noboundaryfix}
Let $K\subset M\setminus \partial M$ be a finite subset.
From Remark~\ref{rem:braidD^2partialD^2}
we have that $\Braid(M\setminus \partial M,K)\cong \Braid (M,K)$, together with Theorem~\ref{th:geometric_braids} this implies $\Mot[M\setminus \partial M](K,K)\cong \Mot[M]^{\partial M}(K,K)$, and with Theorem~\ref{th:geometricbraid_nofix} we have $\Mot[M]^{\partial M}(K,K)\cong\Mot[M](K,K)$.
}

\begin{remark}\label{rem:braidDahm}
Notice that the proof of Theorem~\ref{th:geometric_braids} works in exactly the same way if we had restricted to the subset of $\TOPO^h(M,M)$ of homeomorphisms with compact support in our construction of the motion groupoid, as is the set up in \cite{dahm}. Thus this implies in the Artin braid setting, the two constructions are isomorphic.
\end{remark}

Finally in our discussion of the full subgroupoids of $\Mot$ and $\FMot{M}$, with objects the finite subsets of $M\setminus \partial M$, we have the following result.

\begin{lemma}\label{le:motion_exists_between_finitesets}
{Let $M$ be a connected manifold. Given a pair of finite subsets $K$ and $K'$ contained in the interior of $M$, of the same cardinality, then there exists a motion $\mot{f}{}{K}{K'}$ of $M$. Moreover, we can suppose that the motion fixes the boundary $M$.}
\end{lemma}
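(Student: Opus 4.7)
The strategy is to reduce to an inductive construction of motions moving one point at a time along a path in the interior, using the boundary-fixing homeomorphisms $F^n\colon OB^n\times OB^n\to \TOPO^h_{\partial B^n}(B^n,B^n)$ introduced just after Lemma~\ref{lem:xhat}.

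First I would handle the case $\dim(M)\ge 2$. Enumerate $K=\{x_1,\dots,x_n\}$ and $K'=\{y_1,\dots,y_n\}$; it suffices to build a boundary-fixing motion taking the ordered tuple $(x_1,\dots,x_n)$ to $(y_1,\dots,y_n)$, since the underlying set is recovered by forgetting the order. The key building block is the following local move: if $U\subset M\setminus \partial M$ is an open set homeomorphic to $OB^n$ via $\varphi\colon U \to OB^n$, and $p,q\in U$, then composing $F^n(\varphi(p),\varphi(q))$ with $\varphi$ and extending by the identity outside of the image of the closed ball $\varphi^{-1}(B^n)$ (extended slightly, or realised inside $U$) yields a path in $\TOPO^h_{\partial M}(M,M)$ from $\id_M$ to a homeomorphism sending $p$ to $q$. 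This is essentially the same construction that produced $\hat X$ in \eqref{eq:xhat}.

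Next I would move points one at a time. Since $M\setminus\partial M$ is connected (as $M$ is connected and $\dim(M)\ge 2$), and moreover $M\setminus (\partial M \cup F)$ is connected for any finite set $F$ of interior points (because $\dim(M\setminus\partial M)\ge 2$), we can inductively choose, for each $i=1,\dots,n$, a path $\gamma_i\colon \II \to M\setminus (\partial M \cup \{y_1,\dots,y_{i-1}\}\cup\{x_{i+1},\dots,x_n\})$ from $x_i$ to $y_i$. By compactness of $\gamma_i(\II)$ and the Lebesgue number lemma applied to a cover by coordinate balls disjoint from the forbidden finite set, we subdivide $\gamma_i$ into finitely many small segments each contained in some chart $U_j\cong OB^n$, and use the local-move construction above on each segment. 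Concatenating these flows under $*$ yields a boundary-fixing motion $f^{(i)}\colon K_{i-1}\too K_i$, where $K_0=K$ and $K_i=\{y_1,\dots,y_i,x_{i+1},\dots,x_n\}$. The composition $f^{(n)} * \cdots * f^{(1)}$ is the required motion from $K$ to $K'$.

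For the case $\dim(M)=1$, I would argue separately using the classification: $M$ is homeomorphic to $S^1$, $[0,1)$, $[0,1]$, or $\R$. In each case, given two finite interior subsets of the same cardinality, a piecewise-linear (or smoothed) flow fixing the boundary and carrying $K$ onto $K'$ can be written down directly, by interpolating between the natural ordering of the two tuples. The main obstacle in the write-up is keeping track of the supports of the local moves so that the extensions by the identity remain homeomorphisms of all of $M$ fixing $\partial M$; this is handled by always taking the chart so small that the local move is supported in a closed ball contained in $U\setminus \partial M$ and not meeting the currently-fixed points.
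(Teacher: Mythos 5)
Your overall strategy---moving one point at a time along an interior path, building the ambient boundary-fixing extension from $F^n$ in small charts via a Lebesgue subdivision, and treating dimension one separately by the classification---is essentially the paper's; in fact the chart-by-chart construction you describe is precisely the proof of Lemma~\ref{lem:xhat}, which you could simply cite instead of redoing. However, there is a genuine gap in your inductive step. You require the path $\gamma_i$ to avoid the not-yet-moved points $x_{i+1},\dots,x_n$ and the flow $f^{(i)}$ to be supported away from them, so that the intermediate configuration is exactly $\{y_1,\dots,y_i,x_{i+1},\dots,x_n\}$. With an arbitrary enumeration this is impossible whenever $y_i\in\{x_{i+1},\dots,x_n\}$: the path must end at $y_i$, so it cannot avoid that point, and indeed no homeomorphism can send $x_i$ to $y_i=x_j$ while fixing $x_j$, since it would fail to be injective. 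Concretely, for $K=\{a,b\}$, $K'=\{b,c\}$ with $x_1=a$, $x_2=b$, $y_1=b$, $y_2=c$, your first step already breaks down. So the argument as written only covers the situation where no target coincides with a later source, e.g.\ $K\cap K'=\emptyset$.

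The gap is easily repaired in either of two ways. First, you may choose the enumerations so that the points of $K\cap K'$ are listed first in both lists and paired with themselves; those steps are then constant motions, and for the remaining indices $y_i\notin K$, so your avoidance condition can be satisfied (connectivity of the interior minus a finite set, in dimension at least two, gives the required paths). Second, you may do what the paper does: drop the requirement that the later points stay fixed, and instead track their images under the flows already performed. At step $i$ one moves the current position $f^{i-1}_1\circ\dots\circ f^{1}_1(x_i)$ to $y_i$ along a path avoiding only $y_1,\dots,y_{i-1}$; injectivity of the time-one homeomorphisms guarantees automatically that both endpoints of this path avoid $y_1,\dots,y_{i-1}$, so no case analysis is needed, and the final composite still carries $K$ onto $K'$.
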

\begin{proof} {(Sketch.)}
{For the case $\dim(M)=1$, note that $M$ is homeomorphic to $S^1, [0,1),[0,1]$ or $\R$. The proof in this case is left to the reader. Cf. Example \ref{ex:finiteset}, and also the maps constructed in Theorem~\ref{th:mcgI} below.} 

{So suppose that $\dim(M)\ge 2$. Then $M$ remains path-connected if a finite subset is taken out of $M$. Let $K=\{x_1,\dots, x_n\}$ and $K'=\{y_1,\dots, y_n\}$. 
 We consider a path $\gamma_1$ connecting $x_1$ to $y_1$, from which we can construct a fake motion  $X_1\colon \{x_1\} \times \II\to M$, sending $(x_1,t)$ to $\gamma_1(t)$. So we have  $X_1\colon \{x_1\} \not \too \{y_1\}$. By Lemma \ref{lem:xhat}, we hence have a motion $f^1\colon \{x_1\} \too \{y_1\}$ extending $X_1$, meaning that $T'(f_1)=X_1$. {We then} consider a path $\gamma_2$ from $f_1^1(x_2)$ to $y_2$ that avoids $y_1$. {This} leads to a fake motion $X_2\colon \{y_1,f_1^1(x_2)) \not \too \{y_1,y_2\}$, where $y_1$ does not move. This can again be extended to a motion $f^2\colon \{y_1,f_1^1(x_2)) \too \{y_1,y_2\}$, meaning that $T'(f^2)=X_2$. {This process can be continued. We} consider a path $\gamma_3$ from $f_1^2\circ f_1^1(x_3)$ to $y_3$ that avoids $y_1$ and $y_2$, {from which we} construct another fake motion $X_3\colon \{y_1,y_2,f_1^2\circ f_1^1(x_3)\} \not \too \{y_1,y_2,y_3\} $  that does not move either $y_1$ or $y_2$, and restricts to $\gamma_3$ over $\{f_1^2\circ f_1^1(x_3)\}\times \II$.
 We can lift $X_3$ to another motion $f^3\colon \{y_1,y_2,f_1^2\circ f_1^1(x_3)\}  \too \{y_1,y_2,y_3\} $. If we only have three points, in $K$ and $K'$, then a motion $\{x_1,x_2,x_3\} \too \{y_1,y_2,y_3\}$ can be $f^3*f^2*f^1.$ It is clear how  the induction can be completed.}
\end{proof}

{It therefore follows that the full subgroupoid of the motion groupoid of a connected manifold $M$ with objects the finite subsets of the interior has one connected component for each possible cardinality. {The same holds for the fake motion groupoid.}}

\section{Mapping class groupoid \texorpdfstring{$\mcg^A$}{MCG(M)}}\label{sec:mcg}

\subsection{The mapping class groupoid \texorpdfstring{$\mcg$}{MCG(M)}}\label{ss:mcg}
In this section, we construct the mapping class groupoid $\mcg$ associated to a manifold $M$.
We do this by constructing a congruence on $\Hom$ (see Definition~ \ref{Def:homeoMA}), so the morphisms in $\mcg$ are certain equivalence classes of self-homeomorphisms of $M$ (together with a pair of subsets of $M$). 
Compare this with motions, which keep track of an entire path in $\TOPO^h(M,M)$.

As we will see, mapping class groupoids contain 
the classical {\it mapping class groups}.

\medskip
{ Recall from Section~\ref{ss:selfhomeos} that for a \axiomM{} $M$ and for subsets $N,N'\subseteq M$, morphisms in $\Hom[M](N,N')$ are triples denoted $\shmor{f}{}{N}{N'}$ where $\sh{f}\in \Topo^h(M,M)$ and $\sh{f}(N)=N'$.}
Where convenient, we also think of the elements of $\Hom[M](N,N')$ as the projection to the first coordinate of each triple i.e. $\sh{f}\in \Topo^h(M,M)$ such that $\sh{f}(N)=N'$.
From here we will also use $\Hom(N,N')$ to denote the same set together with the subspace topology induced from $\TOPO^h(M,M)$.
		   
\defn{\label{de:isotopy}
Let $M$ be a \axiomM{} and $N,N'\subseteq M$.
For any $\shmor{f}{}{N}{N'}$ and $\shmor{g}{}{N}{N'}$ in $\Hom(N,N')$, $\shmor{f}{}{N}{N'}$ is said to be {\em isotopic} to $\shmor{g}{}{N}{N'}$, denoted by $\simi$, if there exists a continuous map: 
		\[
		H\colon M\times \II \to M
		\]
		such that 
		\begin{itemize}
		\item for all fixed $s\in \II$, the map $m\mapsto H(m,s)$ is in $\Homn$,
		\item for all $m\in M$, $H(m,0) = \mc{f}(m)$, and 
		\item for all $m\in M$, $H(m,1) = \mc{g}(m)$.
		\end{itemize}
We call such a map an {\em isotopy} from $\shmor{f}{}{N}{N'}$ to $\shmor{g}{}{N}{N'}$.}

More generally, let $A\subset M$ be a subset, and 
$\shmor{f}{}{N}{N'}$ and $\shmor{g}{}{N}{N'}$ homeomorphisms in $\Hom^A(N,N')$, 
then $\sh{f}$ and $\sh{g}$ are said to be {\em $A$-fixing isotopic} if an  isotopy $H$ from $\shmor{f}{}{N}{N'}$ to $\shmor{g}{}{N}{N'}$ exists, satisfying moreover that for all $a\in A$ and $t \in \II$, $H(a,t)=a$. We write:
$\shmor{f}{A}{N}{N'} \simiA\shmor{g}{A}{N}{N'}$. Such a map $H\colon M\times \II \to M$ is called an {\em $A$-fixing isotopy}.

\lemm{\label{le:isotopy}
Let $M$ be a \axiomM{}.
For all pairs $N,N'\subseteq M$, the relation $\simi$ is an equivalence relation on $\Hom(N,N')$.\\
{Notation:} We call this equivalence relation {\em isotopy equivalence}.
We denote the equivalence class of $\shmor{f}{}{N}{N'}$, up to isotopy equivalence, as 
$\classi{\shmor{f}{}{N}{N'}}$.
}

\begin{proof}
Let $\shmor{f}{}{N}{N'}$, $\shmor{g}{}{N}{N'}$ and $\shmor{h}{}{N}{N'}$ be in $\Hom(N,N')$ with $(\shmor{f}{}{N}{N'})\simi (\shmor{g}{}{N}{N'})$ and $(\shmor{g}{}{N}{N'})\simi (\shmor{h}{}{N}{N'})$.
	Then
	there exists some isotopy, say $H_{\mc{f},\mc{g}}$, from $\shmor{f}{}{N}{N'}$ to $\shmor{g}{}{N}{N'}$ and 
	an isotopy, say $H_{\mc{g},\mc{h}}$, from $\shmor{g}{}{N}{N'}$ to $ \shmor{h}{}{N}{N'}$.
	
	We first check reflexivity.
	The map $H(m,s)=\mc{f}(m)$ for all $s\in \II$ is an isotopy from $\shmor{f}{}{N}{N'}$ to itself.
	For symmetry, $H(m,s)=H_{\mc{f},\mc{g}}(m,1-s)$ is an isotopy from $\shmor{g}{}{N}{N'}$ to $\shmor{f}{}{N}{N'}$.
	For transitivity, 
	\[
	H(m,s)=\begin{cases}
	H_{\mc{f},\mc{g}}(m,2s), & 0\leq s\leq \frac{1}{2} \\
	H_{\mc{g},\mc{h}}(m,2(s-\frac{1}{2})), & \frac{1}{2} \leq s \leq 1
	\end{cases}
	\]
	is an isotopy from $\shmor{f}{}{N}{N'}$ to $\shmor{h}{}{N}{N'}$.
\end{proof}

\lemm{\label{le:isotopy_comp}
Let $M$ be a manifold.
The family of relations $(\Hom(N,N'),\simi)$
for all pairs $N,N'\subseteq M$
 are a congruence on $\Hom$.
}

\begin{proof}
We have that $\simi$ is an equivalence relation on each $\Hom(N,N')$ from Lemma~\ref{le:isotopy}.

We check that the composition descends to a well defined composition on equivalence classes.
Suppose there exists an isotopy, say $H_{\mc{f},\mc{f'}}$, from $\shmor{f}{}{N}{N'}$ to  $\shmor{f'}{}{N}{N'}$ and another isotopy, say $H_{\mc{g},\mc{g'}}$ from
$\shmor{g}{}{N}{N'}$ to $\shmor{g'}{}{N}{N'}$.
Then: 
\[
H(m,s)= H_{\mc{g},\mc{g'}}\left( H_{\mc{f},\mc{f'}}(m,s),s \right)
\]
is an isotopy from $\shmor{g\circ f}{}{N}{N''}$ to $\shmor{g'\circ f'}{}{N}{N''}$. 
Note that $H$ is continuous since it is the composition of the continuous functions $H_{\mc{g},\mc{g'}}$ and $(m,s) \in M\times \II \mapsto \left ( H_{\mc{g},\mc{g'}}(m,s),s\right)\in M\times \II.$
\end{proof}

In the following theorem, as with Theorem~\ref{th:mg}, we drop the information about subsets in the tuple defining a groupoid, to keep the notation readable; so $\classi{f\colon X \to Y}$ is denoted $\classi{f}$. Again the subsets are essential to construct the relation $\simi$.

\begin{theorem}\label{th:mcg}
Let $M$ be a \axiomM{}.
There is a groupoid 
\[\mcg=(\Power M, \Hom(N,N')/\simi,\circ,\classi{\id_M},\classi{\sh{f}}\mapsto\classi{\sh{f}^{-1}}). \]
We call this the {\em mapping class groupoid of $M$}.
\end{theorem}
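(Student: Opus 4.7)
The plan is to obtain $\mcg$ as a direct application of Proposition~\ref{pr:gpd_cong} (quotient of a groupoid by a congruence) to the action groupoid $\Hom$ of Definition~\ref{Def:homeoMA}, with the congruence supplied by isotopy equivalence.

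First I would record that $\Hom$ is already a groupoid: by Definition~\ref{Def:homeoMA}, $\Hom=\agrpd{\Topo^h(M,M)}{\acts}{\Power M}$, where $\Topo^h(M,M)$ is a group (acting on $\Power M$ with $\id_M\acts N=N$), so the general construction of an action groupoid (Definition~\ref{def:actiongroupoid}) gives that the identity at each $N\in\Power M$ is $\shmot{\id_M}{}{N}{N}$, and the inverse of $\shmor{f}{}{N}{N'}$ is $\shmor{f^{-1}}{}{N'}{N}$. In particular the underlying magmoid has composition given by function composition $\circ$.

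Next, by Lemma~\ref{le:isotopy_comp}, the family $\{(\Hom(N,N'),\simi)\}_{N,N'\subseteq M}$ is a congruence on the underlying magmoid of $\Hom$. Then Proposition~\ref{pr:gpd_cong} applies verbatim to $\GG=\Hom$ with $C=\simi$, producing the quotient groupoid
\[
\Hom/\simi \;=\; \bigl(\Power M,\; \Hom(N,N')/\simi,\; \circ,\; \classi{\id_M},\; \classi{\sh{f}}\mapsto \classi{\sh{f}^{-1}}\bigr),
\]
which is precisely $\mcg$ as in the statement. The identity and inverse axioms $(\GG1)$ and $(\GG3)$ follow from those of $\Hom$ applied to representatives, and associativity $(\GG2)$ is inherited from the associativity of function composition.

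There is no real obstacle: everything nontrivial (the transitivity and composition-compatibility of $\simi$) has been dispatched in Lemmas~\ref{le:isotopy} and \ref{le:isotopy_comp}, and the passage from magmoid congruence to groupoid quotient is the content of Proposition~\ref{pr:gpd_cong}. The only thing worth a brief sanity check is that the concrete formulas for identity and inverse in the displayed tuple agree with the abstract ones produced by Proposition~\ref{pr:gpd_cong}; this is immediate since $\id_M$ is the identity in $\Topo^h(M,M)$ and $\sh{f}^{-1}$ is the group-theoretic inverse, both of which descend to $[\,\cdot\,]_i$ without choice.
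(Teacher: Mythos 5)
Your proposal is correct and follows essentially the same route as the paper: it cites Lemma~\ref{le:isotopy_comp} for the congruence and Proposition~\ref{pr:gpd_cong} for the quotient of the groupoid $\Hom$, yielding the stated identity $\classi{\id_M}$ and inverse $\classi{\sh{f}^{-1}}$. The extra remarks on the action-groupoid structure of $\Hom$ and the agreement of the concrete formulas with the abstract quotient are harmless elaborations of the same argument.
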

\begin{proof}
This is the quotient $\Hom/\simi$. Lemma~\ref{le:isotopy_comp} gives that $\simi$ is a congruence and Proposition~\ref{pr:gpd_cong} gives that the quotient of a groupoid by a congruence is still a groupoid, with the given identity and inverse.
\end{proof}

Recall that, if $X$ is a topological space, then the relation $x\sim y$ if, and only if, $x$ and $y$ can be connected by a path in $X$, is an equivalence relation on (the underlying set of) $X$. We let $\pi_0(X)$, the \emph{set of path-components of $X$}, 
denote the quotient of the set $X$
by this equivalence relation.

\lemm{\label{le:MCG_pi0}
Let $M$ be a \axiomM{}. We have
the following equality of sets
			\begin{align*}
			\mcg(N,N')&=\pi_0(\Hom(N,N')).
			\end{align*}
 			}
\begin{proof}
Using Theorem \ref{th:tensorhom}, a continuous map $M\times \II \to M$ from $\shmor{f}{}{N}{N'}$ to $\shmor{g}{}{N}{N'}$ satisfying the conditions in Definition~\ref{de:isotopy} corresponds to a path $\II\to \Homn$ 
from $\sh{f}$ to $\sh{g}$, 
hence the equivalence relations on each side are the same.
\end{proof}

For an oriented compact manifold $M$, let $\mcg^{+}\subset \mcg$ be the subgroupoid whose morphisms are only those equivalence classes containing an orientation preserving homeomorphism.
Note that, if $\sh{f}$ is an orientation preserving, then so is every $\sh{g}\in \classi{\sh{f}}$ \cite[Thm.1.6]{Hirsch}, and composition of orientation preserving maps is orientation preserving, thus $\mcg^{+}$ is indeed a subgroupoid.

\prop{\label{pr:mcgroup}
For an oriented compact manifold $M$ without boundary, it follows directly from the definitions that the mapping class group of \cite{Kassel_Turaev} is $\mcg[M]^{+}(N,N)$ (cf. \cite{damiani,birman}).
In the case $M$ is a surface, $\mcg[M]^{+}(\emptyset,\emptyset)$ is the classical mapping class group, see e.g.\cite{farb}. \qed
}

Thus, to 
help
understand specific mapping class groupoids, we can make use of  
results on mapping class (sub)groups 
e.g. in \cite{birman,farb,hamstrom,hatcherthurston}. 
It is an interesting question as to the relationship between a 
connected groupoid and the groups it contains.
The complement consists of morphism sets that are not groups,
but they are 
(respectively left and right)
$G$-sets for the the two different objects.
Each morphism set 
$ \mcg[M](N,N')$
is generated by the action of either
group on any single element.
But of course as soon as $N \neq N'$ this set is not even pointed,
so there is no analogue of the identity morphism.

\medskip

The following proposition shows that, as with motion groupoids, automorphism groups in mapping class groupoids are only finitely generated for sufficiently tame choices of objects.

\begin{theorem}
\label{th:mcgI}
	Let $M=\II$ and $N=\II\cap \mathbb{Q}$.
	Then  $\mcg[\II](N,N)$ is uncountably infinite.
	\end{theorem}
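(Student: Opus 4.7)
By Lemma~\ref{le:MCG_pi0} we have $\mcg[\II](N,N) = \pi_0\bigl(\Hom[\II](N,N)\bigr)$, so it is enough to exhibit uncountably many path components in $\Hom[\II](N,N)$. The first step is to show that \emph{each path component is a single point}: suppose $H\colon \II \times \II \to \II$ is an isotopy between $\shmor{f}{}{N}{N}$ and $\shmor{g}{}{N}{N}$, so that $H(-,s)\in \Hom[\II](N,N)$ for every $s$. Then for each fixed $q \in N$, the map $s \mapsto H(q,s)$ is a continuous map from $\II$ into $N = \II\cap \mathbb{Q}$; since $N$ is totally disconnected, this map is constant, so $\mc{f}(q)=\mc{g}(q)$ for all $q \in N$. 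Because $N$ is dense in $\II$ and $\mc{f},\mc{g}$ are continuous, this forces $\mc{f}=\mc{g}$. Thus the quotient map $\Hom[\II](N,N) \to \mcg[\II](N,N)$ is a bijection.

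It now suffices to construct uncountably many distinct elements of $\Hom[\II](N,N)$. The plan is a dyadic ``pyramid'' construction. Partition $(0,1]$ into the countably many closed rational subintervals
\[
I_n := [\tfrac{1}{n+1},\tfrac{1}{n}], \qquad n = 1, 2, 3, \dots
\]
For each $n$, choose one fixed piecewise-linear self-homeomorphism $\tau_n\colon I_n \to I_n$ with rational breakpoints and rational slopes, fixing the two rational endpoints of $I_n$ pointwise and differing from the identity on $I_n$. (For instance, take $\tau_n$ to be the unique piecewise-linear map with a single breakpoint at the rational midpoint of $I_n$ which sends that midpoint to another rational in $I_n$.) Each $\tau_n$ permutes $I_n \cap \mathbb{Q}$ setwise.

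Now, for every subset $S \subseteq \mathbb{N}$, define a function $\mc{f}_S \colon \II \to \II$ by
\[
\mc{f}_S(x) = \begin{cases} \tau_n(x), & x \in I_n,\ n \in S,\\ x, & \text{otherwise.}\end{cases}
\]
Because each $\tau_n$ maps $I_n$ onto itself fixing its endpoints, $\mc{f}_S$ is a bijection of $\II$ that agrees with the identity outside $\bigcup_{n \in S} I_n$; it is continuous on each $I_n$ and on $\{0\}$ (since $\mc{f}_S(I_n)=I_n$ and $\operatorname{diam}(I_n)\to 0$), so $\mc{f}_S\in\Topo^h(\II,\II)$. Moreover $\mc{f}_S(N) = N$, since each $\tau_n$ preserves $I_n\cap\mathbb{Q}$. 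Finally, distinct subsets $S,S'\subseteq \mathbb{N}$ yield distinct homeomorphisms: if $n \in S\triangle S'$, then $\mc{f}_S$ and $\mc{f}_{S'}$ disagree on some rational point of $I_n$ by choice of $\tau_n \neq \id$. The map $S \mapsto \mc{f}_S$ therefore gives an injection of $\Power{\mathbb{N}}$ into $\Hom[\II](N,N)$, which is consequently uncountable.

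The main obstacle is purely the explicit construction of uncountably many elements of $\Hom[\II](N,N)$; once the preservation argument in the first paragraph is in place (which is immediate from the total disconnectedness of $\mathbb{Q}$), the combinatorial abundance given by the pyramid $\{\tau_n\}_{n\in\mathbb{N}}$ completes the proof.
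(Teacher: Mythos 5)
Your proof is correct, and its first half is the same rigidity argument the paper uses: an isotopy through elements of $\Hom[\II](N,N)$ gives, for each $q\in N$, a continuous path in $\Q$, which is constant since $\Q$ is totally disconnected. You push this one step further than the paper by invoking density of $N$ in $\II$ to conclude that isotopic elements of $\Hom[\II](N,N)$ are literally equal, i.e.\ the quotient $\Hom[\II](N,N)\to\mcg[\II](N,N)$ is a bijection; the paper only applies the constancy-of-paths observation to separate particular maps. Where you genuinely diverge is in manufacturing the uncountable family. The paper takes piecewise-linear homeomorphisms with rational breakpoints and rational gradients, allows infinitely many linear pieces, and counts these as a countable product of countable sets; this is fine but its sketch leaves the reader to check that a prescription with infinitely many segments really defines a homeomorphism. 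Your construction toggles one fixed nontrivial rational PL map $\tau_n$ on each interval $I_n=[\tfrac1{n+1},\tfrac1n]$ according to a subset $S\subseteq\N$, which gives an explicit injection of $\Power{\N}$ into $\Hom[\II](N,N)$ and isolates the only analytic issue (continuity at $0$, handled by $\operatorname{diam}(I_n)\to 0$), so the cardinality count is immediate rather than implicit. Two small polishing remarks: continuity of $f_S^{-1}$ is either checked by the same estimate applied to the $\tau_n^{-1}$, or is automatic since a continuous bijection of the compact Hausdorff space $\II$ is a homeomorphism; and it is worth saying explicitly that each $\tau_n$, being PL with rational data and rational inverse slopes, preserves $\Q\cap I_n$ in both directions, which is what guarantees $f_S(N)=N$.
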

	\begin{proof}
	We begin by constructing certain elements of $\Topo^h(\II,\II)$.
	Choose points $x,x'\in N\setminus \{0,1\}$, then there is a unique piecewise linear, orientation preserving map with precisely two linear segments sending $x$ to $x'$ and moreover this map sends $N$ to itself. Denote this by $\phi_{xx'}$.
	Let us fix the point $x$, then varying $x'$ gives a countably infinite choice of maps $\phi_{xx'}$.
	We prove by contradiction that all such $\phi_{xx'}$ represent distinct equivalence classes in $\mcg[\II](N,N)$.
	
	Let $x,x',x''\in N\setminus\{0,1\}$ and suppose $\phi_{xx'}\colon N\to N$ is isotopic to $\phi_{xx''}\colon N\to N$ in $\mcg[\II](N,N)$.
	Then for all $n\in N$ we have a path $\phi_{xx'}(n)$ to $\phi_{xx''}(n)$ in $\Q$, and hence a path $\phi_{xx'}(x')$ to $\phi_{xx''}(x'')$.
	But all paths $\II \to \Q$ are constant, which follows from the intermediate value theorem.
	Hence $\phi_{xx'}=\phi_{xx''}$.
	Therefore any pair of distinct maps of the described form are not isotopic.
	
	More generally a piecewise linear map can be defined as follows. Starting from $t=0$, each segment is be defined by choosing an upper bound $t\in\{0,1]$ and a gradient (which is bounded by condition that the map is well defined). Repeating with the condition that the upper bound must be distinct from the upper bound of the previous section until $t=1$ is chosen, defines a map. Choosing rational gradients, and rational bounds is sufficient to ensure such a map sends $N$ to itself.
	By the same argument as above distinct such maps are non isotopic.
	Allowing for an infinite number of segments, then the cardinality of maps that can be constructed this way is a countable product of countable sets, thus uncountable. (More precisely it has the cardinality of the continuum.)
\end{proof}

We now give some examples of automorphism groups in the mapping class groupoid which follow easily from classical results on homeomorphism spaces.
We include proofs to set the scene for the next section.

\prop{\label{calc:MCGS1}
We have 
$
\mcg[S^1](\emptyset,\emptyset) \cong  \Z / 2\Z.
$
}
\begin{proof}
It is proven in \cite[Theorem 1.1.2]{hamstrom} that the space of orientation preserving homeomorphisms $S^1 \to S^1$ is homotopic to $S^1$, and in particular that it is path-connected. Since the identity map $S^1 \to S^1$ is orientation preserving, the path-component of the identity in $\TOPO^h(S^1,S^1)$ is the set of orientation preserving homeomorphisms from $S^1$ to itself.

Let $\mathrm{deg}\colon \Topo^h(S^1,S^1)\to \{\pm 1\}$ be the restriction of the degree map (see for example \cite[Sec.2.2]{hatcher}), thus $\mathrm{deg}(\sh{f})=+1$ if $\sh{f}$ is orientation preserving and $-1$ if $\sh{f}$ is orientation reversing. We show that $\mathrm{deg}$ descends to an isomorphism $\pi_0(\Hom[S^1](\emptyset,\emptyset))\to \{\pm 1\}\cong \Z/2\Z$.
	If $\sh{f}$ and $\sh{g}$ are orientation reversing, then $\sh{f}\circ \sh{g}^{-1}$ is orientation preserving, and hence can be connected by a path to $\mathrm{id}_{S^1}$. It follows that $\sh{f}$ and $\sh{g}$ can be connected by a path in  $\TOPO^h(S^1,S^1)$. Moreover maps in the same path component must have the same degree, thus are either both orientation reversing, or both orientation preserving.
\end{proof}

It is also known that the group of orientation preserving homeomorphisms of $S^2$ and of $S^3$ is path-connected, this follows from 
\cite[Thm.15]{Fisher}. In dimensions 2 and 3 a more recent discussion is in \cite[Thm.3.1]{Aceto_Bregman_Davis_Park_Ray}. (The dimension 2 case is also present in \cite[Thm.1.2.2.]{hamstrom}.) Therefore the same argument as above gives:

\prop{\label{prop:mcgs2s3}
We have 
$
\mcg[S^2](\emptyset,\emptyset)
\cong 
\Z / 2\Z$ and $\mcg[S^3](\emptyset,\emptyset)
\cong
\Z / 2\Z$. \qed
}

\subsection{Pointwise \texorpdfstring{$A$}{A}-fixing mapping class groupoid, \texorpdfstring{$\mcg^A$}{mcgMA}}
\label{sec:A-fixing}

Here we discuss a version of the mapping class groupoid which fixes a distinguished subset.

\begin{theorem}\label{th:mcga}
	Let $M$ be a manifold and $A\subset M$ a subset.
	There is a groupoid
	\[
	\mcg^A=(\Power M,\Hom^A(N,N')/\simiA,\circ,\classiA{\id_M},
	\classiA{\sh{f}}\mapsto \classiA{{\sh{f}}^{-1}}).
	\]
\end{theorem}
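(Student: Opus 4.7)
The plan is to mirror the proof of Theorem~\ref{th:mcg} verbatim, with the straightforward observation that every homotopy constructed in Lemmas~\ref{le:isotopy} and \ref{le:isotopy_comp} is built from given input isotopies by reparametrisation in the $s$-variable or by composition/stacking, and all such constructions preserve the property of fixing $A$ pointwise.

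First I would establish that $\simiA$ is an equivalence relation on each $\Hom^A(N,N')$. Given $A$-fixing homeomorphisms $\shmor{f}{A}{N}{N'},\shmor{g}{A}{N}{N'},\shmor{h}{A}{N}{N'}$ with $A$-fixing isotopies $H_{\mc{f},\mc{g}}$ and $H_{\mc{g},\mc{h}}$, reflexivity uses the constant-in-$s$ isotopy $(m,s)\mapsto \sh{f}(m)$, which fixes $A$ since $\sh{f}$ does; symmetry uses $H(m,s)=H_{\mc{f},\mc{g}}(m,1-s)$; and transitivity uses the stack-and-shrink isotopy as in Lemma~\ref{le:isotopy}. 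In each case, evaluating at $a\in A$ yields $a$ because the inputs do.

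Next I would show that the family $\{(\Hom^A(N,N'),\simiA)\}_{N,N'}$ is a congruence on the magmoid underlying $\Hom^A$. Given $A$-fixing isotopies $H_{\mc{f},\mc{f'}}$ from $\sh{f}$ to $\sh{f'}$ in $\Hom^A(N,N')$ and $H_{\mc{g},\mc{g'}}$ from $\sh{g}$ to $\sh{g'}$ in $\Hom^A(N',N'')$, the composite isotopy $H(m,s)=H_{\mc{g},\mc{g'}}(H_{\mc{f},\mc{f'}}(m,s),s)$ proves that $\sh{g}\circ \sh{f}\simiA \sh{g'}\circ \sh{f'}$; for $a\in A$ we have $H_{\mc{f},\mc{f'}}(a,s)=a$ and hence $H(a,s)=H_{\mc{g},\mc{g'}}(a,s)=a$, so $H$ is $A$-fixing. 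Continuity is as in Lemma~\ref{le:isotopy_comp}, and each slice $m\mapsto H(m,s)$ lies in $\Hom^A(N,N'')$ because it is the composition of two slices lying in $\Hom^A(N,N')$ and $\Hom^A(N',N'')$ respectively, and $\Hom^A$ is closed under composition.

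Since $\Hom^A$ is a groupoid (built as an action groupoid in Definition~\ref{Def:homeoMA}) and $\simiA$ is a congruence on it, Proposition~\ref{pr:gpd_cong} immediately yields that the quotient $\Hom^A/\simiA$ is a groupoid, with identities $\classiA{\id_M}$ and inverse map $\classiA{\sh{f}}\mapsto \classiA{\sh{f}^{-1}}$, giving exactly the pentuple in the statement. There is essentially no obstacle here: the only thing to verify is that the pointwise $A$-fixing condition is preserved under every reparametrisation and pasting step, which is immediate from evaluating at $a\in A$.
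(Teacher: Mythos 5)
Your proposal is correct and follows the same route as the paper: the paper's proof simply observes that the homotopies constructed in Lemmas~\ref{le:isotopy} and \ref{le:isotopy_comp} remain $A$-fixing when the inputs are, and then invokes Proposition~\ref{pr:gpd_cong} to conclude that the quotient $\Hom^A/\simiA$ is a groupoid. You have merely written out explicitly the verifications (evaluating at $a\in A$ in each constructed isotopy) that the paper leaves implicit.
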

\begin{proof}
	This is the quotient $\Hom^A/\simiA$.
	The proofs of Lemmas \ref{le:isotopy} and \ref{le:isotopy_comp} proceed in 
	the same way for $A$-fixing self-homeomorphisms.
	All constructed homotopies will be $A$-fixing for all $s\in \II$.
	Proposition~\ref{pr:gpd_cong} gives that the quotient of a groupoid by a congruence is still a groupoid.
\end{proof}

We can now state an analogous version of Proposition~\ref{pr:mcgroup} for manifolds with boundary.
\begin{proposition}\label{pr:mcgroupA}
For an oriented manifold $M$ with non empty boundary $\partial M$, it follows from the definitions that the mapping class group of \cite{Kassel_Turaev} is the automorphism group $\mcg[M]^{\partial M}(N,N)$. (Note, when $\partial M$ is non-empty, boundary fixing homeomorphisms are necessarily orientation preserving (cf. \cite{damiani,birman}).
In the case $M$ is a surface, and $N=\emptyset$, $\mcg[M]^{\partial M}(\emptyset,\emptyset)$ is the \ppm{classical} mapping class group of e.g. \cite{farb}. \qed
\end{proposition}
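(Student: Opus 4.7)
The proposition asserts an identification of two objects defined in essentially the same way, so the plan is simply to unpack both definitions and check they coincide. First I would recall from \cite{Kassel_Turaev} that, for an oriented manifold $M$ with non-empty boundary $\partial M$ and a subset $N\subseteq M$, the mapping class group is defined as the set of isotopy classes (through $\partial M$-fixing isotopies) of homeomorphisms $\sh{f}\colon M\to M$ such that $\sh{f}$ fixes $\partial M$ pointwise and $\sh{f}(N)=N$, with composition inherited from function composition.

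Next I would unpack the definition of $\mcg[M]^{\partial M}(N,N)$ following Theorem~\ref{th:mcga} and Definition~\ref{Def:homeoMA}. By construction, $\Hom^{\partial M}(N,N)$ consists of triples $\shmor{f}{\partial M}{N}{N}$ where $\sh{f}\colon M\to M$ is a homeomorphism fixing $\partial M$ pointwise with $\sh{f}(N)=N$, and $\simi^{\partial M}$ is precisely the relation of being connected by a $\partial M$-fixing isotopy $H\colon M\times\II \to M$ through homeomorphisms sending $N$ to $N$. Projecting each triple onto its first coordinate yields a canonical bijection between $\mcg[M]^{\partial M}(N,N)$ and the underlying set of the mapping class group, under which the composition $\circ$ in $\mcg[M]^{\partial M}$ matches composition of maps.

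For the orientation remark, I would note that any homeomorphism $\sh{f}\colon M\to M$ restricting to the identity on the (non-empty) boundary $\partial M$ must be orientation preserving: its restriction to a collar neighbourhood of $\partial M$ is orientation preserving, and orientation is locally constant on $M$. Hence the condition ``orientation preserving'' that appears in some formulations of the mapping class group for manifolds with boundary is automatic, and no distinction between $\mcg[M]^{\partial M}$ and $(\mcg[M]^{\partial M})^+$ arises.

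Finally, for the surface case with $N=\emptyset$, the set $\Hom^{\partial M}(\emptyset,\emptyset)$ is simply $\Topo^h_{\partial M}(M,M)$, and the quotient by $\simi^{\partial M}$ is exactly $\pi_0\bigl(\TOPO^h_{\partial M}(M,M)\bigr)$ by Lemma~\ref{le:MCG_pi0} applied in the $A$-fixing setting, which is the classical definition of the mapping class group of a surface with boundary as found, for example, in \cite{farb}. Since each step is a direct comparison of definitions, there is no real obstacle; the only subtle point worth being explicit about is that $\partial M$-fixing homeomorphisms are automatically orientation preserving, which is what makes the statement uniform in the surface case and aligns it with Proposition~\ref{pr:mcgroup}.
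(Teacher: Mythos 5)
Your proposal is correct and matches the paper's (implicit) argument: the paper offers no proof beyond "it follows from the definitions" together with citations for the orientation fact, and your write-up simply unwinds those same definitions, identifies $\Hom^{\partial M}(N,N)/\simiA$ with the Kassel--Turaev mapping class group, and invokes Lemma~\ref{le:MCGA_pi0} for the surface case. The only point worth flagging is that your collar/local-constancy argument for "boundary-fixing implies orientation preserving" tacitly assumes $M$ is connected (or at least that every component meets $\partial M$), the same implicit hypothesis behind the paper's parenthetical remark.
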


By Proposition~\ref{pr:mcgroupA} we can follow the proof classical result \cite[Lem.2.1]{farb} in the proof of the following proposition.

\prop{\label{ex:mcg_disk}
The 
group $\mcgfix{D^2}{\partial D^2}(\emptyset,\emptyset)$
is trivial.}
\begin{proof} This follows from the  Alexander trick \cite{alexander}, as we now recall. 
Suppose we have a homeomorphism $\sh{f}\colon D^2 \to D^2,$ that restricts to the identity on $\partial D^2$.
Define 
\[
			f_t(x) = \begin{cases}
				t\, \sh{f}(x/t) & 0\leq |x| \leq t, \\
				x &  t\leq |x| \leq 1.
			\end{cases}
			\]
Notice that $f_0=\id_{D^2}$ and $f_1=\sh{f}$ and each $f_t$ is continuous.
Moreover:
\ali{
H\colon D^2\times \II &\to D^2, \\
(x,t)&\mapsto f_t(x)
}
is a continuous map.
So we have constructed an isotopy from any boundary preserving self-homeomorphism of $D^2$ to $\id_{D^2}$. Also by construction $H$ is $\partial(D^2)$-fixing.
		\end{proof}

The same argument can be expanded to give a proof that the space of maps $D^n \to D^n$ fixing the boundary is contractible; see \cite[Thm.1.1.3.2.]{hamstrom}.

\rem{The automorphism group $\mcgfix{D^2}{\partial D^2}(K,K)$ is finitely presented when $K$ is a finite subset in the interior of $D^2$. This follows from Theorem~\ref{th:braids_and_mapping_class} below, along with the finite presentation of the braid group $\braid$ given in \cite{artin}.}

Noting Proposition~\ref{pr:mcgroupA},
the following definition is consistent with \cite[Def.2.5]{damiani}.
\defin{\label{de:loopbraid} 
Let $D^3$ be the $3$-disk, and $C$ a fixed choice of subset consisting of $n$ disjoint unknotted, unlinked circles. 
Then the {\em extended loop braid group} 
may be 
defined as
\[
LB^{ext}_n= \mcgfix{D^3}{\partial D^3}(C,C).
\]}
\rem{
In particular $LB^{ext}_n=\mcgfix{D^3}{\partial D^3}(C,C)$ has a finite presentation, given in \cite{damiani}.}

\medskip

 Adjusting the proof of Lemma~\ref{le:MCG_pi0} to consider paths in $\Hom^A$ we have:

\lemm{\label{le:MCGA_pi0}
Let $M$ be a \axiomM{}. We have
the following equality of sets
			\[
			\pushQED{\qed}
			\mcg^A(N,N')=\pi_0(\Hom^A(N,N')). \qedhere
			\popQED
			\]
 			}

\mathversion{normal2} 
        
\section{Functor \texorpdfstring{$\FA$}{FA} from \texorpdfstring{$\Mot[M]^A$}{MotM}
 to 
 \texorpdfstring{$\mcg[M]^A$}{mcg}}\label{sec:mg_to_mcg}

 Here we 
 construct
 a functor $\F\colon \Mot^A \to \mcg^A$,
and
prove this is an isomorphism of categories if $\pi_1(\TOPO^h(M,M),\id_M)$ and $\pi_0(\TOPO^h(M,M))$ are trivial. Indeed we prove $\F$ is full if and only 
if $\pi_0(\TOPO^h(M,M))$ is trivial, and that $\F$ is faithful if $\pi_1(\TOPO^h(M,M),\id_M)$ is trivial. 
(This will thus generalise, for example, the known equivalence 
between realisations of braid groups 
as mapping class groups and as motion groups \cite{dahm, goldsmith, birman};
and similarly for loop braid groups
\cite{damiani}.)

\begin{theorem}\label{le:functor_Mot_to_MCG}
	Let $M$ be a \axiomM{}.
	There is a functor
	\[
	\F\colon \Mot^A\to \mcg^A
	\]
	which is the identity on objects, and on morphisms we have
	\[
	\F\left(\classm{\mot{f}{}{N}{N'}}\right)= \classi{\shmot{f_1}{}{N}{N'}}.
	\]
\end{theorem}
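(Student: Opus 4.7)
The plan is to verify, in order, (i) well-definedness of $\F$ on morphisms, (ii) preservation of composition, and (iii) preservation of identities. Steps (ii) and (iii) will be essentially immediate from the definitions; the content lies in (i). Since $\Mot^A$ is a groupoid and functors between groupoids automatically preserve inverses (see the remark after Definition of groupoid in \S\ref{sec:magmoids}), there is nothing further to check.

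For well-definedness, I will use Theorem~\ref{th:mg2} to replace motion-equivalence by relative path-equivalence. Suppose $\mot{f}{A}{N}{N'}\simm \mot{g}{A}{N}{N'}$; by Theorem~\ref{th:mg2} (in the $A$-fixing version) there exists a relative path-homotopy $H\colon \II\times\II\to \TOPO^h_A(M,M)$ satisfying $H(t,0)=f_t$, $H(t,1)=g_t$, $H(0,s)=\id_M$, and $H(1,s)\in \Hom^A(N,N')$ for all $s\in\II$. Restricting $H$ to $\{1\}\times\II$ gives a continuous path $s\mapsto H(1,s)$ in $\TOPO^h_A(M,M)$, starting at $f_1$ and ending at $g_1$, and with $H(1,s)(N)=N'$ for all $s$. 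Applying the product-hom adjunction (Lemma~\ref{th:tensorhom}) with $X=\II$, $Y=M$, $Z=M$, the corresponding continuous map $K\colon M\times\II\to M$, $(m,s)\mapsto H(1,s)(m)$, is then an $A$-fixing isotopy from $\shmot{f_1}{A}{N}{N'}$ to $\shmot{g_1}{A}{N}{N'}$ in the sense of Definition~\ref{de:isotopy} and its $A$-fixing analogue: for each fixed $s$ the map $m\mapsto K(m,s)$ is the homeomorphism $H(1,s)$ which lies in $\Hom^A(N,N')$; $K(m,0)=f_1(m)$, $K(m,1)=g_1(m)$; and $K(a,s)=a$ for all $a\in A$ since $H$ takes values in $\TOPO^h_A(M,M)$. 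Hence $\classi{\shmot{f_1}{A}{N}{N'}} = \classi{\shmot{g_1}{A}{N}{N'}}$.

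For composition, given motions $\mot{f}{A}{N}{N'}$ and $\mot{g}{A}{N'}{N''}$, the formula \eqref{def:comp1} defining $*$ yields $(g*f)_1=g_1\circ f_1$, so
\[
\F\bigl(\classm{g*f}\bigr)=\classi{\shmot{g_1\circ f_1}{A}{N}{N''}}=\classi{\shmot{g_1}{A}{N'}{N''}}\circ \classi{\shmot{f_1}{A}{N}{N'}}=\F(\classm{g})\circ \F(\classm{f}).
\]
For the identity at $N$, we have $(\Id_M)_1=\id_M$, so $\F(\classm{\mot{\Id_M}{A}{N}{N}})=\classi{\shmot{\id_M}{A}{N}{N}}$, which is the identity morphism at $N$ in $\mcg^A$.

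The main obstacle is purely bookkeeping: ensuring that the relative path-homotopy $H$ is indeed valued in the $A$-fixing homeomorphism group (so that restriction to $\{1\}\times\II$ yields an $A$-fixing isotopy) and that the product-hom adjunction applies. Both are guaranteed: the former by the $A$-fixing statement of Theorem~\ref{th:mg2}, and the latter since $\II$ is a locally compact Hausdorff space, as noted in Lemma~\ref{th:tensorhom}. No new homotopy constructions are required beyond invoking Theorem~\ref{th:mg2}.
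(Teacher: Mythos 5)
Your proposal is correct and follows essentially the same route as the paper: well-definedness is obtained from Theorem~\ref{th:mg2} by restricting the relative path-homotopy to $\{1\}\times\II$, giving a path of homeomorphisms in $\Hom^A(N,N')$ from $f_1$ to $g_1$ (hence an isotopy), and preservation of composition is the computation $(g*f)_1=g_1\circ f_1$. One tiny slip in your justification: when converting the path $s\mapsto H(1,s)$ into a map $M\times\II\to M$ via Lemma~\ref{th:tensorhom}, the hypothesis needed is that the middle factor $Y=M$ is locally compact Hausdorff (which holds since $M$ is a manifold), not that $\II$ is; this does not affect the argument.
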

\begin{proof}
	We first check that $\F$ is well defined.
	By Theorem~\ref{th:mg2} two motions $\mot{f}{}{N}{N'}$ and $\mot{f'}{}{N}{N'}$ are motion-equivalent if, and only if, they are relative path-equivalent, i.e. we have a relative path-homotopy:
	\[
	H\colon \II \times \II \to \TOPO^h(M,M).
	\]
	Then $H(1,s)$ is a path $f_1$ to $f'_1$ such that for all $s\in \II$, $H(1,s)\in \Homn$.
	Hence $\shmot{f_1}{}{N}{N'}$ and $\shmot{{f'_1}}{}{N}{N'}$ are isotopic.
	
	We check $\F$ preserves composition.
	For $\classm{\mot{f}{}{N}{N'}}$ and $\classm{\mot{g}{}{N'}{N''}}$ in $\Mot$ we have:
	\begin{align*}
	\F\left(\classm{\mot{g}{}{N'}{N''}}
	*\classm{\mot{f}{}{N}{N'}}\right)&=\F\left(\classm{\mot{g*f}{}{N'}{N''}}\right)
	=\classi{\shmot{(g*f)_1}{}{N}{N''}}\\
	&=\classi{\shmot{g_1\circ f_1}{}{N}{N''}}
	=\classi{\shmot{g_1}{}{N'}{N''}}\circ\classi{\shmot{f_1}{}{N}{N'}}\\
	&=	\F\left(\classm{\mot{g}{}{N'}{N''}}\right)\circ \F\left(\classm{\mot{f}{}{N}{N'}}\right). \qedhere
	\end{align*}
\end{proof}
\lemm{\label{le:full_Mot_MCG}
	Let $M$ be a \axiomM{}.
	The functor 
	\[
	\F\colon \Mot^A \to \mcg^A
	\]
	defined in Theorem \ref{le:functor_Mot_to_MCG}			is full if and only if
	$\pi_0(\TOPO^h_A(M,M))$
	is trivial.}
\begin{proof}
	Suppose $\pi_0(\TOPO_A^h(M,M))$ is trivial and let $\classi{\shmor{f}{}{N}{N'}}\in \mcg (N,N')$.
	Then, since $\TOPO_A^h(M,M)$ is path connected, there exists a path $f\in\TOPO^h_A(M,M)$ with $f_0=\id_M$ and $f_1=\sh{f}$, and, since $\sh{f}(N)=N'$, $\mot{f}{}{N}{N'}$ is a motion. Hence 
	$\F(\classm{\mot{f}{}{N}{N'}})=\classi{\shmor{f}{}{N}{N'}}$.
	
	Now suppose $\pi_0(\TOPO^h_A(M,M))$ is non-trivial. Let $\sh{f}$
	be a self-homeomorphism in a path-component of $\TOPO^h_A(M,M)$ which does not contain $\id_M$. 
	Then $\classi{\shmor{f}{}{\emptyset}{\emptyset}}\in \mcg(\emptyset,\emptyset)$,
	and all representatives
	are in the same path-component.
	Hence there is no path $f$ in  $\TOPO^h_A(M,M)$, with $f_0=\id_M$, 
	and with $\shmot{f_1}{}{\emptyset}{\emptyset} \in \classi{\shmor{f}{}{\emptyset}{\emptyset}}$. 
\end{proof}

The following example shows that the functor $\F$ may 
restrict to 
a surjection on some automorphism groups and not on others.

\exa{
	Let $M=S^3$. 
	Recall from Lemma~\ref{le:MCG_pi0} that $\mcg(N,N')=\pi_0(\Hom(N,N'))$, where $\Hom(N,N')\subseteq \TOPO^h(M,M)$ is given the subspace topology, and $\Homempty=\TOPO^h(M,M)$.
	By Proposition \ref{prop:mcgs2s3} $\pi_0(\TOPO^h(M,M))=\mcg[S^3](\emptyset,\emptyset)
	\cong 
	\Z/2\Z$, corresponding to an orientation preserving and orientation reversing component,  
	and so by the previous lemma $\F$ is not full.
	
	Consider $K\subset S^3$ a knot which is not isotopic to its mirror image.
	Then $\mcg[S^3](K,K)$ contains only orientation preserving self-homeomorphisms, which are therefore in the same connected component as the identity. Hence, by the first part of the proof of the previous lemma, the restriction $\F\colon \Mot[S^3](K,K)\to \mcg[S^3](K,K)$ is full.
}	

\subsection{Applying 
the long exact sequence of relative homotopy groups}

Here
we prove that, for any fixed $N\subseteq M$, the functor $\F\colon \Mot \to \mcg$ restricts to a group isomorphism $\mathrm{F}\colon\Mot(N,N)\xrightarrow{\sim} \mcg(N,N)$  if $\pi_1(\Homempty,\id_M)$ and $\pi_0(\Homempty)$ are both trivial (Lemma~\ref{le:grp_iso_Mot_to_MCG}).
For this we will need the homotopy long exact sequence of a pair. We briefly introduce this here, see \cite[Sec.4.1]{hatcher} or \cite[Ch.9]{May} for 
further
exposition.

\medskip

In this section we work in the non $A$-fixing case to avoid overloading the notation. Everything proceeds in exactly the same way for the $A$-fixing case.

\defn{Let $n$ be a positive integer. Let $\II^{n-1}\times \{1\}$ be the face of $\II^n$ with last coordinate $1$,
	and let $J^{n-1}$ be the closure of $\partial \II^n\setminus ( \II^{n-1}\times \{1\})$, i.e. the union of all remaining faces of $\II^n$.}
	So for instance, for $n=1$, the triple $(\II,\partial\II, J^0)$ is $(\II, \{0,1\}, \{0\})$.
   \defn{\label{de:relative_homotopy}
   Let $X$ be a topological space, $Y\subseteq X$ a subset and $x_0\in Y$ a point. 
	For fixed integer $n\geq 1$ we define a relation on the set of continuous maps
	\[
	\gamma\colon (\II^n,\partial \II^n,J^{n-1})\to (X,Y,\{x_0\}) 
	\]
	as follows. We say $\gamma\sim \gamma'$ if there exists $H\colon \II^n\times \II \to X$ such that 
	\begin{itemize}
		\item for all $s \in \II$, $x\mapsto H(x,s)$ defines a map $(\II^n,\partial \II^n,J^{n-1})\to (X,Y,\{x_0\})$,
		\item for all $x\in \II^n$, $H(x,0)=\gamma(x)$, and
		\item for all $x\in \II^n$,  $H(x,1)=\gamma'(x)$.
	\end{itemize}}
	
	\prop{
	For each $n$ 
	the relation described in Definition~\ref{de:relative_homotopy} is an equivalence relation.\\
	Notation: We will call the set of equivalence classes of maps from $(\II^n,\partial \II^n,J^{n-1})$ to $(X,Y,\{x_0\})$  the {\em $n^{th}$ relative homotopy set} and denote it $\pi_n(X,Y,x_0)$.
}
\begin{proof}
	We omit this proof. It is similar to Lemma \ref{le:rpe}. See also \cite[Sec.4.1]{hatcher}.
\end{proof}

Recall that we also use $\Hom(N,N')$ to denote the set together with the subspace topology from $\TOPO^h(M,M)$.

\lemm{\label{le:Mot_pi1}
	Let $M$ be a \axiomM{} and $N\subseteq M$ a subset.
	Then as sets
	\[\Mot(N,N)=\pi_1(\Homempty,\Hom(N,N),\id_M).\]
}
\begin{proof} This follows from Theorem \ref{th:mg2}, combined with the previous definition.
\end{proof}
\noindent Notation: The equivalence relation used to construct $\pi_1(\Homempty,\Hom(N,N),\id_M)$ is precisely relative path-equivalence of motions $N\too N$. We abuse notation, and use $\classrp{\gamma}$ for the equivalence class of a continuous map $\gamma$ in any relative homotopy set.

\medskip

\lemm{ {\rm{(Cf. e.g. \cite[Sec.4.1]{hatcher}.)}} $\;$ 
	Let $X$ be a topological space, $Y\subseteq X$ a subset and $x_0\in Y$ a point.
	For $n\geq 2$, given continuous maps $\beta,\gamma\colon (\II^n,\partial \II^n,J^{n-1})\to (X,Y,\{x_0\}) $,
	define
\begin{equation}\label{def+}
	( \gamma+\beta)(t_1,\ldots,t_n)=
	\begin{cases}
	\beta(2t_1,\ldots,t_n) & 0\leq t_1\leq \frac{1}{2},\\
	\gamma(2(t_1-\frac{1}{2}),\ldots,t_n) & \frac{1}{2} \leq t_1\leq 1.
	\end{cases}
\end{equation}
	Then there is a well-defined binary composition 
			\ali{
		+\colon \pi_n(X,Y,x_0)\times \pi_n(X,Y,x_0) &\to \pi_n(X,Y,x_0)  \\  
		(\classrp{\beta},\classrp{\gamma})&\mapsto \classrp{\gamma +\beta}.} 
		
		 Moreover the set $\pi_n(X,A,x_0)$ becomes a group with $+$. The identity is the equivalence class of the constant path $(t_1,\dots,t_n)\mapsto x_0$.
	The inverse of $\classrp{\gamma}\in \pi_n(X,Y,x_0)$ is the equivalence class of $(t_1,\dots,t_n)\mapsto \gamma(1-t_1,\dots ,t_n)$. \qed
}

For $n=1$ there is not in general a sensible way to make $\pi_1(X,Y,x_0)$ into a group. This is however possible when  $X$ is topological group, and $Y$ is a subgroup.
This is the case we will be interested in.

\medskip

In the theorem below, given a continuous map $\gamma \colon (\II^n,\partial\II^n)\to (X,\{x_0\})$, of pairs, the set of maps 
$\gamma' \colon (\II^n,\partial\II^n)\to (X,\{x_0\})$ homotopic to $\gamma$, relative to $\partial \II^n$ is denoted $\classp{\gamma}$. For $n=1$, this coincides with the conventions in Definition \ref{de:pe}. Then for $n\ge 1$, the group $\pi_n(X,x_0)$ is the set of such homotopy classes, with the operation induced by $+$ in \eqref{def+}. (We will continue to denote the operation in $\pi_1(X,x_0)$ by the null symbol, since the group $\pi_1(X,x_0)$ is not necessarily abelian).

\medskip

For a space $X$, we use $\pi_0(X,y)$, where $y\in X$, to denote
the set $\pi_0(X)$, together with 
the path-component of $y$, so $\pi_0(X,y)$ is a pointed set (a set together with an element of it). 
The set $ \pi_1(X,A,x_0)$
also naturally becomes a pointed set with distinguished element the homotopy class of the appropriate constant map to $x_0$.
In the last three stages of the exact sequence in 
Theorem~\ref{th:long_exact_sequence} below,
where group structures are not defined in general, exactness 
means that the image of a map
is equal 
to
the set of elements sent to the homotopy class of the constant map to $\{x_0\}$ by the following map.

\begin{theorem}\label{th:long_exact_sequence}
(See for example \cite[Sec.4.1]{hatcher}.)
Let $X$ be a space, $Y\subseteq X$ a subset, $x_0\in Y$ a point, and let the maps $i\colon (Y,\{x_0\})\to (X,\{x_0\})$ and $j\colon (X,\{x_0\},\{x_0\})\to (X,Y,\{x_0\})$ be the inclusions. Then we define
	\ali{
		i_*^n\colon \pi_n(Y,x_0) &\to \pi_n(X,x_0) \\
		\classp{\gamma} &\mapsto \classp{i \circ \gamma}
	}
	and 	
	\ali{
		j_*^n\colon \pi_n(X,x_0)&\to \pi_n(X,Y,x_0)\\
		\classp{\gamma} &\mapsto \classrp{j\circ \gamma}.
	}
	We also define a map which is the following restriction:
	\ali{
		\partial^n\colon \pi_n(X,Y,x_0)&\to \pi_{n-1}(Y,x_0)\\
		\classrp{\gamma}  &\mapsto \classp{\gamma\vert_{\II^{n-1} \times \{1\}}}.
	}
	Note that for $n=1$, we have
	\ali{
		\partial^1\colon \pi_1(X,Y,x_0)&\to \pi_0(Y,x_0), \\
		\classrp{\gamma} &\mapsto \classp{\gamma(1)}.
	}
	There is a long exact sequence: 
\begin{multline}\label{lespair}
	\ldots \to
	\pi_n(Y,\{x_0\})
	\xrightarrow{i^n_{*}} \pi_n(X,x_0)
	\xrightarrow{j^n_{*}} \pi_n(X,Y,x_0)
\xrightarrow{\partial^n} \pi_{n-1}(Y,x_0)
	\xrightarrow{i^{n-1}_{*}} \dots
\\	\dots \to \pi_{1}(Y,x_0) \xrightarrow{i^1_*} \pi_{1}(X,x_0)
	\xrightarrow{j^1_{*}} \pi_1(X,Y,x_0)
	\xrightarrow{\partial^1} \pi_{0}(Y,x_0)
\xrightarrow{i^0_{*}}\pi_0(X,x_0).
\end{multline}
	\qed
\end{theorem}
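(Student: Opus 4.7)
This is the classical long exact sequence of a pointed pair of spaces, and in Hatcher's treatment \cite[Sec.4.1]{hatcher} it is proved by direct manipulation of representatives. The authors clearly intend to cite the result rather than re-prove it, but one can produce a self-contained proof by the following standard steps. First, I would verify that each of the maps $i_*^n$, $j_*^n$, $\partial^n$ is well defined on homotopy classes and (for $n\geq 2$ in the case of $\partial^n$ and $j_*^n$, and for $n\geq 1$ for $i_*^n$) is a group homomorphism; this is a routine exercise using the definitions of the operation $+$ and of relative homotopy. For the two non-group stages at the end of \eqref{lespair}, I would interpret ``exactness'' in the pointed-set sense of paragraph~\ref{rem:the_end_exact_sequence}, where the distinguished element of $\pi_n(X,Y,x_0)$ is the class of the constant map to $x_0$ and that of $\pi_0(Y,x_0)$ is the path-component of $x_0$.

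Second, I would prove exactness at each of the three repeating positions in the sequence. At $\pi_n(X,x_0)$: the composition $j^n_*\circ i^n_*$ is trivial since a map with image in $Y$ bounds trivially in $(X,Y)$, via the homotopy $H(x,s)=(i\circ\gamma)(x)$ realised as a relative map into $(X,Y,\{x_0\})$. Conversely, given $\classp{\gamma}\in\ker(j^n_*)$, a homotopy $H\colon I^n\times I\to X$ from $j\circ\gamma$ to the constant map, staying in $Y$ on $\partial I^n$ and equal to $x_0$ on $J^{n-1}$, can be reparameterised so that the slice $H(-,1)$ takes values entirely in $Y$; this slice represents a class in $\pi_n(Y,x_0)$ whose image under $i^n_*$ is $\classp{\gamma}$. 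At $\pi_n(X,Y,x_0)$: the triviality of $\partial^n\circ j^n_*$ is immediate since $j\circ\gamma$ restricts to the constant map on $I^{n-1}\times\{1\}$, and conversely, a class in $\ker\partial^n$ has a representative whose restriction to $I^{n-1}\times\{1\}$ is null-homotopic in $Y$; using the null-homotopy to collapse that face while keeping the map relative to $(X,Y,\{x_0\})$ produces a representative that maps all of $\partial I^n$ to $x_0$, hence lies in the image of $j^n_*$.

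Third, at $\pi_{n-1}(Y,x_0)$: exactness amounts to the statement that a map $\beta\colon(I^{n-1},\partial I^{n-1})\to(Y,\{x_0\})$ is null-homotopic in $X$ (rel basepoint) precisely when it arises as $\gamma|_{I^{n-1}\times\{1\}}$ for some $\gamma\colon(I^n,\partial I^n,J^{n-1})\to(X,Y,\{x_0\})$. In one direction, a null-homotopy of $i\circ\beta$ in $X$ is literally such a $\gamma$; in the other direction, a relative map $\gamma$ provides the null-homotopy directly. The main technical point throughout is the careful reparameterisation of $I^n$ needed to convert ``homotopic rel a face'' into ``homotopic through maps of triples'', which is a routine but fiddly exercise in piecewise-linear reparameterisations of the cube entirely analogous to the homotopies $H_{id}, H_{ass}, H_{in}$ constructed in Proposition~\ref{pr:fundamentalgroupoid}.

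\textbf{Main obstacle.} There is no conceptual obstacle: all the constructions are explicit cube reparameterisations. The only real cost is notational bookkeeping in higher dimensions, particularly in showing exactness at $\pi_n(X,x_0)$, where one must simultaneously collapse a homotopy $H$ onto its $s=1$ slice and verify that this slice remains in $Y$ while the rest of $\partial I^n$ continues to map to $x_0$. Since the result is classical and the authors reference \cite{hatcher}, the cleanest presentation is simply to cite Hatcher's Theorem~4.3 and note that all the verifications transcribe verbatim from the absolute to the relative setting with $Y=\Hom(N,N')$ and $X=\Homempty$ as required for the applications in later sections.
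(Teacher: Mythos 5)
The paper gives no proof of this theorem at all: it is stated as a classical result, cited to Hatcher Section~4.1, and closed with a \qed, and your proposal ultimately recommends exactly that, so you are taking essentially the same route. Your sketch of the standard cube-reparameterisation argument is a fair outline of Hatcher's proof (modulo some imprecision in the converse step of exactness at $\pi_n(X,x_0)$, where the needed representative in $Y$ comes from restricting the null-homotopy to the free face, not from its $s=1$ slice), but none of it is required by the paper.
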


A version of
the following 
long exact sequence 
appears in \cite[Prop.3.2]{goldsmith};
there all homeomorphisms have compact support and in particular, the motion automorphism group is replaced by the motion group of \cite{goldsmith} (see Remark~\ref{rem:DahmMG}), keeping in mind also Propositions~\ref{pr:mcgroup} and \ref{pr:mcgroupA}.

\lemm{\label{le:long_exact_sequence2}
	Let $M$ be a \axiomM{} and fix a subset $N\subseteq M$. Then we have a long exact sequence 
	\begin{multline} \label{eq:les}
	\ldots \to
	\pi_n(\Hom(N,N),\id_M)
	\xrightarrow{i^n_{*}} \pi_n(\Homempty,\id_M)
	\xrightarrow{j^n_{*}}\\ \pi_n(\Homempty,\Hom(N,N),\id_M)
	\xrightarrow{\partial^n} \pi_{n-1}(\Hom(N,N),\id_M)
	\xrightarrow{i^{n-1}_{*}}\\
	\dots \xrightarrow{\partial^2}
	\pi_1(\Hom(N,N),\id_M)
	\xrightarrow{i^1_{*}} \pi_1(\Homempty,\id_M)\\
	\xrightarrow{j^1_{*}}			 \Mot(N,N) 
	\xrightarrow{\F} \mcg(N,N)
	\xrightarrow{i_*^0} 
	\mcg(\emptyset,\emptyset)
	\end{multline}
	where all maps are group maps
	and $\F$ is the appropriate restriction of the functor defined in Theorem \ref{le:functor_Mot_to_MCG}.
}
\begin{proof}
We get the first part of the sequence by substituting into \eqref{lespair} $X= \Homempty$, $Y=\Hom(N,N)$ and $x_0=\id_M$.
	We have that as sets $\mcg(N,N)=\pi_0(\Hom(N,N))$ from Lemma \ref{le:MCG_pi0}, and in particular $\mcg(\emptyset,\emptyset)=\pi_0(\Homempty)$. 
	From Lemma \ref{le:Mot_pi1} we have $\Mot(N,N)=\pi_1(\Homempty,\Hom(N,N),\id_M)$
	as sets.
	Notice also that, as a set map, $\F\colon \Mot(N,N)\to \mcg(N,N)$ is precisely $\partial ^1$.
	In the last few steps we have replaced pointed sets with groups that have the same underlying set, we show that the sequence is indeed an exact sequence of groups.	

	Note that each group identity is the distinguished object in the corresponding pointed set, hence exactness follows from the pointed set case, and we only need to check that each map is indeed a group homomorphism.
	We have that $\F$ is a group map, as it is the restriction of a functor of groupoids. It remains to show that $j_*^1$ and $i_*^0$ become group maps.
	We check that $j_*^1$ preserves composition. 
	Let $g$ and $f$ be paths from $\id_M$ to $\id_M$ in $\Homempty$.
	Then $gf$, the concatenation of paths as in Proposition \ref{pr:path_comp}, is a well defined \premot{} and it  is precisely the \premot{} $g*f$ as $f_1=\id_M$. 
	Hence we have 
	\[
	j_*^1(\classp{gf})=\classrp{\mot{gf}{}{N}{N}}=
	\classrp{\mot{g* f}{}{N}{N}}
	=\classrp{\mot{g}{}{N}{N}}*\classrp{\mot{f}{}{N}{N}}=j_*^1(\classp{g})*j_*^1(\classp{f}).
	\]
	The composition in $\mcg(N,N)$ is composition of homeomorphisms, 
	hence the composition is the same in the source and target of $i^0_*$, and $i^0_*$ is an inclusion. Thus composition is preserved.
\end{proof}

\lemm{\label{le:grp_iso_Mot_to_MCG}
	Suppose $M$ is a \axiomM{} and fix a subset $N\subseteq M$. Suppose
	\begin{itemize}
		\item $\pi_1(\Homempty,\id_M)$ is trivial, and
		\item  $\mcg(\emptyset,\emptyset)=\pi_0(\Homempty)=\pi_0(\TOPO^h(M,M))$ is trivial.
	\end{itemize}
	Then there is a group isomorphism
	\[
	\F\colon\Mot(N,N)\xrightarrow{\sim} \mcg(N,N).
	\]
}
\begin{proof}
	Using the conditions of the lemma, the long exact sequence in Lemma \ref{le:long_exact_sequence2} gives short exact sequence
	\[
	1
	\to \Mot(N,N)
	\to \mcg(N,N)
	\to 1.\qedhere
	\]
\end{proof}

\rem{
More generally \eqref{lespair} becomes an exact sequence of groups when $X$ is a topological group, $Y$ a subgroup and $x_0=1_X$, which generalises the case treated in Lemma~\ref{le:long_exact_sequence2}.
This can be seen as follows.
Using $.$ to denote the composition in $X$, a pointwise composition in $\pi_n(X,1_X)$ is defined as follows.
The composition of a pair $(\classp{\gamma},\classp{\beta})$ is the class of $(t_1,\dots, t_n)\mapsto \gamma(t_1,\ldots, t_n).\beta(t_1,\ldots, t_n)$.
That this is equivalent to the composition given by \eqref{def+} follows from the same idea as in the proof of Lemma \ref{le:pms_star_equiv_dot}.
The same composition gives a composition of paths in $\pi_n(X,Y,1_X)$ for $n\geq 1$, which again is the same as \eqref{def+} for $n\geq 2$.
Composition in $\pi_0(X)$ and $\pi_0(Y)$ is given by the group composition.
It is straightforward to check that all maps become group maps.}

\subsection{Isomorphisms \texorpdfstring{$\Mot^A$}{MotM} to \texorpdfstring{$\mcg^A$}{MCGM}}

Here we give conditions under which $\Mot^A$ and $\mcg^A$ are isomorphic categories.

\begin{theorem}\label{th:mot_to_mcg}
	Let $M$ be a \axiomM{}.
	If
	\begin{itemize}
		\item $\pi_1(\Homempty,\id_M)=\pi_1(\TOPO^h(M,M),\id_M)$ is trivial, and
		\item $\mcg(\emptyset,\emptyset)=\pi_0(\Homempty)=\pi_0(\TOPO^h(M,M))$ is trivial,
	\end{itemize}
	the functor 
	\[ 
	\F \colon \Mot^A
	\to \mcg^A
	\]
	defined in Theorem \ref{le:functor_Mot_to_MCG} is an isomorphism of categories.
\end{theorem}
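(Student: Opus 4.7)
The plan is to reduce the theorem to the automorphism-group case, where Lemma~\ref{le:grp_iso_Mot_to_MCG} (read in its $A$-fixing version, since the whole long exact sequence of \S\ref{sec:mg_to_mcg} goes through verbatim with $\TOPO^h_A(M,M)$ in place of $\TOPO^h(M,M)$ and $\Hom^A(N,N)$ in place of $\Hom(N,N)$) already does the real work. The functor $\F$ is the identity on objects, so isomorphism of categories is equivalent to bijectivity of $\F\colon \Mot^A(N,N') \to \mcg^A(N,N')$ for every pair $N,N' \in \Power M$.

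First I would dispose of the empty hom-set case. Suppose $\Mot^A(N,N') = \emptyset$; I claim $\mcg^A(N,N') = \emptyset$ as well. If $\classi{\shmor{\sh{f}}{A}{N}{N'}} \in \mcg^A(N,N')$, then $\sh{f} \in \Hom^A(N,N') \subseteq \TOPO^h_A(M,M)$. Since $\pi_0(\TOPO^h_A(M,M))$ is trivial (the $A$-version of the second hypothesis), there is a path $f\colon \II \to \TOPO^h_A(M,M)$ with $f_0 = \id_M$ and $f_1 = \sh{f}$, and $\sh{f}(N)=N'$ makes $\mot{f}{A}{N}{N'}$ into a motion, contradicting $\Mot^A(N,N') = \emptyset$. (This is the ``$\Leftarrow$'' direction of the $A$-fixing analogue of Lemma~\ref{le:full_Mot_MCG}, and in particular gives fullness.)

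Next I would reduce the remaining case to automorphism groups. Suppose $\Mot^A(N,N') \neq \emptyset$ and choose a motion $\mot{h}{A}{N}{N'}$. Since $\Mot^A$ and $\mcg^A$ are both groupoids (Theorems~\ref{th:mg2A} and~\ref{th:mcga}), left-composition with $\classm{h}$ and with $\F(\classm{h}) = \classi{\sh{h_1}}$ gives bijections
\[
L_h\colon \Mot^A(N,N) \xrightarrow{\sim} \Mot^A(N,N'), \qquad L_{h_1}\colon \mcg^A(N,N) \xrightarrow{\sim} \mcg^A(N,N'),
\]
with inverses furnished by left-composition with $\classm{\bar{h}}$ and $\classi{\sh{h_1^{-1}}}$ respectively. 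Because $\F$ is a functor, $\F \circ L_h = L_{h_1} \circ \F$, so $\F\colon \Mot^A(N,N') \to \mcg^A(N,N')$ is a bijection if and only if $\F\colon \Mot^A(N,N) \to \mcg^A(N,N)$ is.

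Finally, I would verify bijectivity at the automorphism-group level. Running the long exact sequence in Lemma~\ref{le:long_exact_sequence2}, but with $\TOPO^h_A(M,M)$ and $\Hom^A(N,N)$ throughout (all proofs in \S\ref{sec:mg_to_mcg} pass to the $A$-fixing setting without modification, as indicated there), yields
\[
\pi_1\bigl(\TOPO^h_A(M,M),\id_M\bigr) \xrightarrow{j_*^1} \Mot^A(N,N) \xrightarrow{\F} \mcg^A(N,N) \xrightarrow{i_*^0} \pi_0\bigl(\TOPO^h_A(M,M)\bigr).
\]
Under the hypotheses of the theorem (taken in their $A$-fixing form, as in Lemma~\ref{le:grp_iso_Mot_to_MCG}), the outer two groups vanish, and the sequence collapses to the short exact sequence
\[
1 \to \Mot^A(N,N) \xrightarrow{\F} \mcg^A(N,N) \to 1,
\]
giving that $\F$ is a group isomorphism on each automorphism group. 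Combined with the previous two steps this completes the proof.

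The main obstacle is purely bookkeeping: checking that every construction in \S\ref{sec:mg_to_mcg} (long exact sequence, identifications $\Mot^A(N,N)=\pi_1(\TOPO^h_A(M,M),\Hom^A(N,N),\id_M)$ and $\mcg^A(N,N)=\pi_0(\Hom^A(N,N))$, fullness criterion) transfers to the $A$-fixing case. This is routine given that the entire section was written so as to allow such a transfer, and the hypotheses of the theorem are precisely tailored to make the tail of the exact sequence degenerate.
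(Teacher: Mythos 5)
Your proposal is correct and takes essentially the same route as the paper: both arguments rest on the $A$-fixing form of the long exact sequence argument (Lemma~\ref{le:grp_iso_Mot_to_MCG}) to get the isomorphism on automorphism groups, together with triviality of $\pi_0$ of the ($A$-fixing) homeomorphism space to lift mapping classes to motions. The only difference is packaging: the paper checks fullness (Lemma~\ref{le:full_Mot_MCG}) and faithfulness directly on each hom-set, whereas you transport the automorphism-group bijection to all nonempty hom-sets by composing with a fixed morphism and dispose of the empty hom-set case separately --- a harmless repackaging of the same ingredients.
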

\begin{proof}
	\ppm{First note that by construction $\F$ is the identity map on objects.}
	Suppose $\pi_1(\Homempty,\id_M)$
	and
	$\pi_0(\Homempty)$ are trivial.
	We have from Lemma \ref{le:full_Mot_MCG} that $\F$ is full.
	We check $\F$ is faithful.
	Let
	$\classm{\mot{f}{}{N}{N'}}$
	and $\classm{\mot{f'}{}{N}{N'}}$ be in $\Mot(N,N')$.
	If
	\[
	\F\left(
	\classm{\mot{f}{}{N}{N'}}
	\right)
	=
	\F\left(
	\classm{\mot{f'}{}{N}{N'}}
	\right),
	\]
	then
	\ali{
		\classi{\shmot{\id_M}{}{N}{N}}&=
		\F(\classm{\mot{f'}{}{N}{N'}})^{-1}\circ\F(\classm{\mot{f}{}{N}{N'}})\\
		&=
		\F\big(\classm{\mot{f'}{}{N}{N'}}^{-1}*\classm{\mot{f}{}{N}{N'}}\big)\\
		&=\F(\classm{\mot{\bar{f'}*f}{}{N}{N}}).}
	By Lemma \ref{le:grp_iso_Mot_to_MCG} this is true if and only if
	\[
	\classm{\mot{\bar{f'}*f}{}{N}{N}}=\classm{\mot{\Id_M}{}{N}{N}}
	\]
	which is equivalent to saying $\Id_M*(\bar{f'}*f)$ is path-equivalent to a stationary motion, and hence that $\bar{f'}*f$ is path-equivalent to the stationary motion (since $\Id_M*(\bar{f'}*f)\simp \bar{f'}*f$).
	So we have $\classm{\mot{f}{}{N}{N'}}=
	\classm{\mot{f'}{}{N}{N'}}$.

We have seen that $\F \colon \Mot
\to \mcg$ is such that,
restricted to objects
it is a bijection, and moreover proved that, given objects $N,N' \in \Power{M}$, then $\F\colon \Mot(N,N')
\to \mcg(N,N')$ is a bijection. Therefore $\F$
is, by definition, an isomorphism of categories (see for example \cite[Page 14]{MacLane}).

	The proofs of the 
	preceding 
	lemmas and the theorem proceed in the same way for the $A$-fixing case.
\end{proof}
		
 \rem{We note that for some manifolds $M$, we may have that $\F$ is not an isomophism, but does restrict to an isomorphims on some full subgroupoid given by a particular choice of subsets of $M$.
 See Example~\ref{ex:2sphere} below.}		

\subsection{Examples of \texorpdfstring{$\F\colon \Mot^A\to \mcg^A$}{F:Mot to mcg}}\label{ss:longexactseq}
 Here we give examples of $M$ for which $\F$ is an isomorphism, and examples for which it is not. Even when we do not have a category isomorphism, the long exact sequence in \eqref{eq:les} will often be useful to obtain results about motion groupoids from results about mapping class groupoids.

\medskip

\subsubsection{Example 1: the disk \texorpdfstring{$D^m$}{Dm}, fixing the boundary}\label{sec:D^m}

\prop{\label{pr:D^m}
Let $D^m$ be the $m$-dimensional disk. Then there is an isomorphism
\[
\F\colon \Mot[D^m]^{\partial D^m}\xrightarrow{\sim} \mcgfix{D^m}{\partial D^m}
\]
	with $\F$ as in Theorem~\ref{le:functor_Mot_to_MCG}. }
\begin{proof}
We proved in Proposition \ref{ex:mcg_disk} that $\mcg[D^2]^{\partial D^2}(\emptyset,\emptyset)=\pi_0\big(\Hom[D^2]^{\partial D^2}(\emptyset,\emptyset)\big)$ is trivial.
Also $\Hom[D^2]^{\partial D^2}(\emptyset,\emptyset)$ is contractible, see e.g. Theorem 1.1.3.2 of \cite{hamstrom}.
In fact, for all integer $m$, $Homeo_{D^m}^{\partial D^m}(\emptyset,\emptyset)$ contractible for all $m$. This follows from the Alexander Trick \cite{alexander}.
Hence by Theorem \ref{th:mot_to_mcg} we have the result.
\end{proof}

One consequence of Proposition~\ref{pr:D^m}
is the following  
result,
well-known via  Proposition~\ref{pr:mcgroupA}. 
A similar result can be found in \cite[Thm.1]{BB}, for instance, framed  using instead groups of diffeomorphisms of $D^2$ to formulate  mapping class groups of $D^2$.

\begin{theorem}\label{th:braids_and_mapping_class}
Let $n$ be a positive integer. Let $K\subset \mathrm{int}(D^2)$ have cardinality $n$. 
Choose a homeomorphism $f\colon \R^2\to \intt{D^2}$ and let $K'\subset \R^2$ be such that $f(K')=K$.

We have an isomorphism:
\[ \mcgfix{D^2}{\partial D^2}(K,K) \cong \mathbf{B}(\R^2,K'). \] (Recall from \S\ref{sec:artinbraids} that $\mathbf{B}(\R^2,K)$ is isomorphic, by construction, to the 
Artin braid group $\braid$ on $n$ strands, as formulated in  \cite{artin}.)
\end{theorem}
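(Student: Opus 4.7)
The plan is to assemble a chain of isomorphisms already established in the paper. First, Proposition~\ref{pr:D^m} provides the isomorphism
\[ \F\colon \Mot[D^2]^{\partial D^2}\xrightarrow{\sim} \mcgfix{D^2}{\partial D^2},\]
which restricts to a group isomorphism $\Mot[D^2]^{\partial D^2}(K,K)\cong \mcgfix{D^2}{\partial D^2}(K,K)$. This reduces the theorem to producing an isomorphism between $\Mot[D^2]^{\partial D^2}(K,K)$ and $\mathbf{B}(\R^2,K')$.

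Next, since $K\subset \mathrm{int}(D^2)$, Remark~\ref{rem:noboundaryfix} gives a canonical isomorphism
\[\Mot[D^2]^{\partial D^2}(K,K)\cong \Mot[D^2\setminus \partial D^2](K,K).\]
Using the homeomorphism $f\colon \R^2\to \mathrm{int}(D^2)$ from the statement, Lemma~\ref{Relating_Mot} (see also Example~\ref{ex:R2D2}) yields an isomorphism of groupoids $\Mot[\R^2]\xrightarrow{\sim}\Mot[D^2\setminus \partial D^2]$, which on automorphism groups specialises, via $f(K')=K$, to an isomorphism
\[\Mot[\R^2](K',K')\xrightarrow{\sim}\Mot[D^2\setminus\partial D^2](K,K).\]

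Finally, Theorem~\ref{th:geometric_braids} applied to the manifold $M=\R^2$ (which has empty boundary) and the finite set $K'\subset \R^2=\mathrm{int}(\R^2)$ provides a group isomorphism
\[\mathbf{T}\colon \Mot[\R^2](K',K')\xrightarrow{\sim}\mathbf{B}(\R^2,K').\]
Composing these four isomorphisms produces the desired isomorphism $\mcgfix{D^2}{\partial D^2}(K,K)\cong \mathbf{B}(\R^2,K')$.

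There is no real obstacle here, since the work has already been done: the key inputs are the Alexander-trick based contractibility of $\Hom[D^2]^{\partial D^2}(\emptyset,\emptyset)$ used in Proposition~\ref{pr:D^m}, and the isotopy-extension argument used in the proof of Theorem~\ref{th:geometric_braids}. The only step deserving a brief verification in the write-up is that the isomorphism $\Mot[\R^2]\cong \Mot[D^2\setminus \partial D^2]$ of Lemma~\ref{Relating_Mot} sends the object $K'$ to the object $K=f(K')$, so that the full-subgroupoid restriction on the finite sets of interior points indeed matches up on both sides; this is immediate from the construction of $\Psi$ in Lemma~\ref{Relating_Mot}.
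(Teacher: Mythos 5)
Your proof is correct and follows essentially the same route as the paper: it combines exactly the same ingredients (Theorem~\ref{th:geometric_braids}, Lemma~\ref{Relating_Mot}, Remark~\ref{rem:noboundaryfix}, and Proposition~\ref{pr:D^m}), merely assembling the chain of isomorphisms in the reverse order.
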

\begin{proof} 

From Theorem~\ref{th:geometric_braids} we have an isomorphism $\Mot[\R^2](K',K')\cong \Braid(\R^2,K')$. By Lemma~\ref{Relating_Mot} it follows that $\Mot[\intt{D^2}](K,K)\cong \Braid(\R^2,K')$, and with Remark~\ref{rem:noboundaryfix}
it follows that $\Mot[D^2]^{\partial D^2}(K,K) \cong \Braid(\R^2,K')$.
Combining with Proposition~\ref{pr:D^m} we have the result.
\end{proof}

The following corollary is stated in \cite{damiani}, and is essentially present in \cite{goldsmith} (see Proposition~3.2, Corollary~3.6 and Section~5), although the setting of the latter is $\R^3$ restricted to maps with compact support.
\begin{corollary}
    \label{ex:3disk_iso}
Let $D^3$ be the $3$-disk, and $C$ a fixed choice of subset consisting of $n$ disjoint unknotted, unlinked circles.
Then we have an isomorphism
	\[
			\F\colon \Mot[D^3]^{\partial D^3}(C,C)
			\xrightarrow{\sim} LB^{ext}_n=\mcg[D^3]^{\partial D^3}(C,C).
			\] 
			(The group $LB^{ext}_n$ was defined in Definition~\ref{de:loopbraid}.)\qed
\end{corollary}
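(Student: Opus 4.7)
The plan is essentially immediate: Corollary~\ref{ex:3disk_iso} is a direct specialization of Proposition~\ref{pr:D^m} to the case $m=3$, restricted to the automorphism group at a particular object.

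First, I would invoke Proposition~\ref{pr:D^m} with $m=3$, which gives an isomorphism of groupoids
\[\F\colon \Mot[D^3]^{\partial D^3}\xrightarrow{\sim} \mcg[D^3]^{\partial D^3}.\]
Since any isomorphism of groupoids restricts to a group isomorphism on the automorphism group of each object, choosing the object $C \subset D^3$ (the fixed subset of $n$ unknotted, unlinked circles) yields a group isomorphism
\[\F\colon \Mot[D^3]^{\partial D^3}(C,C)\xrightarrow{\sim} \mcg[D^3]^{\partial D^3}(C,C).\]
Finally, I would invoke Definition~\ref{de:loopbraid}, which identifies $\mcg[D^3]^{\partial D^3}(C,C)$ with the extended loop braid group $LB^{ext}_n$. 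This completes the proof.

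There is essentially no obstacle here: all of the technical content — the triviality of $\pi_0$ and $\pi_1$ of $\TOPO^h_{\partial D^m}(D^m,D^m)$ (via the Alexander trick), and the fact that $\F$ is a functor at all (Theorem~\ref{le:functor_Mot_to_MCG}) — has been established in the preceding sections and packaged into Proposition~\ref{pr:D^m}. The only ``work'' is the observation that a functor which is an isomorphism of groupoids automatically restricts to an isomorphism on each hom-set, and in particular on automorphism groups, which is a standard fact about groupoids.
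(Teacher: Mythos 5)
Your proof is correct and matches the paper's implicit argument exactly: the corollary is stated with a \qed precisely because it is the $m=3$ case of Proposition~\ref{pr:D^m}, restricted to the automorphism group at the object $C$, combined with Definition~\ref{de:loopbraid} identifying $\mcg[D^3]^{\partial D^3}(C,C)$ with $LB^{ext}_n$. Nothing further is needed.
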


In particular notice that the previous corollary implies $\Mot[D^3]^{\partial D^3}(C,C)$ is finitely generated, since the loop braid group has a finite presentation \cite[Prop.3.16]{damiani}.

\prop{\label{rem:mcgI}
Let $N=\II\cap \Q$, then
$\Mot[\II](N,N)$ is uncountably infinite.
}
\proof{
It follows from Proposition~\ref{pr:D^m}
 that we have an isomorphism 
\[
\F\colon \Mot[\II]^{\partial \II}\to \mcg[\II]^{\partial \II},
\]
and, since all motions in $\II$ are boundary fixing, an isomorphism
$ \Mot[\II]\to \mcg[\II]^{\partial \II}
$.
All mapping classes considered in Theorem~\ref{th:mcgI} are boundary fixing, thus this isomorphism implies $\Mot[\II](N,N)$ uncountably infinite.
\qed 
}
		
\subsubsection{Example 2: the disk \texorpdfstring{$D^2$}{D2}, without fixing the boundary}
\label{sec:D2_movebdy}

In this section we consider the (non-isomorphism) functor 
\[
\F\colon \Mot[D^2]\to \mcg[D^2],
\]
giving an explicit example of an motion contained in the kernel.

Let $P_2\subset D^2\setminus \partial D^2 $ be a subset consisting of two points equidistant from and
on a line 
through
the centre of the disk.
Let $\tau_{\pi}$ be the path in $\TOPO^h(D^2,D^2)$ such that $\ppm{(\tau_{\pi})_ t}$ is a $\pi t$ rotation of the disk.
There is a motion $\mot{\tau_\pi}{}{P_2}{P_2}$, and this motion represents a non-trivial equivalence class in $\Mot[D^2]$. This can be seen by noting that there is no stationary motion which exchanges the points in $P_2$. 
By a similar argument $\F(\classm{\mot{\tau_\pi}{}{P_2}{P_2}})$ also represents a non trivial element of $\mcg[D^2]$.

Now consider the motion $\mot{\tau_\pi *\tau_\pi}{}{P_2}{P_2}$.
It is intuitively clear this motion is non-trivial in $\Mot[D^2]$ by considering its image as a homeomorphism $D^2\times \II\to D^2\times \II$, see Figure~\ref{fig:2pi_rotation}. A proof follows from the fact that the worldlines of the trajectory of the points in $P_2$ transcribe a non-trivial braid.
However its endpoint is a $2\pi$ rotation, which is precisely $\id_M$, hence represents $\shmor{\id_{\mathit{D}^\mathrm{2}}}{}{P_2}{P_2}$ in $\mcg[D^2]$.
\begin{figure}[h]
	\centering
	\includegraphics[scale =0.2]{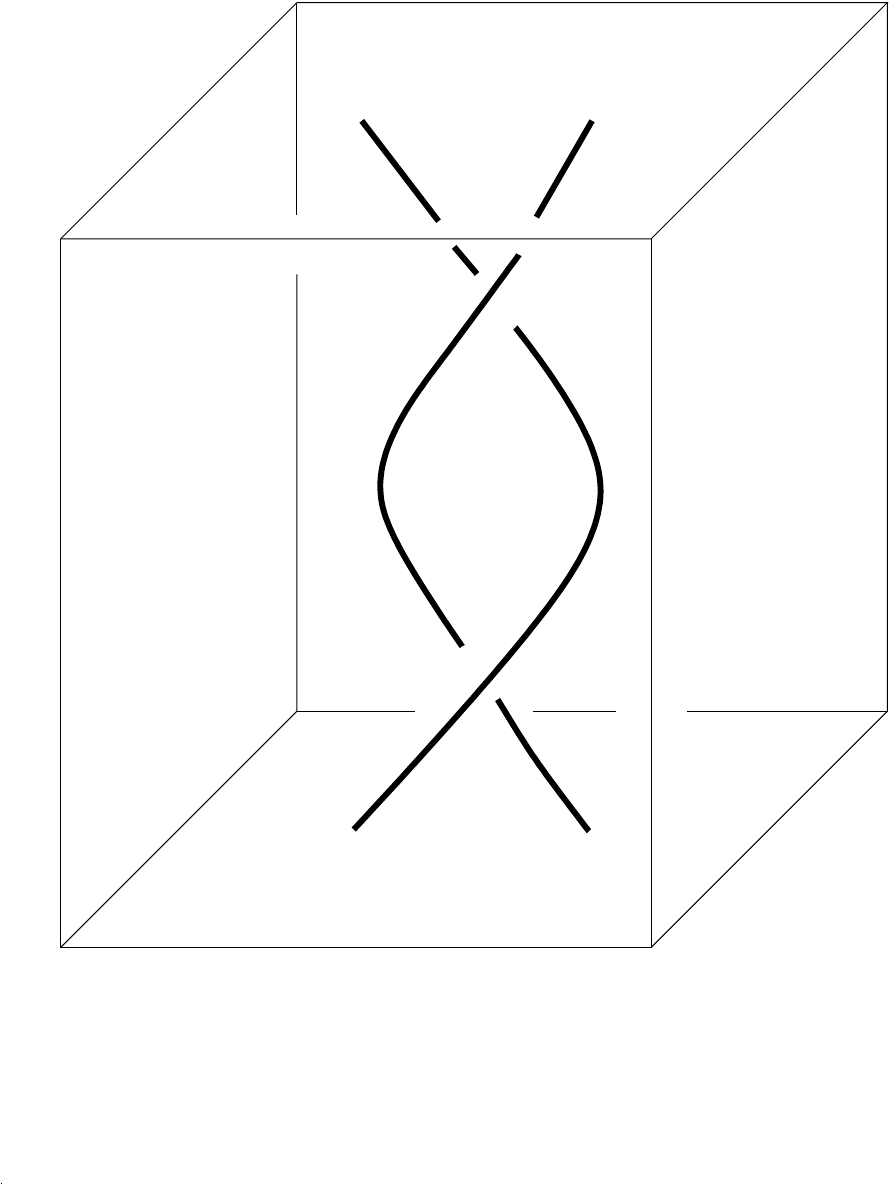}
	\vspace{-1em}
	\caption{Movement of two points during motion $\mot{\tau_\pi *\tau_\pi}{}{P_2}{P_2}$ (see text), mapped into $\Mot[\II^2]$, and represented as the image of a homeomorphism $\II^3\to \II^3$.}
	\label{fig:2pi_rotation}
\end{figure}

In fact, the map $\F\colon\Mot[D^2]\to \mcg[D^2]$ is neither full nor faithful.
The space $\TOPO^h(D^2,D^2)$ is homotopy equivalent to $S^1\sqcup S^1$, where the first connected component corresponds to orientation preserving homeomorphisms and the second orientation reversing.
That the space of orientation preserving homeomorphisms is homotopy equivalent to $S^1$ follows from \cite[Sec.1.1]{hamstrom}. We arrive at $S^1\sqcup S^1$ following the same argument used in the proof of Proposition~\ref{calc:MCGS1} (with the definition of degree map generalised to the disk), together with the fact that all path components of topological groups are homeomorphic.
In particular we have from \cite{hamstrom} that $\pi_1(\Hom[D^2](\emptyset,\emptyset),\id_{D^2})=\mathbb{Z}$ where the single generating element corresponds to the $2\pi$ rotation.
Also $\pi_0(\Hom[D^2](\emptyset,\emptyset))=\mathbb{Z}/2\mathbb{Z}$.
The last steps of the exact sequence from Lemma~\ref{le:long_exact_sequence2} \ppm{are}:
		\[		\ldots \to
	\pi_1(\Hom[D^2](N,N),{\id}_{D^2})
				\xrightarrow{i^1_{*}}
		\mathbb{Z}
		\xrightarrow{j^1_{*}} \Mot[D^2](N,N)
		\xrightarrow{\F} \mcg[D^2](N,N)
	\xrightarrow{i^0_{*}}\mathbb{Z}/2\mathbb{Z}.
		\]

 \subsubsection{Example 3: the circle \texorpdfstring{$S^1$}{S1}} \label{ex:S1}

The unit circle $S^1$ provides another example for which we can explicitly construct elements of the non-trivial kernel of $\F\colon \Mot[S^1] \to \mcg[S^1]$. Let $P\subset S^1$ be a subset containing a single point in $S^1$. Consider the \premot{} $f$ of $S^1$, where $f_t$ is a $2\pi t$ rotation of the circle. There is a motion $f\colon P \too P $, whose worldline is depicted figure \ref{fig:Point_circle}.
\begin{figure}[h]
		\centering
		\def\svgwidth{0.2\columnwidth}
		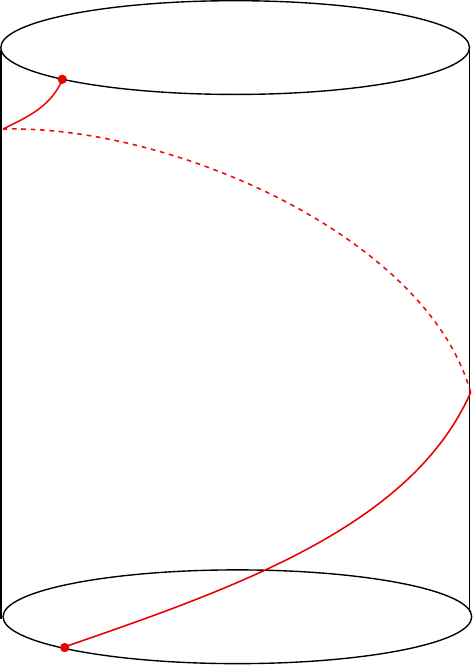
		\caption{Example of motion of circle which is a $2\pi$ rotation carrying a point to itself.} 
		\label{fig:Point_circle}
\end{figure}
Similarly to the construction for the  disk, we then have:
\begin{lemma}
The morphism $\classm{f\colon P \too P} $ in the motion groupoid $\Mot[S^1]$ is non trivial, and moreover we have an exact sequence:
	\begin{align}\label{eq:S1exact_seq}
			\ldots \to \{1\}\to 
			\mathbb{Z}
			\xrightarrow{\cong} \Mot[{S^1}](P,P)
			\xrightarrow{0} \mcg[{S^1}](P,P)
			\xrightarrow{\cong}
		 \mathbb{Z}/2\mathbb{Z}.
			\end{align}
\end{lemma}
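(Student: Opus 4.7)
The strategy is to specialise the long exact sequence of Lemma~\ref{le:long_exact_sequence2} with $M=S^1$ and $N=P$, compute the relevant homotopy groups of homeomorphism spaces by hand, and read off the assertion. The input data I need are $\pi_i(\TOPO^h(S^1,S^1),\id_{S^1})$ and $\pi_i(\Hom[S^1](P,P),\id_{S^1})$ for $i=0,1$. From (the proof of) Proposition~\ref{calc:MCGS1}, Hamstrom's theorem gives $\TOPO^h(S^1,S^1) \simeq S^1\sqcup S^1$, whence $\pi_0 \cong \mathbb{Z}/2\mathbb{Z}$ and $\pi_1(\TOPO^h(S^1,S^1),\id_{S^1})\cong\mathbb{Z}$, the latter generated by the loop of rotations $t\mapsto \tau_{2\pi t}$.

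For $\Hom[S^1](P,P)$, since $P$ is a single point, setwise stabilisation coincides with pointwise stabilisation. Cutting $S^1$ at $P$ identifies the orientation-preserving component with the space of orientation-preserving self-homeomorphisms of $[0,1]$ fixing the endpoints, which is contractible via the straight-line homotopy $(f,s)\mapsto (1-s)f + s\cdot\id$. Reflection through $P$ produces an orientation-reversing component; since $\Hom[S^1](P,P)$ is a topological group, all of its path components are homeomorphic, so this component is also contractible. Hence $\pi_1(\Hom[S^1](P,P),\id_{S^1})=1$ and, by Lemma~\ref{le:MCG_pi0}, $\mcg[S^1](P,P)=\pi_0(\Hom[S^1](P,P))\cong \mathbb{Z}/2\mathbb{Z}$, with the two classes distinguished by orientation.

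Next I will show that $i^0_*\colon \mcg[S^1](P,P)\to \mcg[S^1](\emptyset,\emptyset)$ is an isomorphism. Both sides are $\mathbb{Z}/2\mathbb{Z}$, split by orientation, and the inclusion $\Hom[S^1](P,P)\hookrightarrow \TOPO^h(S^1,S^1)$ sends an orientation-preserving (respectively, reversing) homeomorphism to an orientation-preserving (respectively, reversing) one, so $i^0_*$ is the identity on $\mathbb{Z}/2\mathbb{Z}$. Plugging everything into the tail of Lemma~\ref{le:long_exact_sequence2} yields
\[
1 \xrightarrow{i^1_*} \mathbb{Z} \xrightarrow{j^1_*} \Mot[S^1](P,P) \xrightarrow{\F} \mathbb{Z}/2\mathbb{Z} \xrightarrow{\cong} \mathbb{Z}/2\mathbb{Z}.
\]
Injectivity of $i^0_*$ forces $\F=0$ by exactness at $\mcg[S^1](P,P)$; exactness at $\Mot[S^1](P,P)$ then makes $j^1_*$ surjective; and vanishing of $i^1_*$ makes $j^1_*$ injective. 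Thus $j^1_*\colon \mathbb{Z}\xrightarrow{\cong}\Mot[S^1](P,P)$, giving the asserted exact sequence \eqref{eq:S1exact_seq}. Non-triviality of $\classm{f\colon P\too P}$ with $f_t=\tau_{2\pi t}$ is immediate, since by construction $j^1_*$ sends the generator of $\pi_1(\TOPO^h(S^1,S^1),\id_{S^1})$ to exactly this class, and $\Mot[S^1](P,P)\cong\mathbb{Z}$ is infinite cyclic.

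The main obstacle is the homotopy analysis of $\Hom[S^1](P,P)$ in the second paragraph; everything else is a mechanical application of the long exact sequence together with Hamstrom's result that was already invoked earlier in the paper. If the contractibility of orientation-preserving self-homeomorphisms of $[0,1]$ fixing endpoints is felt to require more justification in the compact-open topology, one can alternatively cite \cite[Thm.~1.1.3.2]{hamstrom}, as is already done in the proof of Proposition~\ref{ex:mcg_disk}.
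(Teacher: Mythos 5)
Your proof is correct and follows essentially the same route as the paper: specialising the long exact sequence of Lemma~\ref{le:long_exact_sequence2} to $M=S^1$, $N=P$, and feeding in Hamstrom's description of the homeomorphism spaces of $S^1$. The only differences are cosmetic reorderings — you establish contractibility of the orientation-preserving part of $\Hom[S^1](P,P)$ by a hands-on cut-the-circle/Alexander-trick argument and compute $i^0_*$ directly, then deduce $\F=0$ by exactness, whereas the paper cites Hamstrom for the contractibility, argues $\F=0$ directly (endpoints of motions are orientation preserving), and recovers injectivity and surjectivity of $i^0_*$ from exactness plus the existence of an orientation-reversing $P$-fixing homeomorphism.
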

\newcommand{\Rot}{\mathrm{Rot}}

In the proof of this lemma, we consider the homomorphism $\Rot\colon S^1 \to \Hom[S^1](\emptyset,\emptyset)$, the topological space of homeomorphisms of $S^1$, sending a point with coordinate $\theta$ of $S^1$ to a rotation of $S^1$ along angle $\theta$.
Explicitly, seeing $S^1$ as a subset of the complex plane, $\Rot$ is defined as:
\begin{align*}
\Rot\colon S^1 &\longrightarrow \Hom[S^1](\emptyset,\emptyset)\\
w &\longmapsto \left (\Rot_w\colon S^1 \ni z \mapsto wz \in S^1\right).
\end{align*}
This is an injective homomorphism, hence an embedding, since $S^1$ is compact and $\Hom[S^1](\emptyset,\emptyset)$ is Hausdorff.
\begin{proof}
The last steps \ppm{in} the exact sequence in Lemma \ref{le:long_exact_sequence2} are:
\begin{multline*}
	\dots \xrightarrow{\partial^2}
	\pi_1(\Hom[S^1](P,P),\id_{S^1})
	\xrightarrow{i^1_{*}} \pi_1(\Hom[S^1](\emptyset,\emptyset),\id_{S^1})\\
	\xrightarrow{j^1_{*}}			 \Mot[S^1](P,P) 
	\xrightarrow{\F} \mcg[S^1](P,P)
	\xrightarrow{i_*^0} \mcg[S^1](\emptyset,\emptyset).
\end{multline*}

Now, the subspace of $\Hom[S^1](P,P)$ containing orientation preserving maps is contractible
(see \S1.1.2 combined with Thm~1.1.1 \cite{hamstrom}).
Also every point on a path in $\Hom[S^1](P,P)$ starting at $\id_M$ is an orientation preserving map (this is argued in the proof of Proposition~\ref{calc:MCGS1}).
Hence $\pi_1(\Hom[S^1](P,P),{\id}_{S^1})$ is trivial, and also $\F$ is the zero map, as the endpoint of a motion of $S^1$ must be orientation preserving.
Also from \cite[\S 1.1.2]{hamstrom} we have that $\Rot(S^1)$ is a strong deformation retract of the path-component of $\Hom[S^1](\emptyset,\emptyset)$, containing the identity (i.e. the subgroup of orientation preserving homeomorphisms $S^1 \to S^1$, as argued in Proposition~\ref{calc:MCGS1}). By Proposition~\ref{calc:MCGS1}, $
\mcg[S^1](\emptyset,\emptyset)=\Z / 2\Z
$, hence the exact sequence becomes 
:
		\begin{equation}\label{eq:exact_sequenceS1}
			\ldots \xrightarrow{\partial^2} \{1\}	\xrightarrow{i^1_{*}} 
			\mathbb{Z}
				\xrightarrow{j^1_{*}}
			 \Mot[{S^1}](P,P)
			 	\xrightarrow{0} 
			 \mcg[{S^1}](P,P)
			\xrightarrow{i_*^0}  \mathbb{Z}/2\mathbb{Z}.
		\end{equation}
In particular, $j^1_*\colon \Z \to \Mot[{S^1}](P,P)$ is injective. That $\classm{f\colon P \too P} $ is non trivial follows from the fact that it is, by construction, the image of the generator of $\pi_1(\Rot(S^1),\id_{S^1}) \cong \Z$ arising (via $\Rot$) from a positively oriented loop in $S^1.$

Finally notice $i_*^0$ is not the zero map as there is clearly an orientation reversing homeomorphism of $S^1$ sending $P$ to $P$.
Hence the exact sequence in \eqref{eq:exact_sequenceS1} becomes \eqref{eq:S1exact_seq}.
 	In particular the motion-equivalence class of $f$ is non-trivial in $\Mot[S^1](P,P)$; and its image in $\mcg[S^1](P,P)$ is trivial.	
\end{proof}

Note that we can directly see that $\F\left ( \classm{f\colon P \too P} \right)=\classi{\shmot{f_1}{}{P}{P}}= \classi{\shmot{\id_{S^1}}{}{P}{P}}$, where the latter equality follows from the fact that a $2 \pi$ rotation of $S^1$ is the identity function. However we need the previous theorem to assert that $f$ is non-trivial in $\Mot[S^1](P,P)$.

\subsubsection{Example \ppm{4}: the \texorpdfstring{$2$}{}-sphere \texorpdfstring{$S^2$}{}} \label{ex:2sphere}

The following example shows that $\F$ can restrict to an isomorphism when considering certain subsets (and orientation preserving maps) but not others.

Let $M=S^2$ and $P_2$ be the subset containing $2$ points in the sphere.

\noindent From Section 1.2 of \cite{hamstrom} we have the following,
\ali{
\pi_1(\Hom[S^2](P_2,P_2),\id_{S^2}) &= \Z  \\
\pi_1(\Hom[S^2](\emptyset,\emptyset),\id_{S^2}) &= \Z/2\Z \\
\pi_0(\Hom[S^2](\emptyset,\emptyset),\id_{S^2}) &= \Z/2\Z.}

So the exact sequence becomes
\[\dots \to \Z\to \Z/2\Z \to \Mot[S^2](P_2,P_2) \to \mcg[S^2](P_2,P_2) \to \Z/2\Z. \]
Also from \cite{hamstrom}, the map $\pi_1(\Hom[S^2](P_2,P_2),\id_{S^2}) \to 
\pi_1(\Hom[S^2](\emptyset,\emptyset),\id_{S^2})$ is surjective, with the non trivial element in $\pi_1(\Hom[S^2](\emptyset,\emptyset),\id_{S^2})$ represented by a path which maps $t\in\II$ to a $2\pi t$ rotation about some chosen axis.
Hence the map $\Z/2\Z \to \Mot[S^2](P_2,P_2)$ is the zero map, and the same rotation is trivial in $\Mot[S^2](P_2,P_2)$.

This can be seen directly by choosing the points to be antipodal, say the north and south pole. Now consider a $2\pi$ rotation with axis through north and south pole. 
This is a path fixing both points, hence a stationary path which is equivalent to the identity.

Looking back at the exact sequence, we have that the map $\Mot[S^2](P_2,P_2)\to \mcg[S^2](P_2,P_2)$ is injective.
Combining the results of Hamstrom with the result on pg.52 of \cite{farb}, we have that $\mcg[S^2](P_2,P_2)\cong \Z/2\Z \times \Z/2\Z$ where the non trivial element in the first copy of $\Z/2\Z $ is represented by a self-homeomorphism which swaps the points by an orientation preserving self-homeomorphism, and the non trivial element in the second component is represented by a self-homeomorphism swapping the two points with is orientation reversing.
Hence a motion which swaps the two points represents a non trivial morphism in $\Mot[S^2](P_2,P_2)$.

Recall that $\mcg[S^2]^{+}$ is the mapping class groupoid constructed using only orientation preserving homeomorphisms. Then we have a group isomorphism
\[
\Mot[S^2](P_2,P_2) \simeq \mcg[S^2]^{+}(P_2,P_2).
\]
Note this does not extend to a category isomorphism. Considering instead the subset consisting of three points the groups are non isomorphic. Intuitively we can see this by arguing that we cannot place three points on the sphere such that any $2\pi$ rotation is a stationary motion. Arguing in a similar way to the preceding examples, a $2\pi$ rotation of the sphere represents the identity morphism in the mapping class groupoid.

\setlength\emergencystretch{2em}
\normalem
\bibliographystyle{alphaurl}

\bibliography{bib.bib}

\appendix
\section{Appendix}
\subsection{Proof of Theorem~\ref{le:top_group}}
\label{sec:TopGroupProof}

This section revisits the proof of Theorem 4 in \cite{arens} using our notation.

Recall that a space $X$ is said to be {\em locally compact} if each $x\in X$ has an open neighbourhood which is contained in a compact set.
If $X$ is Hausdorff, 
$X$ is locally compact if and only if for each $x \in X$ and open set $U\subset X$ containing $x$, there exists an open set $V$ containing $x$ with $\bar{V}$ compact and $\bar{V}\subset U$ (where $\bar{V}$ is the closure of $V$)  \cite[Thm~29.2]{munkres}.

\lemm{\label{le:locally_compact}
Let $X$ be a locally compact Hausdorff space.
Let $K\subset X$ be compact and $U\subset X$ be open with $K\subset U$.
Then there exists an open set $V$ with $K\subset V\subset\overline{V}\subset U$, where $\overline{V}$ is compact.}
\begin{proof}
Since $X$ is locally compact Hausdorff, for every $x\in K$ there is an open set $V(x)\subset U$ with $\overline{V(x)}\subset U$ compact.
The set of all $V(x)$ is a cover for $K$, and $K$ is compact so there exists a finite subcover.
Hence we have
\[
K\subset \bigcup_{i\in \{1,\dots ,n\}}V(x_i)
\subset \bigcup_{i\in \{1,\dots ,n\}}\overline{V(x_i)}\subset U\]
for some finite set $\{x_1,\dots x_n\}\subset K$.
We can choose $V=\bigcup_{i\in \{1,\dots ,n\}}{V(x_i)}$,
noting that (since the union is finite)
$\overline{V}=\bigcup_{i\in \{1,\dots ,n\}}\overline{V(x_i)}$,
and hence $\overline{V}$ is compact, since it is a finite union of compact subsets. 
\end{proof}

\lemm{\label{le:comp_cont}
Let $X$ be a locally compact Hausdorff space. Then the composition of homeomorphisms
\ali{
\circ \colon \TOPO^h(X,X) \times \TOPO^h(X,X) &\to \TOPO^h(X,X) \\
(\sh{f},\sh{g})&\mapsto \sh{g}\circ\sh{f} 
}
is continuous.
}
\begin{proof}
Let $\coball{X}{X}{K}{U}$ be an element of the subbasis of $\tauco{X}{X}$.
Now suppose $\sh{h}\in \coball{X}{X}{K}{U}$ is in the image of $\circ$, so $\sh{h}=\sh{g}\circ \sh{f}$ for some $\sh{g},\sh{f}\in \Topo^h(X,X)$.
 We show that for all such $\sh{h}$, we can construct an open set in $V\in\Topo^h(X,X) \times \Topo^h(X,X)$ with $(\sh{f},\sh{g})\in V$ and for all $(\sh{f}',\sh{g}')\in V$,
$\sh{g}'\circ \sh{f}'\in \coball{X}{X}{K}{U}$. 

We have $\sh{g}(K)\subset \sh{f}^{-1}(U)$, and so by Lemma \ref{le:locally_compact} there exists an open set $W$ with $\sh{g}(K)\subset W\subset \overline{W}\subset f^{-1}(U)$, and $\overline{W}$ compact.
Now $\coball{X}{X}{K}{W}\times \coball{X}{X}{\overline{W}}{U}$ is an open set containing $(\sh{f},\sh{g})$ and for any $(\sh{f}',\sh{g}')\in \coball{X}{X}{K}{W}\times \coball{X}{X}{\overline{W}}{U}$, $\sh{g}\circ\sh{f}\in \coball{X}{X}{K}{U}$.
\end{proof}

There is a more general version of the previous Lemma where the maps are not necessarily homeomorphisms, see Theorem 2.2 of \cite{dugundji}.

\lemm{\label{le:co_subbasis}
Let $X$ be a locally connected, locally compact Hausdorff space.
Then the sets $\coball{X}{X}{L}{U}$ where $L$ is compact, connected and has non empty interior, and $U$ is open, form a subbasis for the compact open topology.
}
\begin{proof}Again we follow the argument in \cite{arens}.
Let $\sh{h}\in \Topo^h(X,X)$.
We show that for any $\coball{X}{X}{K}{U}$ containing $\sh{h}$ where $K$ is compact and $U$ is open, there exists a 
subset of $\coball{X}{X}{K}{U}$  containing $\sh{h}$ of the form $\coball{X}{X}{L_1}{U}\cap\dots \cap \coball{X}{X}{L_n}{U}$ where each $L_i$ is compact, connected and has non empty interior.

Since $\sh{h}$ is continuous, for
each $x\in K$ we can find an open set $V(x)$ containing {$x$} such that { $\sh{h}(V(x))\subset U$.} 
{Since $X$ is locally compact } {and Hausdorff}{, we can then find another $V'(x)$, open in $X$, such that} \[{x \in V'(x)\subset \overline{V'(x)}\subset V(x)},\] {with $\overline{V'(x)}$ compact.}
Now since $X$ is locally connected, there exists a connected open set $V''(x)$ such that {$x \in V''(x)\subset {V'(x)}$.} 
Also $\overline{V''(x)}$ is compact, since $\overline{V''(x)}\subset {\overline{V'(x)}}$ and closed subsets of compact spaces are compact. {Furthermore $\overline{V''(x)}\subset V(x)$, so $\sh{h}(\overline{V''(x)}) \subset U.$}

The $V''(x)$ cover $K$ and so there exists a finite subcover by $V(x_i)$ for some finite set of $x_i\in K$ with $i\in \{1,\dots,n\}$.
Clearly:
\[
\sh{h}\in\bigcap_{i\in \{1,\dots,n\}}\coball{X}{X}{\overline{V''(x_i)}}{U}\subset \coball{X}{X}{K}{U}.
\]
\end{proof}

\begin{lemma}\label{le:inv_cont}
Let $X$ be a locally connected, locally compact Hausdorff space.
Then the inverse map
\ali{
(-)^{-1}\colon\TOPO^h(X,X)&\to \TOPO^h(X,X) \\
\sh{f}&\mapsto \sh{f}^{-1}
}
is continuous.
\end{lemma}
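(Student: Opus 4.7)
\smallskip

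\noindent\textbf{Proof proposal.} The plan is to use Lemma~\ref{le:co_subbasis} to replace the standard subbasis of the compact-open topology with the finer one consisting of sets $\coball{X}{X}{L}{U}$ where $L$ is compact, connected, and has non-empty interior. It then suffices to show that for each $\sh{f} \in \TOPO^h(X,X)$ with $\sh{f}^{-1} \in \coball{X}{X}{L}{U}$, there is an open neighbourhood $V$ of $\sh{f}$ in $\TOPO^h(X,X)$ such that $\sh{g} \in V$ implies $\sh{g}^{-1}(L) \subset U$, equivalently $L \subset \sh{g}(U)$.

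First I would use Lemma~\ref{le:locally_compact} to sandwich the compact set $\sh{f}^{-1}(L)$ between an open set $W$ and its compact closure inside $U$, so that $\sh{f}^{-1}(L) \subset W \subset \overline{W} \subset U$ with $\overline{W}$ compact. Let $F = \overline{W}\setminus W$, which is compact (closed in $\overline{W}$). The key geometric observation is that $\sh{f}(F)$ is disjoint from $L$, because $F$ is disjoint from $\sh{f}^{-1}(L)$.

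The heart of the argument is a connectedness trick: for any homeomorphism $\sh{g}$ with $\sh{g}(F)\cap L = \emptyset$, one has the disjoint decomposition $X\setminus \sh{g}(F) = \sh{g}(W) \sqcup \bigl(X\setminus \sh{g}(\overline{W})\bigr)$ into two open sets, so $L \subset \sh{g}(W) \sqcup \bigl(X\setminus\sh{g}(\overline{W})\bigr)$. Since $L$ is connected, it lies entirely in one piece, and in particular $L \subset \sh{g}(W) \subset \sh{g}(U)$ as soon as one also knows that $L$ meets $\sh{g}(W)$. To force the latter, I would pick a point $\ell$ in the (non-empty) interior of $L$ and set $x = \sh{f}^{-1}(\ell) \in W$; requiring $\sh{g}(x) \in \mathrm{int}(L)$ then ensures $L \cap \sh{g}(W) \neq \emptyset$.

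Putting this together, the desired open neighbourhood of $\sh{f}$ is
\[
V \;=\; \coball{X}{X}{F}{X\setminus L} \;\cap\; \coball{X}{X}{\{x\}}{\mathrm{int}(L)},
\]
both factors of which are subbasic open sets of $\tauco{X}{X}$ containing $\sh{f}$ (using that $L$ is closed and $\mathrm{int}(L)$ is open). The main obstacle, and the reason Lemma~\ref{le:co_subbasis} is needed, is precisely the connectedness-plus-non-empty-interior hypothesis on $L$: without connectedness the separation argument fails, and without non-empty interior there is no point $\ell$ available to pin down a value of $\sh{g}$ forcing $L$ into the correct component.
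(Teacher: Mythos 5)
Your proposal is correct and follows essentially the same route as the paper's proof (Arens's argument): reduce to the connected-compact subbasis of Lemma~\ref{le:co_subbasis}, trap $\sh{f}^{-1}(L)$ inside $U$ via Lemma~\ref{le:locally_compact}, and use connectedness of $L$ together with a point pinned in $\mathrm{int}(L)$ to force $L$ into $\sh{g}(W)$, hence $\sh{g}^{-1}(L)\subset U$. Your packaging is slightly leaner — one sandwich layer, the frontier set $F=\overline{W}\setminus W$, and target open set $X\setminus L$, where the paper uses two nested layers, the thicker compact annulus $(X\setminus V)\cap\overline{W}$, and the target $(X\setminus L)\cap\sh{f}(U)$ — but the underlying argument is the same.
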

\begin{proof}
Throughout the  proof, only,   we will put $(-)^{-1}=T$. So $T\colon \TOPO^h(X,X)\to \TOPO^h(X,X)$ is the function such that $T(\sh{h})=\sh{h}^{-1}$.

By Lemma \ref{le:co_subbasis}, in order to prove that $T$ is continuous,  we only need to prove that the inverse images under $T$ of sets of the  form $\coball{X}{X}{L}{U}$, with $L$ compact, connected and with non-empty interior, and $U$ open, are open in $\TOPO^h(X,X)$.

Let $L\subset X$ be compact, connected, and with a non-empty interior. Let $U$ be open in $X$.
We show that for any $\sh{f}^{-1}\in \coball{X}{X}{L}{U}$, we can construct an open subset of $ \TOPO^h(X,X)$, containing $\sh{f}$, which is a subset of  $T^{-1}(\coball{X}{X}{L}{U})$.

Since $\sh{f}^{-1}$ is a homeomorphism, it sends compact subsets to compact subsets. So $\sh{f}^{-1}(L)$ is compact. Also $\sh{f}^{-1}(L) \subset U$, since $\sh{f}^{-1}\in \coball{X}{X}{L}{U}$. 

Using Lemma \ref{le:locally_compact}, we can choose an open set $V\subset X$ such that $f^{-1}(L)\subset V\subset\overline{V}\subset U$, with $\overline{V}$ compact, and then an open set $W\subset X$ with $\overline{V}\subset W\subset \overline{W}\subset U$, with $\overline{W}$ compact. In full:
\[ 
\sh{f}^{-1}(L)\subset V\subset\overline{V}\subset  W\subset \overline{W}\subset U.
\]
Therefore \[
\sh{f}\big ( (X\setminus V) \cap \overline{W}\big)=\big  (X\setminus \sh{f}(V)\big) \cap \sh{f}(\overline{W}) \subset (X\setminus L) \cap f(U).\] 

We can also choose an $x\in X$ such that $\sh {f}(x)\in int(L)$ (where $int(L)$ is the interior of $L$).
So there exists an open set (in $\TOPO^h(X,X)$):
\[
B_{XX}\left(\{x\},int(L)\right)\cap B_{XX}\left(
(X\setminus V) \cap \overline{W}, (X\setminus L) \cap \sh{f}(U)\right)
\]
containing $\sh{f}$, which we denote $U_0$. 
We claim that $U_0 \in  T^{-1}(\coball{X}{X}{L}{U}).$

Let $\sh{h}\in U_0$. We have:  $\sh{h}\left( (X\setminus V) \cap \overline{W}\right)\subset (X\setminus L) \cap \sh{f}(U)$.
Taking complements and reversing the inclusion we have 
\[L\cup (X\setminus \sh{f}(U))\subset \sh{h}(V\cup (X\setminus \overline{W}))
=\sh{h}(V)\cup \sh{h}(X\setminus \overline{W})
.\]
Now $\sh{h}(V)$ and $\sh{h}(X\setminus \overline{W})$ are disjoint open sets, and $L$ is connected\footnote{This is where the crucial fact that $L$ can be chosen to be connected is used.}, so either $L$ is   contained in $\sh{h}(V)$ or $L$ is contained in $\sh{h}(X\setminus \overline{W})$, but not both.  We claim that $L\subset \sh{h}(V)$. 

Note that since $\sh{h} \in B_{XX}\left(\{x\},int(L)\right)$, we have $\sh{h}(x) \in int(L)$. Since $\sh{f}(x) \in int(L)$, by construction, we have  $x\in \sh{f}^{-1}(int(L))\subset V.$ So $\sh{h}(x) \in \sh{h}(V).$ So $L\cap \sh{h}(V) \neq \emptyset$. So $L\subset \sh{h}(V)$.

Since $L\subset \sh{h}(V)$, we have  $\sh{h}^{-1}(L)\subset V\subset U$. Hence
$\sh{h}^{-1}\in \coball{X}{X}{L}{U}$.
\end{proof}

\begin{proof} (Of Theorem~\ref{le:top_group})
In Lemma \ref{le:comp_cont} we prove that the composition is continuous if $X$ is locally compact Hausdorff.
In Lemma \ref{le:inv_cont} we prove that the inverse map is continuous. 
\end{proof}

\end{document}